\pgfplotsset{compat=1.11}
\newcommand{\ipic}[3][-0.5]{\raisebox{#1\height}{\scalebox{#3}{\includegraphics{#2}}}}
\newcommand{\Id}{\ensuremath{\mathbb{I}}}
\newtheorem{theorem}{Theorem}
\newtheorem{lemma}[theorem]{Lemma}
\newtheorem{proposition}[theorem]{Proposition}
\newtheorem{definition}[theorem]{Definition}
\newtheorem{corollary}[theorem]{Corollary}
\newtheorem{observation}[theorem]{Observation}
\newtheorem{example}[theorem]{Example}
\newcommand{\Flip}{\mathbb{F}}
\newcommand{\Pmath}{\mathrm{P}}
\newcommand{\Idd}{\mathbb{I}}
\newcommand{\CNOT}{\ensuremath{\operatorname{CNOT}}}
\newcommand{\Hadamard}{\ensuremath{\operatorname{H}}}
\newcommand{\PhaseS}{\ensuremath{\operatorname{S}}}
\newcommand{\ExU}{\underset{U\sim\mu_H}{\mathbb{E}}}
\newcommand{\Expsi}{\underset{\ket{\psi}\sim\mu_H}{\mathbb{E}}}
\newcommand{\MatC}[1]{\mathcal{L}\left(\mathbb{C}^{#1}\right)}
\newcommand{\MomOp}[2][k]{\mathcal{M}^{(#1)}_{\mu_H}\!\left(#2\right)}
\newcommand{\hs}[2]{\langle #1, #2\rangle_{HS}}
\renewcommand{\Id}{I}
\newcommand{\Ug}{\mathrm{U}}
\newcommand{\kket}[1]{|#1\rangle\!\rangle}
\newcommand{\bbra}[1]{\langle\!\langle#1|}
\newcommand{\bbrakket}[2]{\langle\!\langle#1 \rvert #2\rangle\!\rangle}
\renewcommand{\O}{\ensuremath{\operatorname{O}}}
\newcommand{\fu}{Dahlem Center for Complex Quantum Systems, Freie Universit\"{a}t Berlin, 14195 Berlin, Germany}
\begin{document}

\title{Introduction to Haar Measure Tools in Quantum Information: A Beginner's Tutorial}

\author{Antonio Anna Mele}
\email{a.mele@fu-berlin.de}
\affiliation{\fu}

\begin{abstract}
The Haar measure plays a vital role in quantum information, but its study often requires a deep understanding of representation theory, posing a challenge for beginners. This tutorial aims to provide a basic introduction to Haar measure tools in quantum information, utilizing only basic knowledge of linear algebra and thus aiming to make this topic more accessible.
The tutorial begins by introducing the Haar measure with a specific emphasis on characterizing the moment operator, an essential element for computing integrals over the Haar measure. It also covers properties of the symmetric subspace and introduces helpful tools like tensor network diagrammatic notation, which aid in visualizing and simplifying calculations.
Next, the tutorial explores the concept of unitary designs, providing equivalent definitions, and subsequently explores approximate notions of unitary designs, shedding light on the relationships between these different notions.
Practical examples of Haar measure calculations are illustrated, including the derivation of well-known formulas such as the twirling of a quantum channel. Lastly, the tutorial showcases the applications of Haar measure calculations in quantum machine learning and classical shadow tomography. 

\end{abstract}

\maketitle
\tableofcontents

\clearpage
\pagestyle{plain}
\pagenumbering{arabic}

\section{Introduction}
The notion of Haar measure formalizes the fundamental concept of drawing unitary matrices uniformly at random.
In the context of quantum information, unitary matrices often represent quantum evolutions that are commonly described using quantum circuits. The applications of the Haar measure are widespread across various domains within quantum information, such as quantum tomography~\cite{Huang_2020,Haah_2017, Elben_2022,odonnell2015efficient,kueng2014low,PhysRevLett.125.200501,Guţă_2020,cieśliński2023analysing},~computational advantage of random quantum circuits sampling \cite{harrow2023approximate,Bouland_2018,movassagh2020quantum,dalzell2021random,Hangleiter_2018,Bouland18,Pashayan_2020,hangleiter2023computational},~benchmarking of quantum devices \cite{Popescu_2006, Emerson_2005,Helsen_2022,RBMagesan},~quantum foundations and communication~\cite{DiVincenzo_2002,Harrow_2004,Ambainis_2009,PhysRevA.60.1888,Abeyesinghe_2009,horodecki1998reduction,batlevallespir2006characterization,_yczkowski_1998,Zyczkowski_2001}, quantum machine learning~\cite{McClean_2018, Holmes_2022, Cerezo_2021,napp2022quantifying}, quantum many-body and high energy physics~\cite{Skinner,Fisher_2023,ChanPhysics,Hayden_2007,Haferkamp_2022,Brand_o_2021,Roberts_2017,Liu_2018,Onorati_2017,Oles_PhysRevLett.121.126803,Movassagh2011}.

Although the Haar measure serves as a fundamental tool in quantum information, its study can be challenging for beginners due to its reliance on advanced concepts from representation theory. 
In order to make this topic more accessible, this tutorial provides an introduction to Haar measure tools using only concepts from linear algebra. Throughout the tutorial, we intentionally avoid delving into representation theory to enhance accessibility.
For further reading, one can refer, for example, to~\cite{KuengNotes_2019,christandl2006structure,Kliesch_2021,Haferkamp2022,low2010pseudorandomness,Brand_o_2021,watrous_2018,Collins_2006,Roberts_2017}.

The tutorial is structured as follows.
Section~\ref{sec:notation} offers an overview of the notation and preliminary concepts that will be used throughout the tutorial.
Section~\ref{sec:Haar} focuses on introducing the Haar measure, with a specific emphasis on characterizing the moment operator—a crucial quantity for computing integrals over the Haar measure.
In Section~\ref{sec:sym}, the symmetric and antisymmetric subspaces are introduced, highlighting their properties and their connection to the Haar measure over pure states.
Sections~\ref{sec:vecform} and~\ref{sec:diagram} introduce tools that facilitate calculations in Haar measure applications. Section \ref{sec:vecform} presents the vectorization formalism, while Section~\ref{sec:diagram} introduces the tensor networks diagrammatic notation, providing visual representations that enhance comprehension and streamline computations.
Section~\ref{sec:design} introduces the concept of unitary designs—a method employed to mimic some properties of the Haar measure, facilitating efficient protocols in quantum computing. Building upon this foundation, Section~\ref{sec:apprdesign} explores approximate notions of unitary designs, elucidating the relationships between the different introduced notions.
In Section~\ref{sec:applications}, the tutorial showcases practical examples of Haar measure calculations. We begin by deriving well-known formulas such as the twirling of a quantum channel, average gate fidelity, average purity of reduced states in bipartite systems, and Haar averages of observables expectation values. Furthermore, we demonstrate how these expected value calculations can be translated into probability statements using concentration inequalities. Finally, we delve into two in-depth applications: Barren Plateaus phenomena in Quantum Machine Learning and Classical Shadow Tomography, both of which rely on the theory of $k$-designs.

\section{Notation and Preliminaries}
\label{sec:notation}
We use the following notation throughout this tutorial. $\mathcal{L}(\mathbb{C}^d)$ is the set of linear operators that act on the $d$-dimensional complex vector space $\mathbb{C}^d$, while $\mathrm{Herm}(\mathbb{C}^d)$ is the set of Hermitian operators on $\mathbb{C}^d$. The identity operator is denoted by $I$, and we define the operator $\mathbb{I}\coloneqq I\otimes I$ as the tensor product of two identity operators. The unitary group is denoted by $\Ug(d)$ and is defined as the set of operators $U \in \mathcal{L}(\mathbb{C}^d)$ such that $U^\dagger U = \Id$. Furthermore, we use the notation $[d]$ to denote the set of integers from 1 to $d$, i.e., $[d] \coloneqq \{1,\dots, d\}$.

Let $v \in \mathbb{C}^d$ be a vector, and let $p \in \left[1,\infty\right]$. The $p$-norm of $v$ is denoted by $\norm{v}_p$, and is defined as $\norm{v}_p \coloneqq  (\sum_{i=1}^d |v_i|^p)^{1/p}$. 
The Schatten $p$-norm of a matrix $A\in \MatC{d}$ is given by $\norm{A}_p\coloneqq \Tr\small(\small(\sqrt{A^\dagger A}\small)^p\small)^{1/p}$, which corresponds to the $p$-norm of the vector of singular values of $A$.
The trace norm and the Hilbert-Schmidt norm are important instances of Schatten $p$-norms and are respectively denoted as $\|\cdot\|_1$ and $\|\cdot\|_2$. The Hilbert-Schmidt norm is induced by the Hilbert-Schmidt scalar product $\hs{A}{B}\coloneqq \Tr(A^\dagger B)$ for $A,B \in \mathcal{L}\left(\mathbb{C}^d\right)$. 
The infinite norm, denoted as $\norm{\cdot}_\infty$, of a matrix is defined as its largest singular value. This norm can be understood as the limit of the Schatten $p$-norm of the matrix as $p$ approaches infinity.
Important facts about the Schatten $p$-norms that will be used in the tutorial are the following. 
For all matrices $A$ and $1\le p\le q$, we have $\norm{A}_{q}\le \norm{A}_{p}$ and $\norm{A}_{p}\le \mathrm{rank}(A)^{(p^{-1}-q^{-1})}\norm{A}_{q}$. 
Additionally, for all unitaries $U$ and $V$, and matrix $A$, we have the unitary invariance property $\norm{UAV}_p=\norm{A}_p$. Furthermore, we also have the tensor product property  $\norm{A\otimes B}_p=\norm{A}_p \norm{B}_p$ and the submultiplicativity property $\norm{A B}_p\le \norm{A}_p \norm{B}_p$, for all matrices $A$ and $B$.

We use the bra-ket notation, where we denote a vector $v \in \mathbb{C}^d$ using the ket notation $\ket{v}$ and its adjoint using the bra notation $\bra{v}$.
We refer to a vector $\ket{\psi}\in \mathbb{C}^d$ as a (pure) state if $\norm{\ket{\psi}}_2=1$. 
The canonical basis of $\mathbb{C}^d$ is denoted by $\{\ket{i}\}_{i=1}^{d}$. We define the non-normalized maximally entangled state $\ket{\Omega}$ as $\ket{\Omega}\coloneqq \sum^d_{i=1}\ket{i}\otimes\ket{i}=\sum^d_{i=1}\ket{i,i}$.
We define the set of density matrices (or quantum states) as $\mathcal{S}\left(\mathbb{C}^d\right)\coloneqq \{\rho \in \mathcal{L}\left(\mathbb{C}^d\right) \,:\,\rho \ge 0,\,\Tr(\rho)=1\}$.
A quantum channel $\Phi:\mathcal{L}(\mathbb{C}^d)\rightarrow\mathcal{L}(\mathbb{C}^d)$ is a linear map that is completely positive and trace-preserving.  In particular, the completely positive condition means that for all positive operators $\sigma \in \mathcal{L}(\mathbb{C}^d\otimes \mathbb{C}^D)$, for all $D\in \mathbb{N}$, the operator $\Phi \otimes \mathcal{I}(\sigma)$ is also positive. Here, $\mathcal{I}:\MatC{D}\rightarrow \MatC{D}$ denotes the identity map, which simply maps any $A\in \MatC{D}$ to itself.
Additionally, every quantum channel $\Phi$ can be expressed in terms of $d^2$ Kraus operators, i.e., there exist $\{K_i\}^{d^2}_{i=1}$ operators such that $\Phi\left(\cdot\right)=\sum^{d^2}_{i=1} K_i \left(\cdot\right)K^\dagger_i$, subject to the condition that $\sum^{d^2}_{i=1} K^\dagger_i K_i= I$ in order to satisfy the trace-preserving property.

\section{Haar measure and moment operator}
\label{sec:Haar}
\begin{definition}[Haar measure]
The Haar measure on the unitary group $\Ug(d)$ is the unique probability measure $\mu_H$ that is both left and right invariant over the group $\Ug(d)$, i.e., for all integrable functions $f$ and for all $V \in \Ug(d)$, we have:
\begin{align}
\int_{\Ug(d)} f\left(U\right)d\mu_H(U)=\int_{\Ug(d)} f\left(UV\right)d\mu_H(U)=\int_{\Ug(d)} f\left(VU\right)d\mu_H(U).
\end{align}
\end{definition}

In fact, for compact groups such as the unitary group, there exists a unique probability measure that is both left and right invariant under group multiplication \cite{Simon1995RepresentationsOF}. However, we will not delve deeper into the theory of compact groups and measures in this tutorial, as our focus is on applications of the tools we describe. For a more comprehensive treatment of this topic, we recommend referring to \cite{Simon1995RepresentationsOF,Collins_2006}.

The Haar measure is a probability measure, satisfying the properties $\int_{S}1\, d\mu_H(U) \ge 0$ for all sets $S \subseteq \Ug(d)$ and $\int_{\Ug(d)}1 \, d\mu_H(U) = 1$. Consequently, we can denote the integral of any function $f(U)$ over the Haar measure as the expected value of $f(U)$ with respect to the probability measure $\mu_H$, denoted as $\ExU[f(U)]$.
\begin{align}
\ExU\left[f\left(U\right)\right]\coloneqq \int_{\Ug(d)} f\left(U\right)d\mu_H(U).
\end{align}
When $f(U)$ is a matrix function, the expected value is understood to be the expected value of each of its entries.

We can prove the following proposition, which shows that any integral involving an \emph{unbalanced} product of matrix entries of $U$ and $U^*$ must vanish. In the following, we use the notation $U^{\otimes k}\coloneqq \underbrace{U\otimes\cdots \otimes U}_{k \,\text{times}}$.\footnote{The significance of investigating tensor powers of unitaries and their adjoints, i.e. $U^{\otimes{k}} \otimes U^{*\otimes{k}}$, becomes evident when considering applications (Section \ref{sec:applications}). 
In brief, within various scenarios, the computation of integrals over the Haar measure of the relevant quantity reduces to evaluating integrals of homogeneous polynomials of degree $k$ in the matrix elements of $U$ and $U^*$, for some $k\in \mathbb{N}$.  Remarkably, any such polynomial can be expressed as $\Tr(A U^{\otimes{k}} \otimes U^{*\otimes{k}})$, where $A$ is a matrix that contains the coefficients of the polynomial.}
\begin{proposition}\
Let $k_1,k_2 \in \mathbb{N}$. If $k_1\neq k_2$, then we have $\ExU\left[U^{\otimes k_1} \otimes U^{* \otimes k_2}\right]=0$.
\end{proposition}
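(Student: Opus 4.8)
The plan is to exploit the left-invariance of the Haar measure by inserting a judiciously chosen unitary $V$ and reading off a constraint on the integral. Write $M \coloneqq \ExU\left[U^{\otimes k_1} \otimes U^{*\otimes k_2}\right]$ for the matrix we wish to show vanishes. By left-invariance, for every fixed $V \in \Ug(d)$ we have $M = \ExU\left[(VU)^{\otimes k_1} \otimes (VU)^{*\otimes k_2}\right]$. Since $(VU)^{\otimes k_1} = V^{\otimes k_1} U^{\otimes k_1}$ and $(VU)^{*\otimes k_2} = V^{*\otimes k_2} U^{*\otimes k_2}$, and since the deterministic factor $V^{\otimes k_1}\otimes V^{*\otimes k_2}$ pulls out of the expectation by linearity, this yields the fixed-point identity
\begin{equation}
M = \left(V^{\otimes k_1}\otimes V^{*\otimes k_2}\right) M \qquad \text{for all } V \in \Ug(d).
\end{equation}

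The key step is to specialize to a global phase $V = \e^{i\theta} I$, which is an element of $\Ug(d)$ for every $\theta \in \mathbb{R}$. For this choice one has $V^{\otimes k_1} = \e^{i k_1 \theta} I$ and $V^{*\otimes k_2} = \e^{-i k_2 \theta} I$, so the identity collapses to the scalar relation $M = \e^{i(k_1-k_2)\theta} M$, valid for all $\theta$. Because $k_1 \neq k_2$, I can choose $\theta$ with $\e^{i(k_1 - k_2)\theta} \neq 1$ (any $\theta$ that is not an integer multiple of $2\pi/|k_1-k_2|$ works), whence $\left(1 - \e^{i(k_1-k_2)\theta}\right) M = 0$ forces $M = 0$, which is the claim.

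I do not anticipate a serious obstacle: the whole argument rests on recognizing that a global phase is a legitimate element of $\Ug(d)$ to feed into invariance, together with the elementary fact that $(\e^{i\theta}I)^{\otimes k}$ contributes the scalar $\e^{ik\theta}$. The only point worth spelling out carefully is that linearity of the expectation justifies extracting the deterministic prefactor $V^{\otimes k_1}\otimes V^{*\otimes k_2}$ from inside $\ExU[\cdot]$; once that is granted, the phase selection does all the work. Note that right-invariance would serve equally well, and that this same phase trick is exactly what fails to kill $M$ in the balanced case $k_1 = k_2$, which is why that case genuinely requires the moment-operator machinery developed later.
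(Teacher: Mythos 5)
Your proof is correct and is essentially the paper's own argument: the paper likewise exploits invariance under a global phase unitary, simply fixing $\theta=\pi/(k_1-k_2)$ so that the phase factor becomes $-1$ and the identity $M=-M$ immediately forces $M=0$. The choice of left versus right invariance is immaterial here since $\e^{i\theta}I$ is central, so the two arguments coincide.
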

\begin{proof}
We can use the right invariance multiplying $U$ by the unitary $\exp(i\frac{\pi}{k_1-k_2})\Id$:
\begin{align}
\ExU\left[U^{\otimes k_1} \otimes U^{* \otimes k_2}\right]=-\ExU\left[U^{\otimes k_1} \otimes U^{* \otimes k_2}\right],   
\end{align}
from which the claim follows.
\end{proof}
This next proposition will be useful in our subsequent calculations. It states that in Haar integrals, we are free to change the variable $U$ with $U^\dagger$. 
\begin{proposition}
\label{prop:Haar}
    For all integrable functions $f$ defined on $\Ug(d)$, we have that:
    \begin{align}
    \ExU\left[f\left(U^\dagger\right)\right]=\ExU\left[f\left(U\right)\right].
    \end{align}
\end{proposition}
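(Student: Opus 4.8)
The plan is to reduce everything to the \emph{uniqueness} of the Haar measure asserted in the definition above. The map $U \mapsto U^\dagger = U^{-1}$ is a measurable bijection of $\Ug(d)$ onto itself, so it pushes $\mu_H$ forward to a new measure $\nu$, characterized by $\int_{\Ug(d)} g(U)\, d\nu(U) \coloneqq \ExU[g(U^\dagger)]$ for every integrable $g$. The proposition says exactly that $\nu = \mu_H$, and since the Haar measure is the unique left- and right-invariant probability measure on $\Ug(d)$, it suffices to verify that $\nu$ has these three properties. That $\nu$ is a probability measure is immediate: nonnegativity and normalization are inherited from $\mu_H$ because inversion maps $\Ug(d)$ bijectively onto itself, so in particular $\int 1\, d\nu = \int 1\, d\mu_H = 1$.

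The heart of the argument is checking invariance of $\nu$. For left invariance I would fix $V \in \Ug(d)$ and an integrable $f$, and compute
\begin{align}
\int_{\Ug(d)} f(VU)\, d\nu(U) = \ExU\!\left[f(VU^\dagger)\right] = \ExU\!\left[f\big((UV^\dagger)^\dagger\big)\right].
\end{align}
Setting $h(W)\coloneqq f(W^\dagger)$, the right-hand side becomes $\ExU[h(UV^\dagger)]$, and applying the \emph{right} invariance of $\mu_H$ with the fixed unitary $V^\dagger$ gives $\ExU[h(U)] = \ExU[f(U^\dagger)] = \int f\, d\nu$. This is precisely left invariance of $\nu$. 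The symmetric computation, rewriting $f(UV)$ as $f((V^\dagger U)^\dagger)$ and invoking the \emph{left} invariance of $\mu_H$, establishes right invariance of $\nu$.

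Having shown that $\nu$ is a left- and right-invariant probability measure, uniqueness forces $\nu = \mu_H$, which unwinds to $\ExU[f(U^\dagger)] = \ExU[f(U)]$, as claimed. I expect no genuine obstacle here; the only point demanding care is bookkeeping which invariance of $\mu_H$ is used at each step, since taking adjoints reverses the order of multiplication and therefore interchanges the roles of left and right translation. An equivalent and slightly more self-contained route, should one wish to avoid explicitly invoking the pushforward, is to define the functional $L[f]\coloneqq \ExU[f(U^\dagger)]$ directly, observe that $L$ is a normalized positive linear functional, verify the same two translation identities for $L$, and then appeal to uniqueness in its functional (Riesz-type) form.
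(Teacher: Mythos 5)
Your argument is correct and is essentially the paper's own proof: both define the pushforward measure of $\mu_H$ under $U \mapsto U^\dagger$, verify its bi-invariance (noting that taking adjoints swaps the roles of left and right translation), and conclude by uniqueness of the Haar measure. The only difference is presentational — you spell out the probability-measure check and mention the Riesz-functional variant, which the paper omits.
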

\begin{proof}
    
Let $\mu_{\dagger}$ be a probability measure defined as
\begin{align}
 \int_{\Ug(d)} f\left(U\right)d\mu_{\dagger}(U)\coloneqq \int_{\Ug(d)} f\left(U^{\dagger}\right)d\mu_{H}(U).
\end{align}
We will now show that $\mu_{\dagger}$ is right and left invariant, which implies that it coincides with $\mu_H$ because of the uniqueness of the Haar measure. 
To this end, let $V$ be a fixed unitary matrix:
\begin{align}
 \int_{\Ug(d)} f\left(UV\right)d\mu_{\dagger}(U)&=\int_{\Ug(d)} f\left(U^{\dagger}V\right)d\mu_{H}(U)=\int_{\Ug(d)} f\left(U^{\dagger} V^\dagger V\right)d\mu_{H}(U)\\&=\int_{\Ug(d)} f\left(U^{\dagger}\right)d\mu_{H}(U)=\int_{\Ug(d)} f\left(U\right)d\mu_{\dagger}(U).
\end{align}
This shows that $\mu_{\dagger}$ is right-invariant, as claimed. Left-invariance similarly follows.
\end{proof}
A quantity that plays a crucial role in our analysis is the $k$-th moment operator, where $k$ is a natural number. 
\begin{definition}[$k$-th Moment operator]
The $k$-th moment operator, with respect to the probability measure $\mu_H$, is defined as
$\mathcal{M}_{\mu_H}^{(k)}:\mathcal{L}\left((\mathbb{C}^d)^{\otimes k}\right) \rightarrow\mathcal{L}\left((\mathbb{C}^d)^{\otimes k}\right)$ :
   \begin{align}
       \mathcal{M}_{\mu_H}^{(k)}\!\left(O\right)\coloneqq  \ExU \left[U^{\otimes k} O U^{\dagger \otimes k}\right],
   \end{align}
   for all operators $O\in\mathcal{L}((\mathbb{C}^d)^{\otimes k})$.
\label{def:moment}
\end{definition}
By characterizing the moment operator and computing its matrix elements, we can explicitly evaluate integrals over the Haar measure. 
For example, let $O=\ketbra{i_1,\dots,i_k}{j_1,\dots,j_k}$ with $i_a,j_a\in [d]$ for all $a\in [k]$, we have:
\begin{align}
    \bra{l_1,\dots,l_k}\ExU \left[U^{\otimes k} O U^{\dagger \otimes k}\right]\ket{m_1,\dots,m_k}=\ExU \left[U_{l_1, i_1}\cdots U_{l_k, i_k} U^*_{m_1, j_1}\cdots U^*_{m_k, j_k}\right]
\end{align}
where $l_a,m_a\in [d]$  for all $a\in [k]$.
In order to characterize the moment operator, we need to define the $k$-th order commutant of a set of matrices $S$.

\begin{definition}[Commutant]
\label{def:comm}
Given $S \subseteq \mathcal{L}\left(\mathbb{C}^d\right)$, we define its $k$-th order commutant as \begin{align}
  \mathrm{Comm}(S,k)\coloneqq \{ A \in\mathcal{L}\left((\mathbb{C}^d)^{\otimes k}\right)\colon\left[A,B^{\otimes k}\right]=0 \,\, \forall\, B\in S \}.
\end{align} 
\end{definition}
It is worth noting that $\mathrm{Comm}(S,k)$ is a vector subspace. In the following we will show that the moment operator is the orthogonal projector onto the commutant of the unitary group $\mathrm{Comm}(\Ug(d),k)$ with respect to the Hilbert-Schmidt inner product. In order to do so, we first prove the following Lemma.

\begin{lemma}[Properties of the moment operator]\
\label{le:PropMom}

The moment operator $\mathcal{M}_{\mu_H}^{(k)}\!\left(\cdot\right)\coloneqq \ExU \left[U^{\otimes k} \left(\cdot\right) U^{\dagger \otimes k}\right]$ has the following properties:
\begin{enumerate}
    \item  It is linear, trace-preserving, and self-adjoint with respect to the Hilbert-Schmidt inner product.
    \label{propr:MomHerm}
    \item For all $A \in \mathcal{L}\left((\mathbb{C}^d)^{\otimes k}\right)$, $\mathcal{M}_{\mu_H}^{(k)}\!\left(A\right)\in \mathrm{Comm}(\Ug(d),k)$.
    \label{propr:CoMom}
    \item If $A\in \mathrm{Comm}(\Ug(d),k)$, then $\mathcal{M}_{\mu_H}^{(k)}\!\left(A\right)=A$.
    \label{propr:EigMom}
\end{enumerate}
\end{lemma}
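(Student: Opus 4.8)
The plan is to prove the three properties in turn, all via direct manipulation of the defining integral together with the invariance of the Haar measure.

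For property~\ref{propr:MomHerm}, linearity is immediate since $U^{\otimes k}(\cdot)U^{\dagger\otimes k}$ is linear in its argument and the expectation is a linear operation. For the trace-preserving property, I would use cyclicity of the trace: $\Tr(U^{\otimes k} A U^{\dagger\otimes k}) = \Tr(U^{\dagger\otimes k}U^{\otimes k} A) = \Tr(A)$ since $U$ is unitary, and then pull this out of the expectation (which averages a constant). For self-adjointness with respect to the Hilbert--Schmidt inner product, I would compute $\hs{B}{\mathcal{M}_{\mu_H}^{(k)}(A)}$ and show it equals $\hs{\mathcal{M}_{\mu_H}^{(k)}(B)}{A}$. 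Expanding the left side gives $\ExU[\Tr(B^\dagger U^{\otimes k} A U^{\dagger\otimes k})]$; using cyclicity this rewrites as $\ExU[\Tr((U^{\dagger\otimes k}B U^{\otimes k})^\dagger A)]$, and the point is that $U^{\dagger\otimes k} B U^{\otimes k} = (U^\dagger)^{\otimes k} B (U^\dagger)^{\dagger\otimes k}$, so after invoking Proposition~\ref{prop:Haar} to replace $U^\dagger$ by $U$ inside the average, this is exactly $\hs{\mathcal{M}_{\mu_H}^{(k)}(B)}{A}$.

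For property~\ref{propr:CoMom}, I must show $\mathcal{M}_{\mu_H}^{(k)}(A)$ commutes with $V^{\otimes k}$ for every $V\in\Ug(d)$. The natural computation is
\begin{align}
V^{\otimes k}\,\mathcal{M}_{\mu_H}^{(k)}(A)\,V^{\dagger\otimes k}
= \ExU\!\left[(VU)^{\otimes k} A (VU)^{\dagger\otimes k}\right],
\end{align}
using $(VU)^{\otimes k}=V^{\otimes k}U^{\otimes k}$ and pulling the fixed $V^{\otimes k}$ inside the expectation. By left-invariance of the Haar measure, the average over $U$ of a function of $VU$ equals the average of that function of $U$, so the right-hand side equals $\mathcal{M}_{\mu_H}^{(k)}(A)$. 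This gives $V^{\otimes k}\mathcal{M}_{\mu_H}^{(k)}(A)V^{\dagger\otimes k}=\mathcal{M}_{\mu_H}^{(k)}(A)$, which is equivalent to $[\mathcal{M}_{\mu_H}^{(k)}(A),V^{\otimes k}]=0$; since $V$ was arbitrary, this places the output in $\mathrm{Comm}(\Ug(d),k)$ by Definition~\ref{def:comm}.

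For property~\ref{propr:EigMom}, if $A\in\mathrm{Comm}(\Ug(d),k)$ then $[A,U^{\otimes k}]=0$ for every $U$, equivalently $U^{\otimes k}A U^{\dagger\otimes k}=A$ pointwise (using $U^{\dagger\otimes k}U^{\otimes k}=\Id$). Substituting into the integrand makes the integrand the constant $A$, and since $\mu_H$ is a probability measure the average of the constant $A$ is simply $A$. I do not anticipate a genuine obstacle here; the only point requiring care is the bookkeeping of daggers and tensor powers in the self-adjointness argument, where the change-of-variable Proposition~\ref{prop:Haar} must be applied to convert the $U^\dagger$-conjugation arising from the Hilbert--Schmidt adjoint back into a standard $U$-conjugation so that the expression is recognized as $\mathcal{M}_{\mu_H}^{(k)}(B)$.
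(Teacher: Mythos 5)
Your proof is correct and follows essentially the same route as the paper: linearity and trace preservation from the definition, self-adjointness via cyclicity of the trace plus the change of variables $U\mapsto U^\dagger$ (Proposition~\ref{prop:Haar}), membership in the commutant via left-invariance, and the fixed-point property by noting the integrand is constant. The only cosmetic difference is that in property~\ref{propr:CoMom} you conjugate by $V^{\otimes k}$ and invoke left-invariance once, whereas the paper moves $V^{\otimes k}$ from left to right in a single chain; the two are equivalent.
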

\begin{proof}\
\begin{enumerate}
\item Linearity and the trace preserving property follow easily from Definition~\ref{def:moment}. 
To show that the moment operator is self-adjoint, we need to prove that:
\begin{align}
    \langle \mathcal{M}_{\mu_H}^{(k)}\!\left(A\right),B\rangle_{HS}=\langle A,\mathcal{M}_{\mu_H}^{(k)}\!\left(B\right)\rangle_{HS}.
\end{align}
for all $A,B \in \mathcal{L}\left((\mathbb{C}^d)^{\otimes k}\right)$.
This follows because: 
\begin{align}
\Tr\!\left(\mathcal{M}^\dagger_{k}\!\left(A\right)B\right)&=\ExU\Tr\!\left(U^{\otimes k} A^\dagger U^{\dagger\otimes k}B\right)\\&=\Tr\!\left( A^\dagger \ExU\left[U^{\dagger\otimes k}BU^{\otimes k}\right]\right)\\&=\Tr\!\left(A^\dagger\mathcal{M}_{\mu_H}^{(k)}\!\left(B\right)\right),
\end{align}
where in the last step we used Proposition~\ref{prop:Haar}.
 \item For all $V \in \Ug(d)$, we have that:
\begin{align}
V^{\otimes k}\mathcal{M}_{\mu_H}^{(k)}\!\left(A\right)&=\ExU \left[\left(VU\right)^{\otimes k} A U^{\dagger \otimes k}\right] =\ExU \left[U^{\otimes k} A \left(V^\dagger U\right)^{\dagger \otimes k}\right]\\&=\ExU \left[U^{\otimes k} A U^{\dagger \otimes k}\right]V^{\otimes k}=\mathcal{M}_{\mu_H}^{(k)}\!\left(A\right)V^{\otimes k},
\end{align}
where we used the left invariance of the Haar measure.
        \item Since  $A\in \mathrm{Comm}(\Ug(d),k)$, we have:
\begin{align}
    \mathcal{M}_{\mu_H}^{(k)}\!\left(A\right)=\ExU \left[U^{\otimes k} A U^{\dagger \otimes k}\right]= \ExU \left[A U^{\otimes k} U^{\dagger \otimes k}\right]= A.
\end{align}
\end{enumerate}
\end{proof}

\begin{theorem}[Projector onto the commutant]
\label{prop:projComm}
The moment operator $ \mathcal{M}_{\mu_H}^{(k)}\!\left(\cdot\right)=\ExU \left[U^{\otimes k} \left(\cdot\right) U^{\dagger \otimes k}\right]$ is the orthogonal projector onto the commutant $\mathrm{Comm}\coloneqq \mathrm{Comm}(\Ug(d),k)$ with respect to the Hilbert-Schmidt inner product.
In particular, let $P_1,\ldots,P_{\mathrm{dim}(\mathrm{Comm})}$ be an orthonormal basis of $\mathrm{Comm}$ and let $O \in \mathcal{L}((\mathbb{C}^d)^{\otimes k})$. Then, we have:
\begin{align}
\MomOp{O}=\sum^{\mathrm{dim}\left(\mathrm{Comm}\right)}_{i=1} \langle P_i ,O\rangle_{HS} P_i \,.
\end{align}

\end{theorem}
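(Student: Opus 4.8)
The plan is to reduce everything to the three properties collected in Lemma~\ref{le:PropMom}, together with the abstract characterization of an orthogonal projector: a linear map $P$ on the inner-product space $\left(\mathcal{L}((\mathbb{C}^d)^{\otimes k}),\langle\cdot,\cdot\rangle_{HS}\right)$ is \emph{the} orthogonal projector onto a subspace $W$ if and only if it is self-adjoint, idempotent ($P\circ P=P$), and has range exactly $W$. I would first recall this characterization, and then verify each of the three requirements with $W=\mathrm{Comm}$.

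Self-adjointness is immediate from Property~\ref{propr:MomHerm}. For idempotence, I would chain Properties~\ref{propr:CoMom} and~\ref{propr:EigMom}: for any $A$, Property~\ref{propr:CoMom} guarantees $\mathcal{M}_{\mu_H}^{(k)}(A)\in\mathrm{Comm}$, and then Property~\ref{propr:EigMom}, applied to this element of the commutant, gives $\mathcal{M}_{\mu_H}^{(k)}(\mathcal{M}_{\mu_H}^{(k)}(A))=\mathcal{M}_{\mu_H}^{(k)}(A)$, so the map squares to itself. For the range, the same two properties supply both inclusions: Property~\ref{propr:CoMom} shows $\mathrm{range}(\mathcal{M}_{\mu_H}^{(k)})\subseteq\mathrm{Comm}$, while Property~\ref{propr:EigMom} shows every $A\in\mathrm{Comm}$ is fixed and hence lies in the range, giving $\mathrm{Comm}\subseteq\mathrm{range}(\mathcal{M}_{\mu_H}^{(k)})$. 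Together these establish that the moment operator is exactly the orthogonal projector onto $\mathrm{Comm}$.

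Finally, I would derive the explicit basis expansion. Having identified $\mathcal{M}_{\mu_H}^{(k)}$ as the orthogonal projector onto $\mathrm{Comm}$, the formula $\mathcal{M}_{\mu_H}^{(k)}(O)=\sum_i\langle P_i,O\rangle_{HS}P_i$ is just the standard coordinate expression of an orthogonal projection with respect to an orthonormal basis $\{P_i\}$ of the target subspace. To keep the argument self-contained, I would verify directly that the right-hand side defines a self-adjoint idempotent linear map whose range is $\mathrm{Comm}$, so that by uniqueness of the orthogonal projector onto a fixed subspace it must coincide with $\mathcal{M}_{\mu_H}^{(k)}$.

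I do not anticipate a serious obstacle, since Lemma~\ref{le:PropMom} already does the substantive work; the only point requiring care is invoking the uniqueness of the orthogonal projector correctly, i.e.\ confirming that self-adjointness, idempotence, and the range condition together pin down the map uniquely rather than merely exhibiting \emph{some} projector with range $\mathrm{Comm}$.
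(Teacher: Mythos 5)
Your proof is correct, and it rests on exactly the same substantive input as the paper's, namely the three properties of Lemma~\ref{le:PropMom}; the difference is purely in how the argument is assembled. The paper proves the displayed formula directly: it extends the orthonormal basis $\{P_i\}$ of $\mathrm{Comm}$ to a full orthonormal basis of $\mathcal{L}((\mathbb{C}^d)^{\otimes k})$, expands $\MomOp{O}$ in that basis, and then uses self-adjointness to move the moment operator onto the basis elements, membership of $\MomOp{O}$ in the commutant to kill the terms from the orthogonal complement, and the fixed-point property to simplify the remaining ones --- so the ``orthogonal projector'' claim falls out of the formula rather than being established first. You instead go through the abstract characterization (self-adjoint, idempotent, range equal to $\mathrm{Comm}$) and then invoke uniqueness of the orthogonal projector onto a fixed subspace to recover the coordinate formula. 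Both routes are sound; yours makes the idempotence $\mathcal{M}^{(k)}_{\mu_H}\circ\mathcal{M}^{(k)}_{\mu_H}=\mathcal{M}^{(k)}_{\mu_H}$ explicit (the paper never states it in this proof, though it uses the vectorized analogue later), at the cost of needing the standard uniqueness fact you correctly flag as the one point requiring care. Your sketch of how to close that gap --- verifying that the right-hand side is itself a self-adjoint idempotent with range $\mathrm{Comm}$ --- is adequate, so there is nothing missing.
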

\begin{proof}
Let us extend the orthonormal basis of the commutant with the orthonormal operators $P_i$ for $i\in \{\mathrm{dim}\left(\mathrm{Comm}\right)+1,\dots,\mathrm{dim}\left(V\right)\}$, where $V\coloneqq \mathcal{L}\left((\mathbb{C}^d)^{\otimes k}\right)$. This extended basis forms an orthonormal basis for $V$. Therefore we have: 
\begin{align}
    \MomOp{O}&=\sum^{\mathrm{dim}\left(V\right)}_{i=1} \langle P_i ,\MomOp{O}\rangle_{HS} P_i 
    \\&=\sum^{\mathrm{dim}\left(\mathrm{Comm}\right)}_{i=1} \langle \MomOp{P_i} ,O\rangle_{HS} P_i + \sum^{\mathrm{dim}\left(V\right)}_{i=\mathrm{dim}\left(\mathrm{Comm}\right)+1} \langle P_i ,\MomOp{O}\rangle_{HS} P_i  \\
    &=\sum^{\mathrm{dim}\left(\mathrm{Comm}\right)}_{i=1} \langle \MomOp{P_i} ,O\rangle_{HS} P_i + 0\\
    &=\sum^{\mathrm{dim}\left(\mathrm{Comm}\right)}_{i=1} \langle P_i ,O\rangle_{HS} P_i ,
    \end{align}
where in the second line we used the fact that the moment operator is self-adjoint (Lemma~\ref{le:PropMom}.\ref{propr:MomHerm}), in the third line that $\MomOp[k]{O}\in \mathrm{Comm}$ (Lemma~\ref{le:PropMom}.\ref{propr:CoMom}) and that $P_i$ with $i\in \{\mathrm{dim}\left(\mathrm{Comm}\right)+1,\dots,\mathrm{dim}\left(V\right)\}$ are in its orthogonal complement, in the fourth line that $P_i \in \mathrm{Comm}$ for $i\in \{1,\dots,\mathrm{dim}\left(\mathrm{Comm}\right)\}$ and Lemma~\ref{le:PropMom}.\ref{propr:EigMom}.
\end{proof}
We have just shown that the moment operator is intimately related to the $k$-order commutant of the unitary group, i.e. all the matrices that commute with $U^{\otimes k}$ for any unitary $U$. A set of operations that surely commutes with $U^{\otimes k}$ is if we exchange some tensor factors. For this purpose, we will now define the operators that implement such transformations, namely the permutation operators.
\begin{definition}[Permutation operators]
    Given $\pi \in S_k$ an element of the symmetric group $S_k$, we define the permutation matrix $V_d(\pi)$ to be the unitary matrix that satisfies:
    \begin{align}
    V_d(\pi) \ket{\psi_1}\otimes \cdots \otimes \ket{\psi_k}=\ket{\psi_{\pi^{-1}(1)}}\otimes \cdots \otimes \ket{\psi_{\pi^{-1}(k)}},
    \label{def:permutation}
    \end{align}
    for all $\ket{\psi_1},\dots,\ket{\psi_k} \in \mathbb{C}^{d}$.
\end{definition}
Note that from this definition it follows that $ V_d(\sigma) V_d(\pi) = V_d(\sigma \pi )$ and  $V_d(\pi^{-1})=V^\dagger_d(\pi)$. 
Equivalently, we can write the permutation matrix as:
\begin{align}
V_d(\pi)=\sum_{i_1,\ldots,i_k \in [d]^{k}} \ketbra{i_{\pi^{-1}(1)},\ldots,i_{\pi^{-1}(k)}}{i_1,\cdots,i_k}.
\end{align}
Thus, we have the following property:
\begin{align}
V_d(\pi) \left(A_1\otimes \cdots \otimes A_k\right)V^\dagger_d(\pi)=A_{\pi^{-1}(1)}\otimes \cdots \otimes A_{\pi^{-1}(k)},
\end{align}
for $A_1,\dots,A_k \in \MatC{d}$.

A crucial and much celebrated result of representation theory is now that the permutation operators characterize \emph{all} possible matrices in the commutant -- this is the Schur-Weyl duality. 
\begin{theorem}[Schur-Weyl duality \cite{goodman2000representations}]
\label{th:SchurW}
The $k$-th order commutant of the unitary group is the span of the permutation operators associated to $S_k$: 
\begin{align}
\mathrm{Comm}(\Ug(d),k)=\operatorname{span} \Big(V_d(\pi):\, \pi \in S_k\Big).
\end{align}
\end{theorem}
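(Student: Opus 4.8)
The plan is to prove the two inclusions separately, the inclusion $\operatorname{span}(V_d(\pi):\pi\in S_k)\subseteq \mathrm{Comm}(\Ug(d),k)$ being the elementary one. For it I would check directly from the defining action of $V_d(\pi)$ that each permutation operator commutes with $U^{\otimes k}$: applying $U$ to every tensor factor and then relabelling the factors according to $\pi$ produces the same vector in either order, i.e. $V_d(\pi)U^{\otimes k}\ket{\psi_1,\dots,\psi_k}=\ket{U\psi_{\pi^{-1}(1)},\dots,U\psi_{\pi^{-1}(k)}}=U^{\otimes k}V_d(\pi)\ket{\psi_1,\dots,\psi_k}$. Since such product vectors span $(\mathbb{C}^d)^{\otimes k}$, the two operators agree, and by linearity the whole span lies in the commutant. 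The substance of the theorem is the \emph{reverse} inclusion, which I would establish through the double commutant theorem combined with a polarization identity, thereby avoiding explicit representation theory.

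Write $\mathcal{B}\coloneqq \operatorname{span}(V_d(\pi):\pi\in S_k)$ and, for a set $\mathcal{T}\subseteq \MatC{d^k}$, let $\mathcal{T}'$ denote its commutant, i.e. the operators commuting with every element of $\mathcal{T}$. The set $\mathcal{B}$ is a unital $*$-subalgebra of $\MatC{d^k}$: it contains $I=V_d(e)$, it is closed under products because $V_d(\sigma)V_d(\pi)=V_d(\sigma\pi)$, and it is closed under adjoints because $V_d(\pi)^\dagger=V_d(\pi^{-1})$. Hence the finite-dimensional double commutant theorem applies and gives $\mathcal{B}''=\mathcal{B}$. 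The next step is to compute $\mathcal{B}'$ explicitly: an operator $X$ lies in $\mathcal{B}'$ iff $V_d(\pi)XV_d(\pi)^\dagger=X$ for every $\pi\in S_k$, and under the identification $\MatC{d^k}\cong\MatC{d}^{\otimes k}$ this conjugation permutes the $k$ tensor factors of $X$, so $\mathcal{B}'$ is exactly the subspace of operators symmetric under permutation of tensor factors. By the polarization identity this symmetric subspace equals $\operatorname{span}(A^{\otimes k}:A\in\MatC{d})$.

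It then remains to connect $\operatorname{span}(A^{\otimes k})$ back to the unitary group. I claim $\operatorname{span}(U^{\otimes k}:U\in\Ug(d))=\operatorname{span}(A^{\otimes k}:A\in\MatC{d})$, from which $\mathrm{Comm}(\Ug(d),k)=(\operatorname{span}(U^{\otimes k}))'=(\mathcal{B}')'=\mathcal{B}''=\mathcal{B}$, the desired equality. The claim follows because any linear functional on $\MatC{d^k}$ has the form $O\mapsto \Tr(M^\dagger O)$, and if it annihilates every $U^{\otimes k}$ then $U\mapsto \Tr(M^\dagger U^{\otimes k})$ is a holomorphic polynomial in the entries of $U$ that vanishes on $\Ug(d)$; since the unitary group is Zariski-dense in the space of all matrices, such a polynomial vanishes identically, so the functional annihilates every $A^{\otimes k}$ as well and the two spans coincide.

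I expect the main obstacles to be the two genuinely non-elementary ingredients: the double commutant theorem $\mathcal{B}''=\mathcal{B}$ and the density step identifying $\operatorname{span}(U^{\otimes k})$ with $\operatorname{span}(A^{\otimes k})$. In finite dimensions the former can still be proved within linear algebra, by decomposing $(\mathbb{C}^d)^{\otimes k}$ into $\mathcal{B}$-invariant subspaces and constructing the commuting projectors by hand, while the latter only needs that a holomorphic polynomial vanishing on $\Ug(d)$ vanishes on all of $\MatC{d}$. An appealing alternative, consistent with the tools just developed, is a dimension count: since the moment operator is the orthogonal projector onto $\mathrm{Comm}(\Ug(d),k)$ (Theorem~\ref{prop:projComm}), its rank equals its trace, giving $\dim \mathrm{Comm}(\Ug(d),k)=\Tr \mathcal{M}_{\mu_H}^{(k)}=\ExU[|\Tr(U)|^{2k}]$, and comparing this with $\dim\operatorname{span}(V_d(\pi))$ (which is $k!$ when $d\ge k$) would close the argument; however, this merely relocates the difficulty into evaluating that Haar integral, which is itself of Schur--Weyl strength.
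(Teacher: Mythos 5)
Your argument is correct, but it takes a genuinely different route from the paper. The paper proves only the easy inclusion $\operatorname{span}\big(V_d(\pi):\pi\in S_k\big)\subseteq\mathrm{Comm}(\Ug(d),k)$ in general (by the same direct computation on product vectors that you give), explicitly declines to prove the converse for general $k$ --- deferring to the cited references --- and instead verifies the full equality by hand only for $k=1$ and $k=2$ in Example~\ref{ex:comm2}, using specially chosen unitaries (sign flips, transpositions of basis vectors, and the discrete Fourier transform) to eliminate all matrix elements of a commuting operator except those supported on $\Idd$ and $\Flip$. Your route is the standard bicommutant proof of Schur--Weyl duality: $\mathcal{B}=\operatorname{span}(V_d(\pi))$ is a unital $*$-subalgebra, so $\mathcal{B}''=\mathcal{B}$; its commutant is the permutation-symmetric part of $(\MatC{d})^{\otimes k}$, which polarization identifies with $\operatorname{span}(A^{\otimes k}:A\in\MatC{d})$; and the density step (a holomorphic polynomial vanishing on $\Ug(d)$ vanishes on all of $\MatC{d}$) lets you pass from $\operatorname{span}(A^{\otimes k})$ to $\operatorname{span}(U^{\otimes k})$, giving $\mathrm{Comm}(\Ug(d),k)=(\mathcal{B}')'=\mathcal{B}$. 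What each approach buys: the paper's case-by-case argument stays entirely within elementary linear algebra, consistent with the tutorial's stated aim, but does not scale beyond $k=2$; yours proves the theorem for all $k$ and $d$ at the cost of importing two nontrivial black boxes (the finite-dimensional double commutant theorem and the Zariski-density fact), which you correctly flag as the load-bearing steps and which would need their own proofs to make the argument self-contained. Your closing observation that the frame-potential dimension count is circular is also accurate: the evaluation $\ExU\big[|\Tr(U)|^{2k}\big]=k!$ for $d\ge k$ is normally a consequence of Schur--Weyl duality rather than an independent input.
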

We will omit the proof of this theorem here, except for the cases where $k=1$ and $k=2$, which we will show later in this section. Interested readers can refer to Refs.~\cite{goodman2000representations,christandl2006structure,zhang2015matrix} for a detailed exposition.
However we can easily check that $\operatorname{span} \Big(V_d(\pi):\, \pi \in S_k\Big)\subseteq \mathrm{Comm}(\Ug(d),k)$.
To see why this is true, consider an arbitrary permutation $V_d(\pi)$ with $\pi \in S_k$ and $U\in \Ug \left(d\right)$. 
    We have: 
    \begin{align}
        V_d(\pi) U^{\otimes k} \ket{\psi_1}\otimes\cdots\otimes\ket{\psi_k}&=V_d(\pi)  (U\ket{\psi_1})\otimes\cdots\otimes (U\ket{\psi_k})\\
        &=U\ket{\psi_{\pi^{-1}(1)}}\otimes \cdots \otimes U\ket{\psi_{\pi^{-1}(k)}}\\
        &= U^{\otimes k} V_d(\pi)\ket{\psi_1}\otimes\cdots\otimes\ket{\psi_k},
    \end{align}
    for all $\ket{\psi_1},\dots,\ket{\psi_k} \in \mathbb{C}^{d}$. Hence we have that $\left[V_d(\pi) ,U^{\otimes k}\right]=0$ for all $\pi \in S_k$, from which the claim follows.

The permutation matrices form a basis for the $k$-th order commutant of the unitary group, capturing its essential structure. However, it is important to note that they are not orthonormal with respect to the Hilbert-Schmidt inner product. Therefore, we cannot directly apply Theorem~\ref{prop:projComm}. Nevertheless, we have an alternative approach that allows us to evaluate the moment operator and, consequently, compute integrals over the Haar measure. The following theorem presents a recipe for accomplishing this task.
\begin{theorem}(Computing moments)
\label{th:eq:MomPerm}
Let $O\in \mathcal{L}\left((\mathbb{C}^d)^{\otimes k}\right)$. The moment operator can then be expressed as a linear combination of permutation operators:
\begin{align}
    \ExU \left[U^{\otimes k} O U^{\dagger \otimes k}\right]=\sum_{\pi \in S_k} c_\pi(O)  V_d(\pi),
    \label{eq:MomPerm}
\end{align}
where the coefficients $c_\pi(O)$ can be determined by solving the following linear system of $k!$ equations:
\begin{align}
    \Tr\!\left(V^\dagger_d(\sigma)O\right) = \sum_{\pi \in S_k} c_\pi(O) \Tr\!\left(V^\dagger_d(\sigma)V_d(\pi)\right) \quad \text{for all $\sigma \in S_k$.}
\end{align}
This system always has at least one solution.
\end{theorem}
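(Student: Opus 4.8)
The plan is to combine the two structural results established just above—that the moment operator $\mathcal{M}_{\mu_H}^{(k)}$ is the orthogonal projector onto $\mathrm{Comm}(\Ug(d),k)$ (Theorem~\ref{prop:projComm}), and that this commutant is spanned by the permutation operators (Theorem~\ref{th:SchurW}, Schur--Weyl duality)—in order to first justify the form of the expansion, and then to extract the linear system by pairing both sides against permutation operators in the Hilbert--Schmidt inner product.

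First I would establish the existence of the expansion in~\eqref{eq:MomPerm}. By Lemma~\ref{le:PropMom}.\ref{propr:CoMom} the operator $\mathcal{M}_{\mu_H}^{(k)}(O)$ lies in $\mathrm{Comm}(\Ug(d),k)$, and by Schur--Weyl duality this commutant equals $\operatorname{span}(V_d(\pi):\pi\in S_k)$. Hence there must exist coefficients $c_\pi(O)$ with $\mathcal{M}_{\mu_H}^{(k)}(O)=\sum_{\pi\in S_k}c_\pi(O)V_d(\pi)$, which is exactly the asserted form.

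Next I would derive the linear system. Taking the Hilbert--Schmidt inner product of both sides of the expansion with $V_d(\sigma)$ for an arbitrary $\sigma\in S_k$ produces, on the right, the quantity $\sum_{\pi}c_\pi(O)\Tr(V^\dagger_d(\sigma)V_d(\pi))$. The key step is to simplify the left-hand side $\langle V_d(\sigma),\mathcal{M}_{\mu_H}^{(k)}(O)\rangle_{HS}$: using self-adjointness of the moment operator (Lemma~\ref{le:PropMom}.\ref{propr:MomHerm}) I move it onto the first argument, and since $V_d(\sigma)\in\mathrm{Comm}$ is fixed by the moment operator (Lemma~\ref{le:PropMom}.\ref{propr:EigMom}), I obtain $\langle V_d(\sigma),\mathcal{M}_{\mu_H}^{(k)}(O)\rangle_{HS}=\langle \mathcal{M}_{\mu_H}^{(k)}(V_d(\sigma)),O\rangle_{HS}=\langle V_d(\sigma),O\rangle_{HS}=\Tr(V^\dagger_d(\sigma)O)$. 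This recovers precisely the stated $k!$ equations, and simultaneously shows the system is consistent, since the genuine expansion coefficients solve it.

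The step I expect to require the most care is the clause that the system \emph{always has at least one solution}, in light of the fact that the permutation operators need not be linearly independent when $d<k$, so the coefficients $c_\pi(O)$ are generally not unique. Consistency itself is guaranteed by the existence argument above. To close the loop—and to show that \emph{any} solution reconstructs the correct operator, not merely that one solution exists—I would argue: if $\{c_\pi\}$ is any solution and $X\coloneqq\sum_\pi c_\pi V_d(\pi)$, then $\langle V_d(\sigma),X-\mathcal{M}_{\mu_H}^{(k)}(O)\rangle_{HS}=0$ for every $\sigma\in S_k$; since both $X$ and $\mathcal{M}_{\mu_H}^{(k)}(O)$ lie in $\mathrm{Comm}$, which is spanned by the $V_d(\sigma)$, the difference $X-\mathcal{M}_{\mu_H}^{(k)}(O)$ is orthogonal to its own span and therefore vanishes, giving $X=\mathcal{M}_{\mu_H}^{(k)}(O)$.
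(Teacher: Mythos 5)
Your proposal is correct and takes essentially the same route as the paper: the expansion~\eqref{eq:MomPerm} from Lemma~\ref{le:PropMom}.\ref{propr:CoMom} together with Schur--Weyl duality, followed by pairing both sides against $V_d(\sigma)$ to extract the $k!$ equations --- the paper evaluates $\Tr\!\left(V^\dagger_d(\sigma)\,\mathcal{M}_{\mu_H}^{(k)}(O)\right)$ by pulling $V^\dagger_d(\sigma)$ through $U^{\otimes k}$ and invoking trace preservation, whereas you use self-adjointness plus the fixed-point property of Lemma~\ref{le:PropMom}.\ref{propr:EigMom}, which amounts to the same facts in a different order. Your closing observation that \emph{any} solution of the (possibly underdetermined, when $d<k$) system reconstructs $\mathcal{M}_{\mu_H}^{(k)}(O)$, because the difference lies in the commutant and is orthogonal to its own spanning set, is a small but worthwhile addition that the paper leaves implicit.
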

\begin{proof}
Equation~\eqref{eq:MomPerm} follows from Lemma~\ref{le:PropMom}.\ref{propr:CoMom}, which states that $\MomOp{O}=\ExU \left[U^{\otimes k} O U^{\dagger \otimes k}\right] \in \mathrm{Comm}(\Ug(d),k)$, and from Schur-Weyl duality (Theorem~\ref{th:SchurW}).

To obtain the linear system of equations, we begin by multiplying both sides of Eq.~\eqref{eq:MomPerm} by $V^\dagger_d(\sigma)$ and taking the trace for all $\sigma \in S_k$. This yields:
\begin{align}
    \sum_{\pi \in S_k} c_\pi(O) \Tr\!\left(V^\dagger_d(\sigma)V_d(\pi)\right)&=\Tr\!\left(V^\dagger_d(\sigma)\MomOp[k]{O}\right)\\&=\Tr\!\left(\MomOp[k]{V^\dagger_d(\sigma)O}\right)\\&=\Tr\!\left(V^\dagger_d(\sigma)O\right),
\end{align}
where in the first equality we used Eq.~\eqref{eq:MomPerm}, in the second equality we used that $V^\dagger_d(\sigma)$ commutes with $U^{\otimes k}$, and in the last equality we used the fact that the moment operator is trace preserving (Lemma~\ref{le:PropMom}.\ref{propr:MomHerm}).
Since $\MomOp{O} \in \operatorname{span} \left(V_d(\pi):\, \pi \in S_k\right)$, a solution to this linear system of equations always exists.
\end{proof}

The previous theorem provides an explicit expression for the coefficients of the moment operator in the permutation basis as $c_\pi(O)=\sum_{\sigma \in S_k} \left(\mathrm{G}^{+}\right)_{\pi,\sigma}\Tr\!\left(V_\sigma^\dagger O\right)$. Here, $\mathrm{G}$ is the Gram matrix, i.e. that matrix with coefficients $\mathrm{G}_{\pi,\sigma}=\Tr\!\left(V^\dagger_d(\pi)V_d(\sigma)\right)$, and $\mathrm{G}^+$ is its pseudo-inverse. This result allows us to express the moment operator in terms of the so-called Weingarten coefficients $\mathrm{Wg}(\pi^{-1}\sigma,d)\coloneqq \left(\mathrm{G}^+\right)_{\pi,\sigma}$ \cite{Collins_2006,collins2002moments,Weingarten:1977ya,collins2021weingarten}\footnote{The calculation of moments over the Haar measure is often referred to as the Weingarten calculus \cite{collins2021weingarten,köstenberger2021weingarten}.}, as follows:
\begin{align}
     \ExU \left[U^{\otimes k} O U^{\dagger \otimes k}\right]= \sum_{\pi,\sigma \in S_k} \mathrm{Wg}(\pi^{-1}\sigma,d)\Tr
     (V^\dagger_d(\sigma) O) V_d(\pi).
     \label{eq:momOpweing}
\end{align}
The Weingarten coefficients $\mathrm{Wg}(\pi^{-1}\sigma,d)$ can be written in terms of characters of the symmetric group. However, we will not explore this aspect here. For reference, see  \cite{Collins_2006,collins2002moments,Brand_o_2016,garcíamartín2023deep}.

The Gram matrix $\mathrm{G}$ has a simple expression in terms of the cycle structure of the permutations, given by:
\begin{align}
\mathrm{G}_{\pi,\sigma}=\Tr\!\left(V^\dagger_d(\pi)V_d(\sigma)\right)=\Tr\!\left(V_d(\pi^{-1})V_d(\sigma)\right)=\Tr\!\left(V_d(\pi^{-1}\sigma)\right)=d^{\#\text{cycles}(\pi^{-1}\sigma)},
\end{align}
where the last equality follows from the fact that $\Tr\!\left(V_d(\pi)\right)=\sum_{i_1,\dots,i_k \in [d]^k } \braket{i_{1}}{i_{\pi^{-1}(1)}}\cdots \braket{i_{k}}{i_{\pi^{-1}(k)}}$ and observing that this sum has $d^{\# \text{cycles}(\pi)}$ nonvanishing terms and they are all equal to $1$.  
This fact is also evident in the tensor networks notation that we will introduce in section~\ref{sec:diagram}.
\begin{proposition}
\label{prop:permInd}
For $\pi \in S_k$, the permutation matrices $V_d(\pi)$ are linearly independent if $k\le d$, but linearly dependent if $k>d$.
\end{proposition}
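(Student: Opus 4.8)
The plan is to treat the two regimes separately, since they call for opposite-flavored arguments: for $k \le d$ I would exhibit a single test vector on which the operators $V_d(\pi)$ act by producing \emph{distinct} basis vectors, while for $k>d$ I would write down an explicit nontrivial dependence relation coming from the antisymmetrizer.

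First, for the independence claim when $k \le d$, suppose $\sum_{\pi \in S_k} c_\pi V_d(\pi) = 0$. Since $k \le d$, there are at least $k$ distinct canonical basis indices available, so I can apply this operator identity to the vector $\ket{1,2,\dots,k}$. By the definition of the permutation operators, $V_d(\pi)\ket{1,2,\dots,k} = \ket{\pi^{-1}(1),\dots,\pi^{-1}(k)}$, and as $\pi$ ranges over $S_k$ these are \emph{pairwise distinct} elements of the canonical orthonormal basis of $(\mathbb{C}^d)^{\otimes k}$, because distinct permutations yield distinct orderings of $1,\dots,k$. Hence $\sum_{\pi} c_\pi \ket{\pi^{-1}(1),\dots,\pi^{-1}(k)} = 0$ forces every $c_\pi = 0$ by orthonormality, establishing linear independence.

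Next, for the dependence claim when $k > d$, I would introduce the (unnormalized) antisymmetrizer $A \coloneqq \sum_{\pi \in S_k} \mathrm{sgn}(\pi)\, V_d(\pi)$ and argue that it is the zero operator; this is itself a nontrivial linear relation, since the coefficients $\mathrm{sgn}(\pi) \in \{+1,-1\}$ are certainly not all zero. To show $A = 0$, I would evaluate it on an arbitrary canonical basis vector $\ket{i_1,\dots,i_k}$ with $i_a \in [d]$. Because $k > d$, the pigeonhole principle guarantees a repeated index $i_a = i_b$ with $a \neq b$. Let $\tau = (a\,b)$ be the corresponding transposition. Pairing each $\pi$ with $\pi\tau$ gives a fixed-point-free involution of $S_k$; the two partners satisfy $V_d(\pi\tau)\ket{i_1,\dots,i_k} = V_d(\pi)\ket{i_1,\dots,i_k}$ (swapping the equal entries $i_a,i_b$ leaves the ket unchanged) while $\mathrm{sgn}(\pi\tau) = -\mathrm{sgn}(\pi)$, so their contributions cancel. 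Summing over all such pairs yields $A\ket{i_1,\dots,i_k} = 0$, and since this holds on every basis vector, $A = 0$.

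I expect the $k>d$ direction to be the substantive step. The conceptual point is that $A$ is, up to normalization, the projector onto the antisymmetric subspace $\wedge^k\mathbb{C}^d$, whose dimension $\binom{d}{k}$ vanishes precisely when $k>d$; the pigeonhole-and-pairing argument is simply the concrete way to witness this collapse without invoking the dimension formula or representation theory. By contrast, the $k\le d$ direction is routine once the right test vector is chosen. An alternative route for the independence half would be to show the Gram matrix $\mathrm{G}_{\pi,\sigma} = d^{\#\mathrm{cycles}(\pi^{-1}\sigma)}$ is nonsingular for $k \le d$, but the direct test-vector argument is cleaner and self-contained, so I would prefer it.
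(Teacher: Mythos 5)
Your proposal is correct and follows essentially the same route as the paper: for $k\le d$ you test the relation on a basis vector with $k$ distinct indices and use orthonormality of the resulting kets, and for $k>d$ you exhibit the antisymmetrizer $\sum_{\pi}\mathrm{sgn}(\pi)V_d(\pi)$ as a vanishing nontrivial combination via the repeated-index/transposition cancellation. The only cosmetic difference is that the paper phrases the cancellation as $A\ket{i_1,\dots,i_k}=-A\ket{i_1,\dots,i_k}$ via a change of variables rather than as an explicit fixed-point-free pairing, which is the same argument.
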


\begin{proof}
    Let us consider first the case where $k\le d$. We assume that there exist complex coefficients $\alpha_\pi \in \mathbb{C}$ for all $\pi\in S_k$ such that the following equation holds:
    \begin{align}
        \sum_{\pi \in S_k} \alpha_\pi V_d(\pi)=0.
    \end{align}
    Since, $k\le d$, we can choose $k$ distinct elements $i_1,\dots,i_k \in [d]$ and apply to the right of both sides of the previous equation the state $\ket{i_1}\otimes\cdots\otimes\ket{i_k}$, obtaining:
    \begin{align}
        \sum_{\pi \in S_k} \alpha_\pi \ket{i_{\pi^{-1}(1)}}\otimes\cdots\otimes\ket{i_{\pi^{-1}(k)}}=0.
    \end{align}
    For all $\sigma \in S_k$, we multiply to the left by $\bra{i_{\sigma^{-1}(1)}}\otimes\cdots\otimes\bra{i_{\sigma^{-1}(k)}}$ and deduce $\alpha_{\sigma}=0$. Thus, we have shown linear independence.
    If $k>d$, then consider the operator:
    \begin{align}
        A=\sum_{\pi\in S_k} \mathrm{sgn}(\pi)V_d(\pi).
    \end{align}
    Now we show this linear combination is the zero operator, proving linear dependence.

We consider the action of $A$ on an arbitrary product basis state $\ket{i_1}\otimes\cdots\otimes\ket{i_k}$. Since $k>d$, at least two tensor factors must have matching entry, i.e., there exist $l\neq m \in [k]$ such that $i_l = i_m$. Due to anti-symmetrization, the output vector has to be the zero vector. In fact, we have:
\begin{align}
    A \ket{i_1}\otimes\cdots\otimes\ket{i_k}&=A V_d(\tau_{l,m})\ket{i_1}\otimes\cdots\otimes\ket{i_k}\\
    &=\sum_{\pi \in S_k} \mathrm{sgn}( \pi )V_d(
 \pi)V_d(\tau_{l,m})\ket{i_1}\otimes\cdots\otimes\ket{i_k}\\
 &=\sum_{\pi \in S_k} \mathrm{sgn}( \pi )V_d(
 \pi\tau_{l,m})\ket{i_1}\otimes\cdots\otimes\ket{i_k}\\
 &=\sum_{\pi \in S_k} \mathrm{sgn}( \pi\tau^{-1}_{l,m} )V_d(\pi)\ket{i_1}\otimes\cdots\otimes\ket{i_k}\\
 &=\mathrm{sgn}\left(\tau^{-1}_{l,m}\right)\sum_{\pi \in S_k} \mathrm{sgn}(\pi )V_d(
 \pi)\ket{i_1}\otimes\cdots\otimes\ket{i_k}\\
 &=- A\ket{i_1}\otimes\cdots\otimes\ket{i_k}.
\end{align}
Therefore $A \ket{i_1}\otimes\cdots\otimes\ket{i_k}=0$ and so $A=0$.
\end{proof}

In the following we define the identity permutation operator $\mathbb{I}$ and the Flip operator $\mathbb{F}$ (also often referred to as  $\mathrm{SWAP}$ operator) which are the permutation operators corresponding to the elements of the permutation group $S_2$. 
\begin{definition}[Identity and Flip operators] The identity permutation operator $\mathbb{I}$ is defined as the linear operator that leaves any tensor product state ${\ket{\psi}\otimes\ket{\phi}}$ unchanged, that is: 
    \begin{align}
\mathbb{I}\left(\ket{\psi}\otimes\ket{\phi}\right)=\ket{\psi}\otimes\ket{\phi}, \quad \quad  \text{for all $\ket{\psi},\ket{\phi} \in \mathbb{C}^d$}.
    \end{align} 
The Flip operator $\mathbb{F}$ is defined as the linear operator that interchanges the order of the tensor factors of any product state $\ket{\psi}\otimes\ket{\phi}$, that is:
\begin{align}
\mathbb{F}\left(\ket{\psi}\otimes\ket{\phi}\right)=\ket{\phi}\otimes\ket{\psi}, \quad \quad  \text{for all $\ket{\psi},\ket{\phi} \in \mathbb{C}^d$}.
\end{align}
\end{definition}
Writing the Identity and the Flip in the computational basis, we have:
\begin{align}
    \Idd=\sum^d_{i,j=1} \ketbra{i,j}{i,j}, \quad\quad\quad \Flip =\sum^d_{i,j=1} \ketbra{i,j}{j,i}.
\end{align}
From this, it is evident that the Flip operator is Hermitian. Another key property of the Flip operator is the \emph{swap-trick}, which states that for all operators $A,B \in \mathcal{L}(\mathbb{C}^d)$, we have:
\begin{align}
\Tr\!\left(\left(A\otimes B\right) \Flip \right)=\Tr\!\left(AB\right).
\end{align}

This property can be easily verified using the definition of the Flip operator. The \emph{swap-trick} is particularly useful as it allows us to simplify calculations involving tensor products and permutations, and it will be used extensively in the subsequent sections.

We now present a corollary of Theorem~\ref{th:eq:MomPerm}, which plays a crucial role in many calculations involving Haar integrals in the context of quantum information, as we will see in section~\ref{sec:applications}. In the following corollary we assume $d>1$ (which makes sense for an $n$-qubit system, since $d=2^n> 1$).
\begin{corollary}[First and second moment]\
\label{ex:SecondMoment}
Given $O \in \MatC{d}$, we have:
\begin{align}
            \ExU \left[U O U^{\dagger}\right]&=\frac{\Tr\!\left(O\right)}{d} I.
            \label{eq:1momHaar}
\end{align}
Given $O \in \mathcal{L}((\mathbb{C}^d)^{\otimes 2})$, we have:
    \begin{align}
        \ExU \left[U^{\otimes 2} O U^{\dagger \otimes 2}\right]=c_{\Idd,O}\mathbb{I} + c_{\Flip,O} \mathbb{F},
        \label{eq:2momHaar}
    \end{align}
    where:
    \begin{align}
        c_{\Idd,O}=\frac{\Tr\!\left(O\right)-d^{-1}\Tr\!\left(\mathbb{F}O\right)}{d^2-1}\quad \text{and} \quad c_{\Flip,O}=\frac{\Tr\!\left(\mathbb{F}O\right)-d^{-1}\Tr\!\left(O\right)}{d^2-1}.
    \end{align}
\end{corollary}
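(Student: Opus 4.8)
The plan is to invoke Theorem~\ref{th:eq:MomPerm} in the special cases $k=1$ and $k=2$, where the symmetric group is small enough that the linear system for the coefficients $c_\pi(O)$ can be solved by hand.

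First I would treat the first moment. Since $S_1$ is trivial, its only permutation operator is $V_d(\id)=I$, so the expansion~\eqref{eq:MomPerm} collapses to $\ExU[UOU^\dagger]=c_{\id}(O)\,I$. The associated linear system consists of the single equation $\Tr(O)=c_{\id}(O)\Tr(I)=c_{\id}(O)\,d$, whence $c_{\id}(O)=\Tr(O)/d$ and~\eqref{eq:1momHaar} follows immediately.

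Next I would treat the second moment. Here $S_2$ has two elements, whose permutation operators are $\Idd$ and $\Flip$, so~\eqref{eq:MomPerm} reads $\ExU[U^{\otimes2}OU^{\dagger\otimes2}]=c_{\Idd,O}\Idd+c_{\Flip,O}\Flip$. To set up the linear system I would compute the Gram matrix via $\mathrm{G}_{\pi,\sigma}=d^{\#\text{cycles}(\pi^{-1}\sigma)}$: the identity of $S_2$ has two cycles and the transposition has one, giving $\Tr(\Idd)=\Tr(\Flip^2)=d^2$ and $\Tr(\Flip)=\Tr(\Idd\Flip)=d$, so that $\mathrm{G}=\begin{pmatrix}d^2&d\\ d&d^2\end{pmatrix}$. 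Using that $\Idd$ and $\Flip$ are Hermitian, Theorem~\ref{th:eq:MomPerm} then gives the two equations $\Tr(O)=d^2c_{\Idd,O}+d\,c_{\Flip,O}$ and $\Tr(\Flip\,O)=d\,c_{\Idd,O}+d^2c_{\Flip,O}$.

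Finally I would solve this $2\times2$ system. Its determinant is $\det\mathrm{G}=d^2(d^2-1)$, which is nonzero exactly because we assumed $d>1$; equivalently, Proposition~\ref{prop:permInd} guarantees that $\Idd$ and $\Flip$ are linearly independent (as $k=2\le d$), so $\mathrm{G}$ is invertible and the coefficients are uniquely determined. Inverting $\mathrm{G}$ (or applying Cramer's rule) yields $c_{\Idd,O}=\bigl(\Tr(O)-d^{-1}\Tr(\Flip O)\bigr)/(d^2-1)$ and $c_{\Flip,O}=\bigl(\Tr(\Flip O)-d^{-1}\Tr(O)\bigr)/(d^2-1)$, matching the claimed expressions. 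The computation involves no genuine obstacle; the one point requiring care is the invertibility of the Gram matrix, which is precisely where the hypothesis $d>1$ is used.
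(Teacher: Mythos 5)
Your proposal is correct and follows essentially the same route as the paper: both invoke Theorem~\ref{th:eq:MomPerm}, expand the moment operator in the permutation basis $\{\Idd,\Flip\}$, obtain the same $2\times 2$ linear system by tracing against $\Idd$ and $\Flip$, and solve it using $\Tr(\Idd)=d^2$, $\Tr(\Flip)=d$. Your explicit remark that invertibility of the Gram matrix is where $d>1$ enters is a nice touch but does not change the argument.
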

\begin{proof}
    According to Theorem~\eqref{th:eq:MomPerm}, we have that the first-order moment operator is proportional to the identity operator (the only permutation operator associated with the permutation group of one element $S_1$), i.e. 
    \begin{align}
        \ExU \left[U O U^{\dagger}\right]=c_I I,
    \end{align}
    with $c_I\in\mathbb{C}$. Taking the trace to both sides, we deduce that $c_I=\frac{\Tr(O)}{d}$. 

    Moreover, using Theorem~\eqref{th:eq:MomPerm}, we can see that that the second-order moment operator is a linear combination of the two permutation operators associated with the permutation group $S_2$ which are the identity $\Idd$ and the flip $\Flip$ operator:
    \begin{align}
        \ExU \left[U^{\otimes 2} O U^{\dagger \otimes 2}\right]=c_{\mathbb{I},O} \mathbb{I} + c_{\mathbb{F},O} \mathbb{F},
    \end{align}
    with $c_{\mathbb{I},O}, c_{\mathbb{F},O} \in \mathbb{C}$. To find these numbers, we left-multiply both sides by $\mathbb{I}$ and take the trace, which gives us:
    \begin{align}
        \Tr\!\left(O\right)=c_{\mathbb{I},O} d^2 + c_{\mathbb{F},O} d,
    \end{align}
    where we used the fact that $\Tr\!\left(\mathbb{I}\right)=d^2$ and $\Tr\!\left(\mathbb{F}\right)=d$.
    Similarly, by left-multiplying by $\mathbb{F}$ and taking the trace, we obtain the linear system of equation:
\begin{align}
    \systeme*{\Tr\!\left(O\right)=c_{\mathbb{I},O} d^2 + c_{\mathbb{F},O} d, \Tr\!\left(\mathbb{F}O\right)=c_{\mathbb{I},O} d + c_{\mathbb{F},O} d^2}.
\end{align}
Solving this system, we obtain $c_{\mathbb{I},O}=\frac{\Tr\!\left(O\right)-d^{-1}\Tr\!\left(O\mathbb{F}\right)}{d^2-1}$ and $c_{\mathbb{F},O}=\frac{\Tr\!\left(\mathbb{F}O\right)-d^{-1}\Tr\!\left(O\right)}{d^2-1}$.
    
\end{proof}
We have shown how to obtain the formulas for the first and second moment operator by utilizing the property that the moment operator lies in the commutant (Lemma~\ref{le:PropMom}.\ref{propr:CoMom}) and that the commutant is the span of the permutation matrices $\mathrm{Comm}(\Ug(d),k)=\operatorname{span} \Big(V_d(\pi):\, \pi \in S_k\Big)$ for $k=1,2$, which are particular instances of Schur-Weyl duality. Now, we will provide explicit proofs for these cases.

\begin{example}
\label{ex:comm2}
    \begin{align}
        &\mathrm{Comm}(\Ug(d),k=1)=\operatorname{span} \Big(I\Big),\\
        &\mathrm{Comm}(\Ug(d),k=2)=\operatorname{span} \Big(\Idd,\Flip\Big).
    \end{align}
\end{example}
\begin{proof}
Consider an element $A\in \mathrm{Comm}(\Ug(d),k=1)$. In the canonical basis decomposition, $A$ can be written as:
\begin{align}
A=\sum^{d}_{i,j=1} A_{i,j}\ketbra{i}{j}=\sum^{d}_{i=1} A_{i,i}\ketbra{i}{i}+ \sum^{d}_{\substack{i,k= 1\\j\neq i}} A_{i,j}\ketbra{i}{j},
\end{align}
where $A_{i,j}:= \bra{i}A\ket{j}$.
Since $A$ is in the commutant, then $UAU^\dagger=A$ holds for all unitary matrices $U$. By choosing a unitary $U$ that sends vector $\ket{k}$ to its negative $\ket{k}\rightarrow -\ket{k}$ while leaving the other basis vectors unchanged, it follows that, for all $k\neq l \in [d]$, the term $A_{k,l}\ketbra{k}{l}$ is equal to its negative, and hence, $A_{k,l}=0$. 
Therefore, we have that if $A$ is in the commutant, then $A=\sum^{d}_{i=1} A_{i,i}\ketbra{i}{i}$. 
Now, for all $k\neq l \in [d]$, by choosing a unitary matrix $U$ that exchanges the vectors $\ket{k}$ and $\ket{l}$ while leaving the other basis states unchanged, we derive that all diagonal elements $A_{k,k}$ should be equal each other for all $k\in [d]$.
Thus, we have $
A=c \sum^{d}_{i=1} \ketbra{i}{i} =c \Id$,
where $c:=A_{1,1}$. Therefore an operator can only be in the commutant if and only if it is proportional to the Identity $I$. Thus, we have shown the first formula we aimed to prove.

Consider now $Q\in \mathrm{Comm}(\Ug(d),k=2)$. $Q$ can be decomposed as:
    \begin{align}
        Q=\sum^d_{i,j,k,l=1}Q_{i,j;k,l}\ketbra{i,j}{k,l},
    \end{align}
where $Q_{i,j;k,l}:=\bra{i,j}Q\ket{k,l}$.
Since $Q$ is in the commutant, then $U^{\otimes 2}QU^{\dagger \otimes 2}=Q$ holds for all unitary matrices $U$. For all sets of indices $\{i,j\} \neq \{k,l\}$, let $m$ be a index such that $m \in \{i,j\}$ but $m \notin \{k,l\}$ or vice versa (there is at least one). By choosing unitaries that send a basis vector to its negative, while leaving unchanged the other basis vectors, it follows that, for all $\{i,j\} \neq \{k,l\} \in [d]$, the terms $Q_{i,j;k,l}\ketbra{i,j}{k,l}$ are equal to their negatives, and hence, $Q_{i,j;k,l}=0$. 
Therefore, if $Q$ is in the commutant, it must have the following form:
    \begin{align}
         Q=\sum^d_{i=1}Q_{i,i;i,i}\ketbra{i,i}{i,i}+\sum^d_{\substack{i,j=1\\i\neq j}}Q_{i,j;i,j}\ketbra{i,j}{i,j}+\sum^d_{\substack{i,j=1\\i\neq j}}Q_{i,j;j,i}\ketbra{i,j}{j,i}.
    \end{align}
Again, by choosing a unitary matrix $U$ that exchanges the vectors $\ket{i,j}$ with a different vector $\ket{k,l}$ while leaving the other basis states unchanged, and using $U^{\otimes 2}QU^{\dagger \otimes 2}=Q$, we derive that:
    \begin{align}
         Q&=a\sum^d_{i=1}\ketbra{i,i}{i,i}+b\sum^d_{\substack{i,j=1\\i\neq j}}\ketbra{i,j}{i,j}+c\sum^d_{\substack{i,j=1\\i\neq j}}\ketbra{i,j}{j,i}
    \end{align}
where $a:=Q_{1,1;1,1}$, $b:=Q_{1,2;1,2}$ and $c:=Q_{1,2;2,1}$.
Note that we can also write $Q$ as:
\begin{align}
    Q=(a-b-c)\sum^d_{i=1}\ketbra{i,i}{i,i}+b\,\Idd+c\,\Flip.
\end{align}
Since $\Idd$ and $\Flip$ commute with $U^{\otimes 2}$ for all $U\in \mathrm{U}(d)$, then $Q$ is in the commutant if and only if $(a-b-c)\sum^d_{i=1}\ketbra{i,i}{i,i}$ is also in the commutant.

We now show that $\sum^d_{i=1}\ketbra{i,i}{i,i}$ is not in the commutant, which implies that $Q$ is in the commutant if and only if $(a-b-c)=0$, i.e., if and only if $Q$ is a linear combination of only $\Idd$ and $\Flip$. To see this, suppose that $\sum^d_{i=1}\ketbra{i,i}{i,i}$ is in the commutant. Then, it should commute with $U^{\otimes2}$ for all unitaries $U$. We can choose the (Discrete Fourier Transform) unitary $U$ such that $U\ket{k}=\frac{1}{\sqrt{d}}\sum^d_{j=1} \omega^{(j-1)(k-1)}\ket{j}$, where $\omega:=\exp(i\frac{2\pi }{d})$, and obtain the following relation:
\begin{align}
\sum^d_{i=1}\ketbra{i,i}{i,i}=\sum^d_{i=1}U^{\otimes 2}\ketbra{i,i}{i,i}U^{\otimes 2\dagger}.
\end{align}
However, computing the expectation value on both sides of the previous equation over the state $\ket{1,1}$ leads to $1=\frac{1}{d}$, which is false for every $d>1$ (for $d=1$, the relation to prove immediately follows). Therefore, we conclude that $\sum^d_{i=1}\ketbra{i,i}{i,i}$ is not in the commutant, and hence $Q$ is in the commutant if and only if $Q$ is a linear combination of $\Idd$ and $\Flip$.
\end{proof}
\section{Symmetric subspace}
\label{sec:sym}
In this section, we introduce the symmetric subspace, which plays a crucial role when analyzing Haar random states. For a more in-depth analysis, see \cite{harrow2013church}. The symmetric subspace can be defined as the set of states $\ket{\psi}$ in $(\mathbb{C}^d)^{\otimes k}$ that are invariant under permutations of their constituent subsystems. Formally, we define the symmetric subspace as follows:
\begin{definition}[Symmetric subspace]
    \begin{align}
        \mathrm{Sym}_k(\mathbb{C}^d)\coloneqq \Big\{\ket{\psi} \in (\mathbb{C}^d)^{\otimes k}\colon V_d(\pi)\ket{\psi}=\ket{\psi} \,\, \forall \, \pi \in S_k\Big\}.
    \end{align}
\end{definition}
To facilitate our analysis, we also define the operator $P^{(d,k)}_{\mathrm{sym}}$ as follows: 
\begin{align}
    P^{(d,k)}_{\mathrm{sym}}\coloneqq \frac{1}{k!}\sum_{\pi\in S_k} V_d(\pi).
\end{align}
\begin{theorem}[Projector on $\mathrm{Sym}_k(\mathbb{C}^d)$]
\label{th:Psymprojector}
    ${P^{(d,k)}_{\mathrm{sym}}}$ is the orthogonal projector on the symmetric subspace ${\mathrm{Sym}_k(\mathbb{C}^d)}$.
\end{theorem}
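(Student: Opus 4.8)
The plan is to verify the three defining properties of the orthogonal projector onto a subspace: that $P \coloneqq P^{(d,k)}_{\mathrm{sym}}$ is self-adjoint, idempotent, and has range exactly $\mathrm{Sym}_k(\mathbb{C}^d)$. Every step reduces to one elementary observation: for any fixed $\sigma \in S_k$ the map $\pi \mapsto \sigma\pi$ (and likewise $\pi \mapsto \pi^{-1}$) is a bijection of $S_k$, so left- or right-translating the defining sum merely reindexes it. Combined with the homomorphism property $V_d(\sigma)V_d(\pi)=V_d(\sigma\pi)$ and the relation $V_d(\pi)^\dagger = V_d(\pi^{-1})$ noted right after the definition of the permutation operators, this handles essentially everything.

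First I would establish self-adjointness. Taking the adjoint of the defining sum and using $V_d(\pi)^\dagger = V_d(\pi^{-1})$ gives $P^\dagger = \frac{1}{k!}\sum_{\pi\in S_k} V_d(\pi^{-1})$, and reindexing by the bijection $\pi \mapsto \pi^{-1}$ returns $P$. Idempotency follows similarly: expanding $P^2 = \frac{1}{(k!)^2}\sum_{\sigma,\pi\in S_k} V_d(\sigma\pi)$ and noting that for each fixed $\sigma$ the inner sum over $\pi$ equals $\sum_{\tau\in S_k} V_d(\tau) = k!\,P$, one obtains $P^2 = P$. At this point $P$ is an orthogonal projector onto its range, so only the identification of that range remains.

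For the range I would first record the key identity $V_d(\sigma)P = P$ for every $\sigma \in S_k$, again by reindexing $\pi \mapsto \sigma\pi$. Since the range of a projector coincides with its fixed-point set, it suffices to show $\{\ket{\psi} : P\ket{\psi}=\ket{\psi}\} = \mathrm{Sym}_k(\mathbb{C}^d)$. One inclusion is immediate: if $\ket{\psi}\in\mathrm{Sym}_k(\mathbb{C}^d)$ then each $V_d(\pi)$ fixes $\ket{\psi}$, whence $P\ket{\psi} = \frac{1}{k!}\sum_{\pi\in S_k} \ket{\psi} = \ket{\psi}$. The converse is where the averaging is upgraded to individual invariance: if $P\ket{\psi}=\ket{\psi}$, then for every $\sigma$ we have $V_d(\sigma)\ket{\psi} = V_d(\sigma)P\ket{\psi} = P\ket{\psi} = \ket{\psi}$, so $\ket{\psi}$ is invariant under all permutations and hence lies in $\mathrm{Sym}_k(\mathbb{C}^d)$.

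There is no serious obstacle here; the argument is driven entirely by the rearrangement trick. The only point deserving care is the range step: one must either invoke the standard fact that a projector maps onto its $+1$-eigenspace, or argue the two inclusions directly as above, and in the nontrivial direction it is the identity $V_d(\sigma)P = P$ that converts the statement that $\ket{\psi}$ is fixed by the symmetrized average $P$ into the stronger statement that $\ket{\psi}$ is fixed by each $V_d(\sigma)$ separately.
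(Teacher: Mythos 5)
Your proposal is correct and follows essentially the same route as the paper's proof: establish idempotency and self-adjointness by reindexing the sum over $S_k$, and identify the range with $\mathrm{Sym}_k(\mathbb{C}^d)$ via the key identity $V_d(\sigma)P^{(d,k)}_{\mathrm{sym}}=P^{(d,k)}_{\mathrm{sym}}$. The only cosmetic difference is that you phrase the range step in terms of the fixed-point set while the paper argues the two inclusions for $\mathrm{Im}\bigl(P^{(d,k)}_{\mathrm{sym}}\bigr)$ directly, which amounts to the same computation.
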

\begin{proof}
We start by observing that $V_d(\pi)P^{(d,k)}_{\mathrm{sym}}=P^{(d,k)}_{\mathrm{sym}}$:
\begin{align}
    V_d(\pi)P^{(d,k)}_{\mathrm{sym}}=\frac{1}{k!}\sum_{\sigma\in S_k}  V_d(\pi)V_d( \sigma)=\frac{1}{k!}\sum_{\sigma\in S_k}  V_d( \pi \sigma )=\frac{1}{k!}\sum_{\sigma\in S_k}  V_d( \sigma)=P^{(d,k)}_{\mathrm{sym}}.
\end{align}
Using this, we can show that $P^{(d,k)2}_{\mathrm{sym}}=P^{(d,k)}_{\mathrm{sym}}$:
    \begin{align}
        P^{(d,k)2}_{\mathrm{sym}}=\frac{1}{k!}\sum_{\pi\in S_k} V_d(\pi) P^{(d,k)}_{\mathrm{sym}}=\frac{1}{k!}\sum_{\pi\in S_k} P^{(d,k)}_{\mathrm{sym}}=P^{(d,k)}_{\mathrm{sym}}.
    \end{align}
Furthermore, we have $P^{(d,k)\dagger}_{\mathrm{sym}}=P^{(d,k)}_{\mathrm{sym}}$:
\begin{align}
    P^{(d,k)\dagger}_{\mathrm{sym}}=\frac{1}{k!}\sum_{\pi\in S_k} V^{\dagger}_d(\pi)=\frac{1}{k!}\sum_{\pi\in S_k} V_d(\pi^{-1})=\frac{1}{k!}\sum_{\pi\in S_k} V_d(\pi)=P^{(d,k)}_{\mathrm{sym}}.
\end{align}
Therefore, $P^{(d,k)}_{\mathrm{sym}}$ is an orthogonal projector.
Moreover $\mathrm{Im}\left(P^{(d,k)}_{\mathrm{sym}}\right)\subseteq \mathrm{Sym}_k(\mathbb{C}^d)$: in fact $P^{(d,k)}_{\mathrm{sym}}\ket{\psi} \in \mathrm{Sym}_k(\mathbb{C}^d)$ for all $\ket{\psi} \in (\mathbb{C}^d)^{\otimes k}$, since  $ V_d(\pi)P^{(d,k)}_{\mathrm{sym}}\ket{\psi}=P^{(d,k)}_{\mathrm{sym}}\ket{\psi}$ for all $\pi\in S_k$, where we used again that $V_d(\pi)P^{(d,k)}_{\mathrm{sym}}=P^{(d,k)}_{\mathrm{sym}}$. Moreover if $\ket{\psi} \in \mathrm{Sym}_k(\mathbb{C}^d)$, then \begin{align}
   P^{(d,k)}_{\mathrm{sym}}\ket{\psi}=\frac{1}{k!}\sum_{\pi\in S_k} V_d(\pi)\ket{\psi}=\frac{1}{k!}\sum_{\pi\in S_k}\ket{\psi}=\ket{\psi}.
\end{align}
Therefore we also have $\mathrm{Sym}_k(\mathbb{C}^d) \subseteq \mathrm{Im}\left(P^{(d,k)}_{\mathrm{sym}}\right)$ and so $\mathrm{Sym}_k(\mathbb{C}^d) = \mathrm{Im}\left(P^{(d,k)}_{\mathrm{sym}}\right)$.
\end{proof}
We now turn to computing the dimension of the symmetric subspace.
\begin{theorem}[Dimension of the symmetric subspace]
\label{th:dimsym}
  \begin{align}
\Tr\!\left(P^{(d,k)}_{\mathrm{sym}}\right)=\mathrm{dim}\left(\mathrm{Sym}_k(\mathbb{C}^d)\right)= 
    \binom{k+d-1}{k}.
  \end{align}
\end{theorem}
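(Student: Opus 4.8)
The plan is to separate the statement into its two equalities. The first, $\Tr(P^{(d,k)}_{\mathrm{sym}})=\mathrm{dim}(\mathrm{Sym}_k(\mathbb{C}^d))$, is essentially free: by the previous theorem $P^{(d,k)}_{\mathrm{sym}}$ is an orthogonal projector with image exactly $\mathrm{Sym}_k(\mathbb{C}^d)$, and the trace of any orthogonal projector equals the dimension of the subspace it projects onto (its eigenvalues are $1$ on the image and $0$ on the orthogonal complement). So all the real content lies in the combinatorial identity for $\mathrm{dim}(\mathrm{Sym}_k(\mathbb{C}^d))$.

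To compute the dimension I would build an explicit orthogonal basis of the symmetric subspace by applying the projector to the computational basis vectors, setting
\begin{align}
\ket{\mathrm{sym}(i_1,\dots,i_k)}\coloneqq P^{(d,k)}_{\mathrm{sym}}\ket{i_1,\dots,i_k}=\frac{1}{k!}\sum_{\pi\in S_k}\ket{i_{\pi^{-1}(1)},\dots,i_{\pi^{-1}(k)}}.
\end{align}
Since $\{\ket{i_1,\dots,i_k}\}$ spans $(\mathbb{C}^d)^{\otimes k}$ and $\mathrm{Sym}_k(\mathbb{C}^d)=\mathrm{Im}(P^{(d,k)}_{\mathrm{sym}})$, these vectors span the symmetric subspace. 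The key observation is that $\ket{\mathrm{sym}(i_1,\dots,i_k)}$ depends only on the \emph{multiset} $\{i_1,\dots,i_k\}$, i.e. on how many times each of the $d$ symbols occurs and not on the order. Moreover the symmetrization never vanishes (unlike the antisymmetric case): each basis ket appearing in the sum carries a strictly positive coefficient, so $\ket{\mathrm{sym}(i_1,\dots,i_k)}\neq 0$.

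Next I would establish linear independence via orthogonality. If two tuples are not permutations of one another, they determine different multisets, and then the sets of computational basis kets appearing in their respective symmetrizations are disjoint; hence the two symmetrized vectors are orthogonal. Consequently the distinct nonzero vectors $\ket{\mathrm{sym}(i_1,\dots,i_k)}$, one for each multiset, form an orthogonal basis of $\mathrm{Sym}_k(\mathbb{C}^d)$, so that $\mathrm{dim}(\mathrm{Sym}_k(\mathbb{C}^d))$ equals the number of such multisets.

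Finally I would count the multisets of size $k$ drawn from an alphabet of $d$ symbols. Recording each multiset by its occupation numbers $(n_1,\dots,n_d)$ with $n_i\ge 0$ and $\sum_{i=1}^d n_i=k$, this is the classic stars-and-bars count, giving $\binom{k+d-1}{d-1}=\binom{k+d-1}{k}$, as claimed. The main obstacle is the careful bookkeeping in the basis step—verifying both that no symmetrized vector vanishes and that vectors from distinct multisets are orthogonal—since the entire dimension count rests on this being a genuine basis. As an alternative route one could instead evaluate $\Tr(P^{(d,k)}_{\mathrm{sym}})=\frac{1}{k!}\sum_{\pi\in S_k}\Tr(V_d(\pi))=\frac{1}{k!}\sum_{\pi\in S_k}d^{\#\mathrm{cycles}(\pi)}$ using the earlier trace formula, and then invoke the identity $\sum_{\pi\in S_k}d^{\#\mathrm{cycles}(\pi)}=d(d+1)\cdots(d+k-1)$ for the rising factorial; dividing by $k!$ again yields $\binom{k+d-1}{k}$.
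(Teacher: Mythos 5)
Your proposal is correct and follows essentially the same route as the paper: reduce the first equality to the projector property, observe that the symmetrized computational basis vectors depend only on the multiset of indices, show the distinct ones are orthogonal and nonzero, and count multisets by stars and bars. You actually spell out the orthogonality argument (disjoint supports in the computational basis) that the paper leaves as \emph{easy to see}, and your closing remark about evaluating $\Tr\!\left(P^{(d,k)}_{\mathrm{sym}}\right)=\frac{1}{k!}\sum_{\pi\in S_k}d^{\#\mathrm{cycles}(\pi)}$ via the rising factorial is a valid alternative the paper does not pursue.
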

\begin{proof}
The first equality follows by the fact that $P^{(d,k)}_{\mathrm{sym}}$ is the orthogonal projector onto $\mathrm{Sym}_k(\mathbb{C}^d)$.
Now we observe that:
\begin{align}
    \mathrm{Sym}_k(\mathbb{C}^d)= \mathrm{Im}\left(P^{(d,k)}_{\mathrm{sym}}\right)=\operatorname{span}\left(P^{(d,k)}_{\mathrm{sym}}\ket{i_1}\otimes\cdots\otimes\ket{i_k}: \,\forall \,\, i_1,\dots,i_k \in [d] \right).
\end{align}

To count the number of vectors in this set that are linearly independent, we make the following observation. Suppose $\ket{i_1}\otimes\cdots\otimes\ket{i_k}$ and $\ket{j_1}\otimes\cdots\otimes\ket{j_k}$ are two computational basis vectors with $i_1,j_1,\dots,i_k,j_k\in[d]$ such that the sets $\{i_1,\dots,i_k\}$ and $\{j_1,\dots,j_k\}$ have the same number $n_m$ of elements that are equal to $m$ for every $m\in[d]$. Then there exists a unitary operator $V_d(\pi)$ such that $V_d(\pi)\ket{i_1}\otimes\cdots\otimes\ket{i_k}=\ket{j_1}\otimes\cdots\otimes\ket{j_k}$. Furthermore, since $P^{(d,k)}_{\mathrm{sym}}V_d(\pi)=P^{(d,k)}_{\mathrm{sym}}$, it follows that $P^{(d,k)}_{\mathrm{sym}}\ket{i_1}\otimes\cdots\otimes\ket{i_k}=P^{(d,k)}_{\mathrm{sym}}\ket{j_1}\otimes\cdots\otimes\ket{j_k}$.

Therefore, we can define the set of vectors ${\ket{n_1,\dots,n_d}}$, where $n_1,\dots,n_d\in[k]$ with $n_1+\cdots+n_d=k$, as follows:
\begin{align}
\ket{n_1,\dots,n_d}\coloneqq P^{(d,k)}_{\mathrm{sym}}\ket{i_1}\otimes \cdots \otimes \ket{i_k},
\end{align}
where $\ket{i_1}\otimes \cdots \otimes \ket{i_k}$ is any computational basis vector such that, for each $m\in[d]$, there are $n_m$ distinct $j\in[k]$ such that $i_j=m$. It is easy to see that these vectors are orthogonal, and hence linearly independent.
Therefore, the dimension of the symmetric subspace is equal to the number of linearly independent vectors in the set ${\ket{n_1,\dots,n_d}}$ defined earlier. This is equal to the number of ways of deciding how to assign $k$ indices between $d$ possible labels, which is $\binom{k+d-1}{k}$.
\end{proof}

Next, we introduce the anti-symmetric subspace. 
\begin{definition}[Anti-symmetric subspace]
   The anti-symmetric subspace is the set:
    \begin{align}
        \mathrm{ASym}_k(\mathbb{C}^d)\coloneqq \Big\{\ket{\psi} \in (\mathbb{C}^d)^{\otimes k}\colon V_d(\pi)\ket{\psi}=\mathrm{sgn}\left(\pi\right)\ket{\psi} \,\, \forall \, \pi \in S_k\Big\}, 
    \end{align}
where $\mathrm{sgn}\left(\sigma\right)$ denotes the sign of a permutation $\sigma \in S_k$.
\end{definition}
Similarly as before, we can define the operator: 
\begin{align}
    P^{(d,k)}_{\mathrm{asym}}\coloneqq \frac{1}{k!}\sum_{\pi\in S_k} \mathrm{sgn}\left(\pi\right) V_d(\pi).
\end{align}
and prove the following theorem:
\begin{theorem}
    The operator $P^{(d,k)}_{\mathrm{asym}}$ is the orthogonal projector on the anti-symmetric subspace $\mathrm{ASym}_k(\mathbb{C}^d)$.
    \label{th:Pasym}
\end{theorem}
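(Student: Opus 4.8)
The plan is to mirror exactly the proof of Theorem~\ref{th:Psymprojector} for the symmetric subspace, carrying the extra sign factor through at each step. The key preliminary identity to establish is that for every $\pi\in S_k$ one has $V_d(\pi)P^{(d,k)}_{\mathrm{asym}}=\mathrm{sgn}(\pi)P^{(d,k)}_{\mathrm{asym}}$. This follows by writing $V_d(\pi)V_d(\sigma)=V_d(\pi\sigma)$, reindexing the sum via $\tau=\pi\sigma$, and using multiplicativity of the sign together with $\mathrm{sgn}(\pi^{-1})=\mathrm{sgn}(\pi)$. This is the anti-symmetric analogue of $V_d(\pi)P^{(d,k)}_{\mathrm{sym}}=P^{(d,k)}_{\mathrm{sym}}$, and essentially the whole argument rests on it.

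From this identity I would derive the three defining properties of an orthogonal projector. For idempotency, expanding $P^{(d,k)2}_{\mathrm{asym}}=\frac{1}{k!}\sum_{\pi\in S_k}\mathrm{sgn}(\pi)V_d(\pi)P^{(d,k)}_{\mathrm{asym}}$ and inserting the key identity produces a factor $\mathrm{sgn}(\pi)^2=1$, collapsing the sum to $P^{(d,k)}_{\mathrm{asym}}$. For self-adjointness I would use $V^\dagger_d(\pi)=V_d(\pi^{-1})$ together with $\mathrm{sgn}(\pi^{-1})=\mathrm{sgn}(\pi)$ and reindex $\pi\mapsto\pi^{-1}$ to recover $P^{(d,k)}_{\mathrm{asym}}$. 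These two facts establish that $P^{(d,k)}_{\mathrm{asym}}$ is an orthogonal projector onto its image.

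Finally, I would identify the image with $\mathrm{ASym}_k(\mathbb{C}^d)$ by a double inclusion. The inclusion $\mathrm{Im}(P^{(d,k)}_{\mathrm{asym}})\subseteq\mathrm{ASym}_k(\mathbb{C}^d)$ is immediate from the key identity, since $V_d(\pi)P^{(d,k)}_{\mathrm{asym}}\ket{\psi}=\mathrm{sgn}(\pi)P^{(d,k)}_{\mathrm{asym}}\ket{\psi}$ exhibits every vector in the image as satisfying the defining eigenvalue condition of the anti-symmetric subspace. Conversely, for $\ket{\psi}\in\mathrm{ASym}_k(\mathbb{C}^d)$ one has $V_d(\pi)\ket{\psi}=\mathrm{sgn}(\pi)\ket{\psi}$, so $P^{(d,k)}_{\mathrm{asym}}\ket{\psi}=\frac{1}{k!}\sum_{\pi\in S_k}\mathrm{sgn}(\pi)^2\ket{\psi}=\ket{\psi}$, placing $\ket{\psi}$ in the image.

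I do not expect any genuine obstacle here: the argument is structurally identical to the symmetric case, and the only point requiring care is the bookkeeping of signs — specifically ensuring that $\mathrm{sgn}(\pi\sigma)=\mathrm{sgn}(\pi)\mathrm{sgn}(\sigma)$ and $\mathrm{sgn}(\pi^{-1})=\mathrm{sgn}(\pi)$ are invoked correctly, so that the signs square to one precisely where idempotency and the second inclusion require it.
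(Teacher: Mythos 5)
Your proposal is correct and follows essentially the same route as the paper's own proof: establish the key identity $V_d(\pi)P^{(d,k)}_{\mathrm{asym}}=\mathrm{sgn}(\pi)P^{(d,k)}_{\mathrm{asym}}$, deduce idempotency and self-adjointness from it, and identify the image with $\mathrm{ASym}_k(\mathbb{C}^d)$ by double inclusion. No gaps; the sign bookkeeping you flag is exactly the only delicate point, and you handle it as the paper does.
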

\begin{proof}
    We have: 
    \begin{align}
        V_d(\sigma) P^{(d,k)}_{\mathrm{asym}}&=\frac{1}{k!}\sum_{\pi\in S_k} \mathrm{sgn}\left(\pi\right) V_d(\sigma) V_d(\pi)=\frac{1}{k!}\sum_{\pi\in S_k} \mathrm{sgn}\left(\pi\right) V_d( \sigma\pi)\\&=\frac{1}{k!}\sum_{\pi\in S_k} \mathrm{sgn}\left(\sigma^{-1}\pi \right) V_d(\pi )=\mathrm{sgn}\left(\sigma^{-1}\right)P^{(d,k)}_{\mathrm{asym}}\\&=\mathrm{sgn}\left(\sigma\right)P^{(d,k)}_{\mathrm{asym}}.
    \end{align}
Similarly, $P^{(d,k)}_{\mathrm{asym}} V_d(\sigma) = \mathrm{sgn}\left(\sigma\right)P^{(d,k)}_{\mathrm{asym}}$.
Using this, we can show that $P^{(d,k)2}_{\mathrm{asym}}=P^{(d,k)}_{\mathrm{asym}}$:
\begin{align}
    P^2_{\mathrm{asym}}=\frac{1}{k!}\sum_{\pi\in S_k} \mathrm{sgn}\left(\pi\right) V_d(\pi)P^{(d,k)}_{\mathrm{asym}}=\frac{1}{k!}\sum_{\pi\in S_k} \left(\mathrm{sgn}\left(\pi\right)\right)^2 P^{(d,k)}_{\mathrm{asym}}=P^{(d,k)}_{\mathrm{asym}}.
\end{align}
We also have $P^{(d,k)\dagger}_{\mathrm{asym}}=P^{(d,k)}_{\mathrm{asym}}$:
\begin{align}
    P^{(d,k)\dagger}_{\mathrm{asym}}&=\frac{1}{k!}\sum_{\pi\in S_k} \mathrm{sgn}\left(\pi\right) V^{\dagger}_d(\pi)=\frac{1}{k!}\sum_{\pi\in S_k} \mathrm{sgn}\left(\pi\right) V_d(\pi^{-1})=\frac{1}{k!}\sum_{\pi\in S_k} \mathrm{sgn}\left(\pi^{-1}\right) V_d(\pi^{-1})\\&=\frac{1}{k!}\sum_{\pi\in S_k} \mathrm{sgn}\left(\pi\right) V_d(\pi)=P^{(d,k)}_{\mathrm{asym}}.
\end{align}
We can show that $\mathrm{Im}\left(P^{(d,k)}_{\mathrm{asym}}\right)\subseteq \mathrm{ASym}_k(\mathbb{C}^d)$ as follows:
for all $\ket{\psi} \in (\mathbb{C}^d)^{\otimes k}$, we have $P^{(d,k)}_{\mathrm{asym}}\ket{\psi} \in \mathrm{ASym}_k(\mathbb{C}^d)$, since $V_d(\pi)P^{(d,k)}_{\mathrm{asym}}\ket{\psi}=\mathrm{sgn}\left(\pi\right)P^{(d,k)}_{\mathrm{asym}}\ket{\psi}$ for all $\pi\in S_k$, where we used again that $V_d(\pi)P^{(d,k)}_{\mathrm{asym}}=\mathrm{sgn}\left(\pi\right)P^{(d,k)}_{\mathrm{asym}}$. Moreover, we can also show that $\mathrm{ASym}_k(\mathbb{C}^d) \subseteq \mathrm{Im}\left(P^{(d,k)}_{\mathrm{asym}}\right)$. In fact, if $\ket{\psi} \in \mathrm{ASym}_k(\mathbb{C}^d)$, then we have:

\begin{align}
   P^{(d,k)}_{\mathrm{asym}}\ket{\psi}=\frac{1}{k!}\sum_{\pi\in S_k} \mathrm{sgn}\left(\pi\right)V_d(\pi)\ket{\psi}=\frac{1}{k!}\sum_{\pi\in S_k}\left(\mathrm{sgn}\left(\pi\right)\right)^2\ket{\psi}=\ket{\psi}.
\end{align}

Therefore, we have shown that $\mathrm{Im}\left(P^{(d,k)}_{\mathrm{asym}}\right)= \mathrm{ASym}_k(\mathbb{C}^d)$, which implies that $P^{(d,k)}_{\mathrm{asym}}$ is the orthogonal projector on the anti-symmetric subspace $\mathrm{ASym}_k(\mathbb{C}^d)$.
\end{proof}
Next, we compute the dimension of the anti-symmetric subspace.
\begin{proposition}[Dimension of the anti-symmetric subspace] If $d\ge k$, we have:
\begin{align}
\Tr\!\left(P^{(d,k)}_{\mathrm{asym}}\right)=\mathrm{dim}\left(\mathrm{ASym}_k(\mathbb{C}^d)\right)= 
    \binom{d}{k},
  \end{align}
  otherwise $\Tr\!\left(P^{(d,k)}_{\mathrm{asym}}\right)=0$.

\end{proposition}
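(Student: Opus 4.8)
The plan is to follow the same strategy used for the symmetric subspace in Theorem~\ref{th:dimsym}. The equality $\Tr(P^{(d,k)}_{\mathrm{asym}}) = \mathrm{dim}(\mathrm{ASym}_k(\mathbb{C}^d))$ is immediate from Theorem~\ref{th:Pasym}, since the trace of an orthogonal projector equals the dimension of its image. It then suffices to count a maximal linearly independent subset of the spanning family $P^{(d,k)}_{\mathrm{asym}}\ket{i_1,\dots,i_k}$, where $i_1,\dots,i_k \in [d]$ range over computational basis vectors.

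First I would observe that $P^{(d,k)}_{\mathrm{asym}}\ket{i_1,\dots,i_k}=0$ whenever two indices coincide. Indeed, if $i_l = i_m$ with $l \neq m$, then $V_d(\tau_{l,m})$ fixes $\ket{i_1,\dots,i_k}$, while $P^{(d,k)}_{\mathrm{asym}} V_d(\tau_{l,m}) = \mathrm{sgn}(\tau_{l,m}) P^{(d,k)}_{\mathrm{asym}} = -P^{(d,k)}_{\mathrm{asym}}$ by the relation established in the proof of Theorem~\ref{th:Pasym}; hence the projected vector equals its own negative and must vanish. This is exactly the antisymmetrization mechanism already seen in Proposition~\ref{prop:permInd}. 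As a consequence, if $d<k$ then every basis vector has a repeated index, the image is $\{0\}$, and both the dimension and the trace are $0$.

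For $d \ge k$, only tuples with pairwise distinct entries survive, and two such tuples related by a permutation $\pi$ give projected vectors that agree up to the scalar $\mathrm{sgn}(\pi)$, again using $P^{(d,k)}_{\mathrm{asym}} V_d(\pi) = \mathrm{sgn}(\pi) P^{(d,k)}_{\mathrm{asym}}$. Thus, up to scaling, the nonzero projected vectors are labelled by the $k$-element subsets $\{i_1<\cdots<i_k\}\subseteq [d]$. I would then verify these representatives are mutually orthogonal, because distinct subsets produce antisymmetrizations supported on disjoint collections of computational basis vectors, so they are linearly independent; this yields exactly $\binom{d}{k}$ of them.

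The main obstacle in this elementary route is the orthogonality/independence bookkeeping for the subset representatives. A cleaner alternative that bypasses it is to compute the trace directly: $\Tr(P^{(d,k)}_{\mathrm{asym}}) = \frac{1}{k!}\sum_{\pi \in S_k} \mathrm{sgn}(\pi)\,\Tr(V_d(\pi)) = \frac{1}{k!}\sum_{\pi \in S_k} \mathrm{sgn}(\pi)\, d^{\#\text{cycles}(\pi)}$, using $\Tr(V_d(\pi)) = d^{\#\text{cycles}(\pi)}$ derived earlier. Since $\mathrm{sgn}(\pi) = (-1)^{k-\#\text{cycles}(\pi)}$, the identity $\sum_{\pi \in S_k} \mathrm{sgn}(\pi)\, x^{\#\text{cycles}(\pi)} = x(x-1)\cdots(x-k+1)$ collapses the sum to $\frac{1}{k!} d(d-1)\cdots(d-k+1) = \binom{d}{k}$, which vanishes automatically for $d<k$ because of the factor $(d-d)$. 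I expect the elementary counting to require the most care at the independence step, whereas the trace computation is short but rests on the cycle-structure identity, which would need to be justified separately.
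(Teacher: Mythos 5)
Your primary argument is essentially the paper's own proof: identify $\mathrm{Tr}(P^{(d,k)}_{\mathrm{asym}})$ with the dimension of the image, kill every tuple with a repeated index via $P^{(d,k)}_{\mathrm{asym}}V_d(\tau_{l,m})=-P^{(d,k)}_{\mathrm{asym}}$, observe that tuples related by a permutation give proportional images, and count the surviving orthogonal representatives indexed by $k$-element subsets of $[d]$; the orthogonality step you flag as delicate goes through exactly as you say, because antisymmetrizations of tuples with distinct index sets are supported on disjoint sets of computational basis vectors. Your alternative route --- computing $\mathrm{Tr}(P^{(d,k)}_{\mathrm{asym}})=\frac{1}{k!}\sum_{\pi}\mathrm{sgn}(\pi)\,d^{\#\mathrm{cycles}(\pi)}$ and invoking $\sum_{\pi\in S_k}\mathrm{sgn}(\pi)\,x^{\#\mathrm{cycles}(\pi)}=x(x-1)\cdots(x-k+1)$ --- is correct and genuinely different from the paper: it trades the independence bookkeeping for a combinatorial identity (the signed Stirling/falling-factorial identity, provable by induction on $k$), and it has the aesthetic advantage that the vanishing for $d<k$ falls out automatically from the factor $(d-d)$; its cost is that it computes only the trace and gives less direct insight into an explicit basis of $\mathrm{ASym}_k(\mathbb{C}^d)$.
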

\begin{proof}
First, we have:
\begin{align}
    \mathrm{ASym}_k(\mathbb{C}^d)= \mathrm{Im}\left(P^{(d,k)}_{\mathrm{asym}}\right)=\operatorname{span}\left(P^{(d,k)}_{\mathrm{asym}}\ket{i_1}\otimes\cdots\otimes\ket{i_k}: \,\forall \,\, i_1,\dots,i_k \in [d] \right).
\end{align}

To count the number of linearly independent vectors $P^{(d,k)}_{\mathrm{asym}}\ket{i_1}\otimes\cdots\otimes\ket{i_k}$, we observe that if there are at least two tensor factors of $\ket{i_1}\otimes\cdots\otimes\ket{i_k}$ with matching entry i.e. there exist $l\neq m \in [k]$ such that $i_l = i_m$, then: 
\begin{align}
    P^{(d,k)}_{\mathrm{asym}}\ket{i_1}\otimes\cdots\otimes\ket{i_k}=0.
\end{align}
This because: 
\begin{align}
P^{(d,k)}_{\mathrm{asym}}\ket{i_1}\otimes\cdots\otimes\ket{i_k}&=P^{(d,k)}_{\mathrm{asym}}V_d(\tau_{l,m})\ket{i_1}\otimes\cdots\otimes\ket{i_k}\\&=\mathrm{sgn}\left(\tau_{l,m}\right)P^{(d,k)}_{\mathrm{asym}}\ket{i_1}\otimes\cdots\otimes\ket{i_k}\\&=-P^{(d,k)}_{\mathrm{asym}}\ket{i_1}\otimes\cdots\otimes\ket{i_k},
\end{align}
where in the second equality we used $P^{(d,k)}_{\mathrm{asym}}V_d(\tau_{l,m})=\mathrm{sgn}\left(\tau_{l,m}\right)P^{(d,k)}_{\mathrm{asym}}$ (as shown in the proof of Theorem~\ref{th:Pasym}).
Therefore, $i_1,\dots,i_k$ must all be distinct for $P^{(d,k)}_{\mathrm{asym}}\ket{i_1}\otimes\cdots\otimes\ket{i_k}$ to be nonzero. This also implies that if $d<k$, then $\Tr\!\left(P^{(d,k)}_{\mathrm{asym}}\right)=0$. Therefore, we now focus on the case $d\ge k$.
If for all $m\in [d]$, there exist two vectors $\ket{i_1}\otimes\cdots\otimes\ket{i_k}$ and $\ket{j_1}\otimes\cdots\otimes\ket{j_k}$ with $i_1,j_1\dots,i_k,j_k \in [d]$ such that both sets, $\{i_1,\dots,i_k\}$ and $\{j_1,\dots,j_k\}$, contain $n_m\in\{0,1\}$ elements equal to $m$, then there exists a permutation $V_d(\pi)$ such that $V_d(\pi)\ket{i_1}\otimes\cdots\otimes\ket{i_k}=\ket{j_1}\otimes\cdots\otimes\ket{j_k}$. 
Since $P^{(d,k)}_{\mathrm{asym}}V_d(\pi)=\mathrm{sgn}\left(\pi\right)P^{(d,k)}_{\mathrm{asym}}$, we have $P^{(d,k)}_{\mathrm{asym}}\ket{i_1}\otimes\cdots\otimes\ket{i_k}=\mathrm{sgn}\left(\pi\right)P^{(d,k)}_{\mathrm{asym}}\ket{j_1}\otimes\cdots\otimes\ket{j_k}$, and hence, $P^{(d,k)}_{\mathrm{asym}}\ket{i_1}\otimes\cdots\otimes\ket{i_k}$ and $P^{(d,k)}_{\mathrm{asym}}\ket{j_1}\otimes\cdots\otimes\ket{j_k}$ are linearly dependent.

Thus, taking $n_1,\dots,n_d \in \{0,1\}$ with $n_1+\cdots+n_d=k$, we define $\ket{n_1,\dots,n_d}$ as $P^{(d,k)}_{\mathrm{asym}}\ket{i_1}\otimes \cdots \otimes \ket{i_k}$, where $\ket{i_1}\otimes \cdots \otimes \ket{i_k}$ is any computational basis vector such that, for each $m\in [d]$, there exist $n_m$ elements $j\in [k]$ such that $i_j=m$. It is easy to see that such vectors are orthogonal and hence independent. Therefore, the dimension is given by the number of such independent vectors, which is equal to the number of ways to choose an unordered subset of $k$ elements from a set of $d$ elements, which is $\binom{d}{k}$.
\end{proof}    
We will now present a proposition that establishes a relationship between the symmetric and anti-symmetric subspaces.
 \begin{proposition}
     We have $P^{(d,k)\dagger}_{\mathrm{asym}}P^{(d,k)}_{\mathrm{sym}}=0$. In particular $P^{(d,k)}_{\mathrm{asym}}$ and $P^{(d,k)}_{\mathrm{sym}}$ are orthogonal with respect to the Hilbert-Schmidt inner product.
 \end{proposition}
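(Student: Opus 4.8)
The plan is to reduce everything to the single algebraic identity $P^{(d,k)}_{\mathrm{asym}}P^{(d,k)}_{\mathrm{sym}}=0$, from which the Hilbert-Schmidt orthogonality follows immediately by taking a trace. Since both operators are self-adjoint (established in Theorems~\ref{th:Psymprojector} and~\ref{th:Pasym}), I first replace $P^{(d,k)\dagger}_{\mathrm{asym}}$ by $P^{(d,k)}_{\mathrm{asym}}$, so that it suffices to compute the product of the two projectors.

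To evaluate the product, I would expand $P^{(d,k)}_{\mathrm{sym}}=\frac{1}{k!}\sum_{\pi\in S_k}V_d(\pi)$ and use the relation $P^{(d,k)}_{\mathrm{asym}}V_d(\pi)=\mathrm{sgn}(\pi)P^{(d,k)}_{\mathrm{asym}}$ already derived in the proof of Theorem~\ref{th:Pasym}. This gives
\begin{align}
P^{(d,k)}_{\mathrm{asym}}P^{(d,k)}_{\mathrm{sym}}=\frac{1}{k!}\sum_{\pi\in S_k}\mathrm{sgn}(\pi)\,P^{(d,k)}_{\mathrm{asym}}=\left(\frac{1}{k!}\sum_{\pi\in S_k}\mathrm{sgn}(\pi)\right)P^{(d,k)}_{\mathrm{asym}}.
\end{align}
Equivalently, one could instead expand $P^{(d,k)}_{\mathrm{asym}}$ and use $V_d(\pi)P^{(d,k)}_{\mathrm{sym}}=P^{(d,k)}_{\mathrm{sym}}$; either factorization collapses the double sum over $S_k$ to a scalar multiple of one projector.

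The key arithmetic fact is then that $\sum_{\pi\in S_k}\mathrm{sgn}(\pi)=0$ for $k\ge 2$, because the even and odd permutations each number exactly $k!/2$ (the alternating group has index two in $S_k$). Hence the scalar prefactor vanishes and the product is the zero operator. I would explicitly flag that $k\ge 2$ is genuinely needed: for $k=1$ the group $S_1$ is trivial, the sign-sum equals one, and indeed $P^{(d,1)}_{\mathrm{sym}}=P^{(d,1)}_{\mathrm{asym}}=I$, so the two subspaces coincide rather than being orthogonal. This degenerate case is the only subtlety; there is no genuine technical obstacle, as the whole argument is a one-line collapse of a group sum.

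Finally, for the stated Hilbert-Schmidt orthogonality I would simply take the trace of the identity just proved, namely $\langle P^{(d,k)}_{\mathrm{asym}},P^{(d,k)}_{\mathrm{sym}}\rangle_{HS}=\Tr\!\left(P^{(d,k)\dagger}_{\mathrm{asym}}P^{(d,k)}_{\mathrm{sym}}\right)=\Tr(0)=0$, which closes the argument.
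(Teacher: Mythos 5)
Your proof is correct and follows essentially the same route as the paper's: absorb the permutation sum into one projector via the relation $P^{(d,k)}_{\mathrm{asym}}V_d(\pi)=\mathrm{sgn}(\pi)P^{(d,k)}_{\mathrm{asym}}$ (the paper uses the mirror-image relation $V_d(\pi)P^{(d,k)}_{\mathrm{sym}}=P^{(d,k)}_{\mathrm{sym}}$, collapsing onto $P^{(d,k)}_{\mathrm{sym}}$ instead), and then invoke $\sum_{\pi\in S_k}\mathrm{sgn}(\pi)=0$. Your explicit flagging of the $k\ge 2$ requirement is a worthwhile refinement the paper glosses over (its "$\tau$ is any permutation" should read "any odd permutation", which exists only for $k\ge 2$; indeed the statement fails for $k=1$).
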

\begin{proof}
We have:
\begin{align}
    P^{(d,k)\dagger}_{\mathrm{asym}}P^{(d,k)}_{\mathrm{sym}}=P^{(d,k)}_{\mathrm{asym}}P^{(d,k)}_{\mathrm{sym}}=\frac{1}{k!}\sum_{\pi\in S_k} \mathrm{sgn}\left(\pi\right)V_d(\pi)P^{(d,k)}_{\mathrm{sym}}=\left(\frac{1}{k!}\sum_{\pi\in S_k} \mathrm{sgn}\left(\pi\right)\right) P^{(d,k)}_{\mathrm{sym}}=0,
\end{align}
where we used that $V_d(\pi)P^{(d,k)}_{\mathrm{sym}}=P^{(d,k)}_{\mathrm{sym}}$ and that $\frac{1}{k!}\sum_{\pi\in S_k} \mathrm{sgn}\left(\pi\right)=0$ since: \begin{align}
\frac{1}{k!}\sum_{\pi\in S_k} \mathrm{sgn}\left(\pi\right)=\frac{1}{k!}\sum_{\pi\in S_k} \mathrm{sgn}\left(\tau\pi\right)=\mathrm{sgn}\left(\tau\right)\frac{1}{k!}\sum_{\pi\in S_k} \mathrm{sgn}\left(\pi\right) =-\frac{1}{k!}\sum_{\pi\in S_k} \mathrm{sgn}\left(\pi\right),
\end{align}
where $\tau$ is any permutation.
\end{proof}

Note that, for $k=2$ (and $d> 1$), we have that, $\mathrm{dim}\left(\mathrm{Sym}_2(\mathbb{C}^d)\right) +  \mathrm{dim}\left(\mathrm{ASym}_2(\mathbb{C}^d)\right)=\binom{d+1}{2}+\binom{d}{2}=d^2$, this implies that $
    (\mathbb{C}^d)^{\otimes 2} = \mathrm{Sym}_2(\mathbb{C}^d) \oplus \mathrm{ASym}_2(\mathbb{C}^d)$.

Since $P^{(d,2)}_{\mathrm{sym}}$ and $P^{(d,2)}_{\mathrm{asym}}$ are both linear combinations of permutations, they commute with $U^{\otimes 2}$ for all unitaries $U$. Consequently, they are elements of a basis for the commutant $\mathrm{Comm}(\Ug(d),k=2)$ and they are orthogonal to each other. Furthermore, the commutant can have at most dimension two, since it is spanned by the identity and Flip operators (Example \ref{ex:comm2}). Thus, $P^{(d,k)}_{\mathrm{sym}}$ and $P^{(d,k)}_{\mathrm{asym}}$ form an orthogonal basis for the commutant.

Since the moment operator is the orthogonal projector onto the commutant (Theorem~\ref{prop:projComm}), we can derive the formula for the second-order moment operator (already derived in Corollary~\ref{ex:SecondMoment}), using $P^{(d,k)}_{\mathrm{sym}}$ and $P^{(d,k)}_{\mathrm{asym}}$ as the basis. We express the moment operator as:
    \begin{align}
        \ExU \left[U^{\otimes 2} O U^{\dagger \otimes 2}\right]&= \Big\langle\frac{P^{(d,2)}_{\mathrm{sym}}}{\norm{P^{(d,2)}_{\mathrm{sym}}}_{2}},O\Big\rangle_{\mathrm{HS}}\frac{P^{(d,2)}_{\mathrm{sym}}}{\norm{P^{(d,2)}_{\mathrm{sym}}}_{2}} + \Big\langle\frac{P^{(d,2)}_{\mathrm{asym}}}{\norm{P^{(d,2)}_{\mathrm{asym}}}_{2}},O\Big\rangle_{HS}\frac{P^{(d,2)}_{\mathrm{asym}}}{\norm{P^{(d,2)}_{\mathrm{asym}}}_{2}}\\
        &= \frac{\Tr\!\left(P^{(d,2)}_{\mathrm{sym}}O\right)}{\Tr\!\left(P^{(d,2)}_{\mathrm{sym}}\right)}P^{(d,2)}_{\mathrm{sym}} + \frac{\Tr\!\left(P^{(d,2)}_{\mathrm{asym}}O\right)}{\Tr\!\left(P^{(d,2)}_{\mathrm{asym}}\right)}P^{(d,2)}_{\mathrm{asym}}\\
        &=\frac{\Tr\!\left(O\right) + \Tr\!\left(\mathbb{F} O\right)}{d(d+1)}\left(\frac{\mathbb{I}+\mathbb{F}}{2}\right)+\frac{\Tr\!\left(O\right) - \Tr\!\left(\mathbb{F} O\right)}{d(d-1)}\left(\frac{\mathbb{I}-\mathbb{F}}{2}\right)\\
        &=\frac{\Tr\!\left(O\right) - d^{-1}\Tr\!\left(\mathbb{F} O\right)}{d^2-1}\mathbb{I}+\frac{\Tr\!\left(O\mathbb{F}\right) - d^{-1}\Tr\!\left(O\right)}{d^2-1}\mathbb{F},
    \end{align}
where in the first equality we used the fact that $P^{(d,2)}_{\mathrm{sym}}$ and $P^{(d,2)}_{\mathrm{asym}}$ are an orthonormal basis for the commutant, we utilized the fact that $P^{(d,2)}_{\mathrm{sym}}$ and $P^{(d,2)}_{\mathrm{asym}}$ are orthogonal projectors (meaning that they square to themselves and are Hermitian), in the third equality we substituted $P^{(d,2)}_{\mathrm{sym}}=\frac{1}{2}\left(\mathbb{I}+\mathbb{F}\right)$ and $P^{(d,2)}_{\mathrm{asym}}=\frac{1}{2}\left(\mathbb{I}-\mathbb{F}\right)$ along with their respective traces to arrive at the desired formula. 

Now we show a theorem that states that for all quantum states $\ket{\phi}$ in a $d$-dimensional Hilbert space, the moment operator of $\ketbra{\phi}{\phi}^{\otimes k}$ is a uniform linear combination of permutations, which can be written in terms of the projector on the symmetric subspace $P^{(d,k)}_{\mathrm{sym}}$.
 
\begin{theorem}
\label{th:MomPermSym}
Let $d,k\in \mathbb{N}$. For all $\ket{\phi} \in \mathbb{C}^d$, the moment operator is a uniform linear combination of permutations:
\begin{align}
    \ExU\left[U^{\otimes k}\ketbra{\phi}{\phi}^{\otimes k}U^{\dagger \otimes k}\right]=\frac{P^{(d,k)}_{\mathrm{sym}}}{\Tr\!\left(P^{(d,k)}_{\mathrm{sym}}\right)},
    \label{eq:MomPermSym}
\end{align}
with $P^{(d,k)}_{\mathrm{sym}}=\frac{1}{k!}\sum_{\pi\in S_k} V_d(\pi)$ and $\Tr\!\left(P^{(d,k)}_{\mathrm{sym}}\right)=\mathrm{dim}\left(\mathrm{Sym}_k(\mathbb{C}^d)\right)=\binom{k+d-1}{k}$.
\end{theorem}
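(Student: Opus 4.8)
The plan is to show that the left-hand side $M \coloneqq \MomOp{\ketbra{\phi}{\phi}^{\otimes k}}$ is a scalar multiple of the symmetric projector $P^{(d,k)}_{\mathrm{sym}}$, and then to pin down that scalar using trace preservation. Two facts drive the argument: first, by Lemma~\ref{le:PropMom}.\ref{propr:CoMom} together with Schur--Weyl duality (Theorem~\ref{th:SchurW}), $M$ lies in $\mathrm{Comm}(\Ug(d),k) = \operatorname{span}(V_d(\pi):\, \pi\in S_k)$; second, the vector $\ket{\phi}^{\otimes k}$ is completely symmetric, i.e. $V_d(\sigma)\ket{\phi}^{\otimes k} = \ket{\phi}^{\otimes k}$ for every $\sigma\in S_k$, so that $V_d(\sigma)\ketbra{\phi}{\phi}^{\otimes k} = \ketbra{\phi}{\phi}^{\otimes k} = \ketbra{\phi}{\phi}^{\otimes k} V_d(\sigma)$.

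First I would establish that $M$ is invariant under left and right multiplication by permutation operators. Using that each $V_d(\sigma)$ commutes with $U^{\otimes k}$ (shown just after Theorem~\ref{th:SchurW}) and the symmetry of $\ketbra{\phi}{\phi}^{\otimes k}$, one moves $V_d(\sigma)$ through the average:
\begin{align}
V_d(\sigma) M = \ExU\left[U^{\otimes k} V_d(\sigma)\ketbra{\phi}{\phi}^{\otimes k} U^{\dagger\otimes k}\right] = \ExU\left[U^{\otimes k}\ketbra{\phi}{\phi}^{\otimes k} U^{\dagger\otimes k}\right] = M,
\end{align}
and symmetrically $M V_d(\sigma)=M$. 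Averaging over $\sigma\in S_k$ then gives $P^{(d,k)}_{\mathrm{sym}} M = M = M P^{(d,k)}_{\mathrm{sym}}$, hence $M = P^{(d,k)}_{\mathrm{sym}} M P^{(d,k)}_{\mathrm{sym}}$.

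Next I would combine this with Schur--Weyl. Writing $M = \sum_{\pi\in S_k} c_\pi V_d(\pi)$ and recalling from the proof of Theorem~\ref{th:Psymprojector} that $V_d(\pi) P^{(d,k)}_{\mathrm{sym}} = P^{(d,k)}_{\mathrm{sym}} = P^{(d,k)}_{\mathrm{sym}} V_d(\pi)$, sandwiching each permutation collapses it, $P^{(d,k)}_{\mathrm{sym}} V_d(\pi) P^{(d,k)}_{\mathrm{sym}} = P^{(d,k)}_{\mathrm{sym}}$. Therefore $M = P^{(d,k)}_{\mathrm{sym}} M P^{(d,k)}_{\mathrm{sym}} = \big(\sum_{\pi} c_\pi\big) P^{(d,k)}_{\mathrm{sym}}$, so $M = c\,P^{(d,k)}_{\mathrm{sym}}$ for a single scalar $c$. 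Finally, since the moment operator is trace preserving (Lemma~\ref{le:PropMom}.\ref{propr:MomHerm}) and $\Tr(\ketbra{\phi}{\phi}^{\otimes k}) = (\Tr\ketbra{\phi}{\phi})^k = 1$, taking the trace of $M = c\,P^{(d,k)}_{\mathrm{sym}}$ yields $c = 1/\Tr(P^{(d,k)}_{\mathrm{sym}})$, which is the claim; the value $\Tr(P^{(d,k)}_{\mathrm{sym}}) = \binom{k+d-1}{k}$ is then supplied by Theorem~\ref{th:dimsym}.

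The conceptual crux, and the step most easily overlooked, is recognizing that $M = P^{(d,k)}_{\mathrm{sym}} M P^{(d,k)}_{\mathrm{sym}}$ by itself is not enough: a priori $M$ could act as an arbitrary operator on the symmetric subspace. What rescues the argument is that $M$ is constrained to the commutant, so Schur--Weyl forces it to be a linear combination of permutations, each of which collapses to $P^{(d,k)}_{\mathrm{sym}}$ under the sandwich. I expect no genuine difficulty beyond bookkeeping; the only care needed is to invoke the commutation of $V_d(\sigma)$ with $U^{\otimes k}$ and the trace-preservation property at the right moments.
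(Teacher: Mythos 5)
Your proof is correct. It rests on the same two pillars as the paper's argument --- membership of $M=\ExU\left[U^{\otimes k}\ketbra{\phi}{\phi}^{\otimes k}U^{\dagger\otimes k}\right]$ in the commutant (hence, by Schur--Weyl, $M=\sum_{\pi\in S_k} c_\pi V_d(\pi)$) and the permutation invariance of $\ket{\phi}^{\otimes k}$ --- but the mechanism for extracting the conclusion is genuinely different. The paper left-multiplies by $V_d(\sigma^{-1})$, reads off $M=\sum_{\pi} c_{\sigma\pi}V_d(\pi)$ for every $\sigma$, and argues that all coefficients must coincide; since the $V_d(\pi)$ are linearly dependent when $k>d$, that step needs the parenthetical remark (and footnote) that the permutation operators are at least pairwise distinct for $d>1$. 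Your sandwich identity $M=P^{(d,k)}_{\mathrm{sym}}MP^{(d,k)}_{\mathrm{sym}}$, combined with $P^{(d,k)}_{\mathrm{sym}}V_d(\pi)P^{(d,k)}_{\mathrm{sym}}=P^{(d,k)}_{\mathrm{sym}}$, collapses \emph{any} chosen decomposition to $\bigl(\sum_{\pi}c_\pi\bigr)P^{(d,k)}_{\mathrm{sym}}$ regardless of whether the coefficients are unique or equal, so it sidesteps that subtlety entirely (and works verbatim for $d=1$). Both proofs then fix the scalar by trace preservation and Theorem~\ref{th:dimsym}. Your closing observation is also the right one to flag: $M=P^{(d,k)}_{\mathrm{sym}}MP^{(d,k)}_{\mathrm{sym}}$ alone would only confine $M$ to operators supported on the symmetric subspace, and it is the commutant constraint via Schur--Weyl that upgrades this to proportionality with the projector.
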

\begin{proof}
For all $\sigma \in S_k$, we left-multiply both sides of Eq.~\eqref{eq:MomPerm} by $V_d(\sigma^{-1})$ and obtain:
    \begin{align}
    V_d(\sigma^{-1})\MomOp{\left(\ketbra{\phi}{\phi}\right)^{\otimes k}}=\sum_{\pi \in S_k} c_\pi   V_d(\sigma^{-1})V_d(\pi),
\end{align}
where $\MomOp{\left(\ketbra{\phi}{\phi}\right)^{\otimes k}}\coloneqq \ExU \left[\left(U\ketbra{\phi}{\phi}U^\dagger\right)^{\otimes k} \right]$.
Using the fact that $V_d(\sigma^{-1})$ commutes with $U^{\otimes k}$ for all unitaries $U$ and that $V_d(\sigma^{-1})\ket{\phi}^{\otimes k}=\ket{\phi}^{\otimes k}$, the LHS is: \begin{align}
V_d(\sigma^{-1})\MomOp{\left(\ketbra{\phi}{\phi}\right)^{\otimes k}}=\MomOp{\left(\ketbra{\phi}{\phi}\right)^{\otimes k}}  .  
\end{align}
The RHS is: 
\begin{align}
\sum_{\pi \in S_k} c_\pi   V_d(\sigma^{-1})V_d(\pi)=
\sum_{\pi \in S_k} c_\pi V_d(\sigma^{-1}\pi) = \sum_{\pi \in S_k} c_{\sigma\pi} V_d(\pi).   
\end{align}
Therefore, we have:
\begin{align}
\MomOp{\left(\ketbra{\phi}{\phi}\right)^{\otimes k}}  = \sum_{\pi \in S_k} c_{\sigma\pi} V_d(\pi).
\end{align}
Assuming $d>1$, the previous equation implies that all the coefficients must be the same, i.e., $c_{\sigma}=c_{I}$ for all $\sigma \in S_k$ (note that this can be shown without using linear independence of the permutation matrices, but just using that the permutation matrices are all different linear operators for $d>1$\footnote{To see it explicitly, one can choose the permutation $\sigma$, which fixes all indices except for two, which are interchanged.}).
Hence, we have:
\begin{align}
    \MomOp{\left(\ketbra{\phi}{\phi}\right)^{\otimes k}}=c_I\left(\sum_{\pi \in S_k} V_d(\pi)\right)=c_I k!\, P^{(d,k)}_{\mathrm{sym}}.
\end{align}
Thus, taking the trace, we get:
\begin{align}
    c_I=\frac{1}{k! \Tr\!\left(P^{(d,k)}_{\mathrm{sym}}\right)}=\frac{1}{k!} \frac{1}{\binom{k+d-1}{k}}.
\end{align}
\end{proof}
\begin{definition}[Haar measure on states]
Given a state $\ket{\phi}$ in $\mathbb{C}^d$, we denote
     \begin{align}
         \Expsi\left[\ketbra{\psi}{\psi}^{\otimes k}\right]\coloneqq \ExU\left[U^{\otimes k}\ketbra{\phi}{\phi}^{\otimes k}U^{\dagger \otimes k}\right].
     \end{align}
\end{definition}
 Note that the right invariance of the Haar measure implies that the definition of $\Expsi\left[\ketbra{\psi}{\psi}^{\otimes k}\right]$ does not depend on the choice of $\ket{\phi}$. Moreover, due to Theorem~\ref{th:MomPermSym}, we have:
 \begin{align}
     \Expsi\left[\ketbra{\psi}{\psi}^{\otimes k}\right]=\frac{P^{(d,k)}_{\mathrm{sym}}}{\Tr\!\left(P^{(d,k)}_{\mathrm{sym}}\right)}.
 \end{align}

\section{Vectorization formalism}
\label{sec:vecform}
We define the linear operator $\mathrm{vec}:\MatC{d} \rightarrow (\mathbb{C}^{d})^{\otimes 2}$ as follows: for all $i,j\in[d]$, we set $\mathrm{vec}\!\left(\ketbra{i}{j}\right)\coloneqq \ket{i}\otimes\ket{j}$. This means that if $A\in \MatC{d}$ is written as $A=\sum^d_{i,j=1} A_{i,j} \ketbra{i}{j}$, then $\mathrm{vec}(A)=\sum^d_{i,j=1} A_{i,j} \ket{i}\otimes\ket{j}$. This justifies the name \textquote{vectorization}, as the linear map $\mathrm{vec(\cdot)}$ sends the $d\times d$ matrix $A$ to the $d^2$-dimensional vector $\mathrm{vec}(A)$. Importantly, $\mathrm{vec(\cdot)}$ is a bijection, meaning that for all $\ket{v}\in (\mathbb{C}^d)^{\otimes 2}$ there exists a unique $A\in \MatC{d}$ such that $\ket{v}=\mathrm{vec}(A)$ (in fact, one can take $A=\sum^d_{i,j=1}\braket{i,j}{v}\ketbra{i}{j}$) and vice versa.

It is often convenient to express $\mathrm{vec(A)}$ in the form $\mathrm{vec}(A)=A\otimes I\ket{\Omega}$, where $\ket{\Omega}\coloneqq \mathrm{vec}(I)=\sum^d_{i=1}\ket{i,i}$ is the vectorization of the identity matrix, i.e., the non-normalized maximally entangled state.
To simplify the notation, we can use the shorthand $\kket{A}\coloneqq \mathrm{vec}(A)$.
Another useful property is that the canonical inner product between two vectorized operators $A$ and $B$ is equal to their Hilbert-Schmidt scalar product, i.e., $\bbrakket{A}{B}=\hs{A}{B}$, where $\bbrakket{A}{B}\coloneqq \mathrm{vec}\!\left(A\right)^\dagger \mathrm{vec}(B)$ is the canonical inner product. 
Using the so-called \emph{transpose}-trick $A\otimes I\ket{\Omega}=I\otimes A^T \ket{\Omega}$, we have the \emph{ABC-rule}: 
\begin{align}
    \kket{ABC}=A\otimes C^T \kket{B},
    \label{eq:ABCtrick}
\end{align}
for all $A,B,C \in \MatC{d}$.

Consider a linear superoperator $\Phi\colon \mathcal{L}(\mathbb{C}^d) \rightarrow \mathcal{L}(\mathbb{C}^d)$. It is worth noting that the map $\kket{\Phi(X)}$ is linear with respect to $\kket{X}$ for all $\kket{X} \in (\mathbb{C}^d)^{\otimes 2}$. This linearity implies the existence of a matrix $\mathrm{vec}(\Phi) \in \mathcal{L}((\mathbb{C}^d)^{\otimes 2})$ associated with this linear transformation. In other words, we can express the action of $\Phi$ on an operator $X$ as follows:
\begin{align}
\label{eq:phikraus}
\kket{\Phi(X)}= \mathrm{vec}(\Phi)\kket{X} \quad \text{for all $X\in\mathcal{L}(\mathbb{C}^d)$.}
\end{align}

It is important to mention that for all linear superoperators $\Phi:\mathcal{L}(\mathbb{C}^d)\rightarrow\mathcal{L}(\mathbb{C}^d)$ there exists a set of matrices $\{A_i\}^{d^2}_{i=1}$, $\{B_i\}^{d^2}_{i=1} \in \mathcal{L}(\mathbb{C}^d)$ such that $\Phi(X)=\sum^{d^2}_{i=1}A_i X B^\dagger_i$. 
To see why, observe that:
\begin{align}
\label{eq:choiIso}
    \Phi(X)=\Tr_{2}\left(I\otimes X^{T} \rho_{\Phi}\right),
\end{align} 
where $\rho_{\Phi} := \Phi \otimes \mathcal{I}(\ketbra{\Omega}{\Omega}) = \sum_{i,j=1}^d \Phi(\ketbra{i}{j}) \otimes \ketbra{i}{j}$ is the so-called Choi matrix of $\Phi$, $\mathcal{I}:\mathcal{L}(\mathbb{C}^d)\rightarrow\mathcal{L}(\mathbb{C}^d)$ is the identity channel and $\Tr_{2}$ indicates the partial trace with respect to the second tensor space. Now, performing a singular value decomposition (SVD) on $\rho_{\Phi}$, we have:
\begin{align}
\rho_\Phi=\sum^{d^2}_{i=1}\lambda_i\ketbra{u_i}{v_i}=\sum^{d^2}_{i=1}\ketbra{\mathrm{vec}(A_i)}{\mathrm{vec}(B_i)}=\sum^{d^2}_{i=1} A_i\otimes I \ketbra{\Omega}{\Omega}B^\dagger_i\otimes I,
\end{align}
where $\lambda_i\ge 0$, $\ket{u_i}$ and $\ket{v_i}$ are respectively the singular values and the column vectors associated to the SVD unitaries, and we defined $A_i\coloneqq \sqrt{\lambda_i}\mathrm{vec}^{-1}\!(\ket{u_i})$ and $B_i\coloneqq \sqrt{\lambda_i}\mathrm{vec}^{-1}\!(\ket{v_i})$. 
Substituting this expression back into \eqref{eq:choiIso}, we obtain the desired claim.
(Note that if $\Phi$ is completely positive, then its Choi matrix is positive, so we can perform an eigendecomposition instead of an SVD, and arrive at the same conclusion as before, with $B_i=A_i$.)

Using the \emph{ABC-rule} \eqref{eq:ABCtrick}, we can rewrite $\Phi(X)=\sum^{d^2}_{i=1}A_i X B^\dagger_i$ as $\kket{\Phi(X)}=\sum^{d^2}_{i=1} A_i \otimes B^*_i \kket{X}$, which implies:
\begin{align}
    \mathrm{vec}(\Phi)=\sum^{d^2}_{i=1} A_i \otimes B^*_i.
\end{align}
Additionally, it is often advantageous to express the moment operator $\mathcal{M}_{\mu_H}^{(k)}\!\left(O\right)= \ExU \left[U^{\otimes k} O U^{\dagger \otimes k}\right]$ in its vectorized form:
\begin{align}
    \kket{\mathcal{M}_{\mu_H}^{(k)}\!\left(O\right)}=\ExU \left[U^{\otimes k} \otimes U^{* \otimes k}\right] \kket{O}.
\end{align}

\begin{definition}[Vectorized moment operator]
    We define the \emph{vectorized moment operator} $\mathrm{M}^{(k)}_{\mu_H} \in \mathcal{L}\left((\mathbb{C}^d)^{\otimes 2k}\right)$ as:
    \begin{align}
        \mathrm{M}^{(k)}_{\mu_H}\coloneqq \mathrm{vec}(\mathcal{M}_{\mu_H}^{(k)})=\ExU \left[U^{\otimes k} \otimes U^{* \otimes k}\right].
    \end{align}
\end{definition}
Proposition~\ref{prop:projComm} has a vectorized version as follows:
\begin{proposition}
\label{prop:MomVec}
Let the commutant space be $\mathrm{Comm}\coloneqq \mathrm{Comm}(\Ug(d),k)$. Let $\{P_i\}^{\mathrm{dim}\left(\mathrm{Comm}\right)}_{i=1}$ be elements of an orthonormal basis of $\mathrm{Comm}$ with respect to the Hilbert-Schmidt inner product. Then, we have: 
    \begin{align}
    \mathrm{M}^{(k)}_{\mu_H}\coloneqq \ExU \left[U^{\otimes k} \otimes U^{* \otimes k}\right]=\sum^{\mathrm{dim}\!\left(\mathrm{Comm}\right)}_{i=1} \kket{P_i}\!\bbra{P_i}.
    \end{align}
Moreover, we have $\Tr\!\left(\mathrm{M}^{(k)}_{\mu_H}\right)=\mathrm{dim}\left(\mathrm{Comm}\right)$.
\end{proposition}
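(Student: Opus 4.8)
The plan is to derive this statement as the vectorized image of Theorem~\ref{prop:projComm}. By that theorem, for every $O \in \mathcal{L}((\mathbb{C}^d)^{\otimes k})$ we have $\MomOp{O} = \sum_{i=1}^{\mathrm{dim}(\mathrm{Comm})} \hs{P_i}{O}\, P_i$. The first step is to apply the vectorization map $\mathrm{vec}(\cdot)$ to both sides of this identity. On the left, by the definition of the vectorized moment operator together with Eq.~\eqref{eq:phikraus}, one has $\kket{\MomOp{O}} = \mathrm{M}^{(k)}_{\mu_H}\kket{O}$.

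On the right, I would invoke two facts recalled earlier in this section: vectorization is linear, so the scalar coefficients $\hs{P_i}{O}$ pass through unchanged while $P_i$ becomes $\kket{P_i}$; and the Hilbert-Schmidt inner product coincides with the canonical inner product of the vectorized operators, i.e. $\hs{P_i}{O} = \bbrakket{P_i}{O}$. The right-hand side then reads $\sum_i \bbrakket{P_i}{O}\,\kket{P_i} = \left(\sum_i \kket{P_i}\bbra{P_i}\right)\kket{O}$, where I have simply regrouped the scalar $\bbrakket{P_i}{O}$ into the dyad $\kket{P_i}\bbra{P_i}$ acting on $\kket{O}$. Equating the two sides gives $\mathrm{M}^{(k)}_{\mu_H}\kket{O} = \left(\sum_i \kket{P_i}\bbra{P_i}\right)\kket{O}$ for every $O$. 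Since $\mathrm{vec}(\cdot)$ is a bijection, $\kket{O}$ ranges over all of $(\mathbb{C}^d)^{\otimes 2k}$; two operators agreeing on every vector are equal, so $\mathrm{M}^{(k)}_{\mu_H} = \sum_i \kket{P_i}\bbra{P_i}$, as claimed.

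For the trace identity I would compute $\Tr\!\left(\sum_i \kket{P_i}\bbra{P_i}\right) = \sum_i \bbrakket{P_i}{P_i} = \sum_i \hs{P_i}{P_i}$ and use the orthonormality of the $P_i$ to conclude $\sum_i 1 = \mathrm{dim}(\mathrm{Comm})$. I do not expect a serious obstacle here, since the argument is essentially a bookkeeping translation of Theorem~\ref{prop:projComm}; the only points requiring care are the clean use of $\hs{\cdot}{\cdot} = \bbrakket{\cdot}{\cdot}$ so that the dyadic form $\kket{P_i}\bbra{P_i}$ emerges, and the bijectivity step that upgrades the pointwise equality of the two superoperators (in vectorized form) to an equality of operators.
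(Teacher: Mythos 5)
Your proposal is correct and follows essentially the same route as the paper: vectorize the identity from Theorem~\ref{prop:projComm}, use $\hs{P_i}{O}=\bbrakket{P_i}{O}$ to regroup into the dyads $\kket{P_i}\!\bbra{P_i}$, and conclude by the surjectivity of $\mathrm{vec}(\cdot)$; the trace claim then follows from orthonormality exactly as you compute.
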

\begin{proof}
    Vectorizing the equation derived in Proposition~\ref{prop:projComm}, we have:
    \begin{align}
    \mathrm{M}^{(k)}_{\mu_H}\kket{O}=\kket{\MomOp{O}}=\sum^{\mathrm{dim}\left(\mathrm{Comm}\right)}_{i=1} \langle P_i ,O\rangle_{HS} \kket{P_i}=\sum^{\mathrm{dim}\left(\mathrm{Comm}\right)}_{i=1}\kket{P_i} \bbrakket{P_i}{O}. 
    \end{align}
    Since the previous equation holds for all $O\in \mathcal{L}\left((\mathbb{C}^d)^{\otimes k}\right)$ and therefore for all $\kket{O}\in (\mathbb{C}^{d})^{\otimes 2k}$, we have the thesis.
\end{proof}
Building upon the previous proposition and the fact that $\Tr\!\left(\mathrm{M}^{(k)}_{\mu_H}\right)=\ExU \left|\Tr(U)\right|^{2k}$, we can derive the following equation:
\begin{align}
     \ExU \left|\Tr(U)\right|^{2k}= \mathrm{dim}(\mathrm{Comm}(\Ug(d),k)).
\end{align}
In the case where $d\ge k$ (which is the case we are most interested in), we find that $\ExU \left|\Tr(U)\right|^{2k}=k!$. This is due to the fact that the $k$-th order commutant of the unitary group corresponds to the span of the permutation operators (as a consequence of Schur-Weyl duality, see Theorem~\ref{th:SchurW}), which are linearly independent for $d\ge k$ (as stated in Proposition~\ref{prop:permInd}).

It is also worth noting that the vectorized moment operator $\mathrm{M}^{(k)}_{\mu_H}$ is an orthogonal projector, since $\mathrm{M}^{(k)2}_{\mu_H}=\mathrm{M}^{(k)}_{\mu_H}$ (due to the right invariance of the Haar measure) and $\mathrm{M}^{(k)\dagger}_{\mu_H}=\mathrm{M}^{(k)}_{\mu_H}$ (due to Proposition~\ref{prop:Haar}).

Furthermore, as an exercise, we provide an alternative proof of Theorem~\ref{prop:projComm} based on vectorization.

\begin{theorem}[Projector onto the commutant]\
\label{th:ProjComm}
The moment operator $ {\mathcal{M}_{\mu_H}^{(k)}\!\left(\cdot\right)=\ExU \left[U^{\otimes k} \left(\cdot\right) U^{\dagger \otimes k}\right]}$  is the orthogonal projector onto the commutant $\mathrm{Comm}\coloneqq \mathrm{Comm}(\Ug(d),k)$ with respect to the Hilbert-Schmidt inner product.
In particular, let $P_1,\ldots,P_{\mathrm{dim}(\mathrm{Comm})}$ be an orthonormal basis of $\mathrm{Comm}$ and let $O \in \mathcal{L}((\mathbb{C}^d)^{\otimes k})$. Then, we have:
\begin{align}
\MomOp{O}=\sum^{\mathrm{dim}\left(\mathrm{Comm}\right)}_{i=1} \langle P_i ,O\rangle_{HS} P_i \,.
\end{align}

\end{theorem}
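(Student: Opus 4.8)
The plan is to carry out the entire argument in vectorized form and devectorize only at the very end. The key starting point, already established in the text preceding this theorem, is that the vectorized moment operator $\mathrm{M}^{(k)}_{\mu_H}=\ExU\left[U^{\otimes k}\otimes U^{*\otimes k}\right]$ is an \emph{orthogonal projector}: it satisfies $\mathrm{M}^{(k)2}_{\mu_H}=\mathrm{M}^{(k)}_{\mu_H}$ (by the invariance of the Haar measure) and $\mathrm{M}^{(k)\dagger}_{\mu_H}=\mathrm{M}^{(k)}_{\mu_H}$ (by Proposition~\ref{prop:Haar}). I would emphasize that one must \emph{not} invoke Proposition~\ref{prop:MomVec} here, since that proposition was obtained by vectorizing Theorem~\ref{prop:projComm} itself; using it would render the argument circular. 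The whole proof should instead rest only on Lemma~\ref{le:PropMom} and the isometry property $\bbrakket{A}{B}=\langle A,B\rangle_{HS}$ of the vectorization map.

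First I would identify the image of this projector. Since $\mathrm{M}^{(k)}_{\mu_H}$ is idempotent, its image coincides with its fixed-point set, so $\kket{O}\in\mathrm{Im}(\mathrm{M}^{(k)}_{\mu_H})$ if and only if $\mathrm{M}^{(k)}_{\mu_H}\kket{O}=\kket{O}$, that is $\kket{\MomOp{O}}=\kket{O}$. Because $\mathrm{vec}(\cdot)$ is a bijection, this is equivalent to $\MomOp{O}=O$. Now Lemma~\ref{le:PropMom}.\ref{propr:CoMom} gives $\MomOp{O}\in\mathrm{Comm}$ for every $O$, so any fixed point necessarily lies in the commutant; conversely Lemma~\ref{le:PropMom}.\ref{propr:EigMom} gives $\MomOp{O}=O$ whenever $O\in\mathrm{Comm}$. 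Hence the condition $\MomOp{O}=O$ holds exactly when $O\in\mathrm{Comm}$, and therefore
\begin{align}
\mathrm{Im}\!\left(\mathrm{M}^{(k)}_{\mu_H}\right)=\{\kket{A}:A\in\mathrm{Comm}\}.
\end{align}

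Next I would pin down the projector explicitly. Given an orthonormal basis $P_1,\dots,P_{\mathrm{dim}(\mathrm{Comm})}$ of $\mathrm{Comm}$ with respect to the Hilbert-Schmidt inner product, the isometry property $\bbrakket{P_i}{P_j}=\langle P_i,P_j\rangle_{HS}=\delta_{ij}$ shows that $\{\kket{P_i}\}_i$ is an orthonormal basis of the subspace $\kket{\mathrm{Comm}}$. Since the orthogonal projector onto a subspace is uniquely determined by that subspace, and both $\mathrm{M}^{(k)}_{\mu_H}$ and $\sum_i\kket{P_i}\bbra{P_i}$ are orthogonal projectors with image $\kket{\mathrm{Comm}}$, they must coincide. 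Applying this operator identity to $\kket{O}$ gives
\begin{align}
\kket{\MomOp{O}}=\mathrm{M}^{(k)}_{\mu_H}\kket{O}=\sum_{i=1}^{\mathrm{dim}(\mathrm{Comm})}\kket{P_i}\bbrakket{P_i}{O}=\sum_{i=1}^{\mathrm{dim}(\mathrm{Comm})}\langle P_i,O\rangle_{HS}\kket{P_i},
\end{align}
and devectorizing (using bijectivity of $\mathrm{vec}$ once more) yields the claimed formula.

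The main obstacle is conceptual rather than computational: one must resist citing Proposition~\ref{prop:MomVec} and build the result from scratch, the crux being the fixed-point characterization of $\mathrm{Im}(\mathrm{M}^{(k)}_{\mu_H})$ through Lemma~\ref{le:PropMom} together with the uniqueness of the orthogonal projector onto a given subspace. Once the image is shown to equal $\kket{\mathrm{Comm}}$, the remainder is routine bookkeeping made possible by the fact that vectorization is an isometry for the relevant inner products.
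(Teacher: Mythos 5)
Your proof is correct and follows essentially the same route as the paper's: both arguments work in the vectorized picture, use that $\mathrm{M}^{(k)}_{\mu_H}$ is an orthogonal projector, and identify its range with the vectorized commutant via Lemma~\ref{le:PropMom} before devectorizing. The only organizational difference is that the paper extracts an orthonormal basis of $\mathrm{Comm}$ from the eigendecomposition of $\mathrm{M}^{(k)}_{\mu_H}$ and shows it spans the commutant, whereas you characterize the image directly as the fixed-point set and then invoke uniqueness of the orthogonal projector onto a given subspace -- a valid and equivalent rearrangement that also handles an arbitrary orthonormal basis of $\mathrm{Comm}$ without further comment.
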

\begin{proof}
    Since the vectorized moment operator $\mathrm{M}^{(k)}_{\mu_H}$ is an orthogonal projector, it admits an eigendecomposition with eigenvalues $0$ and $1$ and eigenvectors $\{v_i\}^{r}_{i=1}$, where $r$ is the rank of the projector $\mathrm{M}^{(k)}_{\mu_H}$:
    \begin{align}
        \mathrm{M}^{(k)}_{\mu_H}\coloneqq \mathrm{vec}(\mathcal{M}_{\mu_H}^{(k)})=\ExU \left[U^{\otimes k} \otimes U^{* \otimes k}\right]=\sum^{r}_{i=1}\ketbra{v_i}{v_i}=\sum^{r}_{i=1}\kket{P_i}\!\bbra{P_i},
    \end{align}
  where we used the fact that for each $\ket{v_i}$, there exists $P_i \in \mathcal{L}((\mathbb{C}^d)^{\otimes k})$ such that $\ket{v_i} = \kket{P_i}$ for each $i \in [r]$.
  We note that $P_i$ are operators in $\mathrm{Comm}(\Ug(d), k)$. This can be shown as follows:
  \begin{align}
    P_i=\mathrm{vec}^{-1}\!\left(\kket{P_i}\right)=\mathrm{vec}^{-1}\!\left(\mathrm{vec}(\mathcal{M}_{\mu_H}^{(k)})\kket{P_i}\right)=\mathcal{M}_{\mu_H}^{(k)}\!(P_i)\,\, \in \mathrm{Comm}(\Ug(d),k),
  \end{align}
  where in the last step we used Lemma~\ref{le:PropMom}.\ref{propr:CoMom}. Furthermore, $\{P_i\}^r_{i=1}$ form an orthonormal set with respect to the Hilbert-Schmidt inner product:
  \begin{align}
      \hs{P_i}{P_j}=\bbrakket{P_i}{P_j}=\braket{v_i}{v_j}=\delta_{i,j} \quad \forall\,   i,j\in [r].
  \end{align}
  Moreover, they constitute a basis for $\mathrm{Comm}$ since, for all $A\in \mathrm{Comm}$, we have $\kket{A}=\mathrm{M}^{(k)}_{\mu_H}\kket{A}=\sum^{r}_{i=1}\bbrakket{P_i}{A}\kket{P_i}$, which implies that $A\in \operatorname{span}\left(\{P_i\}^r_{i=1}\right)$.
  Hence, the rank of $\mathrm{M}^{(k)}_{\mu_H}$ is equal to the dimension of the commutant, i.e., $r=\mathrm{dim}\left(\mathrm{Comm}\right)$.
  Consequently, we can conclude the proof by noting that for all $O\in \mathcal{L}\left((\mathbb{C}^d)^{\otimes k}\right)$:
  \begin{align}
      \mathcal{M}_{\mu_H}^{(k)}\!\left(O\right)= \mathrm{vec}^{-1}\!\left(\mathrm{vec}(\mathcal{M}_{\mu_H}^{(k)})\kket{O}\right)=\mathrm{vec}^{-1}\!\left(\sum^{r}_{i=1}\kket{P_i}\bbrakket{P_i}{O}\right)=\sum^{r}_{i=1}\hs{P_i}{O}P_i.
  \end{align}
\end{proof}
To address the issue of the permutation matrices not being orthonormal, we can introduce a vectorized version of Eq.\eqref{eq:momOpweing} that provides a concise representation of the moment operator.
\begin{align}
     \ExU \left[U^{\otimes k} \otimes U^{* \otimes k}\right] = \sum_{\pi,\sigma \in S_k} \mathrm{Wg}(\pi^{-1}\sigma,d) \kket{V_d(\pi)}\!\bbra{V_d(\sigma)}.
\end{align}
\begin{proof}
Using Eq.\eqref{eq:momOpweing}, we have:
\begin{align}
     \ExU \left[U^{\otimes k} \otimes U^{* \otimes k}\right] \kket{O}&=\ExU \kket{U^{\otimes k} O U^{\dagger \otimes k}}\\
     &= \sum_{\pi,\sigma \in S_k} \mathrm{Wg}\left(\pi^{-1}\sigma,d\right) \kket{V_d(\pi)}\Tr\!\left(V^\dagger_d(\sigma) O\right)\\
     &= \sum_{\pi,\sigma \in S_k} \mathrm{Wg}\left(\pi^{-1}\sigma,d\right) \kket{V_d(\pi)}\bbrakket{V_d(\sigma)}{O}.
\end{align}
Since this equation holds for all vectors $\kket{O}$ in $(\mathbb{C}^{d})^{\otimes 2k}$, we conclude the proof.
\end{proof}

We can now write the moment operator equations for $k=1$ and $k=2$ presented in Example~\ref{ex:SecondMoment} in their vectorized form: 
\begin{align}
\label{eq:mom1}
     \ExU \left[U \otimes U^{*}\right] &= \frac{1}{d}\kket{I}\!\bbra{I},\\
     \ExU \left[U^{\otimes 2} \otimes U^{* \otimes 2}\right] &= \frac{1}{d^2-1}\left[\kket{\mathbb{I}}\!\bbra{\mathbb{I}}-\frac{1}{d}\kket{\mathbb{I}}\!\bbra{\mathbb{F}}-\frac{1}{d}\kket{\mathbb{F}}\!\bbra{\mathbb{I}}+\kket{\mathbb{F}}\!\bbra{\mathbb{F}}\right].
\label{eq:mom2}
\end{align}
\begin{proof}
    For Equation \eqref{eq:mom1}, we have:
\begin{align}
            \ExU \left[U \otimes U^{*}\right] \kket{O}&=\ExU \kket{U O U^{\dagger}}=\frac{\Tr\!\left(O\right)}{d} \kket{I}=\frac{1}{d} \kket{I}\bbrakket{I}{O},
\end{align}
where we used $\bbrakket{I}{O}=\Tr\!\left(IO\right)=\Tr\!\left(O\right)$. For Equation \eqref{eq:mom2}, we have:
    \begin{align}
         \ExU \left[U^{\otimes 2} \otimes U^{* \otimes 2}\right] \kket{O}&=\ExU \kket{U^{\otimes 2} O U^{\dagger \otimes 2}}\\
         &=\frac{\Tr\!\left(O\right)-d^{-1}\Tr\!\left(\mathbb{F}O\right)}{d^2-1}\kket{\mathbb{I}} + \frac{\Tr\!\left(\mathbb{F}O\right)-d^{-1}\Tr\!\left(O\right)}{d^2-1}\kket{\mathbb{F}} \\
         &  = \frac{1}{d^2-1}\left[\kket{\mathbb{I}}\bbrakket{\mathbb{I}}{O}-\frac{1}{d}\kket{\mathbb{I}}\bbrakket{\mathbb{F}}{O}-\frac{1}{d}\kket{\mathbb{F}}\bbrakket{\mathbb{I}}{O}+\kket{\mathbb{F}}\bbrakket{\mathbb{F}}{O}\right],
    \end{align}
    where we used that $\mathbb{F}^\dagger=\mathbb{F}$, $\bbrakket{\mathbb{I}}{O}=\Tr\!\left(O\right)$ and $\bbrakket{\mathbb{F}}{O}=\Tr\!\left(\mathbb{F}O\right)$.
\end{proof}

Equations like the ones presented above can be effectively visualized and manipulated using Tensor Network diagrams. 

\section{Tensor network diagrams}
\label{sec:diagram}
Tensor Networks provide a graphical representation that simplifies the understanding and analysis of tensor operations \cite{Bridgeman_2017}. In these diagrams, tensors are represented as boxes (or nodes), and tensor contractions are represented as connections between boxes.
To illustrate the concept, we use specific diagrams for ket-states, bra-states, and matrices. A ket-state $\ket{\psi} \in \mathbb{C}^d$ is represented with a box and a \emph{leg} as follows:
\begin{align}
\ipic{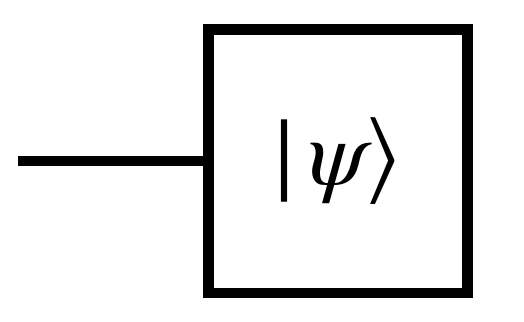}{0.3}.
\end{align}
Similarly, a bra-state $\bra{\psi}$ is represented as:
\begin{align}
    \ipic{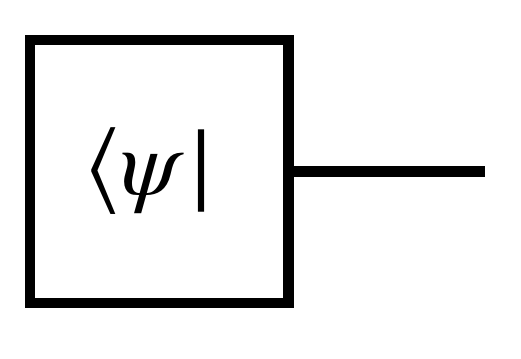}{0.3}.
\end{align}
Matrices, such as $A \in \MatC{d}$, are depicted like boxes with one leg in and one leg out as:
\begin{align}
    \ipic{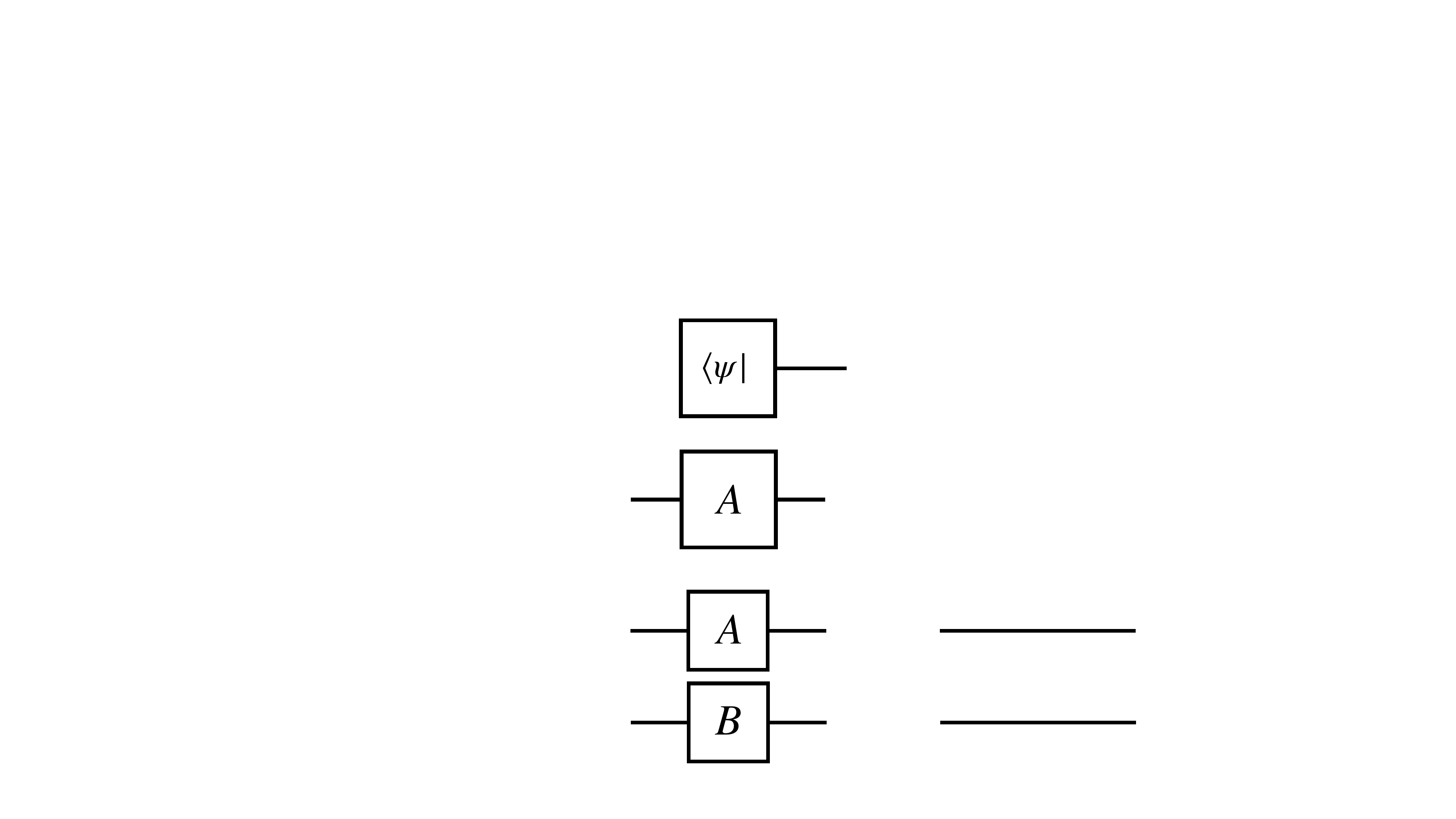}{0.3}.
\end{align}
In particular, the identity matrix $I \in \MatC{d}$ is simply represented by a line:
\begin{align}
    \ipic{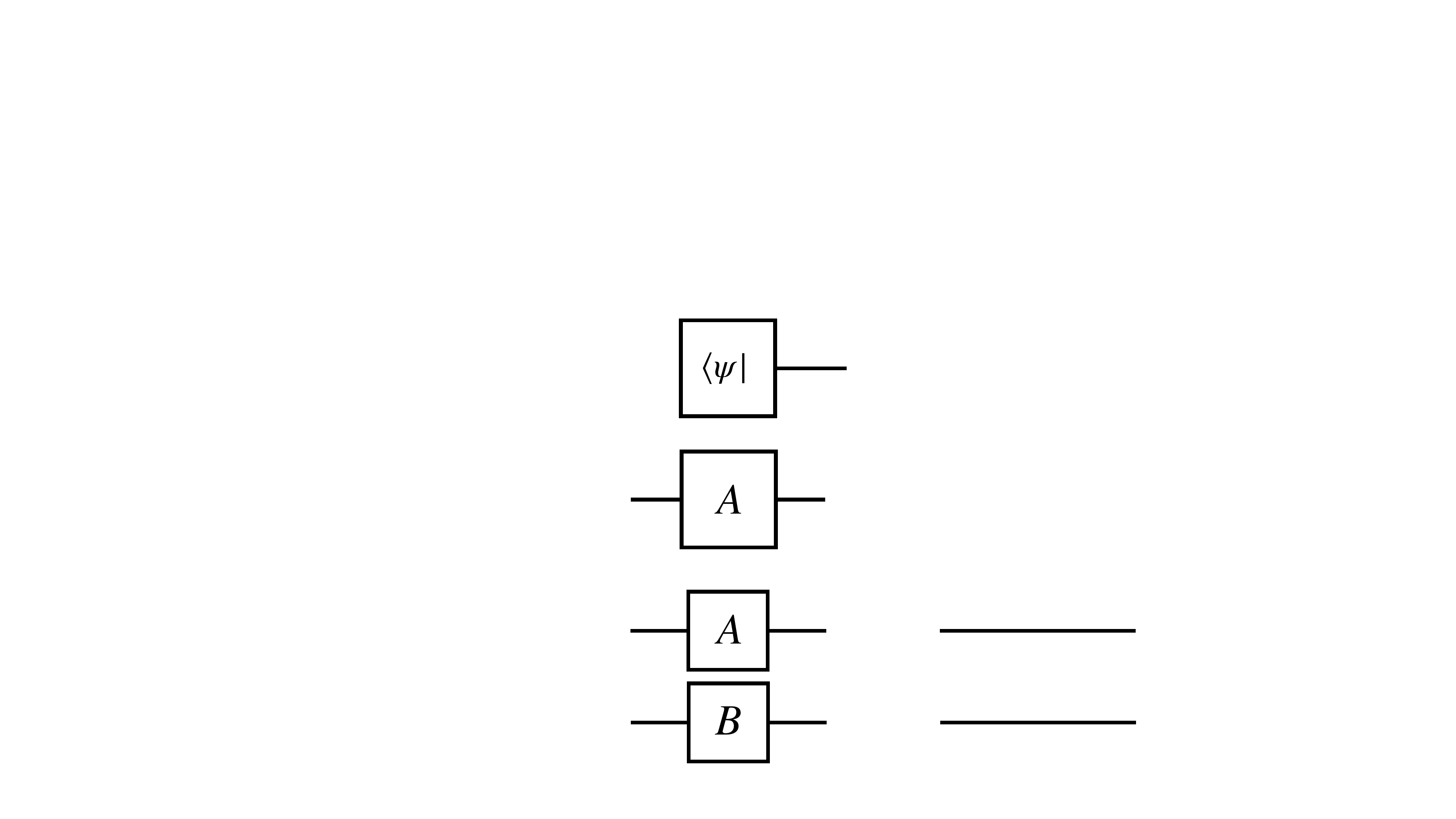}{0.3}.
\end{align}
The trace $\Tr(A)$ of an operator $A \in \MatC{d}$ is denoted as:
\begin{align}
    \ipic{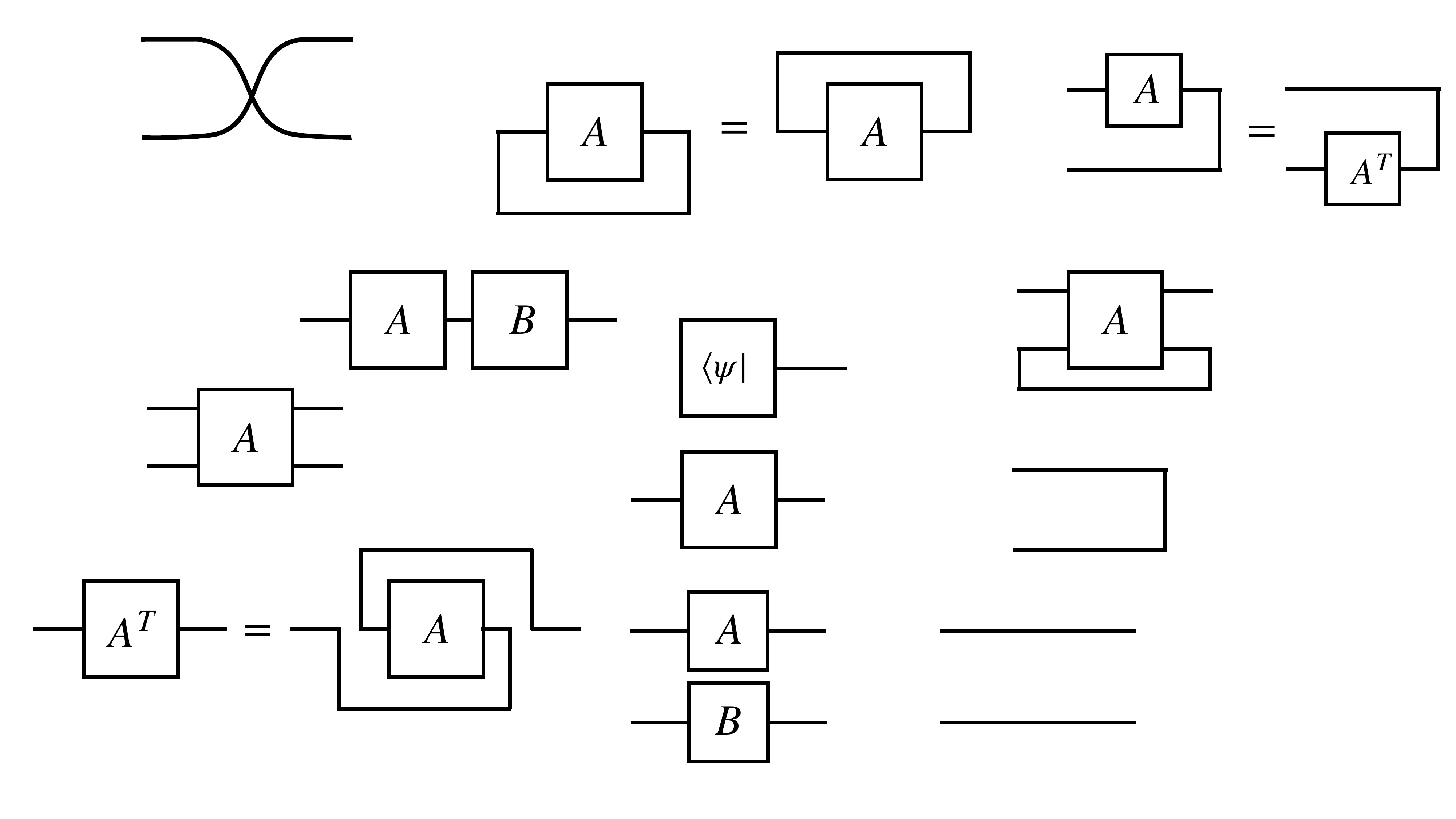}{0.3}.
\end{align}
Given $A,B\in \MatC{d}$, their product $AB$ is represented by placing the box representing matrix $A$ on the left side of the box representing matrix $B$. This can be illustrated as follows:
\begin{align}
    \ipic{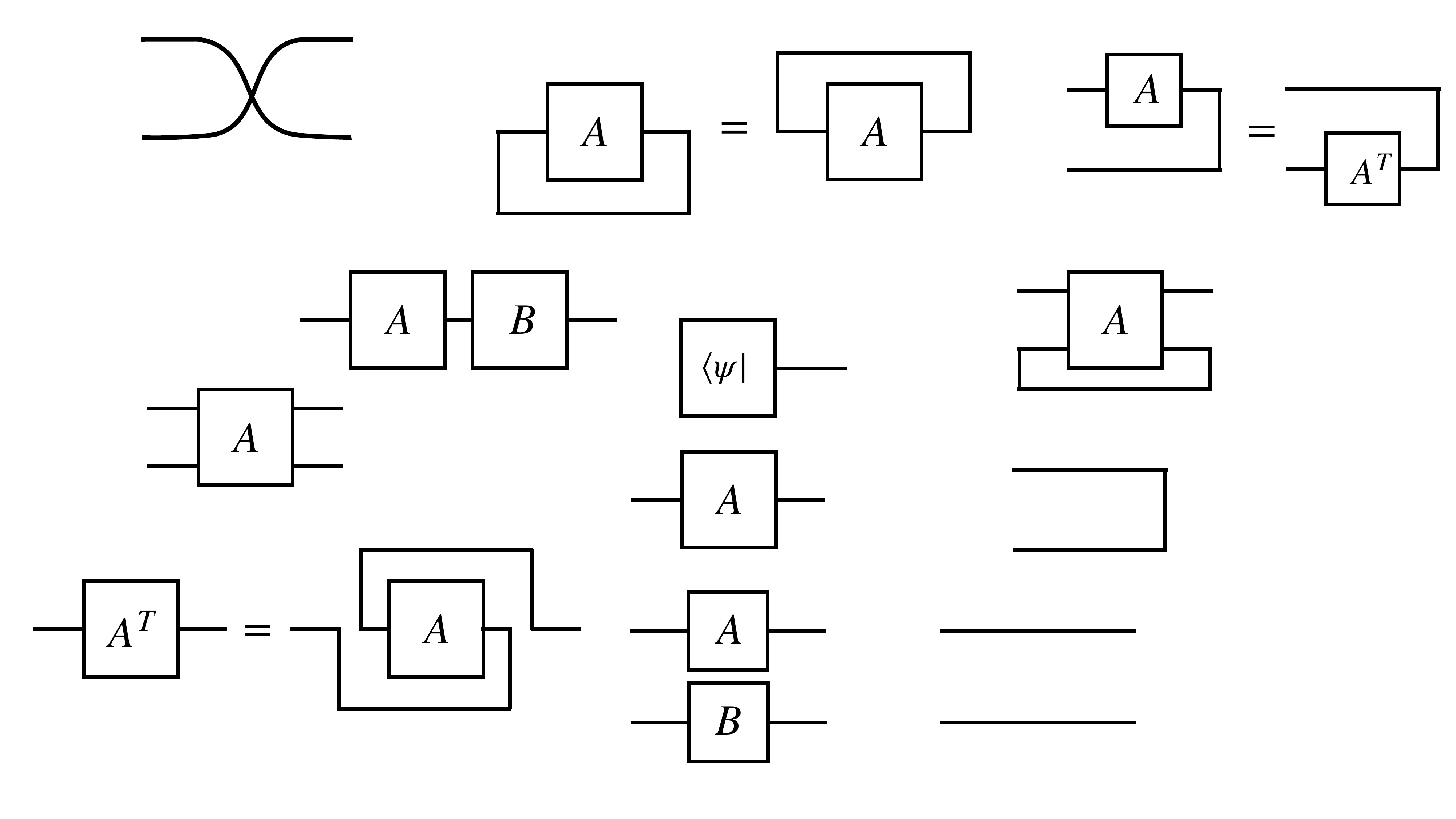}{0.3}.
\end{align}
The transpose $A^T$ of the matrix $A$ is depicted by: 
\begin{align}
    \ipic{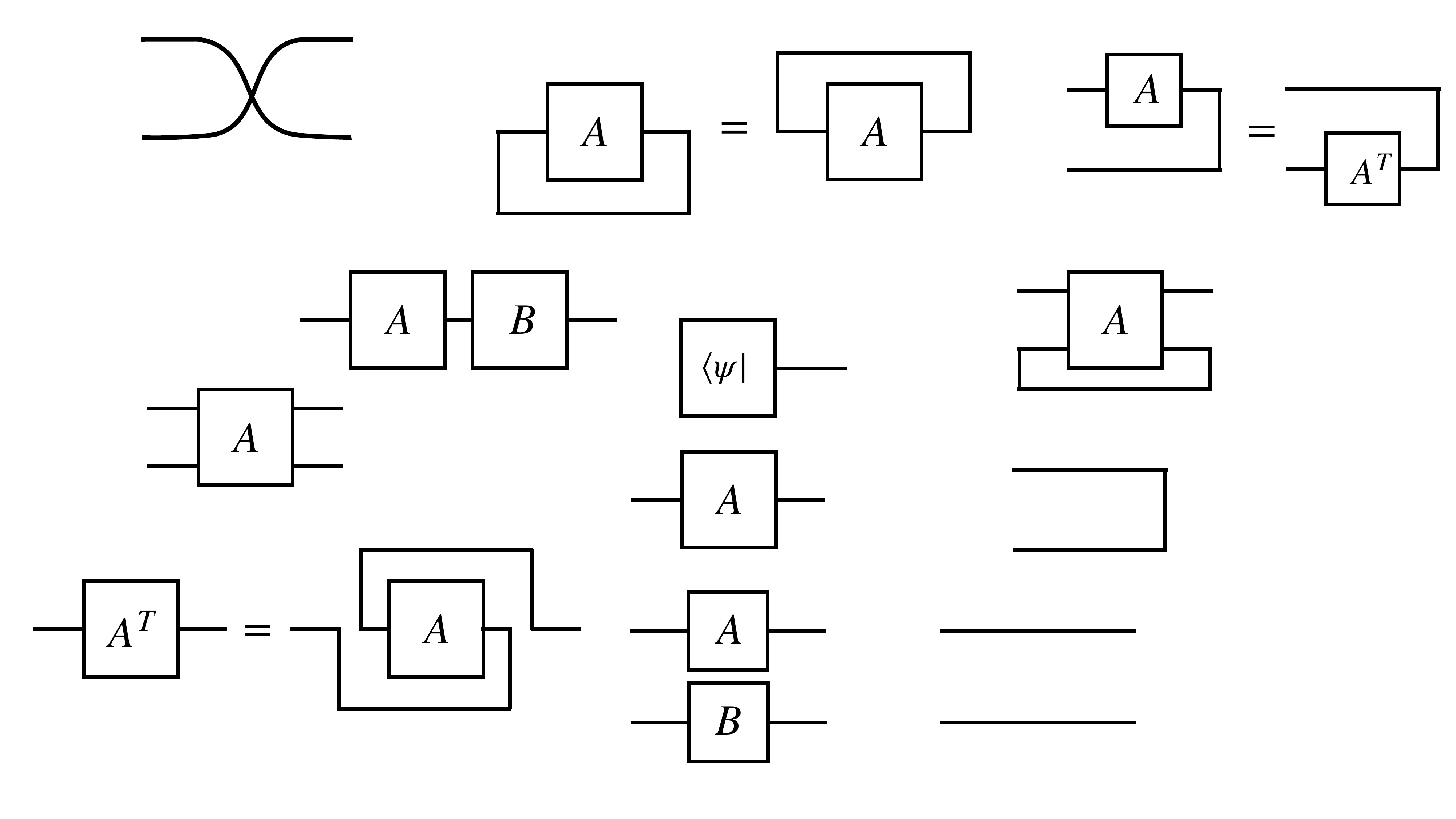}{0.3}.
\end{align}
Given $A\in \MatC{d}$ and $B\in \MatC{d^\prime}$, their tensor product $A \otimes B$ is denoted by arranging the box representing matrix $B$ below the box representing matrix $A$:
\begin{align}
    \ipic{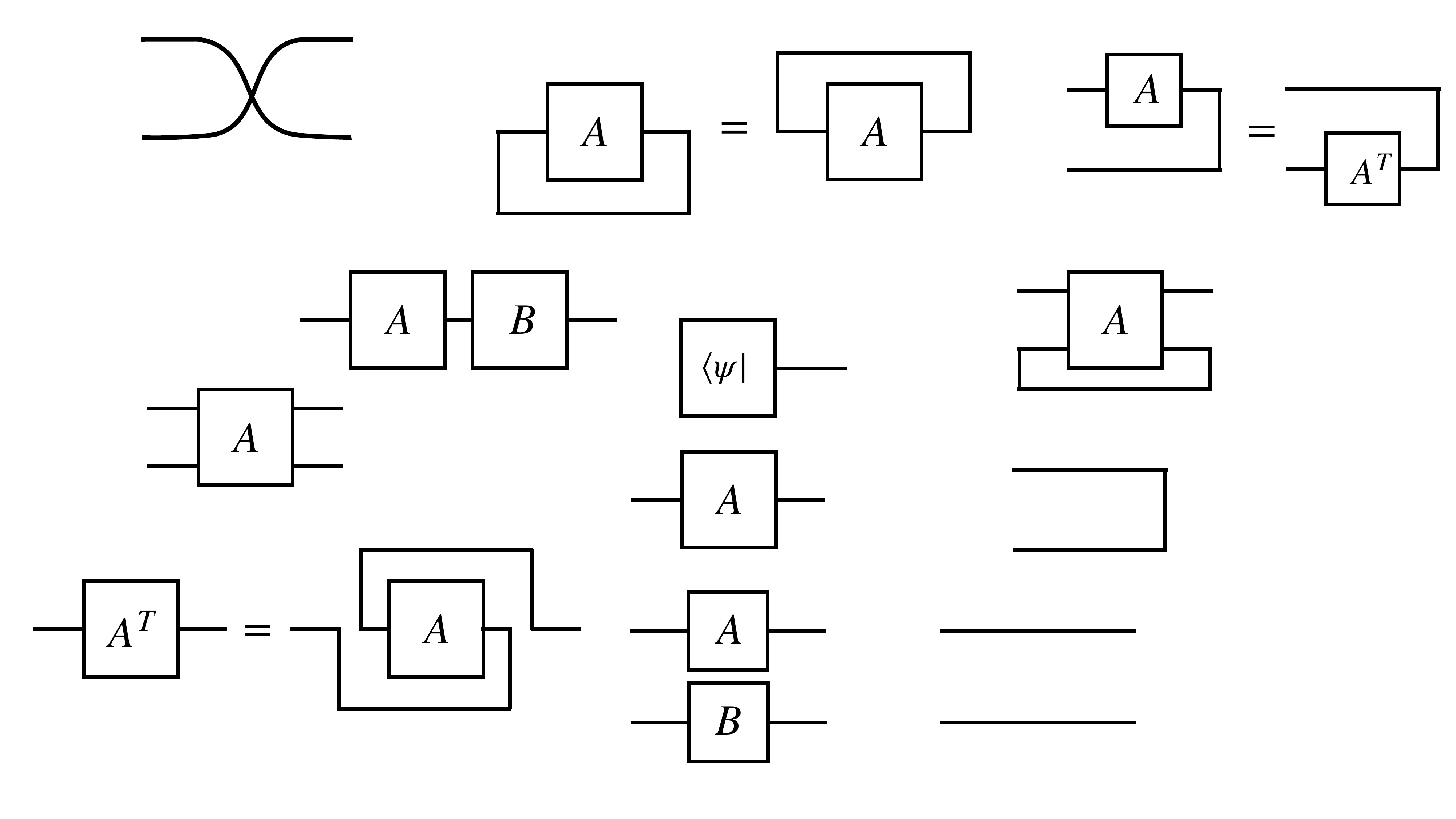}{0.3}.
\end{align}
$A\in \MatC{d} \otimes \MatC{d^\prime}$ is represented with a box that has two (different) legs on the left and two on the right:
\begin{align}
    \ipic{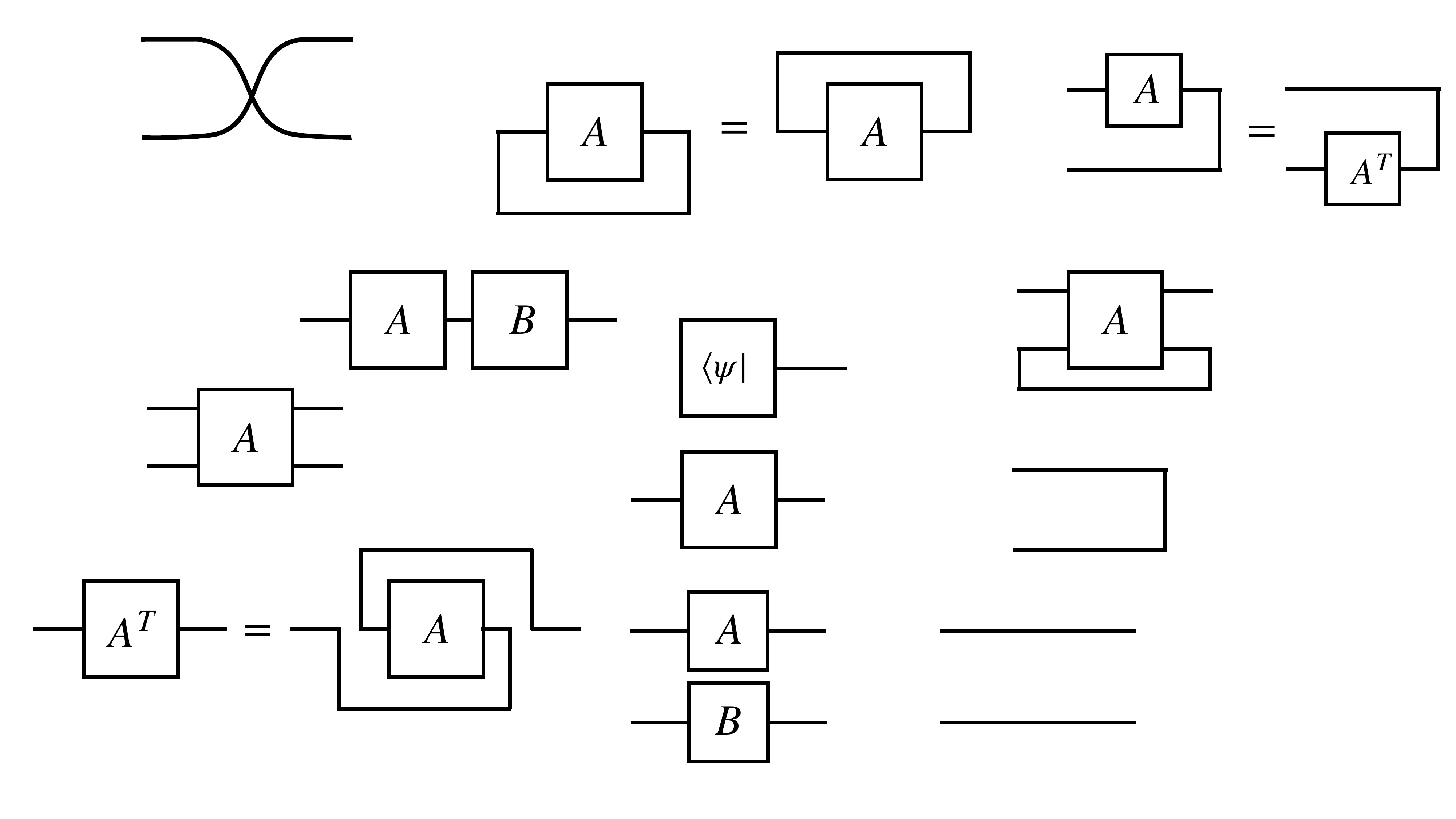}{0.3}.
\end{align}
The partial trace with respect to the second tensor space $\Tr_2\left(A\right)$ is denoted as:
\begin{align}
    \ipic{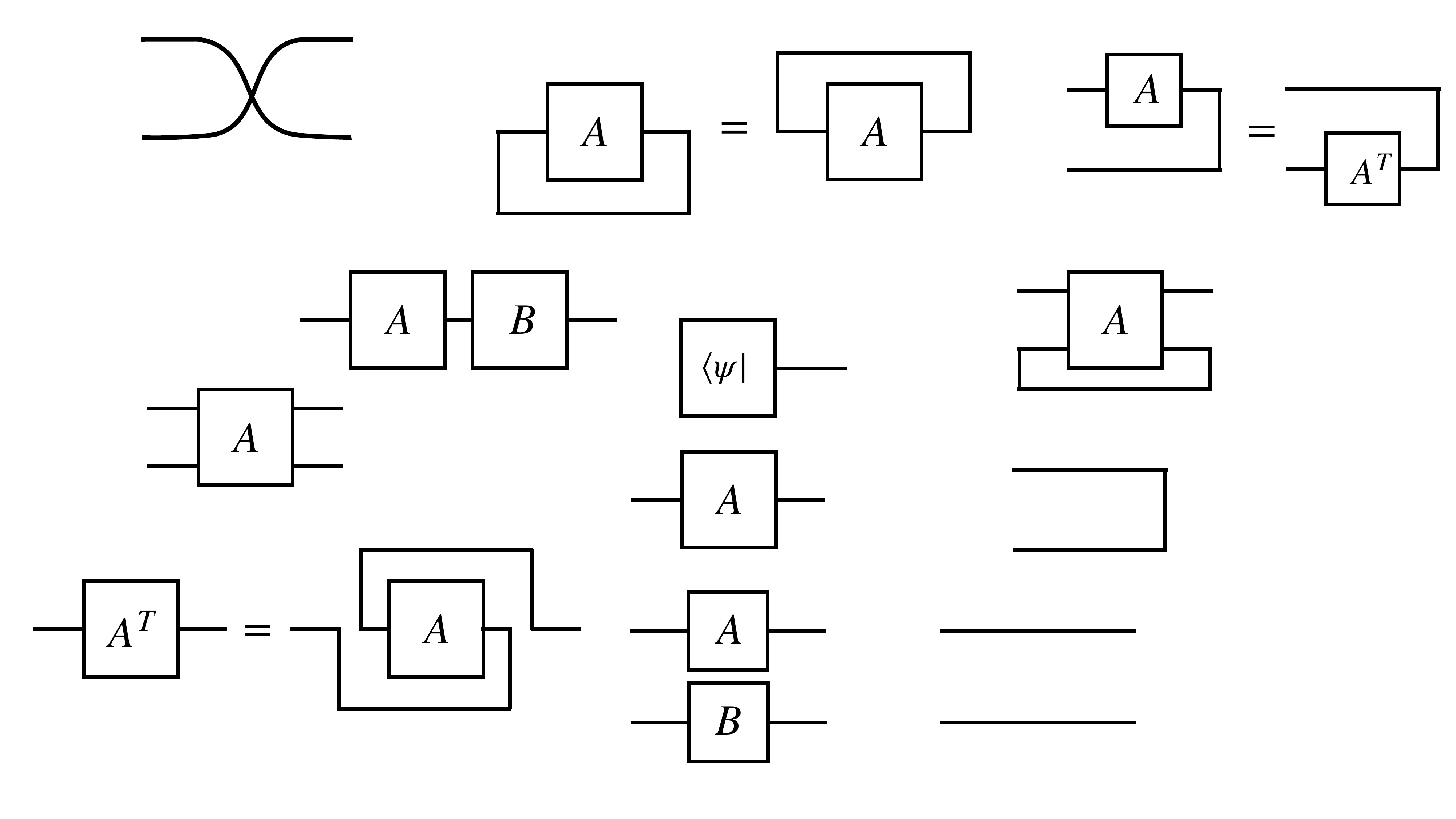}{0.28},
\end{align}
and similarly for the partial trace with respect to the first tensor space.

Particularly important is the non-normalized maximally entangled state $\ket{\Omega}=\kket{I}$, which is the vectorization of the identity and is denoted as:
\begin{align}
    \ipic{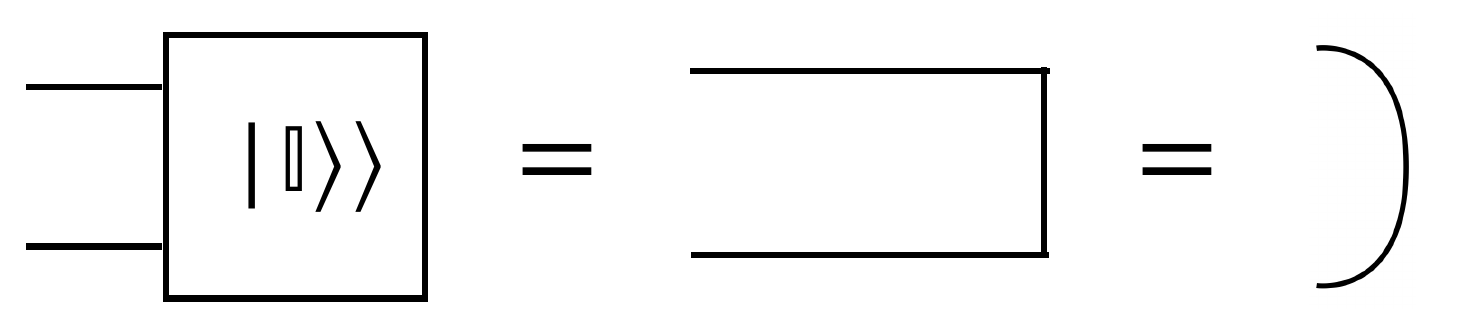}{0.35}.
\end{align}
Given an operator $A\in \MatC{d}$, its vectorization $\kket{A}=A\otimes I \ket{\Omega}$ is represented as:
\begin{align}
    \ipic{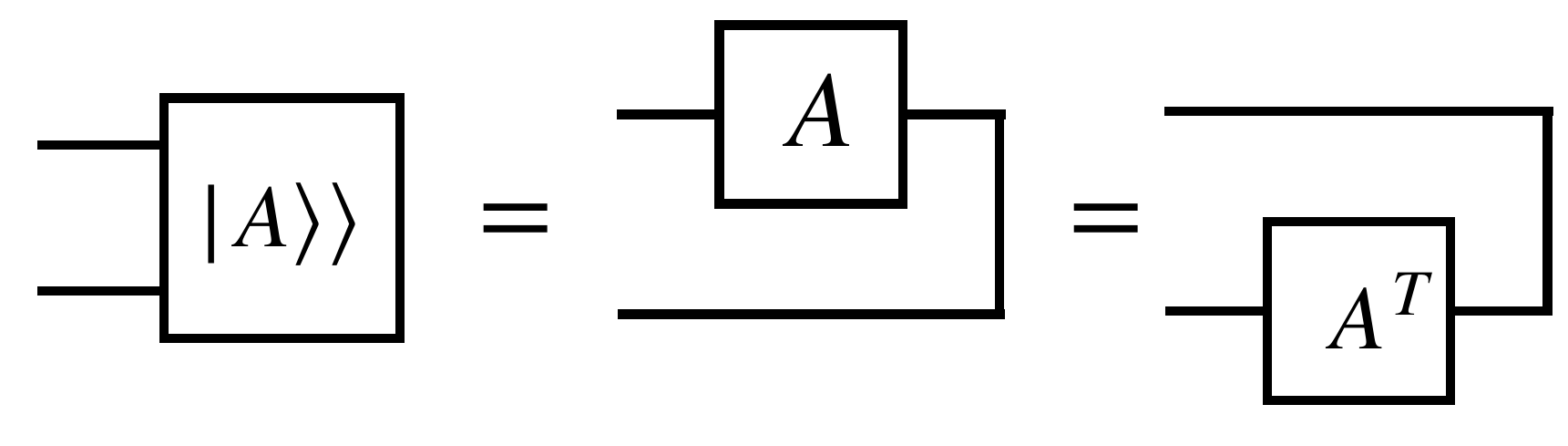}{0.3},
\end{align}
where the equality is due to the \emph{transpose-trick} $\kket{A}=A\otimes I \ket{\Omega}=I\otimes A^T \ket{\Omega}$. In Tensor Network notation, the diagram below makes it clear that $\bbrakket{A}{B}=\Tr\!\left(A^\dagger B\right)$:
\begin{align}
    \ipic{images/innerprod.pdf}{0.38},
\end{align}
as well as the \emph{ABC-rule}~\eqref{eq:ABCtrick}:
\begin{align}
    \ipic{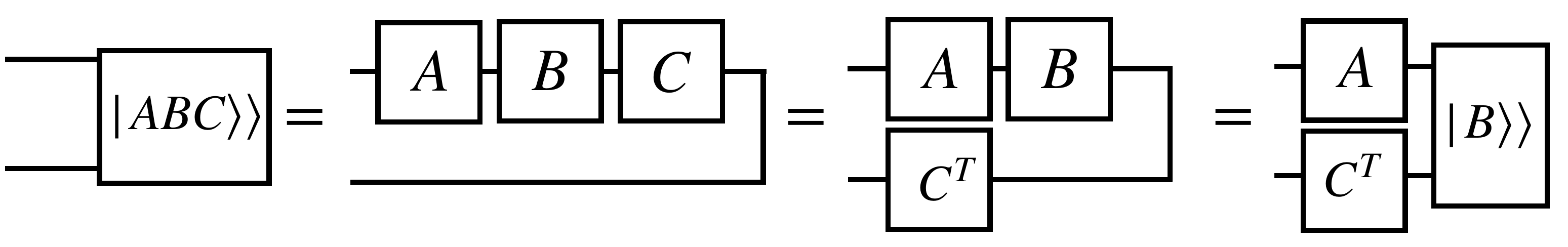}{0.26},
\end{align}
where $A,B,C\in \MatC{d}$.

In accordance with the Definition~\ref{def:permutation}, a permutation matrix $V_d(\pi)$, with $\pi\in S_k$, maps the tensor product state $\ket{\psi_1}\otimes \cdots \otimes \ket{\psi_k}$ to the tensor product state $\ket{\psi_{\pi^{-1}(1)}}\otimes \cdots \otimes \ket{\psi_{\pi^{-1}(k)}}$:
\begin{align}
    \ipic{images/GenPerm}{0.30},
\end{align}
and it can be represented by a diagram with lines connecting the tensor space of the $i$-th element on the right with the the $\pi(i)$-th element on the left for each $i\in [k]$. 
For example, the identity operator $\mathbb{I}$ and the Flip operator $\mathbb{F}$ are represented as follows: 
\begin{align}
    \ipic{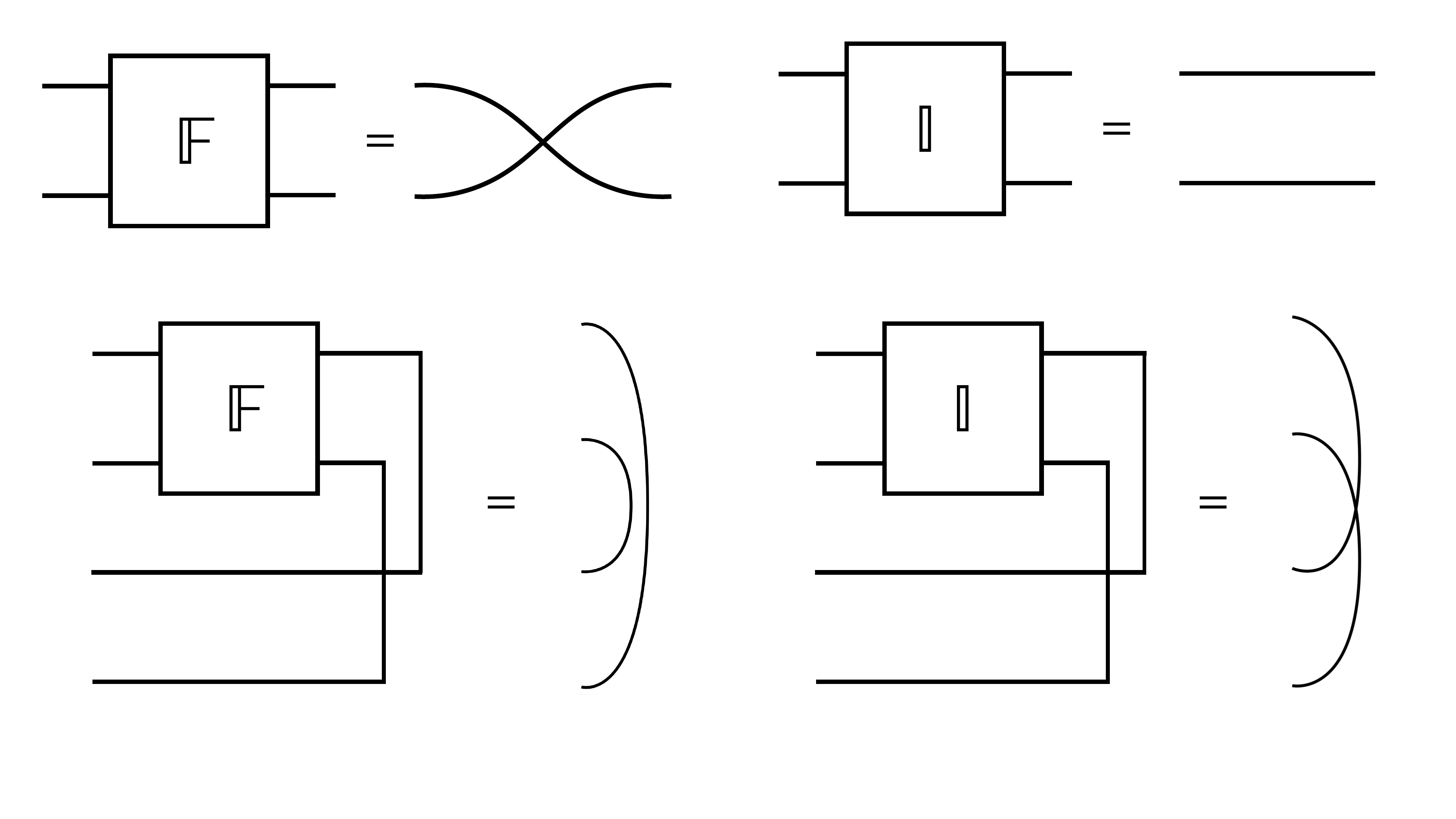}{0.20} \quad, \quad \ipic{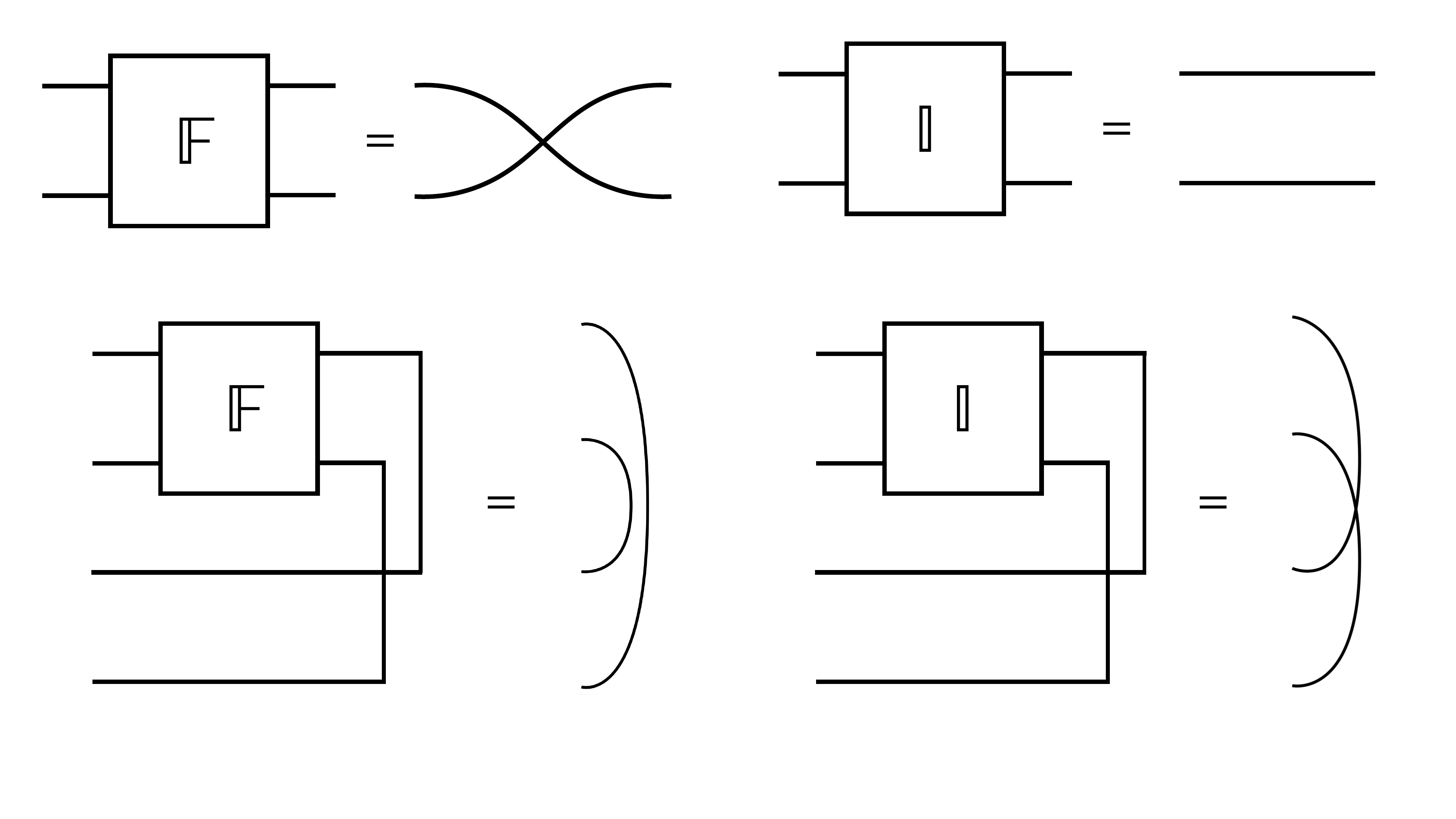}{0.20}.
\end{align}
The composition of permutations can be visualized by placing the corresponding diagrams next to each other.
When considering a matrix $A\in \MatC{d} \otimes \MatC{d^\prime}$, its vectorization is represented as:
\begin{align}
    \ipic{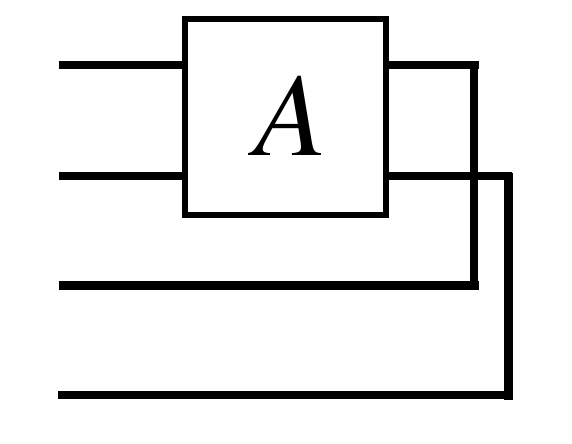}{0.35}.
\end{align}
For instance, the vectorization of the identity operator $\mathbb{I}$ and the Flip operator $\mathbb{F}$ are represented as:
\begin{align}
    \ipic{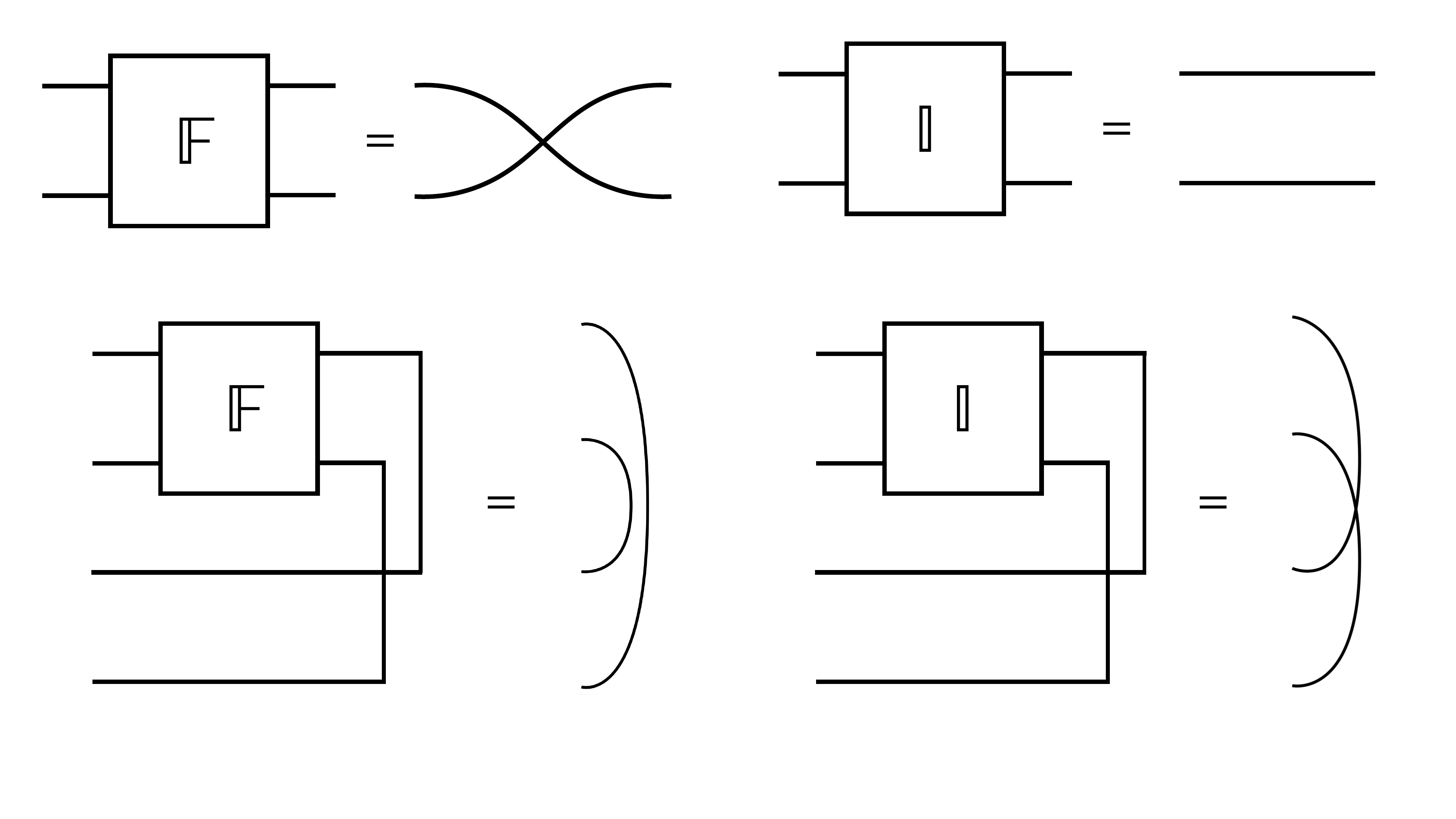}{0.15} \quad, \quad \ipic{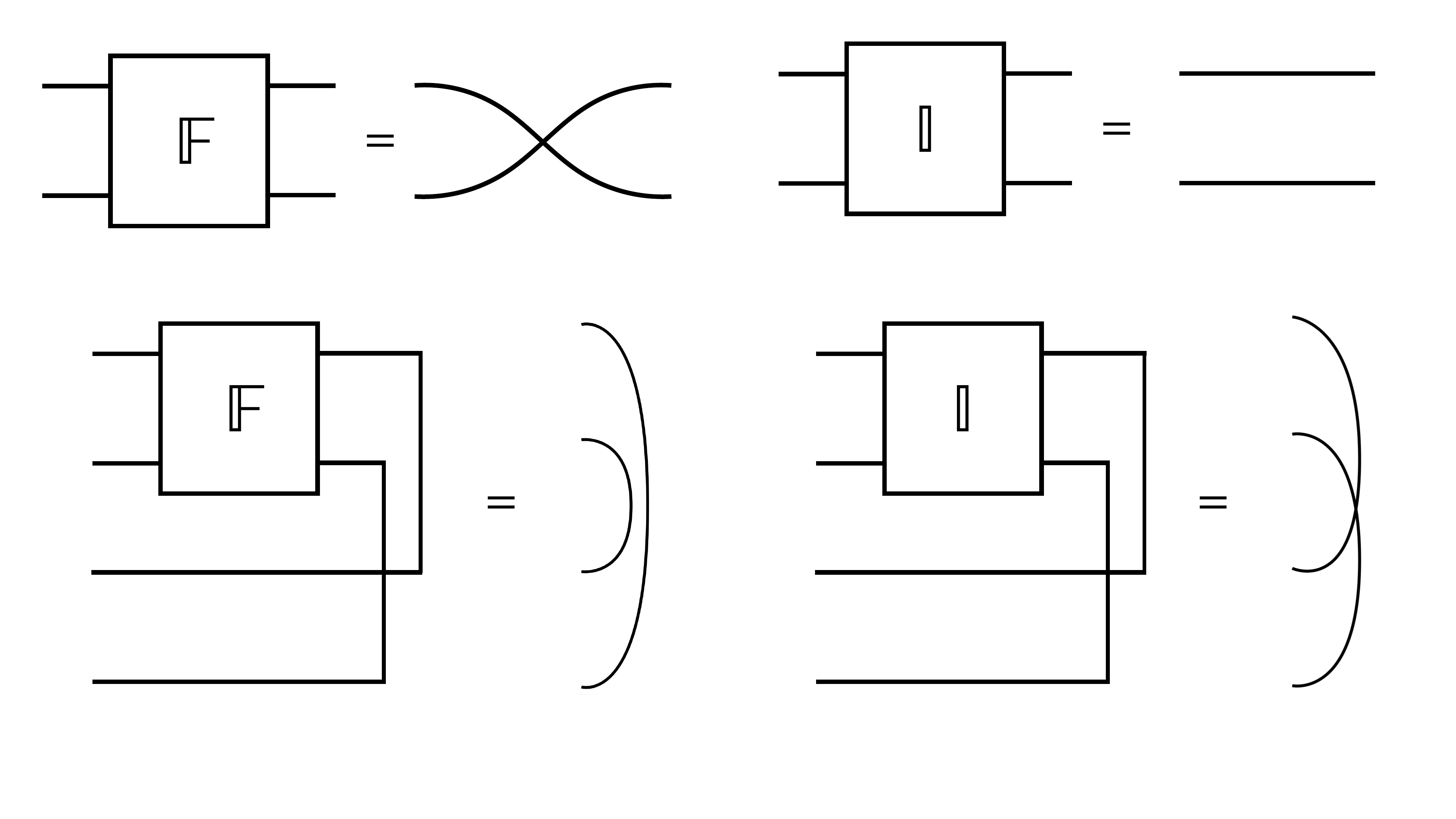}{0.15}.
\end{align}
Using this diagrammatic notation, we can express Eq.\eqref{eq:mom1} $\ExU \left[U \otimes U^{*}\right] = \frac{1}{d}\kket{I}\!\bbra{I}$ in Tensor Network notation as:
\begin{align}
\mathbb{E}_{U\sim\mu_H}\left[\ipic{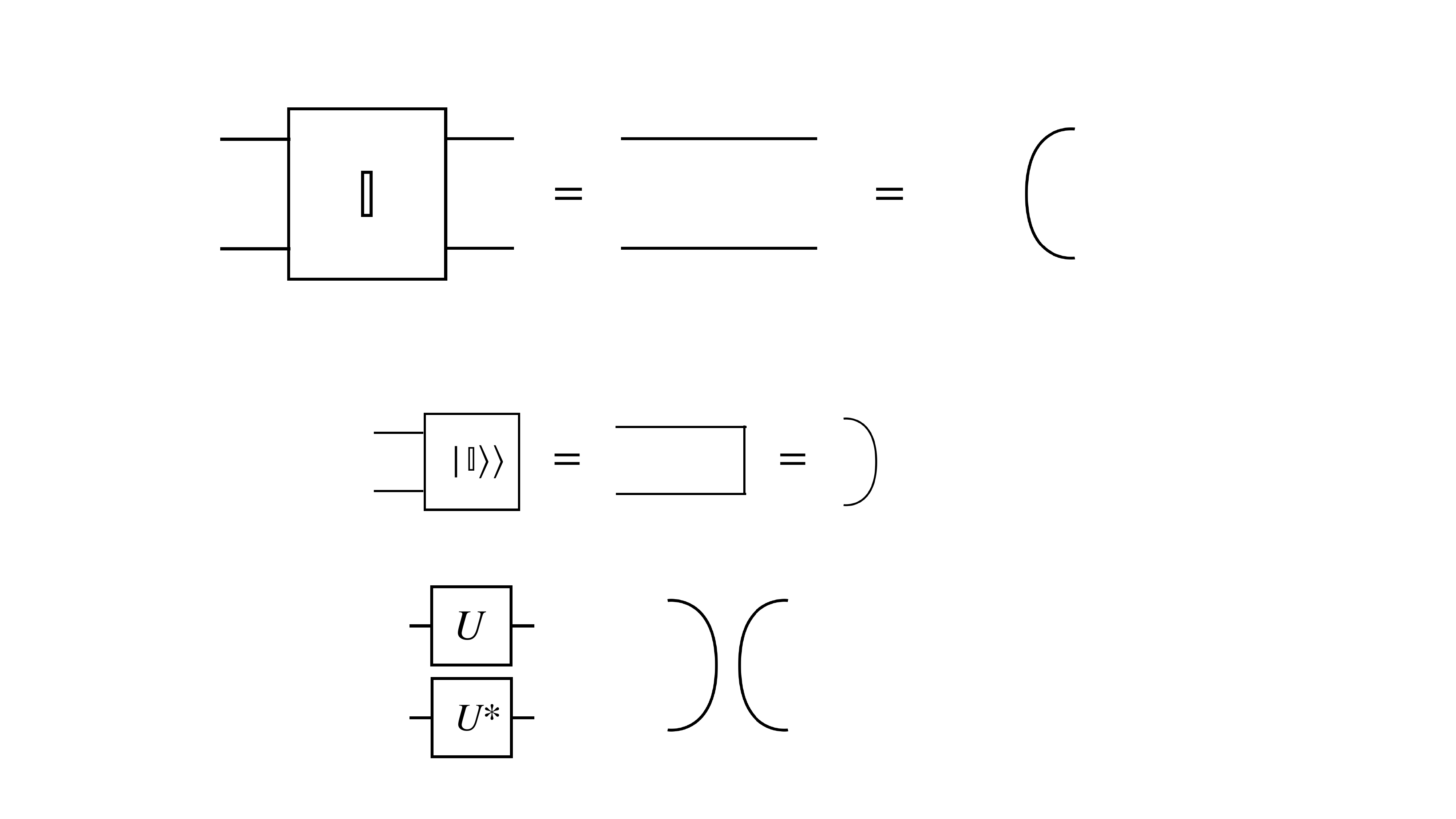}{0.26}\right]=\frac{1}{d}\ipic{images/ketbraI}{0.26}.
\end{align}
Similarly, Eq.\eqref{eq:mom2} $\ExU \left[U^{\otimes 2} \otimes U^{* \otimes 2}\right] = \frac{1}{d^2-1}\left[\kket{\mathbb{I}}\!\bbra{\mathbb{I}}-\frac{1}{d}\kket{\mathbb{I}}\!\bbra{\mathbb{F}}-\frac{1}{d}\kket{\mathbb{F}}\!\bbra{\mathbb{I}}+\kket{\mathbb{F}}\!\bbra{\mathbb{F}}\right]$ can be represented as:

\begin{align}
\mathbb{E}_{U\sim\mu_H}\left[\ipic{images/2mom}{0.26}\right]=\frac{1}{d^2-1}\left[\;\;\ipic{images/weingartenformula2}{1.6}-\frac{1}{d}\ipic{images/weingartenformula3}{1.6}-\frac{1}{d}\ipic{images/weingartenformula4}{1.6}+\ipic{images/weingartenformula5}{1.6}\;\;\right]\;.\label{eq:graphweingarten}
\end{align}
Using the diagrammatic notation, we can easily see that:
\begin{align}
    \ExU \left[U_{i_1,j_1}U^*_{i_2,j_2}\right]&=\frac{1}{d}\delta_{i_1,i_2}\delta_{j_1,j_2},
\end{align}
where we used $\ExU \left[U_{i_1,j_1}U^*_{i_2,j_2}\right]=\bra{i_1,i_2}\ExU \left[U \otimes U^{*}\right]\ket{j_1,j_2}$ and $\delta_{a,b}$ represents the Kronecker delta. Similarly:
\begin{align}
    \ExU \left[U_{i_1,j_1}U_{i_2,j_2}U^{*}_{i_3,j_3}U^{*}_{i_4,j_4}\right]&=\frac{1}{d^2-1}\left[\delta_{i_1,i_3}\delta_{i_2,i_4}\delta_{j_1,j_3}\delta_{j_2,j_4}-\frac{1}{d}\delta_{i_1,i_3}\delta_{i_2,i_4}\delta_{j_1,j_4}\delta_{j_2,j_3}\right]\\
    & +\frac{1}{d^2-1}\left[-\frac{1}{d}\delta_{i_1,i_4}\delta_{i_2,i_3}\delta_{j_1,j_3}\delta_{j_2,j_4}+\delta_{i_1,i_4}\delta_{i_2,i_3}\delta_{j_1,j_4}\delta_{j_2,j_3}\right],
\end{align}
where we used $\ExU \left[U_{i_1,j_1}U_{i_2,j_2}U^{*}_{i_3,j_3}U^{*}_{i_4,j_4}\right]=\bra{i_1,i_2,i_3,i_4}\ExU \left[U^{\otimes 2} \otimes U^{* \otimes 2}\right]\ket{j_1,j_2,j_3,j_4}$.
Another useful relation is the \emph{partial-swap-trick} $\Tr_2(A\otimes B\, \mathbb{F})= AB$, which can be visualized as:
\begin{align}
    \ipic{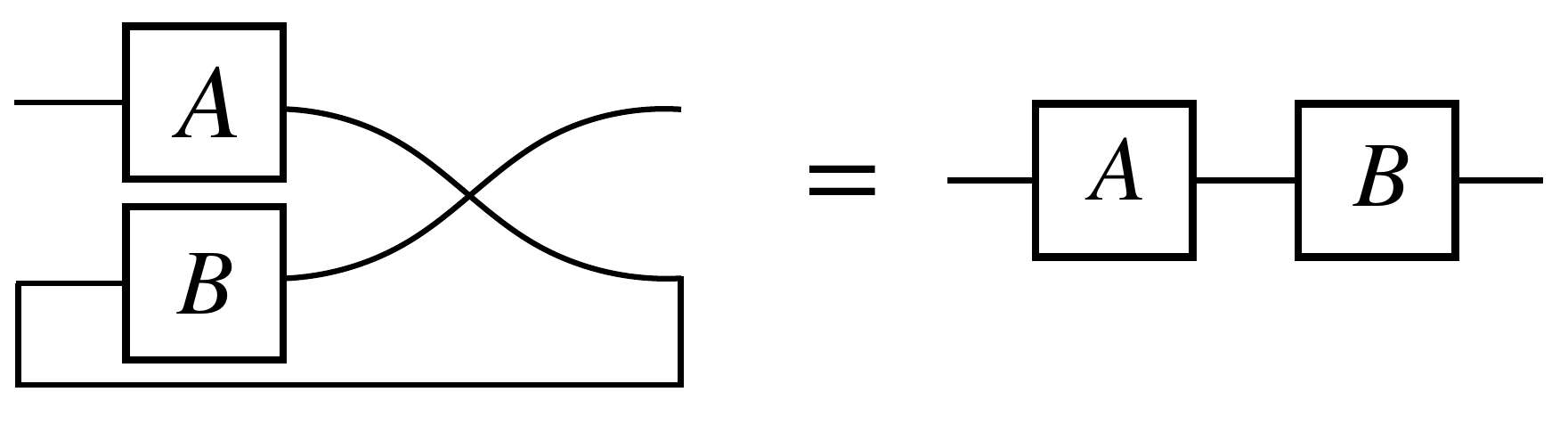}{0.3},
    \label{eq:partialswap}
\end{align}
and thus we have the \emph{swap-trick} formula $\Tr(A\otimes B\, \mathbb{F})= \Tr(AB)$.
Similarly we also have $\Tr_1(A\otimes B\, \mathbb{F})= BA$.

More generally, consider the cyclic permutation $\pi_{cyc} \in S_k$, which corresponds to the unitary operator: 
\begin{align}
    V_d(\pi_{cyc})=\sum_{i_1,\dots,i_k \in [d]} \ketbra{i_2,\dots,i_{k},i_{1}}{i_1,i_2,\dots,i_k}.
\end{align}
This unitary operator can be depicted diagrammatically as:
\begin{align}
     \ipic{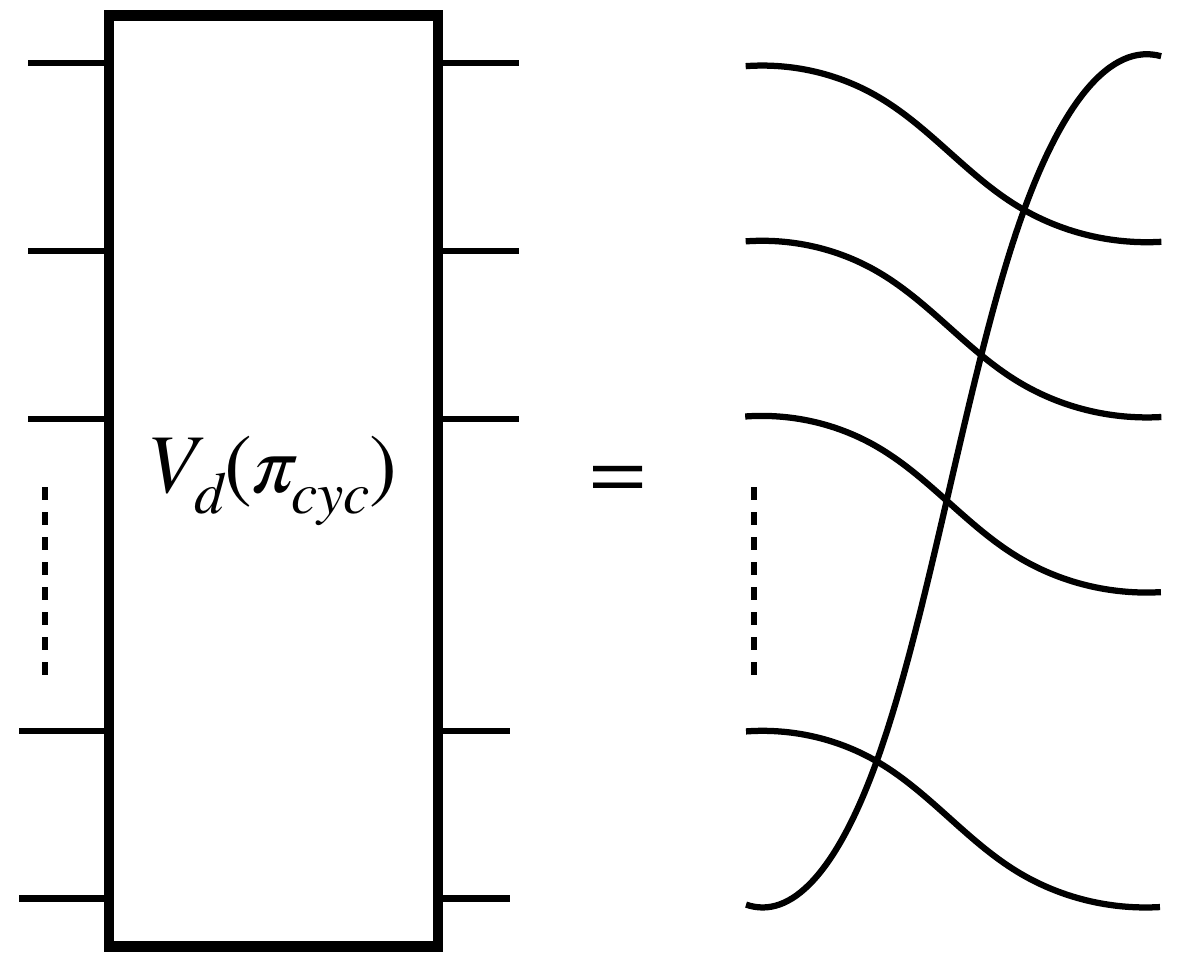}{0.24}. 
\end{align}

Now, we want to calculate $\Tr_{2,\ldots,n}\left(A_1\otimes \cdots \otimes A_n  V_d(\pi_{\text{cyc}})\right)$, and we will show that it simplifies to $A_1\cdots A_n$. We can verify this through direct calculation using resolutions of identities:
\begin{align}
    \Tr_{2,\ldots,n}\left(A_1\otimes \cdots \otimes A_k \, V_d(\pi_{\text{cyc}})\right)&=\sum_{i_1,\dots,i_k \in [d]} \Tr_{2,\cdots,k}\left(A_1\otimes \cdots \otimes A_k \,  \ketbra{i_2,\dots,i_{k},i_{1}}{i_1,i_2,\dots,i_k}\right) \nonumber \\
&=\sum_{i_1,\dots,i_k \in [d]} A_1 \ketbra{i_2}{i_1}\bra{i_2}A_2\ket{i_3}\bra{i_3}A_3\ket{i_4}\cdots \bra{i_k}A_k\ket{i_1} \nonumber \\
&=\sum_{i_1,i_2 \in [d]} A_1 \ketbra{i_2}{i_1}\bra{i_2}A_2 A_3\cdots A_k\ket{i_1}=A_1   \cdots A_k,
\end{align}
Alternatively, we can represent this calculation diagrammatically as:
 \begin{align}
    \ipic{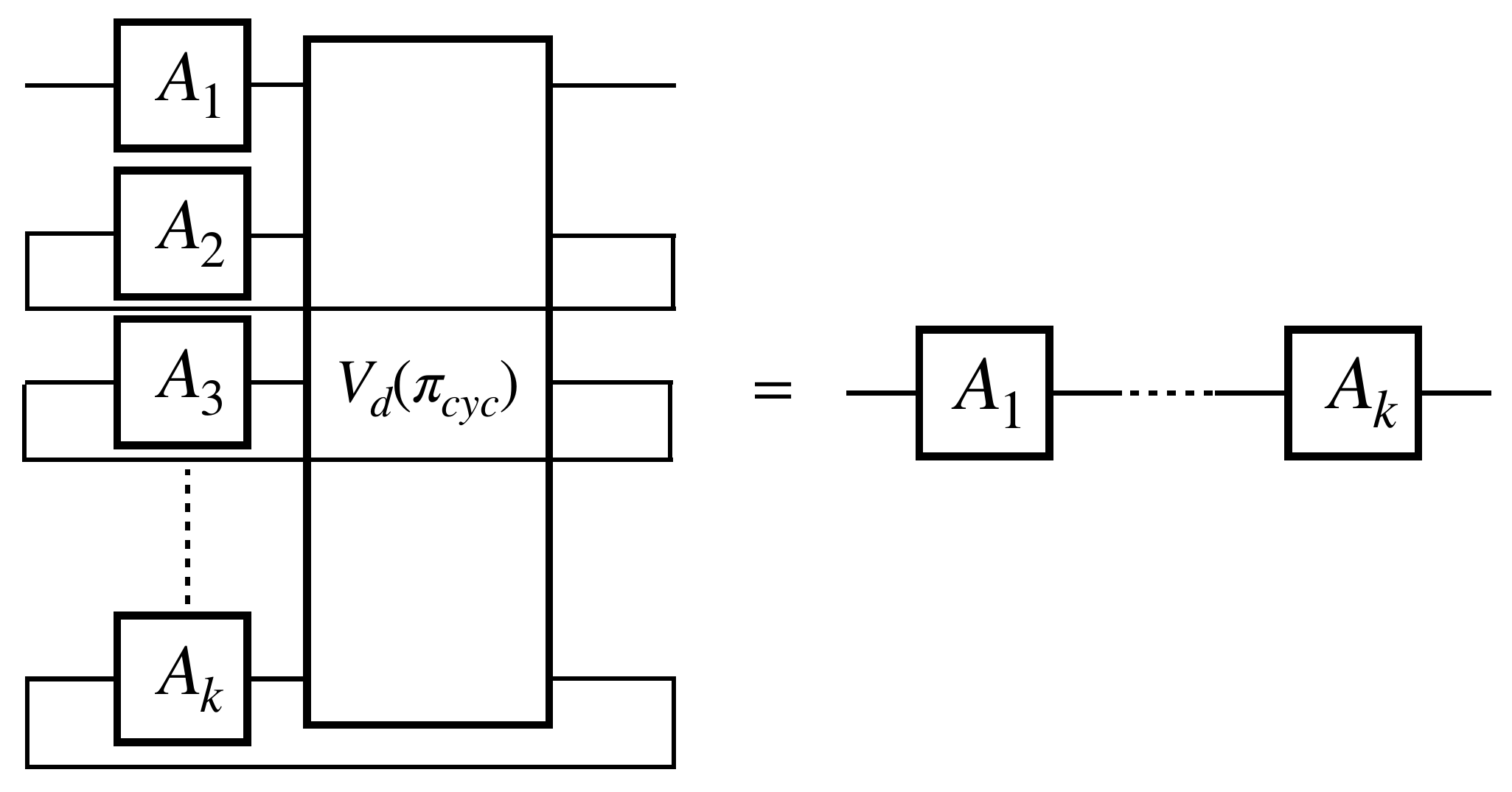}{0.24}.
\end{align}
Hence, we have the generalized \emph{swap-trick}, known as the \emph{cyclic-permutation-trick}:
\begin{align}
\Tr(A_1\otimes \cdots \otimes A_k \, V_d(\pi_{\text{cyc}}))= \Tr(A_1\cdots A_k).
\end{align}

\section{Unitary designs}
\label{sec:design}
Generating Haar random unitaries on a quantum computer can be a computationally expensive task, since most unitaries require an exponential number of elementary gates with respect to the number of qubits \cite{nielsen_chuang_2010,Brand_o_2021,Elben_2022} to be implemented. However, for many applications in quantum information, only low-order moments of the Haar measure are needed \cite{Huang_2020,DiVincenzo_2002,Bouland_2018,Roberts_2017,Hangleiter_2018,Bouland18,Pashayan_2020}. This motivates the definition of unitary $k$-designs \cite{Dankert_2009}, which are distributions of unitaries that match the moments of the Haar measure up to the $k$-th order, where $k\in \mathbb{N}$. As we will mention later, generating unitary $k$-designs can be done efficiently in the number of elementary gates. 
\begin{definition}[Unitary $k$-design]
\label{def:unitarydes}
   Let $\nu$ be a probability distribution defined over a set of unitaries $S\subseteq \Ug(d)$. 
The distribution $\nu$ is unitary $k$-design if and only if:
    \begin{align}
        \underset{V \sim \nu}{\mathbb{E}}\left[V^{\otimes k} O V^{\dagger \otimes k}\right]=\ExU \left[U^{\otimes k} O U^{\dagger \otimes k}\right],
        \label{eq:unitarydesign}
    \end{align}
    for all $O\in \mathcal{L}\left((\mathbb{C}^d)^{\otimes k}\right)$.
\end{definition}
For instance, consider a distribution $\nu$ where the set of unitaries $S$ is discrete and each unitary has an equal probability of being chosen. In this case, we have:
\begin{align}
\underset{V \sim \nu}{\mathbb{E}}\left[V^{\otimes k} O V^{\dagger \otimes k}\right]=\frac{1}{\left|S\right|}\sum_{V\in S}V^{\otimes k} O V^{\dagger \otimes k}.
\end{align}
An equivalent way to define a unitary $k$-design is to relate the vectorized moment operator of the distribution $\nu$ to that of the Haar measure $\mu_H$. Specifically, we have the following:
\begin{observation}
\label{obs:vecdesign}
A probability distribution $\nu$ is a unitary $k$-design if and only if:
    \begin{align}
    \underset{V \sim \nu}{\mathbb{E}}\left[V^{\otimes k}\otimes V^{*\otimes k}\right]=\ExU\left[U^{\otimes k}\otimes U^{*\otimes k}\right].
    \label{eq:vecdesign}
    \end{align}
\end{observation}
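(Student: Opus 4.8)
The plan is to reduce the operator identity in Definition~\ref{def:unitarydes} to the matrix identity~\eqref{eq:vecdesign} by applying the vectorization map to both sides and exploiting its bijectivity. The engine of the argument is the \emph{ABC-rule}~\eqref{eq:ABCtrick} developed in Section~\ref{sec:vecform}.

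First I would record the key vectorization identity: for any unitary $W$ and any $O\in\mathcal{L}((\mathbb{C}^d)^{\otimes k})$,
\begin{align}
\kket{W^{\otimes k} O W^{\dagger\otimes k}}=\left(W^{\otimes k}\otimes W^{*\otimes k}\right)\kket{O}.
\end{align}
This is the ABC-rule applied with $A=W^{\otimes k}$, $B=O$, and $C=W^{\dagger\otimes k}$, using that $C^T=\left((W^\dagger)^{\otimes k}\right)^T=\left((W^\dagger)^T\right)^{\otimes k}=(W^*)^{\otimes k}$, since $(W^\dagger)^T=W^*$. Because vectorization and expectation are both linear, averaging this identity over $V\sim\nu$ on one side and $U\sim\mu_H$ on the other gives
\begin{align}
\kket{\underset{V\sim\nu}{\mathbb{E}}\left[V^{\otimes k} O V^{\dagger\otimes k}\right]}&=\underset{V\sim\nu}{\mathbb{E}}\left[V^{\otimes k}\otimes V^{*\otimes k}\right]\kket{O},\\
\kket{\ExU\left[U^{\otimes k} O U^{\dagger\otimes k}\right]}&=\ExU\left[U^{\otimes k}\otimes U^{*\otimes k}\right]\kket{O}.
\end{align}

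For the forward direction I would assume $\nu$ is a unitary $k$-design, so the operators inside the two kets agree for every $O$; applying $\mathrm{vec}(\cdot)$ and the identity above shows the two matrices in~\eqref{eq:vecdesign} act identically on $\kket{O}$ for every $O$. Since $\mathrm{vec}(\cdot)$ is a bijection from $\mathcal{L}((\mathbb{C}^d)^{\otimes k})$ onto $(\mathbb{C}^d)^{\otimes 2k}$, the vectors $\kket{O}$ exhaust the whole space, so the two matrices coincide. The converse runs the same chain of equivalences backwards: matrix equality forces equality on every $\kket{O}$, and applying $\mathrm{vec}^{-1}$ recovers the defining relation~\eqref{eq:unitarydesign} for all $O$.

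There is no real obstacle here beyond bookkeeping; the only step meriting care is the transpose computation $(W^{\dagger\otimes k})^T=W^{*\otimes k}$, which is exactly what produces the complex-conjugated factors $V^{*\otimes k}$ and $U^{*\otimes k}$, together with the appeal to bijectivity of vectorization, which is what upgrades ``for all $O$'' to genuine matrix equality.
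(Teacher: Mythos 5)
Your proposal is correct and follows essentially the same route as the paper: vectorize both sides of the defining relation (the paper does this via the vectorized moment operator, which encodes exactly your ABC-rule computation $\kket{W^{\otimes k}OW^{\dagger\otimes k}}=(W^{\otimes k}\otimes W^{*\otimes k})\kket{O}$) and then use surjectivity of $\mathrm{vec}(\cdot)$ to upgrade agreement on all $\kket{O}$ to matrix equality. Your write-up simply makes explicit the transpose step $(W^{\dagger\otimes k})^{T}=W^{*\otimes k}$ that the paper leaves implicit.
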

    \begin{proof}
        By taking the vectorization to both sides of Eq.\eqref{eq:unitarydesign}, we have:
    \begin{align}
        \underset{V \sim \nu}{\mathbb{E}}\left[V^{\otimes k}\otimes V^{*\otimes k}\right]\kket{O}=\ExU\left[U^{\otimes k}\otimes U^{*\otimes k}\right]\kket{O}.
    \end{align}
    Now Eq.\eqref{eq:vecdesign} follows by observing that for all $\ket{\psi}\in (\mathbb{C}^{d})^{\otimes 2k}$ there exists $O\in \mathcal{L}\left((\mathbb{C}^{d})^{\otimes k}\right) $ such that $\kket{O}=\ket{\psi}$. Similarly, the converse holds.
    \end{proof}

\begin{observation}
    If $\nu$ is a $(k+1)$-design, then $\nu$ is also a $k$-design.
\end{observation}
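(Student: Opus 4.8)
The plan is to reduce the $k$-th moment condition to the $(k+1)$-th moment condition by embedding each $k$-body test operator into the larger space via tensoring with the identity on a fresh copy. Concretely, I would fix an arbitrary $O \in \mathcal{L}((\mathbb{C}^d)^{\otimes k})$ and consider the operator $O \otimes I \in \mathcal{L}((\mathbb{C}^d)^{\otimes (k+1)})$, where $I$ acts on the appended $(k+1)$-th tensor factor.

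The first key step is the algebraic identity that, for any unitary $V$, the conjugation factorizes across the extra copy: since $V I V^\dagger = I$, we have
\begin{align}
V^{\otimes (k+1)} \left(O \otimes I\right) V^{\dagger \otimes (k+1)} = \left(V^{\otimes k} O V^{\dagger \otimes k}\right) \otimes I.
\end{align}
Taking the expectation over $V \sim \nu$ and using linearity (so that the fixed factor $I$ pulls out of the average) gives
\begin{align}
\underset{V \sim \nu}{\mathbb{E}}\left[V^{\otimes (k+1)} \left(O \otimes I\right) V^{\dagger \otimes (k+1)}\right] = \left(\underset{V \sim \nu}{\mathbb{E}}\left[V^{\otimes k} O V^{\dagger \otimes k}\right]\right) \otimes I,
\end{align}
and the identical computation for the Haar measure yields the analogous expression with $\ExU$ in place of $\underset{V \sim \nu}{\mathbb{E}}$.

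The second step invokes the hypothesis: since $\nu$ is a $(k+1)$-design, applying Definition~\ref{def:unitarydes} with the test operator $O \otimes I$ shows that the two left-hand sides above coincide, hence
\begin{align}
\left(\underset{V \sim \nu}{\mathbb{E}}\left[V^{\otimes k} O V^{\dagger \otimes k}\right]\right) \otimes I = \left(\ExU\left[U^{\otimes k} O U^{\dagger \otimes k}\right]\right) \otimes I.
\end{align}
The final step is to strip off the appended identity. The map $A \mapsto A \otimes I$ on $\mathcal{L}((\mathbb{C}^d)^{\otimes k})$ is injective, with explicit left inverse $A = \tfrac{1}{d}\Tr_{k+1}\!\left(A \otimes I\right)$; applying $\tfrac{1}{d}\Tr_{k+1}$ to both sides cancels the extra factor and leaves
\begin{align}
\underset{V \sim \nu}{\mathbb{E}}\left[V^{\otimes k} O V^{\dagger \otimes k}\right] = \ExU\left[U^{\otimes k} O U^{\dagger \otimes k}\right].
\end{align}
Since $O$ was arbitrary, $\nu$ is a $k$-design. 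I do not expect a serious obstacle here; the only point requiring a moment's care is justifying the cancellation of the trailing identity, which is precisely the injectivity of tensoring with $I$ (equivalently, recovering $A$ from $A \otimes I$ by a normalized partial trace over the last factor).
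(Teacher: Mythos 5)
Your proof is correct and follows essentially the same route as the paper's: the paper tests the $(k+1)$-design condition on $O'\otimes \frac{I}{d}$ and takes the partial trace over the last factor, which is your argument with the $1/d$ normalization moved from the test operator to the final partial trace. Your added remarks on the factorization of the conjugation and the injectivity of $A\mapsto A\otimes I$ simply make explicit what the paper leaves implicit.
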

\begin{proof}
     Let $O=O^\prime\otimes \frac{I}{d} $ where $O^\prime \in \mathcal{L}\left((\mathbb{C}^d)^{\otimes k}\right)$. The claim follows by applying the Definition~\ref{def:unitarydes} with such defined $O$ and taking the partial trace with respect to the $(k+1)$-th tensor product space.
\end{proof}
It is also useful to define the following quantity, the frame potential \cite{Gross_2007}, which provides another way to verify if a probability distribution is a $k$-design.
\begin{definition}[Frame potential] Let $\nu$ be a probability distribution defined over the set of unitaries $S\subseteq \Ug(d)$. 
For a given $k\in \mathbb{N}$, we define the $k$-frame potential, denoted as $\mathcal{F}^{(k)}_{\nu}$, as follows:
\begin{align}
    \mathcal{F}^{(k)}_{\nu}\coloneqq \underset{U,V \sim \nu}{\mathbb{E}}\left[ \left|\Tr\!\left(UV^\dagger\right)\right|^{2k}\right].
\end{align}
\end{definition}
As we will show in Proposition~\ref{prop:framedesign}, a probability distribution $\nu$ is a $k$-design if and only if its $k$-frame potential coincides with that of the Haar measure $\mu_H$. Thus, given a probability distribution $\nu$ and a set of unitaries $S$, we can compute the frame potential $\mathcal{F}^{(k)}_{\nu}$ and compare it with that of the Haar measure to determine whether $\nu$ is a $k$-design or not.
It is also useful to define the notion of $k$-invariant distribution.
\begin{definition}[$k$-invariant measure]
Let $\nu$ be a probability distribution defined over a set of unitaries $S\subseteq \Ug(d)$. $\nu$ is $k$-invariant if and only if, for any polynomial $p(U)$ of degree $\le k$ in the matrix elements of $U$ and $U^*$, it holds
\begin{align}
\underset{U \sim \nu}{\mathbb{E}} \left[\, p\left(U\right) \right]=\underset{U \sim \nu}{\mathbb{E}}\left[\, p\left(UV\right) \right]=\underset{U \sim \nu}{\mathbb{E}}\left[\, p\left(VU\right)\right],
\end{align}
for all $V \in S$.
\end{definition}
We begin by making the following observation, which will be useful for computing the frame potential of the Haar measure $\mu_H$.
\begin{lemma}
\label{le:framComm} 
Let $\nu$ be a probability distribution defined over a set of unitaries $S\subseteq \Ug(d)$. If $\nu$ is $k$-invariant, then we have:
\begin{align}
\mathcal{F}^{(k)}_{\nu}=\dim\left(\mathrm{Comm}(S,k)\right),
\end{align}
where $\mathrm{Comm}(S,k)$ is defined in Definition~\ref{def:comm}.
\end{lemma}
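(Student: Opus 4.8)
The plan is to reduce the frame potential of a $k$-invariant $\nu$ to the trace of its moment operator, and then to identify that trace with $\dim(\mathrm{Comm}(S,k))$, reusing essentially the same chain of arguments already established for the Haar measure.

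First I would exploit right-invariance to collapse the double average defining $\mathcal{F}^{(k)}_\nu=\mathbb{E}_{U,V\sim\nu}[|\Tr(UV^\dagger)|^{2k}]$. For a fixed $V$ in the support $S$, the map $U\mapsto |\Tr(UV^\dagger)|^{2k}$ is a polynomial of degree $k$ in the entries of $U$ and degree $k$ in those of $U^*$, so $k$-invariance applies; choosing the multiplier to be $V\in S$, the substitution $U\mapsto UV$ gives $\mathbb{E}_{U\sim\nu}[|\Tr(UV^\dagger)|^{2k}]=\mathbb{E}_{U\sim\nu}[|\Tr(UVV^\dagger)|^{2k}]=\mathbb{E}_{U\sim\nu}[|\Tr(U)|^{2k}]$, independently of $V$. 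Averaging over $V$ then yields $\mathcal{F}^{(k)}_\nu=\mathbb{E}_{U\sim\nu}[|\Tr(U)|^{2k}]$. This is the key simplification, and it is what lets us bypass any need to analyze $\mathcal{F}^{(k)}_\nu$ as a Hilbert–Schmidt norm.

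Next I would rewrite this scalar as a trace. Using $|\Tr(U)|^{2k}=\Tr(U^{\otimes k})\,\Tr(U^{*\otimes k})=\Tr(U^{\otimes k}\otimes U^{*\otimes k})$ together with linearity of the trace, I get $\mathcal{F}^{(k)}_\nu=\Tr\big(\mathbb{E}_{U\sim\nu}[U^{\otimes k}\otimes U^{*\otimes k}]\big)$, i.e.\ the trace of the (vectorized) moment operator of $\nu$, which equals the trace of the superoperator $\mathcal{M}^{(k)}_\nu(\cdot)\coloneqq\mathbb{E}_{U\sim\nu}[U^{\otimes k}(\cdot)U^{\dagger\otimes k}]$.

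It then remains to show $\Tr(\mathcal{M}^{(k)}_\nu)=\dim(\mathrm{Comm}(S,k))$, for which I would re-run the proof of Lemma~\ref{le:PropMom}, tracking where full Haar invariance must be replaced by $k$-invariance plus the support condition. The analogue of property~\ref{propr:CoMom}, namely $\mathcal{M}^{(k)}_\nu(A)\in\mathrm{Comm}(S,k)$, goes through by the same algebraic manipulation, except that the left-multiplication now has to use a multiplier $W\in S$ (invoking left $k$-invariance); this is precisely why the commutant is taken over $S$ rather than over $\Ug(d)$. The analogue of property~\ref{propr:EigMom}, that $\mathcal{M}^{(k)}_\nu$ fixes $\mathrm{Comm}(S,k)$ pointwise, follows because every $U$ in the support lies in $S$ and hence satisfies $U^{\otimes k}A=AU^{\otimes k}$ for $A\in\mathrm{Comm}(S,k)$. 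Together these force $\mathcal{M}^{(k)}_\nu$ to be idempotent with image exactly $\mathrm{Comm}(S,k)$, and since any idempotent operator has trace equal to its rank, $\Tr(\mathcal{M}^{(k)}_\nu)=\dim(\mathrm{Comm}(S,k))$, which closes the argument.

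The main obstacle is the last step: one must check that the projector-onto-commutant structure survives when only $k$-invariance and the support property are available, rather than the stronger left/right Haar invariance used earlier. I expect the delicate point to be keeping all multipliers inside $S$ when adapting the commutation computation, and recognizing that self-adjointness (orthogonality of the projector) is \emph{not} needed here, since the identity $\Tr=\mathrm{rank}$ holds for any idempotent, oblique or not.
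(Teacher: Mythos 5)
Your proposal is correct, and its first two steps coincide with the paper's: right $k$-invariance collapses the double average to $\mathbb{E}_{U\sim\nu}[|\Tr(U)|^{2k}]$, which is then rewritten as $\Tr\bigl(\mathbb{E}_{U\sim\nu}[U^{\otimes k}\otimes U^{*\otimes k}]\bigr)$. Where you genuinely diverge is the final identification of this trace with $\dim(\mathrm{Comm}(S,k))$. The paper simply invokes Proposition~\ref{prop:MomVec} with the remark that it ``applies equally to any $k$-invariant measure''; but that proposition rests on Theorem~\ref{prop:projComm}, whose proof uses self-adjointness of the moment operator, and self-adjointness is obtained for the Haar measure from the substitution $U\mapsto U^\dagger$ (Proposition~\ref{prop:Haar}), which is not guaranteed by $k$-invariance alone. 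Your route avoids this entirely: you establish only the analogues of Lemma~\ref{le:PropMom}.\ref{propr:CoMom} and \ref{le:PropMom}.\ref{propr:EigMom} (image contained in $\mathrm{Comm}(S,k)$ via left $k$-invariance with multipliers in $S$, and pointwise fixing of $\mathrm{Comm}(S,k)$ via the support condition), conclude that $\mathcal{M}^{(k)}_\nu$ is an idempotent with image exactly $\mathrm{Comm}(S,k)$, and use that any idempotent --- orthogonal or oblique --- has trace equal to its rank. This is a more robust justification of the step the paper treats as a remark, at the cost of redoing a small amount of the earlier algebra; your explicit observation that self-adjointness is not needed is exactly the right point to flag.
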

 
\begin{proof}
By (right) $k$-invariance we have:
\begin{align}
    \mathcal{F}^{(k)}_{\nu}=\underset{U,V \sim \nu}{\mathbb{E}}\left[ \left|\Tr\!\left(UV^\dagger\right)\right|^{2k}\right]=\underset{U \sim \nu}{\mathbb{E}}\left[ \left|\Tr\!\left(U\right)\right|^{2k}\right].
\end{align}
Therefore:
    \begin{align}
    \mathcal{F}^{(k)}_{\nu}=\underset{U \sim \nu}{\mathbb{E}}\left[ \left|\Tr\!\left(U\right)\right|^{2k}\right]=\underset{U \sim \nu}{\mathbb{E}}\left[ \Tr\!\left(U\right)^{k} \Tr\!\left(U^*\right)^{k}\right] =\Tr\!\left( \underset{U \sim \nu}{\mathbb{E}} \left[U^{\otimes k} \otimes U^{*\otimes k}\right]\right).
\end{align}
We can conclude the proof by observing that $\Tr\!\left( \underset{U \sim \nu}{\mathbb{E}} \left[U^{\otimes k} \otimes U^{*\otimes k}\right]\right)=\dim\left(\mathrm{Comm}(S,k)\right)$, according to Proposition~\ref{prop:MomVec}. Note that although the proof of Proposition~\ref{prop:MomVec} was given for the Haar measure on the unitary group, it applies equally to any $k$-invariant measure.

\end{proof}
It is worth noting that the left and right invariance properties, and therefore Lemma~\ref{le:framComm}, hold for any uniform probability distribution $\nu$ defined over a subgroup of unitaries $S$.

In the following lemma, we show that the difference between the frame potential of a probability distribution $\nu$ (not necessarily $k$-invariant) and that of the Haar measure $\mu_H$ can be represented by the squared $2$-norm of the difference between their vectorized moment operator.
\begin{lemma}[Frame potential difference]
Let $\mathcal{F}^{(k)}_{\nu}$ and $\mathcal{F}^{(k)}_{\mu_H}$ be the frame potentials of the probability distribution $\nu$ and the Haar measure $\mu_H$, respectively. Then, we have:
\label{le:frameNorm}
\begin{align}
    \mathcal{F}^{(k)}_{\nu}-\mathcal{F}^{(k)}_{\mu_H}=\norm{\underset{V \sim \nu}{\mathbb{E}}\left[V^{\otimes k}\otimes V^{*\otimes k}\right]-\ExU\left[U^{\otimes k}\otimes U^{*\otimes k}\right]}^2_2.
\end{align}
\end{lemma}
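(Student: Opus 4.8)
The plan is to express both frame potentials as squared Hilbert–Schmidt norms of the associated vectorized moment operators and then expand the right-hand side. Throughout, write $M_\nu\coloneqq \underset{V\sim\nu}{\mathbb{E}}\big[V^{\otimes k}\otimes V^{*\otimes k}\big]$ and $M_{\mu_H}\coloneqq \ExU\big[U^{\otimes k}\otimes U^{*\otimes k}\big]$ for the two vectorized moment operators. The first step is to recast the integrand of the frame potential. Using $\Tr(UV^\dagger)^{k}=\Tr\big((UV^\dagger)^{\otimes k}\big)$ and $\overline{\Tr(UV^\dagger)}^{\,k}=\Tr\big((U^*V^{T})^{\otimes k}\big)$, together with the identity $(U^{\otimes k}\otimes U^{*\otimes k})(V^{\otimes k}\otimes V^{*\otimes k})^\dagger=(UV^\dagger)^{\otimes k}\otimes(U^*V^{T})^{\otimes k}$, one obtains
\[
\big|\Tr(UV^\dagger)\big|^{2k}=\Tr\!\big[(U^{\otimes k}\otimes U^{*\otimes k})(V^{\otimes k}\otimes V^{*\otimes k})^\dagger\big].
\]
Taking the expectation over independent $U,V\sim\nu$ and using linearity of the trace, the average factorizes, giving $\mathcal{F}^{(k)}_\nu=\Tr(M_\nu^\dagger M_\nu)=\norm{M_\nu}_2^2$, and in exactly the same way $\mathcal{F}^{(k)}_{\mu_H}=\norm{M_{\mu_H}}_2^2$.

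Next I would expand the right-hand side of the claim,
\[
\norm{M_\nu-M_{\mu_H}}_2^2=\norm{M_\nu}_2^2-\hs{M_\nu}{M_{\mu_H}}-\hs{M_{\mu_H}}{M_\nu}+\norm{M_{\mu_H}}_2^2,
\]
so that everything reduces to evaluating the two cross terms. The key inputs are that $M_{\mu_H}$ is an orthogonal projector (established earlier in this section), hence Hermitian and idempotent, so that $\norm{M_{\mu_H}}_2^2=\Tr(M_{\mu_H})$, and the absorption identities $M_{\mu_H}M_\nu=M_\nu^\dagger M_{\mu_H}=M_{\mu_H}$. These follow from the left and right invariance of the Haar measure: since $(U^{\otimes k}\otimes U^{*\otimes k})(V^{\otimes k}\otimes V^{*\otimes k})=(UV)^{\otimes k}\otimes(UV)^{*\otimes k}$, averaging $U$ over $\mu_H$ for each fixed $V$ returns $M_{\mu_H}$ (and similarly for $M_\nu^\dagger M_{\mu_H}$, using $M_\nu^\dagger=\underset{V\sim\nu}{\mathbb{E}}[V^{\dagger\otimes k}\otimes V^{T\otimes k}]$). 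Consequently $\hs{M_\nu}{M_{\mu_H}}=\Tr(M_\nu^\dagger M_{\mu_H})=\Tr(M_{\mu_H})$ and $\hs{M_{\mu_H}}{M_\nu}=\Tr(M_{\mu_H}M_\nu)=\Tr(M_{\mu_H})$.

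Finally, substituting these evaluations into the expansion, the right-hand side collapses to $\norm{M_\nu}_2^2-2\Tr(M_{\mu_H})+\Tr(M_{\mu_H})=\norm{M_\nu}_2^2-\norm{M_{\mu_H}}_2^2=\mathcal{F}^{(k)}_\nu-\mathcal{F}^{(k)}_{\mu_H}$, which is the claim. I expect the main obstacle to be the first step — carefully tracking the complex conjugates and transposes needed to rewrite $|\Tr(UV^\dagger)|^{2k}$ as a Hilbert–Schmidt inner product of the vectorized operators — together with the observation that the cross terms simplify through the Haar-invariance absorption identity $M_{\mu_H}M_\nu=M_{\mu_H}$, rather than through any property of $\nu$; note in particular that the argument never assumes $\nu$ is $k$-invariant.
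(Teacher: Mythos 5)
Your proof is correct and follows essentially the same route as the paper's: expand $\norm{M_\nu-M_{\mu_H}}_2^2$, use that $M_{\mu_H}$ is a Hermitian idempotent together with the Haar-invariance absorption identities to collapse the cross terms, and identify $\mathcal{F}^{(k)}_\mu=\Tr(M_\mu M_\mu^\dagger)$. The only difference is cosmetic — you establish the frame-potential--norm identity at the outset whereas the paper records it at the end — and your remark that no $k$-invariance of $\nu$ is needed is consistent with the paper's argument.
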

\begin{proof}
Let $\mathrm{M}^{(k)}_{\mu_H}\coloneqq \ExU\left[U^{\otimes k}\otimes U^{*\otimes k}\right]$ and $\mathrm{M}^{(k)}_\nu\coloneqq \underset{V \sim \nu}{\mathbb{E}}\left[U^{\otimes k}\otimes U^{*\otimes k}\right]$.

First, we note that $\mathrm{M}^{(k)\dagger}_{\mu_H}=\mathrm{M}^{(k)}_{\mu_H}$  due to Proposition~\ref{prop:Haar}. Furthermore, because of the invariance of the Haar measure, we have $\mathrm{M}^{(k)2}_{\mu_H}=\mathrm{M}^{(k)}_{\mu_H}$ and $\mathrm{M}^{(k)}_\nu\mathrm{M}^{(k)}_{\mu_H}=\mathrm{M}^{(k)}_\nu\mathrm{M}^{(k)\dagger}_{\mu_H}=\mathrm{M}^{(k)}_{\mu_H}$. Thus, by using these properties, we obtain:
\begin{align}
    \norm{\mathrm{M}^{(k)}_\nu-\mathrm{M}^{(k)}_{\mu_H}}^2_2&=\Tr\left[\left(\mathrm{M}^{(k)}_\nu-\mathrm{M}^{(k)}_{\mu_H}\right)^\dagger\left(\mathrm{M}^{(k)}_\nu-\mathrm{M}^{(k)}_{\mu_H}\right)\right]\\
    &=\Tr\left[\mathrm{M}^{(k)\dagger}_\nu \mathrm{M}^{(k)}_\nu\right] + \Tr\left[\mathrm{M}^{(k)\dagger}_{\mu_H} \mathrm{M}^{(k)}_{\mu_H}\right] - \Tr\left[\mathrm{M}^{(k)\dagger}_\nu \mathrm{M}^{(k)}_{\mu_H}\right] - \Tr\left[\mathrm{M}^{(k)\dagger}_{\mu_H} \mathrm{M}^{(k)}_\nu \right]\\
    &=\Tr\left[\mathrm{M}^{(k)\dagger}_\nu \mathrm{M}^{(k)}_\nu\right] - \Tr\left[\mathrm{M}^{(k)}_{\mu_H}\right]\\
    &=\Tr\left[\mathrm{M}^{(k)}_\nu \mathrm{M}^{(k)\dagger}_\nu\right] - \Tr\left[\mathrm{M}^{(k)}_{\mu_H}\mathrm{M}^{(k)\dagger}_{\mu_H}\right].
\end{align}
We can conclude by observing that for $\mu=\mu_H$ and $\mu=\nu$ we have:
\begin{align}
\Tr\left[\mathrm{M}^{(k)}_\mu \mathrm{M}^{(k)\dagger}_\mu \right]=\underset{V,U \sim \mu}{\mathbb{E}}\Tr\left[(UV^\dagger)^{\otimes k}\otimes(UV^\dagger)^{*\otimes k}\right]=\underset{V,U \sim \mu}{\mathbb{E}}\left[\left|\Tr\!\left(UV^\dagger\right)\right|^{2k}\right]=\mathcal{F}^{(k)}_{\mu}.
\end{align}
\end{proof}
It follows from the previous Lemma that showing that a distribution $\nu$ is a $k$-design can be achieved by computing its frame potential and comparing it with that of the Haar measure $\mu_H$. This is due to the fact that if the two frame potentials coincide, the difference between their vectorized moment operators in $2$-norm must be zero, hence they must be equal (which is an equivalent unitary design definition because of Observation~\ref{obs:vecdesign}).
We can restate the result explicitly as follows:
\begin{proposition}[Frame potential $k$-design condition]
\label{prop:framedesign}
    We have:
    \begin{align}
        \mathcal{F}^{(k)}_{\nu}\ge\mathcal{F}^{(k)}_{\mu_H}=\dim(\operatorname{span} (V_d(\pi):\, \pi \in S_k)).
    \end{align}
Moreover, the equality holds if and only if $\nu$ is a $k$-unitary design. 

In particular, if $k\le d$, then $\dim(\operatorname{span} (V_d(\pi):\, \pi \in S_k))=k!$.
\end{proposition}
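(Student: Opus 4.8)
The plan is to assemble the statement from three ingredients that are already established: the frame-potential difference formula (Lemma~\ref{le:frameNorm}), the evaluation of the Haar frame potential through the commutant (Lemma~\ref{le:framComm} combined with Schur-Weyl duality, Theorem~\ref{th:SchurW}), and the vectorized characterization of $k$-designs (Observation~\ref{obs:vecdesign}). No genuinely new computation is required; the work lies in stitching these results together in the correct order.

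First I would establish both the inequality and the closed form of $\mathcal{F}^{(k)}_{\mu_H}$. By Lemma~\ref{le:frameNorm}, $\mathcal{F}^{(k)}_{\nu}-\mathcal{F}^{(k)}_{\mu_H}=\norm{\mathrm{M}^{(k)}_\nu-\mathrm{M}^{(k)}_{\mu_H}}^2_2\ge 0$, which immediately yields $\mathcal{F}^{(k)}_{\nu}\ge\mathcal{F}^{(k)}_{\mu_H}$. To evaluate the right-hand side I would observe that the Haar measure is $k$-invariant — it is left- and right-invariant for every integrable function, hence in particular for all polynomials of degree at most $k$ — so that Lemma~\ref{le:framComm} applies and gives $\mathcal{F}^{(k)}_{\mu_H}=\dim(\mathrm{Comm}(\Ug(d),k))$. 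Schur-Weyl duality (Theorem~\ref{th:SchurW}) then identifies this commutant with $\operatorname{span}(V_d(\pi):\pi\in S_k)$, producing the stated equality $\mathcal{F}^{(k)}_{\mu_H}=\dim(\operatorname{span}(V_d(\pi):\pi\in S_k))$.

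Next I would settle the equality case. Because the $2$-norm is a genuine norm, $\mathcal{F}^{(k)}_{\nu}=\mathcal{F}^{(k)}_{\mu_H}$ holds if and only if $\norm{\mathrm{M}^{(k)}_\nu-\mathrm{M}^{(k)}_{\mu_H}}_2=0$, i.e.\ if and only if $\mathrm{M}^{(k)}_\nu=\mathrm{M}^{(k)}_{\mu_H}$. By Observation~\ref{obs:vecdesign}, equality of the vectorized moment operators is precisely the condition that $\nu$ be a $k$-design, so the claimed equivalence follows. For the special case $k\le d$, Proposition~\ref{prop:permInd} guarantees that the operators $\{V_d(\pi)\}_{\pi\in S_k}$ are linearly independent, so the dimension of their span equals the number of permutations $|S_k|=k!$.

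I do not expect a serious obstacle, since every step invokes a previously proven result; the only point needing a moment's care is verifying that Lemma~\ref{le:framComm} applies to $\mu_H$, which reduces to noting that Haar invariance implies $k$-invariance. If anything is delicate it is pure bookkeeping — ensuring that the definition of $\mathrm{M}^{(k)}_\nu$ matches the one used in Lemma~\ref{le:frameNorm}, and that vanishing of the norm is invoked correctly to conclude equality of the two operators.
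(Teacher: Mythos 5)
Your proposal is correct and follows essentially the same route as the paper: Lemma~\ref{le:frameNorm} for the inequality and the equality case (via Observation~\ref{obs:vecdesign}), Lemma~\ref{le:framComm} together with Schur--Weyl duality for the value of $\mathcal{F}^{(k)}_{\mu_H}$, and Proposition~\ref{prop:permInd} for the $k\le d$ case. Your treatment is slightly more explicit than the paper's (which compresses the equality-case argument into the discussion preceding the proposition), but no new ideas are introduced.
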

\begin{proof}
By Lemma~\ref{le:frameNorm}, Lemma~\ref{le:framComm} and Schur-Weyl duality~\ref{th:SchurW}, we have:
\begin{align}
\mathcal{F}^{(k)}_{\nu}\ge\mathcal{F}^{(k)}_{\mu_H}=\dim\left(\mathrm{Comm}(\Ug(d),k)\right)=\dim\left(\operatorname{span} (V_d(\pi):\, \pi \in S_k)\right).
\end{align}
Moreover, by Proposition~\ref{prop:permInd}, we have that if $k\le d$, then the number of linearly independent permutation matrices is $k!$.
\end{proof}
By utilizing this result, we can derive a straightforward lower bound on the cardinality of a discrete set $S$ of unitaries necessary to form a $k$-design. To establish this bound, let us consider the frame potential $\mathcal{F}^{(k)}_{\nu}$ associated with the uniform distribution $\nu$ over the discrete set $S$. We can then deduce that:
\begin{align}
    \mathcal{F}^{(k)}_{\nu}=\frac{1}{\left|S\right|^2}\sum^{|S|}_{i,j=1} \left|\Tr\!\left(U_iU_j^\dagger\right)\right|^{2k}\ge\frac{1}{\left|S\right|^2}\sum^{|S|}_{i=1} \left|\Tr\!\left(U_iU_i^\dagger\right)\right|^{2k}=\frac{1}{\left|S\right|} d^{2k}.
\end{align}
Furthermore, considering the fact that $\nu$ constitutes a $k$-design (with $k \le d$), by the previous proposition, we have that $\mathcal{F}^{(k)}_{\nu}=k!$. This implies that the cardinality of the set $S$ must satisfy $|S|\ge \frac{d^{2k}}{k!}$. Consequently, if we are considering an $n$-qubit system where the Hilbert space dimension is $d=2^n$, the cardinality of $S$ must grow at least exponentially with the number of qubits. For a more comprehensive analysis of lower bounds of $k$-design, we recommend referring to the following references \cite{Brand_o_2016,Roy_2009,Gross_2007,Brand_o_2021}.  

The following proposition provides equivalent definitions of unitary $k$-design:
\begin{proposition}[Equivalent definitions of unitary $k$-design.]
    Let $\nu$ be a probability distribution over a set of unitaries $S\subseteq \Ug$. Then, $\nu$ is a unitary $k$-design if and only if:
    \begin{enumerate}
        \item \quad $\underset{U \sim \nu}{\mathbb{E}}\left[U^{\otimes k} O U^{\dagger \otimes k}\right]=\ExU \left[U^{\otimes k} O U^{\dagger \otimes k}\right]$ for all $O\in \mathcal{L}\left((\mathbb{C}^d)^{\otimes k}\right)$. 
        \item \quad $\underset{V \sim \nu}{\mathbb{E}}\left[V^{\otimes k}\otimes V^{*\otimes k}\right]=\ExU\left[U^{\otimes k}\otimes U^{*\otimes k}\right]$.
        \item \quad $\mathcal{F}^{(k)}_{\nu}=\dim\left(\mathrm{Comm}(\Ug(d),k)\right)$.
        \item \quad $\underset{V \sim \nu}{\mathbb{E}}[\,p(V)\,]=\ExU [\,p(U)\,]$ for all polynomials $p(U)$ homogeneous of degree $k$ in the matrix elements of $U$ and homogeneous of degree $k$ in the matrix elements of $U^*$.
    \end{enumerate}
\end{proposition}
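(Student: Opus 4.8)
The plan is to observe first that condition~(1) is nothing but the defining property of a unitary $k$-design (Definition~\ref{def:unitarydes}), so it suffices to show that each of (2), (3), and~(4) is equivalent to~(1). Two of these equivalences are already available in the excerpt. The equivalence (1)~$\Leftrightarrow$~(2) is precisely the content of Observation~\ref{obs:vecdesign}, obtained by vectorizing the defining identity and using that $\kket{O}$ ranges over all of $(\mathbb{C}^d)^{\otimes 2k}$ as $O$ ranges over $\mathcal{L}((\mathbb{C}^d)^{\otimes k})$. The equivalence (1)~$\Leftrightarrow$~(3) follows from Proposition~\ref{prop:framedesign}, which shows $\mathcal{F}^{(k)}_\nu \ge \dim(\mathrm{Comm}(\Ug(d),k))$ with equality exactly when $\nu$ is a $k$-design; here one also invokes Schur--Weyl duality (Theorem~\ref{th:SchurW}) to identify the right-hand side. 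Thus the only genuinely new work is condition~(4).

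For~(4), I would prove the equivalence (2)~$\Leftrightarrow$~(4). The key is the representation, already noted in the footnote of Section~\ref{sec:Haar}, that any polynomial $p(U)$ homogeneous of degree $k$ in the entries of $U$ and of degree $k$ in the entries of $U^*$ can be written as $p(U)=\Tr(A\,(U^{\otimes k}\otimes U^{*\otimes k}))$ for a coefficient matrix $A\in\mathcal{L}((\mathbb{C}^d)^{\otimes 2k})$, and conversely every such trace is a polynomial of exactly this bidegree. By linearity of the trace and of the expectation, this gives
\begin{align}
\underset{V\sim\nu}{\mathbb{E}}[\,p(V)\,]=\Tr\!\left(A\,\underset{V\sim\nu}{\mathbb{E}}\left[V^{\otimes k}\otimes V^{*\otimes k}\right]\right),\qquad \ExU[\,p(U)\,]=\Tr\!\left(A\,\ExU\left[U^{\otimes k}\otimes U^{*\otimes k}\right]\right).
\end{align}
Hence~(4) states exactly that $\Tr(A\,\mathrm{M}^{(k)}_\nu)=\Tr(A\,\mathrm{M}^{(k)}_{\mu_H})$ for all coefficient matrices $A$, where $\mathrm{M}^{(k)}_\nu\coloneqq \underset{V\sim\nu}{\mathbb{E}}[V^{\otimes k}\otimes V^{*\otimes k}]$. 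Writing $\Tr(A\,M)=\hs{A^\dagger}{M}$, and using that $A^\dagger$ ranges over all of $\mathcal{L}((\mathbb{C}^d)^{\otimes 2k})$ as $A$ does, the nondegeneracy of the Hilbert--Schmidt inner product forces $\mathrm{M}^{(k)}_\nu=\mathrm{M}^{(k)}_{\mu_H}$, which is precisely~(2). The reverse implication (2)~$\Rightarrow$~(4) is immediate by pairing both operators against $A$.

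The main point requiring care is the claim that the coefficient matrices $A$ exhaust $\mathcal{L}((\mathbb{C}^d)^{\otimes 2k})$, so that the nondegeneracy argument applies to an arbitrary operator rather than to some proper subspace; this amounts to checking that the monomials appearing in $U^{\otimes k}\otimes U^{*\otimes k}$ are in bijection with the entries of $A$, which is a bookkeeping exercise in indexing the tensor factors. Everything else reduces to linearity and to invoking the already-established Observation~\ref{obs:vecdesign} and Proposition~\ref{prop:framedesign}, so no further representation theory beyond Schur--Weyl is needed.
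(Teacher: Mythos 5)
Your proposal is correct and follows essentially the same route as the paper: (1)$\Leftrightarrow$(2) via Observation~\ref{obs:vecdesign}, (3) via Proposition~\ref{prop:framedesign} together with Schur--Weyl, and (2)$\Leftrightarrow$(4) by writing every bidegree-$(k,k)$ polynomial as $\Tr(A\,(U^{\otimes k}\otimes U^{*\otimes k}))$ and invoking nondegeneracy of the trace pairing. The only cosmetic difference is that the paper chains (3) to (2) rather than to (1), and your explicit remark that the coefficient matrices $A$ range over all of $\mathcal{L}((\mathbb{C}^d)^{\otimes 2k})$ is exactly the point the paper makes implicitly.
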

\begin{proof}
    The equivalence between $1.$ and $2.$ is shown in Observation~\ref{obs:vecdesign} and the one between $2.$ and $3.$ is shown in Proposition~\ref{prop:framedesign}. 
    To show that $2.$ implies $4.$, we observe that any homogeneous polynomial $p(U)$ of degree $k$ in the elements of $U$ and in the elements of $U^*$ can be written as $p(U)=\Tr[A U^{\otimes k}\otimes U^{* \otimes k}]$ for a matrix $A\in \mathcal{L}\left((C^d)^{\otimes 2 k}\right)$ with entries the coefficients of the polynomial. Similarly, to show that $4.$ implies $2.$, we note that, for all matrices $A\in \mathcal{L}\left((C^d)^{\otimes 2 k}\right)$, $p_A(U)\coloneqq \Tr[A U^{\otimes k}\otimes U^{* \otimes k}]$ is a homogeneous polynomial of degree $k$ in both the elements of $U$ and $U^*$. Therefore we have \begin{align}
    \Tr\!\left(A\underset{V \sim \nu}{\mathbb{E}}\left[V^{\otimes k}\otimes V^{*\otimes k}\right]\right)=\Tr\!\left(A\ExU \left[U^{\otimes k}\otimes U^{*\otimes k}\right]\right) \quad \text{for all $A\in \mathcal{L}\left((C^d)^{\otimes 2 k}\right)$,}        
\end{align}
    which implies the point $2$. 
\end{proof}
Moreover, if $\nu$ is a uniform distribution over a set of unitaries $S$ which forms a group, then if $\dim\left(\mathrm{Comm}(S,k)\right)=\dim\left(\mathrm{Comm}(\Ug(d),k)\right)$, then, by Proposition~\ref{le:framComm} and Proposition~\ref{prop:framedesign}, $\nu$ is a $k$-unitary design.

Now we turn to the definition of $k$-designs for distributions over sets of states, which are known as \emph{state} $k$-designs or \emph{spherical} $k$-designs \cite{ambainis2007quantum}.
\begin{definition}[State k-design]
   Let $\eta$ be a probability distribution over a set of states $S\subseteq \mathbb{C}^d$. 
   
   The distribution $\eta$ is said to be a state $k$-design (or also spherical $k$-designs) if and only if:
    \begin{align}
        \underset{\ket{\psi} \sim \eta}{\mathbb{E}}\left[\ketbra{\psi}{\psi}^{\otimes k}\right]=\Expsi \left[\ketbra{\psi}{\psi}^{\otimes k}\right].
    \end{align}
\end{definition}
A unitary $k$-design $\nu$ induces a $k$-state design for the probability distribution over states $U\ket{\psi_0}$, where $U$ is drawn from $\nu$ and $\ket{\psi_0}$ is fixed. This can be seen by setting $O = (\ketbra{\psi_0}{\psi_0})^{\otimes k}$ in Definition~\ref{def:unitarydes}.

Similarly to the unitary $k$-design case, we can define the so-called \emph{state frame potential}.
\begin{definition}[State frame potential]
    The state frame potential is defined as: 
    \begin{align}
        \mathcal{F}^{(k, \eta)}_{state}\coloneqq \underset{\ket{\psi},\ket{\phi}  \sim \eta}{\mathbb{E}}\left[\left|\braket{\psi}{\phi}\right|^{2k}\right].
    \end{align}
\end{definition}
We now state a proposition that relates the state frame potential to $k$-designs:
\begin{proposition}[State frame potential design] Let $\eta$ be a probability distribution over a set of states $S\subseteq \mathbb{C}^d$. Then:
    \begin{align}
        \mathcal{F}^{(k, \eta)}_{state}\ge \left(\mathrm{dim}\left(\mathrm{Sym}_k(\mathbb{C}^d)\right)\right)^{-1},
    \end{align}
    where $\mathrm{dim}\left(\mathrm{Sym}_k(\mathbb{C}^d)\right)=\binom{d+k-1}{k}$. Equality holds if and only if $\eta$ is a $k$-state design.
\end{proposition}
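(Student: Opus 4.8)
The plan is to mirror the strategy used for the unitary frame potential in Lemma~\ref{le:frameNorm} and Proposition~\ref{prop:framedesign}, replacing the vectorized moment operator with the \emph{state} moment operator $M_\eta \coloneqq \underset{\ket{\psi}\sim\eta}{\mathbb{E}}\left[\ketbra{\psi}{\psi}^{\otimes k}\right]$. First I would rewrite the state frame potential as a Hilbert--Schmidt norm. The key observation is the factorization identity $\left|\braket{\psi}{\phi}\right|^{2k} = \Tr\!\left(\ketbra{\psi}{\psi}^{\otimes k}\,\ketbra{\phi}{\phi}^{\otimes k}\right)$, which holds because the trace of a tensor product factors across the $k$ copies into $\left(\Tr(\ketbra{\psi}{\psi}\ketbra{\phi}{\phi})\right)^k = |\braket{\psi}{\phi}|^{2k}$. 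Since $\ket{\psi}$ and $\ket{\phi}$ are drawn independently from $\eta$, taking expectations yields
\begin{align}
\mathcal{F}^{(k,\eta)}_{state} = \Tr\!\left(M_\eta^2\right) = \norm{M_\eta}_2^2.
\end{align}

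Next I would record two structural facts about $M_\eta$. It is positive semidefinite with $\Tr(M_\eta)=1$, since $\Tr\!\left(\ketbra{\psi}{\psi}^{\otimes k}\right)=1$ for every normalized state, and its support lies in the symmetric subspace, because each $\ket{\psi}^{\otimes k}\in \mathrm{Sym}_k(\mathbb{C}^d)$; hence $P^{(d,k)}_{\mathrm{sym}} M_\eta = M_\eta = M_\eta P^{(d,k)}_{\mathrm{sym}}$. Recall also from Theorem~\ref{th:MomPermSym} that the Haar state moment operator is $M_H \coloneqq \Expsi\left[\ketbra{\psi}{\psi}^{\otimes k}\right] = P^{(d,k)}_{\mathrm{sym}}/\Tr\!\left(P^{(d,k)}_{\mathrm{sym}}\right)$, where by Theorem~\ref{th:dimsym} we have $\Tr\!\left(P^{(d,k)}_{\mathrm{sym}}\right) = \dim\!\left(\mathrm{Sym}_k(\mathbb{C}^d)\right) =\binom{d+k-1}{k}$; write $D \coloneqq \dim\!\left(\mathrm{Sym}_k(\mathbb{C}^d)\right)$ for brevity.

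The heart of the argument is then to expand the manifestly nonnegative quantity $\norm{M_\eta - M_H}_2^2$. Using $M_H = P^{(d,k)}_{\mathrm{sym}}/D$, the support property $\Tr\!\left(M_\eta P^{(d,k)}_{\mathrm{sym}}\right) = \Tr(M_\eta) = 1$, and the projector identity $\Tr\!\left(\big(P^{(d,k)}_{\mathrm{sym}}\big)^2\right) = \Tr\!\left(P^{(d,k)}_{\mathrm{sym}}\right) = D$, I would compute $\Tr\!\left(M_\eta M_H\right) = 1/D$ and $\Tr\!\left(M_H^2\right) = 1/D$, so that
\begin{align}
0 \le \norm{M_\eta - M_H}_2^2 = \Tr\!\left(M_\eta^2\right) - \frac{2}{D} + \frac{1}{D} = \mathcal{F}^{(k,\eta)}_{state} - \frac{1}{D}.
\end{align}
This gives $\mathcal{F}^{(k,\eta)}_{state}\ge D^{-1} = \left(\dim\!\left(\mathrm{Sym}_k(\mathbb{C}^d)\right)\right)^{-1}$, and equality holds if and only if $M_\eta = M_H$, which is precisely the statement that $\eta$ is a $k$-state design.

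I do not anticipate a serious obstacle; the only delicate points are the factorization identity for $|\braket{\psi}{\phi}|^{2k}$ and the justification that $M_\eta$ is supported on the symmetric subspace, both of which are elementary. An alternative to the norm-expansion step is a direct Cauchy--Schwarz estimate $\Tr(M_\eta)^2 = \hs{P^{(d,k)}_{\mathrm{sym}}}{M_\eta}^2 \le \Tr\!\left(\big(P^{(d,k)}_{\mathrm{sym}}\big)^2\right)\Tr\!\left(M_\eta^2\right) = D\,\mathcal{F}^{(k,\eta)}_{state}$, whose equality case forces $M_\eta \propto P^{(d,k)}_{\mathrm{sym}}$ and hence, together with $\Tr(M_\eta)=1$, recovers $M_\eta = M_H$.
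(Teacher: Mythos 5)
Your proposal is correct and follows essentially the same route as the paper: both expand the nonnegative quantity $\norm{M_\eta - M_H}_2^2$, identify $\Tr\!\left(M_\eta^2\right)$ with the state frame potential, and use $M_H = P^{(d,k)}_{\mathrm{sym}}/\dim\!\left(\mathrm{Sym}_k(\mathbb{C}^d)\right)$ together with the fact that $M_\eta$ is supported on the symmetric subspace to evaluate the cross term. Your write-up in fact makes explicit the support argument behind $\Tr\!\left(M_\eta M_H\right)=1/D$, which the paper only sketches with a \textquote{similarly}.
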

\begin{proof}
We denote $\mathrm{A}^{(k)}_{\mu_H}\coloneqq  \underset{\ket{\psi} \sim \mu_H}{\mathbb{E}}\left[\ketbra{\psi}{\psi}^{\otimes k}\right]$ and $\mathrm{A}^{(k)}_\eta\coloneqq  \underset{\ket{\psi} \sim \eta}{\mathbb{E}}\left[\ketbra{\psi}{\psi}^{\otimes k}\right]$. We have:
\begin{align}
    \norm{\mathrm{A}^{(k)}_\eta-\mathrm{A}^{(k)}_{\mu_H}}^2_2&=\Tr\left[\left(\mathrm{A}^{(k)}_\eta-\mathrm{A}^{(k)}_{\mu_H}\right)^2\right]\\
    &=\Tr\left[\left(\mathrm{A}^{(k)}_\eta\right)^2\right] + \Tr\left[\left(\mathrm{A}^{(k)}_{\mu_H}\right)^2\right] - 2\Tr\left[\mathrm{A}^{(k)}_\eta \mathrm{A}^{(k)}_{\mu_H}\right] \\
    &=\underset{\ket{\psi},\ket{\phi}  \sim \eta}{\mathbb{E}}\left[\left|\braket{\psi}{\phi}\right|^{2k}\right] + \Tr\left[\left(\mathrm{A}^{(k)}_{\mu_H}\right)^2\right]- 2\Tr\left[\mathrm{A}^{(k)}_\eta \mathrm{A}^{(k)}_{\mu_H}\right],
\end{align}
where in the second equality we used cyclicity of the trace.
Additionally, we have:
\begin{align}
\Tr\left[\left(\mathrm{A}^{(k)}_{\mu_H}\right)^2\right]
&=\Tr\left[\left(\frac{P^{(d,k)}_{\mathrm{sym}}}{\mathrm{dim}\left(\mathrm{Sym}_k(\mathbb{C}^d)\right)}\right)^2\right]\\
&=\Tr\left[\frac{P^{(d,k)}_{\mathrm{sym}}}{\left(\mathrm{dim}\left(\mathrm{Sym}_k(\mathbb{C}^d)\right)\right)^2}\right]\\
&=\left(\mathrm{dim}\left(\mathrm{Sym}_k(\mathbb{C}^d)\right)\right)^{-1},
\end{align}
where in the first equality we used Eq.\eqref{eq:MomPermSym}, in the second equality the fact that $P^{(d,k)}_{\mathrm{sym}}$ squares to itself (Theorem~\ref{th:Psymprojector}) and in the last equality the fact that $P^{(d,k)}_{\mathrm{sym}}$ is the orthogonal projector on the symmetric subspace (Theorem~\ref{th:Psymprojector}).
Similarly, we can show that $\Tr\left[\mathrm{A}^{(k)}_\eta \mathrm{A}^{(k)}_{\mu_H}\right]$ is equal to $\left(\mathrm{dim}\left(\mathrm{Sym}_k(\mathbb{C}^d)\right)\right)^{-1}$. Here, $\mathrm{dim}\left(\mathrm{Sym}_k(\mathbb{C}^d)\right)$ can be calculated using Proposition~\ref{th:dimsym}, which shows that it is equal to $\binom{d+k-1}{k}$. Therefore, we can conclude.

\end{proof}

It is worth noting that any uniform distribution $\nu$ defined over a set of unitaries $S=\{U_i\}^{d^2}_{i=1}$ that forms a basis for $\mathcal{L}(\mathbb{C}^{d})$ and satisfies $\Tr(U_i^\dagger U_j)=d\delta_{i,j}$ constitutes a $1$-design. This can be easily proven by computing the frame potential as follows:
\begin{align}
    \mathcal{F}_{\nu}^{(k=1)}=\frac{1}{|S|^2}\sum^{d^2}_{i,j=1}\left|\Tr(U_iU_j^{\dagger})\right|^2=\frac{1}{d^4}\sum^{d^2}_{i,j=1}d^2\delta_{i,j}=1=\dim(\operatorname{span} (V_d(\pi):\, \pi \in S_1)),
\end{align}
and applying Proposition \ref{prop:framedesign}.
Therefore, the uniform distribution defined over the Pauli basis $\tilde{\mathcal{P}}:=\{I,X,Y,Z\}^{\otimes n}$ is a $1$-design, where $n$ is the number of qubits and $d=2^n$. 

The Flip operator can be elegantly represented in terms of the Pauli basis using the following expression:
\begin{align}
    \Flip=\sum_{P,Q\in \tilde{\mathcal{P}}} \frac{1}{d^2} \Tr(\left(P\otimes Q\right) \Flip) P\otimes Q=\sum_{P,Q\in \tilde{\mathcal{P}}} \frac{1}{d^2} \Tr(P Q) P\otimes Q=\sum_{P\in \tilde{\mathcal{P}}} \frac{1}{d} P\otimes P,
\end{align}
where we wrote the Flip operator in the Pauli basis and used the \emph{swap-trick}.
Using this, it is also evident that the Pauli group forms a $1$-design (according to Definition \ref{def:unitarydes}). In fact, for any $O\in \MatC{d}$, we have:
\begin{align}
    \frac{1}{|\tilde{\mathcal{P}}|}\sum_{P\in \tilde{\mathcal{P}}} P O P^{\dagger}=\frac{1}{d^2}\sum_{P\in \tilde{\mathcal{P}}}\Tr_2\left((I\otimes O) ( P\otimes P) \Flip\right)=\frac{1}{d}\Tr_2\left((I\otimes O) \Flip\Flip\right)=\frac{\Tr(O)}{d}=\ExU \left[U O U^{\dagger}\right],
    \nonumber
\end{align}
where in the first step we used the \emph{partial-swap-trick} (Eq.\eqref{eq:partialswap}) and in the last step (Eq.\eqref{eq:1momHaar}).

An important set of unitaries is the Clifford group $\mathrm{Cl}(n)$ \cite{gottesman1998heisenberg} i.e. the set of unitaries which sends the Pauli group $\mathcal{P}_n$ in itself under the adjoint operation:
\begin{align}
    \mathrm{Cl}(n)\coloneqq \{U\in \mathrm{U}(2^n)\colon UPU^{\dagger} \in \mathcal{P}_n \text{ for all $P\in \mathcal{P}_n$} \},
    \label{eq:Cliffordgroup}
\end{align}
where $\mathcal{P}_n\coloneqq \{i^k\}^{3}_{k=0}\times \{I,X,Y,Z\}^{\otimes n}$.
It can be proven that the uniform distribution over the Clifford group, forms a $3$-design for all $d=2^n$ \cite{webb2016clifford,Zhu_2017}, but it fails to be a $4$-design \cite{zhu2016clifford}. 

Moreover, it can be shown that any Clifford circuit can be implemented with $O(n^2/\log(n))$ gates from the set $\{\Hadamard,\CNOT,\PhaseS\}$ where $\Hadamard$, $\CNOT$ and $\PhaseS$ are the Hadamard, Controlled-NOT and Phase gate, respectively \cite{Aaronson_2004,nielsen_chuang_2010}. Furthermore, there are efficient algorithms to sample uniformly from the Clifford group \cite{berg2021simple,Koenig_2014}.  
This is of fundamental importance for applications in quantum computing, since if we are interested in reproducing moments of the Haar measure up to the third moment, it is sufficient to sample efficient quantum circuits that correspond to Clifford unitaries (instead of sampling from the Haar measure, which would require implementing quantum circuits with an exponential number of gates in the number of qubits \cite{nielsen_chuang_2010}).

\section{Approximate unitary designs}
\label{sec:apprdesign}
In many cases, having an exact unitary design may not be necessary, and having an \emph{approximate} one may suffice.
Various definitions of approximate unitary designs have been proposed in the literature \cite{low2010pseudorandomness,Brand_o_2016,Brand_o_2021}. Here we will explore some of them and their relationships.

We begin by defining the concept of Tensor Product Expander (TPE)-$\varepsilon$-approximate $k$-design \cite{Brand_o_2016}. To simplify the notation, we use $U^{\otimes k,k}\coloneqq U^{\otimes k}\otimes U^{*\otimes k}$.
\begin{definition}[Tensor Product Expander (TPE)-$\varepsilon$-approximate $k$-design]
    Let $\varepsilon>0$.
    We say that $\nu$ is a TPE $\varepsilon$-approximate $k$-design if and only if:
    \begin{align}
        \norm{\underset{V \sim \nu}{\mathbb{E}}\left[V^{\otimes k,k}\right]-\ExU \left[U^{\otimes k,k}\right]}_\infty\le \varepsilon.
    \end{align}
\end{definition}

The TPE notion of approximate unitary design is particularly advantageous because it can be \textquote{amplified}. This means that if a distribution of unitaries $\nu$ is a TPE $\lambda$-approximate $k$-design, then the distribution of unitaries associated to the product $U_\Pmath \cdots U_1$ of unitaries $\{U_i\}^\Pmath_{i=1}$, each independently distributed according to $\nu$, is a TPE $\lambda^\Pmath$-approximate design (as we prove below). We denote by $\nu_\Pmath$ this distribution of the product of $\Pmath$ unitaries sampled independently by $\nu$. Note that the vectorized moment operator associated with $\nu_\Pmath$ is:
\begin{align}
    \underset{U \sim \nu_\Pmath}{\mathbb{E}}U^{\otimes k,k}=\underset{U_1,\dots,U_\Pmath \sim \nu}{\mathbb{E}}\left(U_\Pmath \cdots U_1\right)^{\otimes k,k}.
\end{align}
We state the proposition below:
\begin{proposition}[Amplification of TPE]\
Let $1 >\lambda \ge \varepsilon > 0 $.
If $\nu$ is a TPE $\lambda$-approximate $k$-design, then $\nu_\Pmath$ is a TPE $\varepsilon$-approximate $k$-design, where $\Pmath$ is a positive integer such that $\Pmath\ge \frac{1}{\log(\lambda^{-1})}\log(\varepsilon^{-1}) $.
\end{proposition}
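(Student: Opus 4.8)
The plan is to reduce the claim to a bound on operator-norm powers of the difference of vectorized moment operators. First I would write $\mathrm{M}^{(k)}_\nu\coloneqq \underset{V\sim\nu}{\mathbb{E}}[V^{\otimes k,k}]$ and $\mathrm{M}^{(k)}_{\mu_H}\coloneqq \ExU[U^{\otimes k,k}]$, and observe that, since the map $U\mapsto U^{\otimes k,k}=U^{\otimes k}\otimes U^{*\otimes k}$ is multiplicative (so $(AB)^{\otimes k,k}=A^{\otimes k,k}B^{\otimes k,k}$) and the $\Pmath$ unitaries are drawn independently, the vectorized moment operator of $\nu_\Pmath$ factorizes:
\begin{align}
\underset{U\sim\nu_\Pmath}{\mathbb{E}}U^{\otimes k,k}=\underset{U_1,\dots,U_\Pmath\sim\nu}{\mathbb{E}}\left(U_\Pmath\cdots U_1\right)^{\otimes k,k}=\left(\mathrm{M}^{(k)}_\nu\right)^\Pmath .
\end{align}
Thus the task becomes proving $\norm{(\mathrm{M}^{(k)}_\nu)^\Pmath-\mathrm{M}^{(k)}_{\mu_H}}_\infty\le\lambda^\Pmath$.

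The key algebraic step, which I expect to be the heart of the argument, is the identity
\begin{align}
\left(\mathrm{M}^{(k)}_\nu-\mathrm{M}^{(k)}_{\mu_H}\right)^\Pmath=\left(\mathrm{M}^{(k)}_\nu\right)^\Pmath-\mathrm{M}^{(k)}_{\mu_H}.
\end{align}
I would prove it by induction on $\Pmath$, invoking the three facts established in the proof of Lemma~\ref{le:frameNorm}: that $\mathrm{M}^{(k)}_{\mu_H}$ is an orthogonal projector ($\mathrm{M}^{(k)2}_{\mu_H}=\mathrm{M}^{(k)}_{\mu_H}$), and that it absorbs $\mathrm{M}^{(k)}_\nu$ from both sides, $\mathrm{M}^{(k)}_\nu\mathrm{M}^{(k)}_{\mu_H}=\mathrm{M}^{(k)}_{\mu_H}\mathrm{M}^{(k)}_\nu=\mathrm{M}^{(k)}_{\mu_H}$ (the latter following from left/right invariance of the Haar measure). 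In the inductive step the cross terms $(\mathrm{M}^{(k)}_\nu)^{\Pmath}\mathrm{M}^{(k)}_{\mu_H}$ and $\mathrm{M}^{(k)}_{\mu_H}\mathrm{M}^{(k)}_\nu$ both collapse to $\mathrm{M}^{(k)}_{\mu_H}$ by repeated absorption, leaving exactly $(\mathrm{M}^{(k)}_\nu)^{\Pmath+1}-\mathrm{M}^{(k)}_{\mu_H}$, which closes the induction.

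With the identity in hand, submultiplicativity of the Schatten $\infty$-norm (stated in Section~\ref{sec:notation}) gives
\begin{align}
\norm{(\mathrm{M}^{(k)}_\nu)^\Pmath-\mathrm{M}^{(k)}_{\mu_H}}_\infty=\norm{(\mathrm{M}^{(k)}_\nu-\mathrm{M}^{(k)}_{\mu_H})^\Pmath}_\infty\le\norm{\mathrm{M}^{(k)}_\nu-\mathrm{M}^{(k)}_{\mu_H}}_\infty^\Pmath\le\lambda^\Pmath ,
\end{align}
using the hypothesis that $\nu$ is a TPE $\lambda$-approximate $k$-design. Finally I would impose $\lambda^\Pmath\le\varepsilon$ and take logarithms: since $0<\lambda<1$, this is equivalent to $\Pmath\log(\lambda^{-1})\ge\log(\varepsilon^{-1})$, i.e. $\Pmath\ge\log(\varepsilon^{-1})/\log(\lambda^{-1})$, which is precisely the stated hypothesis. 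The only points warranting care are that the factorization of the moment operator genuinely uses independence of the $U_i$, and that the absorbing property is used on \emph{both} sides to get the clean telescoping; neither is difficult, but both are essential.
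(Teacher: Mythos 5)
Your proposal is correct and follows essentially the same route as the paper's proof: factorize the moment operator of $\nu_\Pmath$ using independence, establish the telescoping identity $\left(\mathrm{M}^{(k)}_\nu-\mathrm{M}^{(k)}_{\mu_H}\right)^{\Pmath}=\left(\mathrm{M}^{(k)}_\nu\right)^{\Pmath}-\mathrm{M}^{(k)}_{\mu_H}$ by induction using the absorbing property of the Haar projector, and finish with submultiplicativity of the $\infty$-norm. The inductive step and the final logarithmic bookkeeping match the paper's argument exactly.
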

\begin{proof}
We have to bound the following quantity $
        \norm{\underset{V \sim \nu_\Pmath}{\mathbb{E}}\left[V^{\otimes k,k}\right]-\ExU \left[U^{\otimes k,k}\right]}_\infty$.
We first observe that the vectorized moment operator associated to $\nu_\Pmath$ is:
\begin{align}
    \label{eq:productmeas}
     \underset{V \sim \nu_\Pmath}{\mathbb{E}}\left[V^{\otimes k,k}\right]=\underset{V_1,\dots,V_\Pmath \sim \nu}{\mathbb{E}}\left(V_\Pmath \cdots V_1\right)^{\otimes k,k}=\left(\underset{V \sim \nu}{\mathbb{E}}\left[V^{\otimes k,k}\right]\right)^{\Pmath},
\end{align}
where we used the independence of $U_1,\dots,U_\Pmath$.
Moreover, we have:
\begin{align}
    \label{eq:amplif}
    \left(\underset{V \sim \nu}{\mathbb{E}}\left[V^{\otimes k,k}\right]\right)^{\Pmath}-\ExU \left[U^{\otimes k,k}\right]=\left(\underset{V \sim \nu}{\mathbb{E}}\left[V^{\otimes k,k}\right]-\ExU \left[U^{\otimes k,k}\right]\right)^{\Pmath},
\end{align}
which can be shown by induction. The basis step is trivial for $\Pmath=1$. Suppose the claim is true for $\Pmath$, then:
\begin{align}
    &\left(\underset{V \sim \nu}{\mathbb{E}}\left[V^{\otimes k,k}\right]-\ExU \left[U^{\otimes k,k}\right]\right)^{\Pmath+1}\\
    &=\left(\underset{V \sim \nu}{\mathbb{E}}\left[V^{\otimes k,k}\right]-\ExU \left[U^{\otimes k,k}\right]\right)^{\Pmath}\left(\underset{V \sim \nu}{\mathbb{E}}\left[V^{\otimes k,k}\right]-\ExU \left[U^{\otimes k,k}\right]\right)\\
    &=\left(\left(\underset{V \sim \nu}{\mathbb{E}}\left[V^{\otimes k,k}\right]\right)^{\Pmath}-\ExU \left[U^{\otimes k,k}\right]\right)\left(\underset{V \sim \nu}{\mathbb{E}}\left[V^{\otimes k,k}\right]-\ExU \left[U^{\otimes k,k}\right]\right)\\
    &=\left(\underset{V \sim \nu}{\mathbb{E}}\left[V^{\otimes k,k}\right]\right)^{\Pmath+1}-\ExU \left[U^{\otimes k,k}\right],
\end{align}
where in the last step we used the invariance of the Haar measure to conclude that 
\begin{align}
V^{k,k}\ExU \left[U^{\otimes k,k}\right]=\ExU \left[U^{\otimes k,k}\right] V^{k,k}=\ExU \left[U^{\otimes k,k}\right].
\end{align}
Therefore we have:
    \begin{align}
        \norm{\underset{V \sim \nu_\Pmath}{\mathbb{E}}\left[V^{\otimes k,k}\right]-\ExU \left[U^{\otimes k,k}\right]}_\infty&=\norm{\left(\underset{V \sim \nu}{\mathbb{E}}\left[V^{\otimes k,k}\right]-\ExU \left[U^{\otimes k,k}\right]\right)^{\Pmath}}_\infty\\
        &\le\norm{\underset{V \sim \nu}{\mathbb{E}}\left[V^{\otimes k,k}\right]-\ExU \left[U^{\otimes k,k}\right]}_\infty^{\Pmath}\\
        &\le \lambda^\Pmath,
    \end{align}
    where in the first step we used Eq.\eqref{eq:productmeas} and Eq.\eqref{eq:amplif}, in the second step we used the submultiplicativity\footnote{Note that the submultiplicativity inequality is saturated if $\underset{V \sim \nu}{\mathbb{E}}\left[V^{\otimes k,k}\right]$ is Hermitian (or more generally normal).}  of the infinity norm and in the last step we used that $\nu$ is a TPE $\lambda$-approximate $k$-design. By choosing $\Pmath\ge \frac{\log(\varepsilon)}{\log(\lambda)}$, we can conclude that $\nu_\Pmath$ is a TPE $\varepsilon$-approximate $k$-design.
\end{proof}

The previous Proposition implies that if we have a TPE $\lambda$-approximate $k$-design and we want to amplify it in order to achieve a TPE with approximation precision $\varepsilon$ inverse exponential in the number of qubits i.e. $\varepsilon=\tilde{\varepsilon} d^{-c}$ where $\tilde{\varepsilon}, c > 0$ and $d=2^n$, then we can achieve this by choosing a number of repetitions $\Pmath\ge \frac{1}{\log(\lambda^{-1})}\left(\log(\tilde{\varepsilon}^{-1})+ n c \log(2)\right)$. Note that this expression is linear in the number of qubits $n$ if $\lambda=O(1)$.
 
We will now define the notion of diamond $\varepsilon$-approximate $k$-design.
\begin{definition}[Diamond $\varepsilon$-approximate $k$-design]
We say that $\nu$ is a diamond $\varepsilon$-approximate $k$-design if and only if
\begin{align}
    \norm{\mathcal{M}^{(k)}_{\nu}-\mathcal{M}^{(k)}_{\mu_H}}_{\lozenge}\le \varepsilon,
\end{align}
where the diamond norm of a superoperator $\Phi:\mathcal{L}(\mathbb{C}^{D})\rightarrow \mathcal{L}(\mathbb{C}^{D})$ can be defined as 
\begin{align}    
\norm{\Phi}_{\lozenge}\coloneqq \sup_{X\neq 0}\frac{\norm{\Phi\otimes\mathcal{I}\left(X\right)}_1}{\norm{X}_1},
\end{align} 
where $\mathcal{I}:\mathcal{L}(\mathbb{C}^{D})\rightarrow \mathcal{L}(\mathbb{C}^{D})$ is the identity channel.
\end{definition}
It is worth mentioning that the diamond norm distance between two quantum channels has an important operational interpretation, as it is closely related to the one-shot distinguishability probability between the two channels \cite{wilde_2013}.

In the following Proposition, we will see the relation between the Diamond approximate design definition and the TPE one previously introduced.

\begin{proposition}[Diamond vs TPE]\
\begin{itemize}
    \item We have:
    \begin{align}
    \label{eq:diamondTPE}
        \norm{\mathcal{M}^{(k)}_{\nu}-\mathcal{M}^{(k)}_{\mu_H}}_{\lozenge}  \le d^k \norm{\underset{V \sim \nu}{\mathbb{E}}\left[V^{\otimes k,k}\right]-\ExU \left[U^{\otimes k,k}\right]}_\infty.
    \end{align}
    Thus, if $\nu$ is a TPE $\varepsilon$-approximate $k$-design, then $\nu$ is a diamond $\varepsilon d^k $-approximate $k$-design.
    \item Conversely, we have: 
    \begin{align}
    \label{eq:TPEdiamond}
    \norm{\underset{V \sim \nu}{\mathbb{E}}\left[V^{\otimes k,k}\right]-\ExU \left[U^{\otimes k,k}\right]}_\infty \le d^{k/2} \norm{\mathcal{M}^{(k)}_{\nu}-\mathcal{M}^{(k)}_{\mu_H}}_{\lozenge}.
\end{align}
    Hence, if $\nu$ is a diamond $\varepsilon$-approximate $k$-design, then $\nu$ is a TPE $\varepsilon d^{k/2}$-approximate $k$-design. 
\end{itemize}
\end{proposition}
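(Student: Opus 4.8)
The plan is to recognize that both inequalities compare two norms of the \emph{same} superoperator $\Delta\coloneqq \mathcal{M}^{(k)}_{\nu}-\mathcal{M}^{(k)}_{\mu_H}$ acting on $\MatC{d^k}$, and that the infinity norm appearing on the TPE side is nothing but the Hilbert--Schmidt-induced superoperator norm of $\Delta$. Indeed, by the definition of the vectorized moment operator one has $\mathrm{vec}(\Delta)=\underset{V \sim \nu}{\mathbb{E}}[V^{\otimes k,k}]-\ExU[U^{\otimes k,k}]$, and since $\mathrm{vec}$ is an isometry for the Hilbert--Schmidt inner product (so that $\bbrakket{A}{B}=\hs{A}{B}$), the largest singular value of $\mathrm{vec}(\Delta)$ equals $\norm{\Delta}_{2\to2}\coloneqq \sup_{X\neq 0}\norm{\Delta(X)}_2/\norm{X}_2$. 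Thus the whole proposition reduces to the two comparisons $\norm{\Delta}_{\lozenge}\le d^{k}\,\norm{\Delta}_{2\to2}$ and $\norm{\Delta}_{2\to2}\le d^{k/2}\,\norm{\Delta}_{\lozenge}$, where I abbreviate $D\coloneqq d^{k}$, the dimension of the underlying space.

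For the converse bound \eqref{eq:TPEdiamond} I would argue straight from the variational definition of $\norm{\Delta}_{2\to2}$. For any nonzero $X\in\MatC{d^k}$ I chain the Schatten inequalities recalled in Section~\ref{sec:notation}, namely $\norm{\Delta(X)}_2\le \norm{\Delta(X)}_1\le \norm{\Delta}_{\lozenge}\,\norm{X}_1\le d^{k/2}\,\norm{\Delta}_{\lozenge}\,\norm{X}_2$. Here the first step uses $\norm{A}_2\le\norm{A}_1$; the middle step uses $\norm{\Delta}_{1\to1}\le \norm{\Delta}_{\lozenge}$, i.e.\ that the diamond norm dominates the induced trace norm (seen by appending a trivial ancilla so that $X$ is tested inside the diamond definition); and the last step uses $\norm{X}_1\le \mathrm{rank}(X)^{1/2}\norm{X}_2\le D^{1/2}\norm{X}_2$. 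Dividing by $\norm{X}_2$ and taking the supremum over $X$ yields $\norm{\Delta}_{2\to2}\le d^{k/2}\norm{\Delta}_{\lozenge}$, which is exactly \eqref{eq:TPEdiamond}.

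For \eqref{eq:diamondTPE} I would evaluate the diamond norm using an ancilla of dimension $D$, which is known to suffice. For $X\in\MatC{d^k}\otimes\MatC{d^k}$ the operator $(\Delta\otimes\mathcal{I})(X)$ lives on a space of dimension $D^2$, hence has rank at most $D^2$, so $\norm{(\Delta\otimes\mathcal{I})(X)}_1\le \mathrm{rank}\big((\Delta\otimes\mathcal{I})(X)\big)^{1/2}\norm{(\Delta\otimes\mathcal{I})(X)}_2\le D\,\norm{(\Delta\otimes\mathcal{I})(X)}_2$. I then invoke the stability of the Hilbert--Schmidt-induced norm under tensoring with the identity channel, $\norm{\Delta\otimes\mathcal{I}}_{2\to2}=\norm{\Delta}_{2\to2}$, to get $\norm{(\Delta\otimes\mathcal{I})(X)}_2\le \norm{\Delta}_{2\to2}\,\norm{X}_2\le \norm{\Delta}_{2\to2}\,\norm{X}_1$. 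Combining, $\norm{(\Delta\otimes\mathcal{I})(X)}_1\le D\,\norm{\Delta}_{2\to2}\,\norm{X}_1$, and taking the supremum over $X$ gives $\norm{\Delta}_{\lozenge}\le D\,\norm{\Delta}_{2\to2}=d^{k}\,\norm{\mathrm{vec}(\Delta)}_\infty$, which is \eqref{eq:diamondTPE}. The final assertions about $\varepsilon$-approximate designs then follow by substituting the definitions.

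The main obstacle I anticipate is justifying the stability identity $\norm{\Delta\otimes\mathcal{I}}_{2\to2}=\norm{\Delta}_{2\to2}$: everything else is a mechanical chaining of the Schatten-norm facts already listed in the preliminaries together with rank counting. I would establish it by noting that the natural (vectorized) representation of $\Delta\otimes\mathcal{I}$ coincides, up to a fixed unitary permutation of the tensor legs, with $\mathrm{vec}(\Delta)\otimes I$, and that such a permutation together with tensoring by an identity leaves the largest singular value unchanged; hence $\norm{\mathrm{vec}(\Delta\otimes\mathcal{I})}_\infty=\norm{\mathrm{vec}(\Delta)}_\infty$. The only other points requiring care are the correct bookkeeping of the ancilla dimension ($D'=D$) and of the rank of $(\Delta\otimes\mathcal{I})(X)$ in the diamond-norm direction.
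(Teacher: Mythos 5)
Your proof is correct and follows essentially the same route as the paper: both directions reduce to the identification $\norm{\mathrm{vec}(\Delta)}_\infty=\sup_{X\neq 0}\norm{\Delta(X)}_2/\norm{X}_2$, the stability of this norm under tensoring with the identity channel (shown via vectorization and a permutation of tensor legs), and the Schatten chain $\norm{A}_\infty\le\norm{A}_2\le\norm{A}_1\le\sqrt{\mathrm{rank}(A)}\,\norm{A}_2$ with the correct dimension bookkeeping ($d^{2k}$ with the ancilla, $d^k$ without). If anything, your converse direction is slightly cleaner than the paper's, since you drop the ancilla explicitly before applying $\norm{X}_1\le d^{k/2}\norm{X}_2$, which is the step the paper leaves implicit.
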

\begin{proof}
   Let $\Phi\coloneqq \mathcal{M}^{(k)}_{\nu}-\mathcal{M}^{(k)}_{\mu_H}$. We have:
\begin{align}
    \norm{\Phi}_{\lozenge}&=\sup_{X\neq 0}\frac{\norm{\Phi\otimes\mathcal{I}\left(X\right)}_1}{\norm{X}_1}\le \sqrt{d^{2k}}\sup_{X\neq 0}\frac{\norm{\Phi\otimes\mathcal{I}\left(X\right)}_2}{\norm{X}_1}\le d^k \sup_{X\neq 0}\frac{\norm{\Phi\otimes\mathcal{I}\left(X\right)}_2}{\norm{X}_2},
\end{align}
where we used that $\norm{A}_1\le\sqrt{D}\norm{A}_2$ and $\norm{A}_2 \le \norm{A}_1$ for $A\in \mathcal{L}(\mathbb{C}^D)$.
Now we observe that:
\begin{align}
    \sup_{X\neq 0}\frac{\norm{\Phi\otimes\mathcal{I}\left(X\right)}_2}{\norm{X}_2}=\sup_{X\,:\,\norm{X}_2=1}\norm{\mathrm{vec}\!\left(\Phi\otimes\mathcal{I}\right)\kket{X}}_2=\norm{\mathrm{vec}\!\left(\Phi\otimes\mathcal{I}\right)}_\infty,
\end{align}
where in the first equality we used that the Hilbert-Schmidt norm of a matrix is the $2$-norm of the vectorized matrix and that $\mathrm{vec}\left(\Phi\otimes\mathcal{I}\left(X\right)\right)=\mathrm{vec}\!\left(\Phi\otimes\mathcal{I}\right)\kket{X}$ (see Eq.\eqref{eq:phikraus}), while in the third equality we used that  $\norm{A\ket{v}}_2=\sqrt{\bra{v}A^{\dagger} A \ket{v} }$ and that the supremum of this quantity over all the vectors $\ket{v}$ that are normalized with respect to the $2$-norm is achieved by the largest eigenvalue of $A^{\dagger} A$, which coincides with the largest singular value of $A$.
Now we have: \begin{align}
\norm{\mathrm{vec}\!\left(U^{\otimes k}\left(\cdot\right)U^{\dagger\otimes k}\otimes\mathcal{I}\right)}_\infty=\norm{U^{\otimes k}\otimes I^{\otimes k}\otimes U^{*\otimes k}\otimes I^{\otimes k}}_\infty=\norm{U^{\otimes k}\otimes U^{*\otimes k}\otimes I^{\otimes k}\otimes I^{\otimes k}}_\infty,    
\end{align}
where in the last step we permuted the tensor products using the $p$-norm property $\norm{UAU^\dagger}_p=\norm{A}_p$ with $U$ unitary matrix.
Using this we can write:
\begin{align}
    \norm{\mathcal{M}^{(k)}_{\nu}-\mathcal{M}^{(k)}_{\mu_H}}_{\lozenge}&\le d^k\norm{\mathrm{vec}\!\left(\mathcal{M}^{(k)}_{\nu}\otimes\mathcal{I}\right)-\mathrm{vec}\!\left(\mathcal{M}^{(k)}_{\mu_H}\otimes\mathcal{I}\right)}_\infty\\
    &= d^k\norm{\left(\underset{V \sim \nu}{\mathbb{E}}\left[V^{\otimes k,k}\right]-\ExU \left[U^{\otimes k,k}\right]\right)\otimes I^{\otimes k}\otimes I^{\otimes k}}_\infty\\
    &=d^k\norm{\underset{V \sim \nu}{\mathbb{E}}\left[V^{\otimes k,k}\right]-\ExU \left[U^{\otimes k,k}\right]}_\infty,
\end{align}
where in the last step we used the $p$-norm property $\norm{A\otimes B}_p=\norm{A}_p \norm{B}_p$.
Hence the first claim follows.
Now for the second claim, reviewing in reverse what we have shown for the first claim, we have:
\begin{align}
    \norm{\underset{V \sim \nu}{\mathbb{E}}\left[V^{\otimes k,k}\right]-\ExU \left[U^{\otimes k,k}\right]}_\infty=\sup_{X\neq 0}\frac{\norm{\Phi\otimes\mathcal{I}\left(X\right)}_2}{\norm{X}_2}.
\end{align}
By using that $\norm{A}_2 \le \norm{A}_1$, that $\norm{A}_1\le\sqrt{D}\norm{A}_2$ for $A\in \mathcal{L}(\mathbb{C}^D)$ and the definition of diamond norm, we have:
\begin{align}
    \norm{\underset{V \sim \nu}{\mathbb{E}}\left[V^{\otimes k,k}\right]-\ExU \left[U^{\otimes k,k}\right]}_\infty \le \sqrt{d^k} \norm{\mathcal{M}^{(k)}_{\nu}-\mathcal{M}^{(k)}_{\mu_H}}_{\lozenge}.
\end{align}
\end{proof}

Motivated by Proposition~\ref{prop:framedesign}, we introduce an alternative notion of approximation based on the frame potential. 
Notably, the frame potential may be easier to estimate numerically than the other approximation notions previously introduced (if approached naively by their definition), as it involves computing the trace of matrices in $\mathcal{L}(\mathbb{C}^{d})$, rather than working with matrices defined in tensor product spaces. Specifically, we define a frame potential $\varepsilon$-approximate $k$-design as follows:
\begin{definition}[Frame potential $\varepsilon$-approximate $k$-design]
We say that $\nu$ is a frame potential $\varepsilon$-approximate $k$-design if and only if:
\begin{align}
    \sqrt{\mathcal{F}^{(k)}_{\nu}-\mathcal{F}^{(k)}_{\mu_H}}\le \varepsilon.
\end{align}
\end{definition}
Note that the argument of this square root is always non-negative (due to Lemma~\ref{le:frameNorm}).

We now establish a relationship between the Frame potential notion of approximate $k$-design and the TPE one.
\begin{proposition}[Frame potential vs TPE relation]\
    \begin{itemize}
    \item We have:
    \begin{align}
        \sqrt{\mathcal{F}^{(k)}_{\nu}-\mathcal{F}^{(k)}_{\mu_H}} \le d^k \norm{\underset{V \sim \nu}{\mathbb{E}}\left[V^{\otimes k,k}\right]-\ExU \left[U^{\otimes k,k}\right]}_\infty.
    \end{align}
    Therefore, if $\nu$ is a TPE $\varepsilon$-approximate $k$-design, then $\nu$ is a frame potential $\varepsilon d^k $-approximate $k$-design.
    \item Conversely, we have:
    \begin{align}
        \norm{\underset{V \sim \nu}{\mathbb{E}}\left[V^{\otimes k,k}\right]-\ExU \left[U^{\otimes k,k}\right]}_\infty \le \sqrt{\mathcal{F}^{(k)}_{\nu}-\mathcal{F}^{(k)}_{\mu_H}}.
    \end{align}
    Thus, if $\nu$ is frame potential $\varepsilon$-approximate $k$-design, then $\nu$ is a TPE $\varepsilon$-approximate $k$-design.
\end{itemize}
\end{proposition}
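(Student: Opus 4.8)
The plan is to reduce both inequalities to elementary Schatten norm relations applied to the single operator $\Delta \coloneqq \underset{V \sim \nu}{\mathbb{E}}\left[V^{\otimes k,k}\right]-\ExU \left[U^{\otimes k,k}\right] \in \mathcal{L}\!\left((\mathbb{C}^d)^{\otimes 2k}\right)$. The entry point is Lemma~\ref{le:frameNorm}, which already identifies the frame potential gap with a Hilbert-Schmidt norm: $\mathcal{F}^{(k)}_{\nu}-\mathcal{F}^{(k)}_{\mu_H}=\norm{\Delta}_2^2$, so that $\sqrt{\mathcal{F}^{(k)}_{\nu}-\mathcal{F}^{(k)}_{\mu_H}}=\norm{\Delta}_2$. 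With this rewriting in hand, both claims become statements comparing $\norm{\Delta}_2$ and $\norm{\Delta}_\infty$, and the rest is a matter of invoking the two standard Schatten $p$-norm facts recorded in Section~\ref{sec:notation}.

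For the first inequality I would invoke the bound $\norm{A}_{p}\le \mathrm{rank}(A)^{(p^{-1}-q^{-1})}\norm{A}_{q}$ with $p=2$ and $q=\infty$, which gives $\norm{\Delta}_2 \le \mathrm{rank}(\Delta)^{1/2}\norm{\Delta}_\infty$. Since $\Delta$ acts on $(\mathbb{C}^d)^{\otimes 2k}$, a space of dimension $d^{2k}$, its rank is at most $d^{2k}$, hence $\mathrm{rank}(\Delta)^{1/2}\le d^k$ and therefore $\sqrt{\mathcal{F}^{(k)}_{\nu}-\mathcal{F}^{(k)}_{\mu_H}}=\norm{\Delta}_2\le d^k\norm{\Delta}_\infty$. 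The asserted consequence for approximate designs then follows simply by substituting the TPE hypothesis $\norm{\Delta}_\infty\le\varepsilon$. For the converse I would use the monotonicity $\norm{A}_q \le \norm{A}_p$ for $p\le q$, again with $p=2$ and $q=\infty$: this yields $\norm{\Delta}_\infty \le \norm{\Delta}_2 = \sqrt{\mathcal{F}^{(k)}_{\nu}-\mathcal{F}^{(k)}_{\mu_H}}$, which is exactly the second inequality, and the corresponding implication for designs follows in the same way.

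I do not anticipate a genuine obstacle here: once Lemma~\ref{le:frameNorm} converts the frame potential difference into $\norm{\Delta}_2$, everything collapses to the two inequalities relating the Hilbert-Schmidt and operator norms. The only point that warrants a moment's care is the rank bound in the first part, where one must note that the relevant ambient dimension is $d^{2k}$ rather than $d^k$, so that $\mathrm{rank}(\Delta)^{1/2}$ is controlled by $d^{k}$ and not $d^{k/2}$; getting this exponent right is what produces the stated factor of $d^k$.
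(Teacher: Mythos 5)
Your proposal is correct and follows essentially the same route as the paper: both reduce the statement via Lemma~\ref{le:frameNorm} to comparing $\norm{\Delta}_2$ and $\norm{\Delta}_\infty$, and your rank-based bound $\norm{\Delta}_2\le\mathrm{rank}(\Delta)^{1/2}\norm{\Delta}_\infty\le d^k\norm{\Delta}_\infty$ is just the paper's $\norm{A}_2\le\sqrt{D}\norm{A}_\infty$ with $D=d^{2k}$, while the converse uses the identical monotonicity $\norm{A}_\infty\le\norm{A}_2$.
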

\begin{proof}
By using Lemma~\eqref{le:frameNorm}, we have:
\begin{align}
    \sqrt{\mathcal{F}^{(k)}_{\nu}-\mathcal{F}^{(k)}_{\mu_H}}=\norm{\underset{V \sim \nu}{\mathbb{E}}\left[V^{\otimes k,k}\right]-\ExU \left[U^{\otimes k,k}\right]}_2.
\end{align}
Now the first claim follows just by using that $\norm{A}_2 \le \sqrt{D}\norm{A}_\infty$ for all $A\in \mathcal{L}(\mathbb{C}^D)$, while the second claim follows by $\norm{A}_\infty\le\norm{A}_2$. 
\end{proof}
In the following, we relate the diamond notion of approximation with the frame potential one.
\begin{proposition}[Diamond vs Frame potential]\
\begin{itemize}
    \item We have:
    \begin{align}
         \norm{\mathcal{M}^{(k)}_{\nu}-\mathcal{M}^{(k)}_{\mu_H}}_{\lozenge}\le d^k \sqrt{\mathcal{F}^{(k)}_{\nu}-\mathcal{F}^{(k)}_{\mu_H}} .
    \end{align}
    Hence, if $\nu$ is a frame potential $\varepsilon$-approximate $k$-design, then $\nu$ is a diamond $\varepsilon d^k $-approximate $k$-design.
    \item Conversely, we have:
\begin{align}
    \sqrt{\mathcal{F}^{(k)}_{\nu}-\mathcal{F}^{(k)}_{\mu_H}}\le d^{3k/2} \norm{\mathcal{M}^{(k)}_{\nu}-\mathcal{M}^{(k)}_{\mu_H}}_{\lozenge} .
\end{align}
    Hence, if $\nu$ is a diamond $\varepsilon$-approximate $k$-design, then $\nu$ is a frame potential $\varepsilon d^{3k/2}$-approximate $k$-design.
\end{itemize}
\end{proposition}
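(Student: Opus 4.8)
The plan is to reduce both bounds to elementary relations among the Schatten $\infty$- and $2$-norms of the single operator
\begin{align}
\Delta \coloneqq \underset{V \sim \nu}{\mathbb{E}}\left[V^{\otimes k,k}\right]-\ExU \left[U^{\otimes k,k}\right],
\end{align}
which lives in $\mathcal{L}\left((\mathbb{C}^d)^{\otimes 2k}\right)$, a space of dimension $D=d^{2k}$, so that $\sqrt{D}=d^k$. Three facts already established in the excerpt do all the work: the frame-potential identity $\norm{\Delta}_2=\sqrt{\mathcal{F}^{(k)}_{\nu}-\mathcal{F}^{(k)}_{\mu_H}}$ from Lemma~\ref{le:frameNorm}; the two inequalities of the Diamond-versus-TPE proposition, namely $\norm{\mathcal{M}^{(k)}_{\nu}-\mathcal{M}^{(k)}_{\mu_H}}_{\lozenge}\le d^k\norm{\Delta}_\infty$ and $\norm{\Delta}_\infty\le d^{k/2}\norm{\mathcal{M}^{(k)}_{\nu}-\mathcal{M}^{(k)}_{\mu_H}}_{\lozenge}$; and the Schatten-norm comparisons recorded in the preliminaries, $\norm{A}_\infty\le\norm{A}_2$ and $\norm{A}_2\le\sqrt{\mathrm{rank}(A)}\,\norm{A}_\infty\le\sqrt{D}\,\norm{A}_\infty$.

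For the first inequality I would chain these directly:
\begin{align}
\norm{\mathcal{M}^{(k)}_{\nu}-\mathcal{M}^{(k)}_{\mu_H}}_{\lozenge}\le d^k\norm{\Delta}_\infty\le d^k\norm{\Delta}_2=d^k\sqrt{\mathcal{F}^{(k)}_{\nu}-\mathcal{F}^{(k)}_{\mu_H}},
\end{align}
where the first step is Eq.~\eqref{eq:diamondTPE}, the second uses $\norm{\Delta}_\infty\le\norm{\Delta}_2$, and the last is Lemma~\ref{le:frameNorm}. This yields the claimed frame-potential-to-diamond bound, and the corresponding statement about $\varepsilon$-approximate designs follows by substituting the definitions.

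For the converse I would run the same chain in the opposite order, now paying the full dimensional price when passing from the $2$-norm up to the $\infty$-norm:
\begin{align}
\sqrt{\mathcal{F}^{(k)}_{\nu}-\mathcal{F}^{(k)}_{\mu_H}}=\norm{\Delta}_2\le d^k\norm{\Delta}_\infty\le d^{k}\cdot d^{k/2}\norm{\mathcal{M}^{(k)}_{\nu}-\mathcal{M}^{(k)}_{\mu_H}}_{\lozenge}=d^{3k/2}\norm{\mathcal{M}^{(k)}_{\nu}-\mathcal{M}^{(k)}_{\mu_H}}_{\lozenge},
\end{align}
using Lemma~\ref{le:frameNorm}, then $\norm{\Delta}_2\le\sqrt{D}\,\norm{\Delta}_\infty=d^k\norm{\Delta}_\infty$, then Eq.~\eqref{eq:TPEdiamond}.

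The computation is routine once the ingredients are assembled; there is no genuine analytic obstacle. The only point requiring care is the dimensional bookkeeping: because $\Delta$ acts on $(\mathbb{C}^d)^{\otimes 2k}$ the relevant dimension is $D=d^{2k}$ rather than $d^k$, so the lossy step $\norm{\Delta}_2\le\sqrt{D}\,\norm{\Delta}_\infty$ contributes exactly a factor $d^k$; combined with the $d^{k/2}$ coming from Eq.~\eqref{eq:TPEdiamond} this produces the final $d^{3k/2}$ prefactor. Reversing the direction of either Schatten inequality, or misreading $D$ as $d^k$, would spoil the exponents, so I would double-check that each norm comparison is applied in the direction for which it is valid.
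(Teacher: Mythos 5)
Your proposal is correct and follows essentially the same route as the paper: both directions are obtained by chaining Lemma~\ref{le:frameNorm} with the two inequalities of the Diamond-vs-TPE proposition and the Schatten-norm comparisons $\norm{A}_\infty\le\norm{A}_2\le\sqrt{D}\,\norm{A}_\infty$, with the same dimensional bookkeeping $D=d^{2k}$ yielding the $d^{3k/2}$ prefactor. No gaps.
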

\begin{proof} 
Using Eq.\eqref{eq:diamondTPE}, we have:
    \begin{align}
        \norm{\mathcal{M}^{(k)}_{\nu}-\mathcal{M}^{(k)}_{\mu_H}}_{\lozenge}  \le d^k \norm{\underset{V \sim \nu}{\mathbb{E}}\left[V^{\otimes k,k}\right]-\ExU \left[U^{\otimes k,k}\right]}_\infty\le d^k \sqrt{\mathcal{F}^{(k)}_{\nu}-\mathcal{F}^{(k)}_{\mu_H}},
    \end{align}
where in the last step we used that $\norm{A}_\infty \le \norm{A}_2$ and Eq.\eqref{le:frameNorm}. Therefore we have shown the first claim.
For the second claim, we have: 
\begin{align}
    \sqrt{\mathcal{F}^{(k)}_{\nu}-\mathcal{F}^{(k)}_{\mu_H}}\le \sqrt{d^{2k}}\norm{\underset{V \sim \nu}{\mathbb{E}}\left[V^{\otimes k,k}\right]-\ExU \left[U^{\otimes k,k}\right]}_\infty \le d^k d^{k/2} \norm{\mathcal{M}^{(k)}_{\nu}-\mathcal{M}^{(k)}_{\mu_H}}_{\lozenge} ,
\end{align}
where in the first step we used Lemma~\ref{le:frameNorm} and that $\norm{A}_2 \le \sqrt{D}\norm{A}_\infty$ for $A\in \mathcal{L}(\mathbb{C}^D)$, while in the second step we used Eq.\eqref{eq:TPEdiamond}.
\end{proof}
Another notion of unitary design approximation was introduced in \cite{Brand_o_2016}, which we refer to as relative error $\varepsilon$-approximate $k$-design. This is defined as follows:
\begin{definition}[Relative error $\varepsilon$-approximate $k$-design]\
We say that $\nu$ is a relative error $\varepsilon$-approximate $k$-design if and only if 
    \begin{align}
        (1-\varepsilon)\mathcal{M}^{(k)}_{\nu}\preccurlyeq  \mathcal{M}^{(k)}_{\mu_H} \preccurlyeq (1+\varepsilon)\mathcal{M}^{(k)}_{\nu},
    \end{align}
    where $A \preccurlyeq B$ means that $B-A$ is completely positive, and $A$ and $B$ are linear superoperators.
\end{definition}
For further details on this approximation notion, see \cite{Brand_o_2016}.
In \cite{Brand_o_2016,Haferkamp2022randomquantum,Harrow_2009}, it was shown that one-dimensional local random quantum circuits, which are quantum circuits of $n$ qubits formed by $2$-qubit Haar random gates, of size $O\left(n^2 \mathrm{poly}(k)\right)$ are (relative error) $\varepsilon$-approximate $k$-designs for all $k \in O\left(2^{0.4n}\right)$. This result has been extended to higher dimensional quantum circuits in \cite{Harrow_2023}.
\section{Examples and applications}
\label{sec:applications}
This section explores diverse examples and applications where the Haar measure plays a fundamental role in quantum information. We derive well-known formulas that reduce to computing moments over the Haar measure, including the twirling of quantum channels and the average gate fidelity. These formulas lay the foundation for various applications, such as Randomized Benchmarking \cite{Knill_2008}.

Moreover, we show how these moment calculations can be translated into probability statements using concentration inequalities. Concentration inequalities serve as valuable tools for establishing rigorous bounds and enhancing our understanding of the statistical behavior associated with Haar random variables.

Furthermore, we provide detailed insights into two notable examples showcasing the applications of the theory of unitary design. We examine Barren Plateaus \cite{McClean_2018} in Variational Quantum Algorithms, shedding light on the optimization landscapes encountered in such algorithms. Additionally, we delve into Classical Shadow tomography \cite{Huang_2020}, where the theory of unitary design aids in designing efficient measurement strategies for reconstructing properties of unknown quantum states.
\subsection{Examples of moment calculations}
We start by deriving a formula that plays an important role in quantum information, in particular in Randomized Benchmarking \cite{Knill_2008,Emerson_2005}. This formula illustrates that when a unitary operator $U$, randomly chosen from a $2$-design, is applied before a quantum channel $\Phi$, followed by the application of its adjoint $U^\dagger$, the resulting output channel resembles a depolarizing channel.
\begin{example} [Twirling of a quantum channel is a depolarizing channel]
Let $\nu$ a unitary $2$-design distribution. Consider a quantum channel $\Phi: \mathcal{L}(\mathbb{C}^{d})\rightarrow \mathcal{L}(\mathbb{C}^{d})$ and a quantum state $\rho \in \mathcal{S}\left(\mathbb{C}^d\right)$. Then:
\begin{align}
\underset{U \sim \nu}{\mathbb{E}} \left[U^\dagger \Phi \left(U\rho U^\dagger\right)U\right]&=p_{\Phi}\rho + \left(1-p_{\Phi}\right)\Tr(\rho)\frac{I}{d},
\label{eq:twirling}
\end{align}
where the left-hand side represents the so-called \textquote{twirling} of $\Phi$, and we define:
\begin{align}
    p_{\Phi}\coloneqq \frac{d^2 F_e\left( \Phi \right) -1}{d^2-1}.
    \label{eq:ptwirl}
\end{align}
Here, $F_e\left( \Phi \right)$ denotes the entanglement fidelity given by $F_e\left( \Phi \right)\coloneqq \frac{1}{d^2}\bra{\Omega}\Phi \otimes \mathcal{I}(\ketbra{\Omega}{\Omega})\ket{\Omega}$, where $\mathcal{I}:\mathcal{L}(\mathbb{C}^{d})\rightarrow \mathcal{L}(\mathbb{C}^{d})$ represents the identity channel.
\end{example}
\begin{proof}
Due to the $2$-design property of $\nu$ and the fact that the quantity of interest is a second moment quantity, we can average over the Haar measure $\mu_H$ instead of $\nu$.
Considering a Kraus decomposition for $\Phi$ with operators $\{K_i\}^{d^2}_{i=1}$, we have: 
\begin{align}
\ExU \left[U^\dagger \Phi \left(U\rho U^\dagger\right)U\right]&=\sum^{d^2}_{i=1}\ExU \left[U^\dagger K_i U\rho U^\dagger K^\dagger_i U\right]\\&=\sum^{d^2}_{i=1}\ExU \Tr_2\left[(I\otimes \rho)U^{\dagger \otimes 2} (K_i\otimes K^\dagger_i) U^{\otimes 2} \mathbb{F}\right]\\
&=\sum^{d^2}_{i=1} \Tr_2\left[(I\otimes \rho)\ExU\left(U^{ \otimes 2} (K_i\otimes K^\dagger_i) U^{\dagger\otimes 2}\right) \mathbb{F}\right],
\end{align}
where in the second equality we used that $AB=\Tr_2\left(A\otimes B \,\mathbb{F}\right)$, and in the third equality that $\ExU \left[f(U)\right]=\ExU \left[f(U^\dagger)\right]$ (Proposition \ref{prop:Haar}) for all integrable functions $f$.
Using Eq.\eqref{eq:2momHaar} for the second moment, we have:
\begin{align}
\ExU\left[U^{ \otimes 2} \left(\sum^{d^2}_{i=1}K_i\otimes K^\dagger_i\right) U^{\dagger\otimes 2}\right]=c_{\Idd}\Idd + c_{\Flip}\Flip,    
\end{align} where the coefficients $c_{\mathbb{I}}$ and $c_{\mathbb{F}}$ are given by:
\begin{align}
c_{\mathbb{I}}&=\frac{\sum^{d^2}_{i=1}\Tr\!\left(K_i\otimes K^\dagger_i\right)-d^{-1}\Tr\!\left(\sum^{d^2}_{i=1}K_i\otimes K^\dagger_i\mathbb{F}\right)}{d^2-1}=\frac{\sum^{d^2}_{i=1}|\Tr\!\left(K_i\right)|^2-1}{d^2-1}\\
c_{\mathbb{F}}&=\frac{\Tr\!\left(\mathbb{F}\sum^{d^2}_{i=1}K_i\otimes K^\dagger_i\right)-d^{-1}\Tr\!\left(\sum^{d^2}_{i=1}K_i\otimes K^\dagger_i\right)}{d^2-1}=\frac{d-d^{-1}\sum^{d^2}_{i=1}\left|\Tr\!\left(K_i\right)\right|^2}{d^2-1},
\end{align}
where we used the \emph{swap-trick} and the fact that $\sum^{d^2}_{i=1} K^\dagger_i K_i = I$.
We observe that $c_\Flip=d^{-1}(1-c_{\Idd})$. Therefore:
\begin{align}
\ExU \left[U^\dagger \Phi \left(U\rho U^\dagger\right)U\right]&= \Tr_2\left[(I\otimes \rho)\left(c_{\Idd,}\Idd + c_{\Flip}\Flip\right) \mathbb{F}\right]\\
&= c_{\Idd}\Tr_2\left((I\otimes \rho)\Flip\right) + \frac{1}{d}(1-c_{\Idd})\Tr_2(\left(I\otimes \rho)\right)\\
&= c_{\Idd}\rho + (1-c_{\Idd})\Tr\!\left(\rho\right)\frac{I}{d}\,.
\end{align}

To conclude the proof, we observe that $c_{\Idd}=p_{\Phi}$, as defined in Eq.\eqref{eq:ptwirl}. This follows from the relationship $\sum^{d^2}_{i=1}\left|\Tr\!\left(K_i\right)\right|^2=d^2 F_e\left( \Phi \right)$, as it can be easily seen: 
\begin{align}
\label{eq:ent_fid}
F_e\left(\Phi\right)&\coloneqq \frac{1}{d^2}\bra{\Omega}\Phi \otimes \mathcal{I}(\ketbra{\Omega}{\Omega})\ket{\Omega}\\
     &=\sum^{d^2}_{i=1}\frac{1}{d^2}\bra{\Omega}  K_i\otimes I \ket{\Omega} \bra{\Omega}K^\dagger_i  \otimes I\ket{\Omega}\\&=\frac{1}{d^2}\sum^{d^2}_{i=1}\left|\Tr\!\left(K_i\right)\right|^2.
\end{align}
It is worth noting that this equality can also be understood visually with the help of diagrams.
\end{proof}
It should be noted that the above formula, along with many other derivations in this tutorial, is based on the calculation of the moment operator that depends on the commutant of the unitary group. However, if we were to compute the Haar expected value defined over another subgroup on the unitary group instead on the full unitary group, we could still use the same approach by characterizing the commutant of this subgroup.

The following example introduces the concept of average gate fidelity, a measure used to assess the quality of quantum gates \cite{PhysRevA.60.1888,Nielsen_2002}.
\begin{example}[Average gate fidelity]
    Let $\nu$ be a state $2$-design distribution. Consider a quantum channel $\Phi: \mathcal{L}(\mathbb{C}^{d})\rightarrow \mathcal{L}(\mathbb{C}^{d})$ and a unitary channel $\mathcal{U}\left(\cdot\right)=U\left(\cdot\right)U^\dagger$. Then, the average gate fidelity is given by:
    \begin{align}
        \underset{\ket{\psi} \sim \nu}{\mathbb{E}} 
 \left[ \bra{\psi}\mathcal{U}^{\dagger} \circ \Phi \left(\ketbra{\psi}{\psi}\right)\ket{\psi}\right]=\frac{dF_e\left(\mathcal{U}^\dagger \circ \Phi \right)+1}{d+1},
    \end{align}
    where $\mathcal{U}^\dagger\left(\cdot\right)=U^\dagger\left(\cdot\right)U$ represents the adjoint channel of $\mathcal{U}$, and $F_e\left( \Phi \right)\coloneqq \frac{1}{d^2}\bra{\Omega}\Phi \otimes \mathcal{I}(\ketbra{\Omega}{\Omega})\ket{\Omega}$ corresponds to the \emph{entanglement-fidelity}.
\end{example}
\begin{proof}
It is worth noting that this formula can be easily derived also from the previous Eq.~\ref{eq:twirling} without the need to explicitly compute any additional Haar moment. However, for the sake of completeness, we will still perform the derivation without using Eq.~\ref{eq:twirling}.
We have: 
\begin{align}
\underset{\ket{\psi} \sim \nu}{\mathbb{E}}  \left[ \bra{\psi}U^\dagger \Phi \left(\ketbra{\psi}{\psi}\right)U\ket{\psi}\right]&=\sum^{d^2}_{i=1} \Expsi\left[ \Tr\!\left(\ketbra{\psi}{\psi}U^\dagger K_i \ketbra{\psi}{\psi}K^\dagger_i U\right)\right]\\
&=\sum^{d^2}_{i=1} \Expsi\left[\Tr\!\left(\ketbra{\psi}{\psi}^{\otimes 2}\left(U^\dagger K_i \otimes K^\dagger_i U\right)\mathbb{F}\right)\right]\\
&= \Tr\!\left(\frac{\mathbb{I}+\mathbb{F}}{d(d+1)}\sum^{d^2}_{i=1}\left(U^\dagger K_i \otimes K^\dagger_i U\right)\mathbb{F}\right)\\
&=\frac{1}{d(d+1)}\left(\sum^{d^2}_{i=1}\Tr\!\left(U^\dagger K_i K^\dagger_i U\right)+ \sum^{d^2}_{i=1}\Tr\!\left(U^\dagger K_i \otimes K^\dagger_i U \right)\right)\\
&=\frac{1}{d(d+1)}\left(d+ \sum^{d^2}_{i=1}\left|\Tr\!\left(U^\dagger K_i\right)\right|^2\right),
\end{align}
where we used in the first equality the Kraus decomposition of $\Phi$ and the fact that $\nu$ is a $2$-design, in the second equality the \emph{swap-trick}, in the third equality we used linearity of the expected value and the moment operator formula \eqref{eq:MomPermSym}, in the fourth equality again the \emph{swap-trick}, in the fifth equality the fact that $\sum^{d^2}_{i=1}K^\dagger_i K_i= I$.
Finally, we observe that $F_e\left(\mathcal{U}^\dagger\circ\Phi\right)=\frac{1}{d^2}\sum^{d^2}_{i=1}\left|\Tr\!\left(U^\dagger K_i\right)\right|^2$ (see Eq.\eqref{eq:ent_fid}), which concludes the proof.

\end{proof}

The following example considers a bipartite state $\rho\in \mathcal{S}\left(\mathbb{C}^d\otimes \mathbb{C}^d\right)$ shared between Alice and Bob, who apply unitary operations $U$ and $U^*$ to their respective subsystems, where $U$ is sampled according to a 2-design \cite{horodecki1998reduction}. The resulting output state can be expressed as a combination of the maximally mixed state and the maximally entangled state $d^{-1}\ketbra{\Omega}{\Omega}$, and the coefficients of this combination depend on the overlap of the state $\rho$ with the maximally entangled state $d^{-1}\ketbra{\Omega}{\Omega}$.

\begin{example}\
Let $\nu$ be a unitary $2$-design distribution, and $\rho \in \mathcal{S}\left(\mathbb{C}^d\otimes \mathbb{C}^d\right)$ be a quantum state.
    \begin{align}
        \underset{U \sim \nu}{\mathbb{E}}\left[U\otimes U^* \rho U^\dagger \otimes U^{*\dagger}\right]=p_\rho \frac{\Idd}{d^2} + \left(1-p_\rho\right)\frac{\ketbra{\Omega}{\Omega}}{d},
    \end{align}
    where $p_\rho\coloneqq d^2\frac{\left(1-d^{-1}\bra{\Omega}\rho\ket{\Omega}\right)}{d^2-1}$ with $p_\rho\in \left[0,\frac{d^2}{d^2-1}\right]$ .
\end{example}
\begin{proof}
By considering the partial transpose over the second subsystem and the 2-design formula (Eq.\eqref{eq:2momHaar}), we obtain:
    \begin{align}
        \underset{U \sim \nu}{\mathbb{E}}\left[U\otimes U^* \rho U^\dagger \otimes U^{*\dagger}\right]=\ExU\left[U^{\otimes 2} \rho^{T_B} U^{\dagger \otimes 2}\right]^{T_B}=c_{\Idd,\rho^{T_B}} \Idd^{T_B} + c_{\Flip,\rho^{T_B}}\Flip^{T_B}=c_{\Idd,\rho^{T_B}} \Idd + c_{\Flip,\rho^{T_B}}\ketbra{\Omega}{\Omega},
    \end{align}
    with 
    \begin{align}
    c_{\Idd,\rho^{T_B}}=\frac{\Tr\!\left(\rho^{T_B}\right)-d^{-1}\Tr\!\left(\rho^{T_B}\mathbb{F}\right)}{d^2-1}=\frac{1-d^{-1}\bra{\Omega}\rho\ket{\Omega}}{d^2-1},\end{align} and 
    \begin{align}
    c_{\Flip,\rho^{T_B}}=\frac{\Tr\!\left(\mathbb{F}\rho^{T_B}\right)-d^{-1}\Tr\!\left(\rho^{T_B}\right)}{d^2-1}=\frac{\bra{\Omega}\rho\ket{\Omega}-d^{-1}}{d^2-1}.
    \end{align}
    Noting that $c_{\Idd,\rho^{T_B}}=p_\rho d^{-2} $ and  $c_{\Flip,\rho^{T_B}}=(1-p_\rho) d^{-1} $, we obtain the desired result.
\end{proof}
Now, let us consider the problem of evaluating the expected purity of the reduced state obtained from a pure state distributed according to a $2$-design distribution \cite{Lubkin}.
\begin{example}[Purity]
\label{ex:purityavg}
    Consider the complex Hilbert space of two-qudit systems $\mathcal{H}_A\otimes \mathcal{H}_B$ of dimensions respectively $d_A=\mathrm{dim}(\mathcal{H}_A)$ and $d_B=\mathrm{dim}(\mathcal{H}_B)$. 
    Given $\ket{\psi} \in \mathcal{H}_A\otimes \mathcal{H}_B$, let $\rho_A\coloneqq \Tr_B\left(\ketbra{\psi}{\psi}\right)$. We have:
    \begin{align}
        \underset{\ket{\psi} \sim \nu}{\mathbb{E}} \Tr\!\left(\rho_A^2\right)=\frac{d_B + d_A}{d_A d_B+1},
    \end{align}
    where $\nu$ is a $2$-design distribution.
    \begin{proof}
    We can express the expected value as follows:
        \begin{align}
        \underset{\ket{\psi} \sim \nu}{\mathbb{E}} \Tr\!\left(\rho_A^2\right)=\Expsi \Tr\!\left(\rho_A^{\otimes 2} \Flip_A \right)=\underset{\ket{\psi} \sim \nu}{\mathbb{E}} \Tr\!\left(\ketbra{\psi}{\psi}^{\otimes{2}}  \left(\Flip_A \otimes \Idd_{B}\right) \right),
    \end{align}
    where in the first equality we used the \emph{swap-trick} on the two copies of the subsystem $A$ and denoted with the subscript $A$ the Flip operator acting on such two copies, and in the second equality we used that $\rho_A=\Tr_B(\ketbra{\psi}{\psi} \Idd_A \otimes \Idd_B)$. In tensor diagrams this is also clear to visualize. 
    Since $\nu$ is a $2$-design, we have:
    \begin{align}
        \underset{\ket{\psi} \sim \nu}{\mathbb{E}} \Tr\!\left(\rho_A^2\right)&= \Tr\!\left(\Expsi\left[\ketbra{\psi}{\psi}^{\otimes{2}}\right]  \left(\Flip_A \otimes \Idd_{B}\right) \right)=\Tr\!\left(\left[\frac{\Idd_A \otimes \Idd_B + \Flip_A \otimes \Flip_B}{d_A d_B(d_A d_B+1)}\right]  \left(\Flip_A \otimes \Idd_{B}\right) \right)\\
        &=\frac{1}{d_A d_B(d_A d_B+1)}\left( d_A d^2_B + d^2_A d_B  \right)=\frac{d_B + d_A}{d_A d_B+1} .
    \end{align}
    \end{proof}
\end{example}
Now let us consider a system like in the previous example, where $d_A\le d_B$.
Since the entanglement entropy $S\left(\rho_A\right)\coloneqq -\Tr\!\left(\rho_A\log_2\rho_A\right)$ is lower bounded by the 2-Renyi entropy $S_2\left(\rho_A\right)\coloneqq -\log_2\left(\Tr\!\left(\rho^2_A\right)\right)$ (Fact 1 in \cite{ram2016renyi}), we have:
\begin{align}
    \underset{\ket{\psi} \sim \nu}{\mathbb{E}}  S\!\left(\rho_A\right)\ge \underset{\ket{\psi} \sim \nu}{\mathbb{E}}  S_2\!\left(\rho_A\right)=-\underset{\ket{\psi} \sim \nu}{\mathbb{E}} \log_2\!\left(\Tr\!\left(\rho^2_A\right)\right)\ge - \log_2\!\left(\underset{\ket{\psi} \sim \nu}{\mathbb{E}}\Tr\!\left(\rho^2_A\right)\right),
\end{align}
where we used Jensen's inequality with the fact that $-\log_2(x)$ is a convex function. By using the result from the previous Example~\ref{ex:purityavg}, we have:
\begin{align}
   \underset{\ket{\psi} \sim \nu}{\mathbb{E}} S\!\left(\rho_A\right)\ge  \log_2\!\left(\frac{d_A d_B+1}{d_B + d_A}\right)\ge \log_2\!\left(d_A\right)-\log_2\!\left(\frac{d_B + d_A}{d_B}\right) \ge \log_2\!\left(d_A\right) - 1,
\end{align}
Since $S\left(\rho_A\right)\le \log_2(d_A)$, we conclude that the expected value of the entanglement entropy of the reduced state of a pure state distributed according to a $2$-design is close to its maximum value.
However, it is important to note that for finite system sizes, there exists a gap between the exact average entanglement entropy and the maximum value, as demonstrated by Page \cite{Page_1993,Hayden_2006}:
\begin{equation}
  \underset{\ket{\psi} \sim \nu}{\mathbb{E}}  S\!\left(\rho_A\right)=\frac{1}{\ln(2)}\left(-\frac{d_A-1}{2 d_B}+\sum_{k=d_B+1}^{d_A d_B} \frac{1}{k} \right) >  \log_2\!\left(d_A\right)-\frac{1}{\ln(2)}\frac{d_A}{2d_B}.
\end{equation}
This value is commonly referred to as the \emph{Page entropy}.
For a more in-depth analysis, see \cite{Liu_2018}, where higher Renyi-entropies are also discussed.

\begin{example}[Expectation value]
\label{ex:expvalue}
Consider a unitary $k$-design distribution denoted by $\nu$. Let $O\in \MatC{d}$ be an operator. Then:
\begin{align}
    \underset{\ket{\psi} \sim \nu}{\mathbb{E}} \bra{\psi} O \ket{\psi}^{ k}=\frac{1}{\binom{k+d-1}{k}}\Tr\left[O^{\otimes k}P^{(d,k)}_{\mathrm{sym}} \right].
\end{align}
In the case of $k=2$, it simplifies to:
\begin{align}
    \underset{\ket{\psi} \sim \nu}{\mathbb{E}} \bra{\psi} O \ket{\psi}^{ 2}=\frac{1}{d\left(d+1\right)}\left( \Tr\!\left(O\right)^2+\Tr\!\left(O^2\right)\right).
\end{align}
\end{example}
\begin{proof}
This follows easily from Eq.\eqref{eq:MomPermSym}:
    \begin{align}
    \underset{\ket{\psi} \sim \nu}{\mathbb{E}} \bra{\psi} O \ket{\psi}^{ k}&=\Tr\left[O^{\otimes k}  \underset{\ket{\psi} \sim \nu}{\mathbb{E}}\left(\ketbra{\psi}{\psi}\right)^{\otimes k} \right]\\
    &=\frac{1}{\binom{k+d-1}{k}}\Tr\left[O^{\otimes k}P^{(d,k)}_{\mathrm{sym}} \right].
\end{align}
If $k=2$, then $P^{(d,k)}_{\mathrm{sym}}=\frac{1}{2}\left(\mathbb{I}+\mathbb{F}\right)$. Using that $\Tr\left[O^{\otimes 2} \right]=\Tr\left[O\right]^2$ and $\Tr\left[O^{\otimes 2} \mathbb{F}\right]=\Tr\left[O^2\right]$ (\emph{swap-trick}), leading to the desired expression.
\end{proof}

Applying Example \ref{ex:expvalue} with $O=\ketbra{\phi}{\phi}$ with $\ket{\phi}\in \mathbb{C}^d$ and noting that $\ketbra{\phi}{\phi}^{\otimes k}P^{(d,k)}_{\mathrm{sym}}=\ketbra{\phi}{\phi}^{\otimes k}$, we have the following result.

\begin{example}
\label{cor:overlap}
    Consider a unitary $k$-design distribution denoted by $\nu$.
    Let $\ket{\phi}\in \mathbb{C}^d$. The following holds:
\begin{align}
    \underset{\ket{\psi} \sim \nu}{\mathbb{E}} \left[\left|\braket{\psi}{\phi}\right|^{2k}\right]=\frac{1}{\binom{k+d-1}{k}}\le \frac{k!}{d^k}.
\end{align}
In particular, for $k=1$, we have $\underset{\ket{\psi} \sim \nu}{\mathbb{E}}\left[\left|\braket{\psi}{\phi}\right|^2\right]=\frac{1}{d}$. 
\end{example}

\subsection{Concentration inequalities}
By utilizing concentration inequalities, such as Markov's inequality, it is possible to establish a connection between the exponential decay of the expected value with the number of qubits and a probabilistic statement.
Markov's inequality states that for a non-negative random variable $X$ and any $\varepsilon > 0$, the probability that $X$ exceeds $\varepsilon$ is bounded by the ratio of the expected value of $X$ to $\varepsilon$:
\begin{align}
    \mathrm{Prob}\left(X\ge\varepsilon\right)\le \frac{\mathbb{E}\left[X\right]}{\varepsilon}.
\end{align}
In a more general form, if $g$ is a strictly increasing non-negative function, the inequality can be expressed as:
\begin{align}
    \mathrm{Prob}\left(X\ge\varepsilon\right)\le \frac{\mathbb{E}\left[g(X)\right]}{g(\varepsilon)}.
\end{align}
To do an example, consider $P \in \{I,X,Y,Z\}^{\otimes n} \backslash \{I^{\otimes n}\}$ where $X,Y,Z$ are Pauli matrices.
By applying Markov's inequality and utilizing Example \eqref{ex:expvalue}, we find that when sampling a (Haar) random state $\ket{\psi}$, the probability that the expectation value of $P$ exceeds a threshold $\varepsilon > 0$ decays exponentially with the number of qubits $n$:
\begin{align}
    \underset{\ket{\psi}\sim \mu_H}{\mathrm{Prob}}\left(\left|\bra{\psi}P\ket{\psi}\right|\ge\varepsilon\right)=\underset{\ket{\psi}\sim \mu_H}{\mathrm{Prob}}\left(\left|\bra{\psi}P\ket{\psi}\right|^2\ge\varepsilon^2\right)\le \frac{1}{(2^n+1)\varepsilon^2}.
    \label{eq:markovPauli}
\end{align}
The same holds if the expected value is taken with respect to a state $2$-design.

Another concentration inequality that can be particularly useful when analyzing the averages over Haar-random states is Levy's lemma.

\begin{lemma}[Levy's lemma \cite{Ledoux2001}]
Consider the set $\mathbb{S}^{2 d-1}\coloneqq \{v\in \mathbb{C}^{d} :\|v\|_2=1 \}$. Let $f: \mathbb{S}^{2 d-1} \rightarrow \mathbb{R}$ be a function satisfying the Lipschitz condition $|f(v)-f(w)| \leq L\|v-w\|_2$. For all $\varepsilon \geq 0$, we have the probability bound:
\begin{align}
\underset{\ket{\phi}\sim \mu_H}{\operatorname{Prob}}\left[\left|f\left(\phi\right)-\Expsi\left[f\left(\psi\right)\right]\right| \geq \varepsilon\right] \leq 2 \exp \left(-\frac{2 d \varepsilon^2}{9 \pi^3 L^2}\right).
\end{align}
\end{lemma}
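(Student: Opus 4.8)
The plan is to recognize this as the standard concentration-of-measure phenomenon on the Euclidean sphere, transported to the complex setting. First I would identify $\mathbb{S}^{2d-1}=\{v\in\mathbb{C}^d:\|v\|_2=1\}$ with the real unit sphere $S^{2d-1}\subset\mathbb{R}^{2d}$ by writing each complex coordinate as $v_j=x_j+iy_j$ and mapping $v\mapsto(x_1,y_1,\dots,x_d,y_d)$. This map is an isometry for the Euclidean $2$-norm, and the Haar measure on pure states $\mu_H$ pushes forward exactly to the rotation-invariant uniform probability measure $\sigma$ on $S^{2d-1}$ (both are the unique measures invariant under the action of the orthogonal group $O(2d)$). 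Hence it suffices to prove the real statement: for an $L$-Lipschitz $f:S^{N-1}\to\mathbb{R}$ with $N=2d$, one has $\sigma(|f-\mathbb{E}f|\ge\varepsilon)\le 2\exp(-cN\varepsilon^2/L^2)$ for an appropriate constant $c$.

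The engine of the proof is the spherical isoperimetric inequality (Levy--Schmidt), which I would invoke as a black box: among all measurable subsets of $S^{N-1}$ of a given measure, geodesic caps minimize the measure of their $t$-neighborhoods. I would first prove concentration around a median $M$ of $f$ rather than the mean. Taking $A=\{v:f(v)\le M\}$, so that $\sigma(A)\ge 1/2$, the isoperimetric inequality bounds the complement of the geodesic $t$-neighborhood $A_t$ by $\sigma(S^{N-1}\setminus A_t)\le \exp(-(N-1)t^2/2)$. Since the chordal distance is dominated by the geodesic distance, any $v\in A_t$ admits some $w\in A$ with $\|v-w\|_2\le t$, whence the Lipschitz bound gives $f(v)\le M+Lt$; the symmetric argument applied to $\{f\ge M\}$ controls the lower tail. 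Setting $\varepsilon=Lt$ then yields a two-sided concentration bound around the median.

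Finally I would convert the median bound into a mean bound by estimating $|M-\mathbb{E}f|$. Integrating the median tail bound, $|\mathbb{E}f-M|\le\mathbb{E}|f-M|=\int_0^\infty\sigma(|f-M|\ge s)\,ds$, which is of order $L/\sqrt{N}$; feeding this shift back into the median estimate and re-optimizing the exponent absorbs it at the cost of degrading the constant $c$. The main obstacle is not any single step but the careful constant-chasing through the isoperimetric estimate, the chord-versus-arc comparison, and the median-to-mean passage required to land precisely on the stated value $c=1/(9\pi^3)$; the genuine geometric content is entirely packaged inside the cited isoperimetric inequality. For a tutorial it would also be legitimate to quote Levy's lemma in its real form directly from \cite{Ledoux2001} and carry out only the complex-to-real reduction explicitly.
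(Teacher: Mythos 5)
The paper does not prove this lemma: it is imported verbatim as a black box from the cited reference (Ledoux), and the text immediately proceeds to apply it. So there is no in-paper argument to compare against, and your last sentence --- that for a tutorial one may simply quote the real-sphere version and do the complex-to-real reduction --- is in fact exactly what the paper does (it does not even spell out the reduction). Your sketch itself is the standard and correct proof route: isometric identification of $\{v\in\mathbb{C}^d:\|v\|_2=1\}$ with $S^{2d-1}\subset\mathbb{R}^{2d}$, pushforward of the Haar measure to the uniform measure, the L\'evy--Schmidt isoperimetric inequality for spherical caps, concentration about a median via the chord-versus-geodesic comparison, and the median-to-mean correction of order $L/\sqrt{N}$. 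Two small points deserve care if you were to write it out in full. First, the pushforward of Haar measure is a priori only $\Ug(d)$-invariant, not manifestly $\O(2d)$-invariant; the identification with the uniform measure follows from uniqueness of the invariant probability measure for the \emph{transitive} $\Ug(d)$-action on the sphere (and $\Ug(d)\subseteq \O(2d)$ then gives agreement with the rotation-invariant measure), rather than from both being $\O(2d)$-invariant as stated. Second, the specific constant $2/(9\pi^3)$ in the exponent is not forced by the outline --- it arises from one particular bookkeeping of the isoperimetric constant, the arc-to-chord inequality, and the median-to-mean shift (the version standard in the quantum information literature); your sketch honestly flags this, but as written it establishes the bound only up to an unspecified absolute constant, which is the one gap between the proposal and the precise statement.
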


Using Levy's lemma, we can improve the dependence on the number of qubits $n$ to a double exponential form compared to what we obtained by a simple application of Markov's inequality in Eq.~\eqref{eq:markovPauli}, as shown in the next example. However, this double exponential form comes at the cost of requiring the state to be drawn from the Haar measure rather than a state $2$-design.
\begin{example}
Let $O\in \mathrm{Herm}\left(\mathbb{C}^d\right)$ be a Hermitian operator. For all $\varepsilon \geq 0$, we have:
\begin{align}
    \underset{\ket{\psi}\sim \mu_H}{\mathrm{Prob}}\left(\left|\bra{\psi}O\ket{\psi}-\frac{\Tr(O)}{d}\right|\ge\varepsilon\right)\le 2 \exp \left(-\frac{ d \varepsilon^2}{18 \pi^3  \norm{O}^2_\infty}\right).
\end{align}
In particular, if $O$ is a Pauli string $P\in \{I,X,Y,Z\}^{\otimes n} \backslash \{I^{\otimes n}\}$, we have:
\begin{align}
    \underset{\ket{\psi}\sim \mu_H}{\mathrm{Prob}}\left(\left|\bra{\psi}P\ket{\psi}\right|\ge\varepsilon\right)\le 2 \exp \left(-\frac{ 2^n \varepsilon^2}{18 \pi^3}\right).
\end{align}
\end{example}
\begin{proof}
To apply Levy's lemma, we consider the function $f(\psi) = \bra{\psi}O\ket{\psi}$ and compute its expected value and Lipschitz constant.
First, we observe that
\begin{align}
\Expsi[f(\psi)]=\Expsi[\bra{\psi}O\ket{\psi}]=\Tr[O\Expsi\ketbra{\psi}{\psi}]=\frac{\Tr(O)}{d}.    
\end{align}

Next, we determine the Lipschitz constant. We have
\begin{align}
    |f(v)-f(w)| =\left|\Tr\left[\left(\ketbra{u}{u}-\ketbra{v}{v}\right)O\right]\right|\le \|O\|_\infty\|\ketbra{u}{u}-\ketbra{v}{v}\|_1
\end{align}
where we used the matrix Hölder inequality. We then observe that:
    \begin{align}
        \norm{\ketbra{u}{u}-\ketbra{v}{v}}_1=2 \sqrt{1-\left|\braket{u}{v}\right|^2}.
    \end{align}
This can be seen by considering $Q\coloneqq \ketbra{u}{u}-\ketbra{v}{v}$.
The Hermitian matrix $Q$ has a rank of at most $2$, which means it can have at most two non-zero eigenvalues denoted as $\lambda_1$ and $\lambda_2$. Since the trace of $Q$ is zero, we have $\lambda_2 = -\lambda_1$. Additionally, we know that $\Tr(Q^2) = \lambda_1^2 + \lambda_2^2 = 2\lambda_1^2$, and $\Tr(Q^2) = 2(1 - \left|\braket{u}{v}\right|^2)$. Therefore, we can conclude that $\lambda_1 = \sqrt{1 - \left|\braket{u}{v}\right|^2}$.
The $1$-norm of $Q$ is given by $\norm{Q}_1 = |\lambda_1|+|\lambda_2|$, which simplifies to $\norm{Q}_1=2\sqrt{1-\left|\braket{u}{v}\right|^2}$.

Moreover, we have the following inequality: 
\begin{align}
        2\sqrt{1-\left|\braket{u}{v}\right|^2}\le2\sqrt{2-2\left|\braket{u}{v}\right|}\le2\sqrt{2-2\Re(\braket{u}{v})}=2\norm{u-v}_2.
    \end{align}
This implies that
\begin{align}
        \norm{\ketbra{u}{u}-\ketbra{v}{v}}_1\le 2\norm{u-v}_2.
\end{align}
Hence, we have $|f(v)-f(w)|\le 2\|O\|_\infty \norm{u-v}_2 $.
By applying Levy's lemma with the Lipschitz constant of $f$ being $2\|O\|_\infty$, we can conclude.

\end{proof}
Combining Markov's inequality with higher moments provides a useful trick for computing bounds. By using $\mathrm{Prob}\left(X\ge\varepsilon\right)\le\mathrm{Prob}\left(X^k\ge\varepsilon^k\right)\le \frac{\mathbb{E}\left[X^k\right]}{\varepsilon^k}$ , we can derive bounds based on the $k$-th moment of a random variable.
This approach is particularly useful in situations involving $k$-unitary designs, where we can often compute moments only up to a given order.

Another useful inequality when averaging over the full Haar measure is the following. If $\alpha>0$, then $\mathrm{Prob}\left[X\ge\varepsilon\right]\le\mathrm{Prob}\left[\exp(\alpha X)\ge\exp(\alpha \varepsilon)\right]\le \exp(-\alpha\varepsilon)\mathbb{E}\left[\exp(\alpha X)\right]$.
To illustrate an application of this, let us consider an example provided also in the notes of Richard Kueng \cite{KuengNotes_2019}.
\begin{example}
Consider a $n$-qubit state $\ket{\phi}\in \mathbb{C}^d$ with $d=2^n$. If we randomly pick a state $\ket{\psi}$ from the Haar measure, the probability that the overlap between $\ket{\psi}$ and $\ket{\phi}$ is larger than $\varepsilon>0$ decays double exponentially with the number of qubits $n$:
\begin{align}
    \underset{\ket{\psi}\sim \mu_H}{\mathrm{Prob}}\left[\left|\braket{\psi}{\phi}\right|^2\ge\varepsilon\right]\le2\exp(-\frac{d}{2}\varepsilon).
\end{align}
\end{example}
\begin{proof}
We have:
\begin{align}
    \underset{\ket{\psi}\sim \mu_H}{\mathrm{Prob}}\left[\left|\braket{\psi}{\phi}\right|^2\ge\varepsilon\right]&
    \le\underset{\ket{\psi}\sim \mu_H}{\mathrm{Prob}}\left[\exp(\frac{d}{2}\left|\braket{\psi}{\phi}\right|^2)\ge\exp(\frac{d}{2}\varepsilon)\right]\\
    &\le \exp(-\frac{d}{2}\varepsilon)\Expsi\left[\exp(\frac{d}{2}\left|\braket{\psi}{\phi}\right|^2)\right]\\
    &=\exp(-\frac{d}{2}\varepsilon)\sum^\infty_{k=0} \frac{1}{k!}\frac{d^k}{2^k}\mathbb{E}\left|\braket{\psi}{\phi}\right|^{2k}\\
    &\le \exp(-\frac{d}{2}\varepsilon)\sum^\infty_{k=0}\frac{1}{2^k}= 2\exp(-\frac{d}{2}\varepsilon).
\end{align}
where we used the result of Example~\ref{cor:overlap}.
\end{proof}
For more in-depth exploration of the applications of concentration inequalities in the context of Haar random unitaries and $k$-designs, see \cite{Brand_o_2021,KuengNotes_2019,Low_2009,low2010pseudorandomness}.

\subsection{Applications in Quantum Machine Learning}
Let us now analyze an application of unitary designs in the context of Quantum Machine Learning, specifically in Variational Quantum Algorithms (VQAs) \cite{Cerezo_2021}. In VQAs, the problem at hand is formulated as the minimization of a cost function. This cost function is typically constructed using the expectation value of an observable, which is estimated on a quantum computer using a prepared quantum state that depends on the parameters of the quantum gates. These gate parameters are optimized with a classical optimizer with the goal of minimizing the cost function.

Specifically, consider a Hilbert space $\mathcal{H}$ associated with an $n$-qubit system of dimension $d=2^n$. In VQAs, it is customary to define the cost function as:
\begin{align}
C\left(\boldsymbol{\theta}\right)\coloneqq \Tr\left[U(\boldsymbol{\theta})\rho_0U^\dagger(\boldsymbol{\theta}) O\right],
\end{align}
where $O\in \mathrm{Herm}(\mathcal{H})$ represents an observable, $\rho_0 \in \mathcal{S}\left(\mathcal{H}\right)$ is a fixed initial state, and $U(\boldsymbol{\theta})$ denotes the \emph{parameterized} unitary transformation (or \emph{ansatze}) defined as:
\begin{align}
U(\boldsymbol{\theta})=\prod_{l=1}^L e^{-i \theta_l H_l}.
\end{align}
Here, $L$ is a positive integer representing the total number of parametrized quantum gates, $\theta_l \in \mathbb{R}$ are the parameters associated with the parametrized gates, and $H_l\in \mathrm{Herm}(\mathcal{H})$ are their Hamiltonian generators.
In the context of VQAs, it is often assumed that the observable $O$ can be expressed as a linear combination of Pauli operators as:
\begin{align}
    O=\sum^m_{i=1} a_i P_i,
\end{align}
where $P_i\in\{I,X,Y,Z\}^{\otimes n} \backslash \{I^{\otimes n}\}$, $m$ is a positive integer such that $m\in O\left(\mathrm{poly}(n)\right)$ and $a_i\in\mathbb{R}$ are coefficients satisfying $a_i\in O\left(\mathrm{poly}(n)\right)$ for all $i\in [m]$. These assumptions on $O$ also hold for all $H_l$.
It is worth noting that the defined observable $O$ satisfies the properties $\Tr\!\left(O\right)=0$ and $\Tr\!\left(O^2\right) \in O\left(\mathrm{poly}(n)2^n\right)$.
\begin{observation}
Let $\nu$ be a distribution defined over the set of unitaries $\{U(\boldsymbol{\theta})\}_{\boldsymbol{\theta}\in \mathbb{R}^L}$. If $\nu$ forms a 2-design distribution\footnote{For an analysis with approximate notions of designs, refer to \cite{Holmes_2022}.}, then the following properties hold:
\begin{align}
\underset{U \sim \nu}{\mathbb{E}}\left[C(\boldsymbol{\theta})\right]=0,\quad \underset{U\sim \nu}{\mathrm{Var}}\left[C(\boldsymbol{\theta})\right]\in O\!\left(\frac{\mathrm{poly}(n)}{2^n}\right).
\end{align}
\end{observation}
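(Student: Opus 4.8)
The plan is to treat the mean and the variance separately, reducing each to the first- and second-moment formulas of Corollary~\ref{ex:SecondMoment} by using the $2$-design hypothesis to trade the average over $\nu$ for an average over $\mu_H$.

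First I would handle the mean, which in fact only requires a $1$-design. Since $C(\boldsymbol{\theta}) = \Tr[U(\boldsymbol{\theta})\rho_0 U^\dagger(\boldsymbol{\theta}) O]$ is linear in the channel $U(\cdot)U^\dagger$, linearity of the trace and of the expectation give
\begin{align}
\underset{U \sim \nu}{\mathbb{E}}\left[C(\boldsymbol{\theta})\right] = \Tr\!\left[\underset{U \sim \nu}{\mathbb{E}}\left[U\rho_0 U^\dagger\right] O\right] = \Tr\!\left[\ExU\left[U\rho_0 U^\dagger\right] O\right] = \frac{\Tr(\rho_0)}{d}\Tr(O),
\end{align}
where the second equality uses the $2$-design property and the third uses Eq.\eqref{eq:1momHaar}. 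As $\Tr(\rho_0)=1$ and $\Tr(O)=0$, this vanishes.

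Next, since the mean is zero, the variance equals the second moment $\underset{U \sim \nu}{\mathbb{E}}[C(\boldsymbol{\theta})^2]$. The key step is to rewrite the square as a single trace over two copies,
\begin{align}
C(\boldsymbol{\theta})^2 = \Tr\!\left[\left(U\rho_0 U^\dagger\right)^{\otimes 2} O^{\otimes 2}\right] = \Tr\!\left[U^{\otimes 2}\rho_0^{\otimes 2} U^{\dagger \otimes 2} O^{\otimes 2}\right].
\end{align}
Taking the expectation, invoking the $2$-design property, and applying the second-moment formula Eq.\eqref{eq:2momHaar} with the input operator $\rho_0^{\otimes 2}$ yields
\begin{align}
\underset{U \sim \nu}{\mathbb{E}}\left[C(\boldsymbol{\theta})^2\right] = c_{\mathbb{I},\rho_0^{\otimes 2}}\Tr\!\left(O^{\otimes 2}\right) + c_{\mathbb{F},\rho_0^{\otimes 2}}\Tr\!\left(\mathbb{F} O^{\otimes 2}\right) = c_{\mathbb{I},\rho_0^{\otimes 2}}\Tr(O)^2 + c_{\mathbb{F},\rho_0^{\otimes 2}}\Tr(O^2),
\end{align}
using $\Tr(\mathbb{I} O^{\otimes 2}) = \Tr(O)^2$ and the swap-trick $\Tr(\mathbb{F} O^{\otimes 2}) = \Tr(O^2)$. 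Because $\Tr(O)=0$, only the flip term survives; evaluating its coefficient via $\Tr(\rho_0^{\otimes 2})=\Tr(\rho_0)^2=1$ and $\Tr(\mathbb{F}\rho_0^{\otimes 2})=\Tr(\rho_0^2)$ gives $\underset{U \sim \nu}{\mathrm{Var}}[C] = \frac{\Tr(\rho_0^2)-d^{-1}}{d^2-1}\Tr(O^2)$.

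Finally I would bound this. Since the purity satisfies $d^{-1}\le \Tr(\rho_0^2)\le 1$, the coefficient obeys $0 \le \frac{\Tr(\rho_0^2)-d^{-1}}{d^2-1} \le \frac{1}{d^2-1}$, so $\underset{U \sim \nu}{\mathrm{Var}}[C] \le \frac{\Tr(O^2)}{d^2-1}$. With the stated assumption $\Tr(O^2)\in O(\mathrm{poly}(n)2^n)$ and $d=2^n$, this ratio is $O(\mathrm{poly}(n)/2^n)$, as claimed. There is no genuinely hard step here: the only things to get right are the tensor-power rewriting of $C^2$ so that the second-moment operator applies, and the observation that the hypothesis $\Tr(O)=0$ is precisely what kills the $\mathbb{I}$-term—which would otherwise scale like $\Tr(O)^2$ and destroy the exponential suppression.
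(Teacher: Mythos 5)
Your proposal is correct and follows essentially the same route as the paper: the mean via the first-moment formula and $\Tr(O)=0$, and the variance by rewriting $C^2$ as $\Tr[U^{\otimes 2}\rho_0^{\otimes 2}U^{\dagger\otimes 2}O^{\otimes 2}]$, applying Eq.~\eqref{eq:2momHaar}, noting that tracelessness of $O$ kills the $\mathbb{I}$-term, and bounding the flip coefficient (your bound $\tfrac{1}{d^2-1}$ is marginally looser than the paper's $\tfrac{1}{2^n(2^n+1)}$ but equally sufficient). No gaps.
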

\begin{proof}
By Eq.\eqref{eq:1momHaar}, we have: 
    \begin{align}
    \underset{U \sim \nu}{\mathbb{E}}\left[C(\boldsymbol{\theta})\right]=\Tr\left[\underset{U \sim \nu}{\mathbb{E}}\left(U(\boldsymbol{\theta})\rho_0U^\dagger(\boldsymbol{\theta})\right)O\right]=\frac{\Tr(O)}{d}=0,
    \end{align}
where in the last step we used that $O$ is traceless.
The second equality follows from $\underset{U\sim \nu}{\mathrm{Var}}\left[C(\boldsymbol{\theta})\right]=\underset{U \sim \nu}{\mathbb{E}}\left[C(\boldsymbol{\theta})^2\right]-\underset{U \sim \nu}{\mathbb{E}}\left[C(\boldsymbol{\theta})\right]^2$ and Eq.\eqref{eq:2momHaar}:
\begin{align}
\underset{U \sim \nu}{\mathbb{E}}\left[C(\boldsymbol{\theta})^2\right]&=\Tr\left[\underset{U \sim \nu}{\mathbb{E}}\left(U^{\otimes 2}(\boldsymbol{\theta})\rho_0^{\otimes 2}U^{\dagger \otimes 2}(\boldsymbol{\theta})\right) O^{\otimes 2}\right]\\
&=c_{\Idd,\rho_0^{\otimes 2}}\Tr\left[\Idd O^{\otimes 2}\right]+c_{\Flip,\rho_0^{\otimes 2}} \Tr\left[\Flip O^{\otimes 2}\right]\\
&=c_{\Flip,\rho_0^{\otimes 2}} \Tr\left[O^2\right],
\end{align}
where $c_{\mathbb{F},\rho_0^{\otimes 2}}=\frac{\Tr\!\left(\mathbb{F}\rho_0^{\otimes 2}\right)-d^{-1}\Tr\!\left(\rho_0^{\otimes 2}\right)}{d^2-1}=\frac{\Tr\!\left(\rho_0^2\right)-2^{-n}}{2^{2n}-1}$, and in the last step we used that $O$ is traceless. The proof is concluded by noting that $c_{\mathbb{F},\rho_0^{\otimes 2}}\le \frac{1}{2^n(2^n+1)}$ and using that $\Tr\left[O^2\right]\in O\left(\mathrm{poly}(n)2^n\right)$.

\end{proof}
We can then apply Chebyshev's inequality, which states that for all $\varepsilon>0$ we have:
\begin{align}
 \underset{U\sim \nu}{\mathrm{Prob}}\left(\left|C(\boldsymbol{\theta})-\underset{U \sim \nu}{\mathbb{E}}\left[C(\boldsymbol{\theta})\right]\right|\ge\varepsilon \right)\le \frac{1}{\varepsilon^2}\underset{U\sim \nu}{\mathrm{Var}}\left[C(\boldsymbol{\theta})\right].
\end{align}
This inequality provides an upper bound on the probability of encountering a point in the parameter space where the cost function deviates from its expected value by more than $\varepsilon$. 
In particular, the probability of finding a point with a cost function larger than $\varepsilon$ decays exponentially with the number of qubits: $\underset{U\sim \nu}{\mathrm{Prob}}\left(\left|C(\boldsymbol{\theta})\right|\ge\varepsilon \right)\in O\left(\varepsilon^{-2}\frac{\mathrm{poly}(n)}{ 2^n}\right)$.

Similarly, with slightly more involved calculations, we can show that the exponential decay also applies to the variance of the partial derivatives of the cost function. This phenomenon, where the variance of the partial derivatives of the cost function decays exponentially with the number of qubits $n$, is commonly referred to as \emph{Barren Plateaus} \cite{McClean_2018}.

To analyze the partial derivatives of the cost function, we can express the parameterized unitary circuit $U(\boldsymbol{\theta})$ as the product of two unitary operators: $U(\boldsymbol{\theta}) = U_A U_B$, where $U_A=\prod_{l=\mu+1}^L e^{-i \theta_l H_l}$ and $U_B=\prod_{l=1}^\mu e^{-i \theta_l H_l}$. Consequently, we can write the partial derivative of the cost function as follows:
\begin{align}
\partial_\mu C\left(\boldsymbol{\theta}\right) & =\operatorname{Tr}\left[\left(\partial_\mu U(\boldsymbol{\theta})\right) \rho_0 U^{\dagger}(\boldsymbol{\theta}) O\right]+\operatorname{Tr}\left[U(\boldsymbol{\theta}) \rho_0\left(\partial_\mu U^{\dagger}(\boldsymbol{\theta})\right) O\right] \\
& =-i \operatorname{Tr}\left[ U_A H_\mu U_B \rho_0 U^\dagger_B U^\dagger_A O\right]+i\operatorname{Tr}\left[U_A U_B \rho_0 U^\dagger_B H_\mu U^\dagger_A O\right] \\
& =i \operatorname{Tr}\left[U_B \rho_0 U_B^{\dagger}\left[ H_\mu,U_A^{\dagger} O U_A\right]\right],
\end{align}
where we denoted by $\partial_\mu$ the partial derivative with respect to $\theta_\mu$, we used that $\partial_\mu U(\boldsymbol{\theta})=-i U_A H_\mu U_B $ and the cyclicity of the trace. Using this expression for the partial derivative of the cost function, we can prove the following:
\begin{observation}
Let $\nu_A, \nu_B$  be probability distributions defined over the sets of unitaries $\{U_A(\boldsymbol{\theta})\}_{\boldsymbol{\theta}\in \mathbb{R}^{L-\mu}}$ and $\{U_B(\boldsymbol{\theta})\}_{\boldsymbol{\theta}\in \mathbb{R}^{\mu}}$, respectively. Suppose that both $\nu_A$ and $\nu_B$ are $2$-designs distributions.\footnote{Check \cite{McClean_2018} for the case when only one between $\nu_A$ and $\nu_B$ is a $2$-design.} In this case, the following properties hold:
\begin{align}
\underset{\substack{U_A\sim  \nu_A\\U_B\sim \nu_B}}{\mathbb{E}}\left[\partial_\mu C(\boldsymbol{\theta})\right]=0,\quad \underset{\substack{U_A\sim  \nu_A\\U_B\sim \nu_B}}{\mathrm{Var}}\left[\partial_\mu C(\boldsymbol{\theta})\right]\in O\!\left(\frac{\mathrm{poly}(n)}{2^n-1}\right).
\end{align}
\end{observation}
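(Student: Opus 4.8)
The plan is to exploit the factorized structure of the partial derivative, $\partial_\mu C(\boldsymbol{\theta}) = i\Tr[U_B\rho_0 U_B^\dagger\,[H_\mu,\,U_A^\dagger O U_A]]$, in which the randomness of $U_A$ and of $U_B$ enters through two \emph{independent} factors. Since $\nu_A$ and $\nu_B$ are $2$-designs (hence also $1$-designs), every expectation below is a first- or second-moment quantity and can be evaluated by averaging over the Haar measure using Corollary~\ref{ex:SecondMoment}.

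For the mean I would average over $U_A$ first. Writing $\tilde O\coloneqq U_A^\dagger O U_A$ and combining Proposition~\ref{prop:Haar} with Eq.~\eqref{eq:1momHaar} gives $\mathbb{E}_{U_A}[\tilde O]=\frac{\Tr(O)}{d}I$, so the inner commutator becomes $[H_\mu,\frac{\Tr(O)}{d}I]=0$. Thus $\mathbb{E}[\partial_\mu C]=0$ for every $U_B$, and the variance reduces to the bare second moment $\mathbb{E}[(\partial_\mu C)^2]$.

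For the variance I would use the identity $\Tr(X)\Tr(Y)=\Tr(X\otimes Y)$ to write, with $\rho\coloneqq U_B\rho_0 U_B^\dagger$ and $B\coloneqq[H_\mu,\tilde O]$,
\begin{align}
\mathbb{E}\big[(\partial_\mu C)^2\big]=-\Tr\!\Big[\,\mathbb{E}_{U_B}\big(\rho^{\otimes 2}\big)\;\mathbb{E}_{U_A}\big(B^{\otimes 2}\big)\,\Big],
\end{align}
where the factorization uses the independence of $U_A$ and $U_B$. The first factor is immediate from Eq.~\eqref{eq:2momHaar}: $\mathbb{E}_{U_B}(\rho^{\otimes 2})=c_{\Idd,\rho_0^{\otimes 2}}\Idd+c_{\Flip,\rho_0^{\otimes 2}}\Flip$. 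The second factor is the crux of the calculation: expanding the commutator yields four terms of the form $(M_1\tilde O N_1)\otimes(M_2\tilde O N_2)$ with $M_i,N_i\in\{I,H_\mu\}$, so $\mathbb{E}_{U_A}$ acts only on $\tilde O^{\otimes 2}$, for which Eq.~\eqref{eq:2momHaar} (after Proposition~\ref{prop:Haar}) gives $\alpha\Idd+\beta\Flip$. I expect the main obstacle to be organising these four terms cleanly; the payoff is that the $\Idd$-part cancels exactly, and repeated use of $\Flip(A\otimes B)=(B\otimes A)\Flip$ collapses the rest to the compact form
\begin{align}
\mathbb{E}_{U_A}\big(B^{\otimes 2}\big)=-\beta\,\big(H_\mu\otimes I-I\otimes H_\mu\big)^2\,\Flip .
\end{align}

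Substituting both factors reduces the variance to two traces, each handled by the swap-trick: $\Tr[(H_\mu\otimes I-I\otimes H_\mu)^2\Flip]=0$ removes the $\Idd$ contribution, while $\Tr[\Flip(H_\mu\otimes I-I\otimes H_\mu)^2\Flip]=\Tr[(H_\mu\otimes I-I\otimes H_\mu)^2]=2d\Tr(H_\mu^2)-2\Tr(H_\mu)^2$. Using $\Tr(O)=0$ to simplify $\beta=\Tr(O^2)/(d^2-1)$ and $c_{\Flip,\rho_0^{\otimes 2}}=(\Tr(\rho_0^2)-d^{-1})/(d^2-1)$, this yields a closed expression for the variance proportional to $\Tr(O^2)\,(\Tr(\rho_0^2)-d^{-1})\,(d\Tr(H_\mu^2)-\Tr(H_\mu)^2)/(d^2-1)^2$ (which is manifestly nonnegative, since $\Tr(\rho_0^2)\ge d^{-1}$ and $\Tr(H_\mu)^2\le d\Tr(H_\mu^2)$). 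Finally I would insert the stated scalings $\Tr(O^2),\Tr(H_\mu^2)\in O(\mathrm{poly}(n)2^n)$, $\Tr(\rho_0^2)\le 1$, and $d=2^n$: the numerator scales as $\mathrm{poly}(n)2^{3n}$ and the denominator as $2^{4n}$, giving the claimed bound $O(\mathrm{poly}(n)/(2^n-1))$.
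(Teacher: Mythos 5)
Your proposal is correct and follows essentially the same route as the paper: both exploit the independence of $U_A$ and $U_B$ to apply the second-moment formula (Corollary~\ref{ex:SecondMoment}) twice and then reduce everything with swap-tricks, arriving at the same closed form $2d\,c_{\Flip,\rho_0^{\otimes 2}}c_{\Flip,O^{\otimes 2}}\Tr(H_\mu^2)$ for the variance. The only difference is organizational — you package the $U_A$-average as the operator identity $\mathbb{E}_{U_A}(B^{\otimes 2})=-\beta(H_\mu\otimes I-I\otimes H_\mu)^2\Flip$ before tracing, whereas the paper first reduces to $\Tr([H_\mu,U_A^\dagger O U_A]^2)$ and expands that commutator — and your intermediate steps (cancellation of the $\Idd$ part, vanishing of $\Tr[(H_\mu\otimes I-I\otimes H_\mu)^2\Flip]$) all check out.
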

\begin{proof}
    We have:
    \begin{align}
        \underset{\substack{U_A\sim  \nu_A\\U_B\sim \nu_B}}{\mathbb{E}}\left[\partial_\mu C(\boldsymbol{\theta})\right]&=i \Tr\!\left(\underset{U_B\sim \nu_B}{\mathbb{E}} b\left(U_B \rho_0 U_B^{\dagger}\right)\left[H_\mu,\underset{U_A\sim \nu_A}{\mathbb{E}}\left(U_A^{\dagger} O U_A\right)\right]\right)\\
        &=i \Tr\!\left(\underset{U_B\sim \nu_B}{\mathbb{E}}\left(U_B \rho_0 U_B^{\dagger}\right)\left[ H_\mu,\frac{\Tr(O)}{d}I\right]\right)\\
        &=0,
    \end{align}
    where we used that $O$ is traceless.
    Thus, for the variance, we have to compute $\underset{\substack{U_A\sim  \nu_A\\U_B\sim \nu_B}}{\mathrm{Var}}\left[\partial_\mu C(\boldsymbol{\theta})\right]=\underset{\substack{U_A\sim  \nu_A\\U_B\sim \nu_B}}{\mathbb{E}}\left[\partial_\mu C(\boldsymbol{\theta})^2\right]$: 
    \begin{align}
        \underset{\substack{U_A\sim  \nu_A\\U_B\sim \nu_B}}{\mathbb{E}}\left[\partial_\mu C(\boldsymbol{\theta})^2\right]&=-\underset{\substack{U_A\sim  \nu_A\\U_B\sim \nu_B}}{\mathbb{E}}\Tr\!\left(U_B \rho_0 U_B^{\dagger}\left[H_\mu, U_A^{\dagger} O U_A\right]\right)^2\\
        &=-\underset{\substack{U_A\sim  \nu_A\\U_B\sim \nu_B}}{\mathbb{E}}\Tr\!\left(U^{\otimes 2}_B \rho^{\otimes 2}_0 U_B^{\dagger \otimes 2}\left[H_\mu, U_A^{\dagger} O U_A\right]^{\otimes 2}\right)\\
        &=-c_{\Flip,\rho^{\otimes 2}}\underset{U_A\sim \nu_A}{\mathbb{E}}\Tr\!\left(\Flip\left[H_\mu, U_A^{\dagger} O U_A\right]^{\otimes 2}\right)\\
        &=-c_{\Flip,\rho^{\otimes 2}}\underset{U_A\sim \nu_A}{\mathbb{E}}\Tr\!\left(\left[H_\mu, U_A^{\dagger} O U_A\right]^2\right),
    \end{align}
where in the third equality we utilized Eq.\eqref{eq:2momHaar} as well as the fact that the trace of a commutator is always zero, we have defined $c_{\mathbb{F},\rho_0^{\otimes 2}}=\frac{\Tr\!\left(\mathbb{F}\rho_0^{\otimes 2}\right)-d^{-1}\Tr\!\left(\rho_0^{\otimes 2}\right)}{d^2-1}=\frac{\Tr\!\left(\rho_0^2\right)-d^{-1}}{d^2-1}$, and in the last step we used the \emph{swap-trick}.
Now, using the fact that:
\begin{align}
\Tr\!\left([A,B]^2\right)=2\Tr\!\left(ABAB\right)-2\Tr\!\left(AABB\right)=2\Tr\!\left(A^{\otimes 2}B^{\otimes2} \Flip\right)-2\Tr\!\left((A^2\otimes I) B^{\otimes 2} \Flip\right),
\end{align}
we have:
    \begin{align}
       & \underset{\substack{U_A\sim  \nu_A\\U_B\sim \nu_B}}{\mathbb{E}}\left[\partial_\mu C(\boldsymbol{\theta})^2\right]=-c_{\Flip,\rho^{\otimes 2}}\underset{U_A\sim \nu_A}{\mathbb{E}}\Tr\!\left(\left[H_\mu, U_A^{\dagger} O U_A\right]^2\right)\\
        &=-2c_{\Flip,\rho^{\otimes 2}}\left(\Tr\!\left(H_\mu^{\otimes 2}\underset{U_A\sim \nu_A}{\mathbb{E}}(U_A^{\dagger} O U_A)^{\otimes2} \Flip\right)-\Tr\!\left((H_\mu^2\otimes I) \underset{U_A\sim \nu_A}{\mathbb{E}}(U_A^{\dagger} O U_A)^{\otimes 2} \Flip\right)\right)\\
        &=-2c_{\Flip,\rho^{\otimes 2}}\left(c_{\Idd,O^{\otimes 2}}\Tr\!\left(H_\mu^{\otimes 2} \,\Idd\, \Flip\right)-c_{\Idd,O^{\otimes 2}}\Tr\!\left((H_\mu^2 \otimes I) \,\Idd \,\Flip\right)-c_{\Flip,O^{\otimes 2}}\Tr\!\left((H_\mu^2\otimes I)\, \Flip \,\Flip\right)\right)\\
        &=2dc_{\Flip,\rho^{\otimes 2}}c_{\Flip,O^{\otimes 2}}\Tr\!\left(H_\mu^2\right),
    \end{align}
    where in the third equality we used again Eq.\eqref{eq:2momHaar} along with the fact that $H_\mu$ is traceless, and we defined $c_{\mathbb{F},O^{\otimes 2}}$ as $c_{\mathbb{F},O^{\otimes 2}}=\frac{\Tr\left(\mathbb{F}O^{\otimes 2}\right)-d^{-1}\Tr\left(\O^{\otimes 2}\right)}{d^2-1}=\frac{\Tr\left(O^2\right)}{d^{2}-1}$.
    Thus, we have:
\begin{align}
    \underset{\substack{U_A\sim  \nu_A\\U_B\sim \nu_B}}{\mathbb{E}}\left[\partial_\mu C(\boldsymbol{\theta})^2\right]&\le\frac{2d}{(d^{2}-1)^2}(\Tr\!\left(\rho_0^2\right)-d^{-1})\Tr\!\left(O_\mu^2\right)\Tr\!\left(H_\mu^2\right)\le 2\frac{\Tr\!\left(O^2\right)}{(2^n-1)^2}\frac{\Tr\!\left(H_\mu^2\right)}{2^{n}+1}.
\end{align}
We can conclude by using that $\Tr\!\left[O^2\right]\in O\!\left(\mathrm{poly}(n)2^n\right)$ and $\Tr\left[H_\mu^2\right]\in O\!\left(\mathrm{poly}(n)2^n\right)$.
\end{proof}

Using again Chebyshev's inequality, we can conclude that the probability of finding a point in the parameter space with a partial derivative larger than a threshold $\varepsilon$ decreases exponentially with the number of qubits. Specifically, we have:  
\begin{align}
    \underset{\substack{U_A\sim  \nu_A\\U_B\sim \nu_B}}{\mathrm{Prob}}\!\left(\left|\partial_\mu C(\boldsymbol{\theta})\right|\ge\varepsilon \right)\in O\left(\varepsilon^{-2}\frac{\mathrm{poly}(n)}{2^n-1}\right).
\end{align}
There are several recent works that leverage Haar integration to compute the concentration of variational cost functions. For example, \cite{Larocca_2022,fontana2023adjoint,ragone2023unified} extend the Barren Plateaus diagnosing method (variance computation) to the case where the variational circuit lives in a unitary subgroup (over which it forms 2-designs), which is particularly relevant in the case of symmetric ansatzes~\cite{Larocca_2022Group,Meyer_2023}. 
Additionally, see~\cite{Wang_2021,singkanipa2024unital,mele2024noiseinduced} for cost-concentration analyses in the context of noisy circuits.

\subsection{Classical shadow tomography}
The \emph{classical shadow} protocol, introduced by Huang et al. \cite{Huang_2020}, is a method for predicting properties of quantum systems based on the theory of $k$-designs. In this tutorial, we will explore the classical shadow protocol as example of application.

Let $\rho \in \mathcal{S}\left(\mathbb{C}^d\right)$ be a quantum state of $n$-qubits, and let $O_1,\dots,O_M\in \mathrm{Herm}\left(\mathbb{C}^d\right)$ be Hermitian operators, where $d=2^n$.
The goal is to estimate $\Tr(O_1\rho),\dots,\Tr(O_M\rho)$ with a desired accuracy and probability of success. We assume that the full classical description of the state $\rho$ is unknown but it can be \textquote{queried} on a quantum device multiple times.
When the state $\rho$ is queried, a unitary $U$ is sampled randomly from a probability distribution $\mu$, and it is applied to $\rho$. The resulting state, $U\rho U^\dagger$, is measured in the computational basis $\{\ket{b}\}_{b\in [d]}$. Information about the sampled unitary $U$ and the measurement outcome $\ket{b}$ is stored in a classical memory, which can be efficiently implemented for appropriately chosen distributions of unitaries. The state $U^\dagger\ketbra{b}{b}U$ is referred to as a classical snapshot.

Now, the expected value of the classical snapshot $\mathbb{E}\left[U^\dagger\ketbra{b}{b}U\right]$ is considered, where $U$ is distributed according to the probability distribution $\mu$, and $b$ is distributed according to the Born's rule probability distribution $\bra{b}U\rho U^\dagger\ket{b}$.
More explicitly, we can associate a completely positive trace-preserving linear map $\mathcal{M}\left(\rho\right)$ to $\mathbb{E}\left[U^\dagger\ketbra{b}{b}U\right]$, denoted as \emph{measurement channel}: 
\begin{align}
\mathcal{M}\left(\rho\right)\coloneqq \sum^d_{b=1}\underset{U \sim \mu}{\mathbb{E}}\left[\bra{b}U \rho U^\dagger \ket{b}U^\dagger \ketbra{b}{b}U\right].
\end{align}

Assuming that $\mathcal{M}$ is invertible, 
$\hat{\rho}\coloneqq \mathcal{M}^{-1}\left(U^\dagger \ketbra{b}{b} U\right)$ serves as an unbiased estimator for $\rho$, meaning $\mathbb{E}\left[\hat{\rho}\right]=\rho$. The matrix $\hat{\rho}$ is commonly known as the \emph{classical shadow} of the state $\rho$.
 Consequently, $\hat{o}_i\coloneqq \Tr(O_i\hat{\rho})$ is an unbiased estimator for  $\Tr( O_i\rho)$:
\begin{align}
    \hat{o}_i\coloneqq \Tr(O_i \mathcal{M}^{-1}\left(U^\dagger \ketbra{b}{b} U\right)) \quad \text{implies}\quad\mathbb{E}\left[\hat{o}_i\right]=\Tr( O_i\rho) \text{  for all  } i\in [M].
    \label{eq:estimator}
\end{align}
For appropriately chosen probability distributions $\mu$, the estimator $\hat{o}_i$ can be efficiently computed classically.

To estimate the number $N$ of copies of $\rho$ (sample complexity) to achieve an additive accuracy $\varepsilon > 0$ in the estimation of $\Tr( O_i\rho)$ for all $i \in [M]$, with a failure probability of at most $\delta > 0$, it is important to bound the variance of the estimator:
\begin{align}
\mathrm{Var}\left(\hat{o}_i\right)\coloneqq \mathbb{E}\left[\hat{o}^2_i\right]-\mathbb{E}\left[\hat{o}_i\right]^2.
\label{eq:varshadow}
\end{align}
If the median of means \cite{Huang_2020,Koh_2022} is used as the estimator to post-process the data $\hat{o}_i$ for each $i \in [M]$, then a number of copies
\begin{align}
 N= O\left(\frac{\log\left(2M/\delta\right)}{\varepsilon^2}\underset{i\in [m]}{\mathrm{max}}\left[\mathrm{Var}\left(\hat{o}_i\right)\right]\right)
\end{align}
is enough to estimate, for each $i\in [m]$, $\Tr( O_i\rho)$ up to precision $\varepsilon$ with success probability at least $1-\delta$.

Therefore, the essential ingredients are the construction of the estimator through the inversion of the measurement channel $\mathcal{M}\left(\cdot\right)$ and bounding the variance. Below we observe that the measurement channel $\mathcal{M}\left(\cdot\right)$ is related to the second moment operator over the probability distribution $\mu$, while the variance is related to a third moment.
\begin{observation}
\label{prop:MeasVarObs}
    \begin{align}
    \label{eq:MeasChMom}
\mathcal{M}\left(\rho\right)&=\Tr_1\left(\left(\rho \otimes I\right) \left(\sum^d_{b=1}\underset{U \sim \mu}{\mathbb{E}}\left[U^{\dagger \otimes 2} \ketbra{b}{b}^{ \otimes 2} U^{ \otimes 2}\right]\right)\right)\\
\mathrm{Var}\left(\hat{o}_i\right)&=\Tr\!\left(\left(\rho \otimes \mathcal{M}^{-1}\left(O_i\right) \otimes \mathcal{M}^{-1}\left(O_i\right)\right) \left(\sum^d_{b=1}\underset{U \sim \mu}{\mathbb{E}}\left[U^{\dagger \otimes 3} \ketbra{b}{b}^{ \otimes 3} U^{ \otimes 3}\right]\right)\right)-\Tr( O_i\rho)^2 .
\end{align}
\end{observation}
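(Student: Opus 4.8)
The plan is to reduce both identities to a single elementary manipulation: every scalar of the form $\bra{b}U M U^\dagger\ket{b}$ occurring in the definitions is first rewritten as a trace $\Tr(M\,U^\dagger\ketbra{b}{b}U)$, and then several such traces, each carrying one copy of the snapshot operator $A_{U,b}\coloneqq U^\dagger\ketbra{b}{b}U$, are merged onto a tensor-product space. The two merging rules I would use are the factorization $\Tr(X_1 A)\cdots\Tr(X_m A)=\Tr\bigl((X_1\otimes\cdots\otimes X_m)\,A^{\otimes m}\bigr)$ and its partial-trace variant $\Tr(\rho A)\,B=\Tr_1\bigl((\rho\otimes I)(A\otimes B)\bigr)$. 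The observation that produces the tensor powers of $U$ is that $A_{U,b}^{\otimes m}=U^{\dagger\otimes m}\ketbra{b}{b}^{\otimes m}U^{\otimes m}$, since $\ketbra{b}{b}^{\otimes m}$ is a genuine tensor power.

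First I would treat the measurement channel. In $\mathcal{M}(\rho)=\sum_b\underset{U \sim \mu}{\mathbb{E}}[\bra{b}U\rho U^\dagger\ket{b}\,A_{U,b}]$ I replace the Born scalar by $\Tr(\rho\,A_{U,b})$ and apply the partial-trace identity with $A=B=A_{U,b}$, giving $\bra{b}U\rho U^\dagger\ket{b}\,A_{U,b}=\Tr_1\bigl((\rho\otimes I)(A_{U,b}\otimes A_{U,b})\bigr)$. Recognizing $A_{U,b}\otimes A_{U,b}=U^{\dagger\otimes 2}\ketbra{b}{b}^{\otimes 2}U^{\otimes 2}$ and pulling the linear operations $\Tr_1$, left multiplication by $(\rho\otimes I)$, the sum over $b$, and the expectation outside yields exactly the first claimed formula.

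For the variance I start from $\mathrm{Var}(\hat o_i)=\mathbb{E}[\hat o_i^2]-\mathbb{E}[\hat o_i]^2$ and use the already-established unbiasedness $\mathbb{E}[\hat o_i]=\Tr(O_i\rho)$ for the subtracted term. For the second moment I rewrite $\hat o_i=\Tr\bigl(O_i\,\mathcal{M}^{-1}(A_{U,b})\bigr)=\Tr\bigl(\mathcal{M}^{-1}(O_i)\,A_{U,b}\bigr)$, where the second equality transfers $\mathcal{M}^{-1}$ onto $O_i$ using that $\mathcal{M}$, and hence $\mathcal{M}^{-1}$, is self-adjoint with respect to the Hilbert-Schmidt inner product. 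Then $\mathbb{E}[\hat o_i^2]=\sum_b\underset{U \sim \mu}{\mathbb{E}}\bigl[\Tr(\rho\,A_{U,b})\cdot\Tr(\mathcal{M}^{-1}(O_i)A_{U,b})^2\bigr]$ contains, for each $b$, a product of three traces each carrying one copy of $A_{U,b}$ (one from the Born probability, two from $\hat o_i^2$). Merging them with the three-fold factorization identity and using $A_{U,b}^{\otimes 3}=U^{\dagger\otimes 3}\ketbra{b}{b}^{\otimes 3}U^{\otimes 3}$, then pulling the sum and expectation inside the trace, gives the stated expression.

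The main obstacle is justifying the self-adjointness of $\mathcal{M}$ that underlies the transfer of $\mathcal{M}^{-1}$ onto $O_i$. I would prove it from the first part: writing $\mathcal{M}(X)=\Tr_1\bigl((X\otimes I)S\bigr)$ with $S\coloneqq\sum_b\underset{U \sim \mu}{\mathbb{E}}[U^{\dagger\otimes 2}\ketbra{b}{b}^{\otimes 2}U^{\otimes 2}]$, I note that $S$ is Hermitian and invariant under the Flip, $\Flip S\Flip=S$, because $\ketbra{b}{b}^{\otimes 2}$ commutes with $\Flip$ and conjugation by $U^{\otimes 2}$ preserves this. Computing $\langle Y,\mathcal{M}(X)\rangle_{HS}=\Tr\bigl((X\otimes Y^\dagger)S\bigr)$ and inserting $\Flip S\Flip=S$ together with $\Flip(X\otimes Y^\dagger)\Flip=Y^\dagger\otimes X$ shows this equals $\langle\mathcal{M}(Y),X\rangle_{HS}$, so $\mathcal{M}^\dagger=\mathcal{M}$ and $\mathcal{M}^{-1}$ is self-adjoint; everything else is routine linearity.
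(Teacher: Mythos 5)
Your proposal is correct and follows essentially the same route as the paper: rewrite each Born scalar as $\Tr\!\left(\rho\, U^\dagger \ketbra{b}{b} U\right)$, merge the resulting product of traces onto tensor powers of the snapshot $U^\dagger\ketbra{b}{b}U$, and transfer $\mathcal{M}^{-1}$ onto $O_i$ via the symmetry of $\mathcal{M}$. The only (immaterial) difference is in that last sub-lemma: the paper obtains $\Tr\!\left(A\,\mathcal{M}(B)\right)=\Tr\!\left(\mathcal{M}(A)\,B\right)$ in one line because $\sum_b \underset{U\sim\mu}{\mathbb{E}}\left[\bra{b}UBU^\dagger\ket{b}\bra{b}UAU^\dagger\ket{b}\right]$ is manifestly symmetric in $A$ and $B$, whereas you derive it from the $\mathbb{F}$-invariance of the two-copy operator $S$ — both arguments are valid.
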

\begin{proof}
For the measurement channel $\mathcal{M}(\rho)$, we have:
\begin{align}
\mathcal{M}\left(\rho\right)&\coloneqq \sum^d_{b=1}\underset{U \sim \mu}{\mathbb{E}}\left[\bra{b}U \rho U^\dagger \ket{b}U^\dagger \ketbra{b}{b}U\right]\\
&=\sum^d_{b=1}\underset{U \sim \mu}{\mathbb{E}}\left[\Tr\!\left(\rho U^\dagger \ketbra{b}{b} U  \right)U^\dagger \ketbra{b}{b}U\right],\\
&=\Tr_1\left(\left(\rho \otimes I\right) \left(\sum^d_{b=1}\underset{U \sim \mu}{\mathbb{E}}\left[U^{\dagger \otimes 2} \ketbra{b}{b}^{ \otimes 2} U^{ \otimes 2}\right]\right)\right).
\end{align}
For the variance Eq.\eqref{eq:varshadow}, we need to consider:
\begin{align}
\mathbb{E}\left(\hat{o}^2_i\right)&\coloneqq \sum^d_{b=1}\underset{U \sim \mu}{\mathbb{E}}\left[\bra{b}U \rho U^\dagger \ket{b}\Tr\!\left(O_i\mathcal{M}^{-1}\left(U^\dagger \ketbra{b}{b}U\right)\right)^2\right].
\end{align}
We first observe that $\Tr\!\left(A\mathcal{M}\left(B\right)\right)=\sum^d_{b=1}\underset{U \sim \mu}{\mathbb{E}}\left[\bra{b}U B U^\dagger \ket{b} \bra{b}U A U^\dagger \ket{b}\right]=\Tr\!\left(\mathcal{M}\left(A\right)B\right)$ for all $A,B \in \MatC{d}$. This implies that $\Tr\!\left(A\mathcal{M}^{-1}\left(B\right)\right)=\Tr\!\left(\mathcal{M}^{-1}\left(A\right)B\right)$ for all $A,B \in \MatC{d}$. Therefore:
\begin{align}
\mathbb{E}\left(\hat{o}^2_i\right)&=\sum^d_{b=1}\underset{U \sim \mu}{\mathbb{E}}\left[\bra{b}U \rho U^\dagger \ket{b}\Tr\!\left(\mathcal{M}^{-1}\left(O_i\right)U^\dagger \ketbra{b}{b}U\right)^2\right]\\
&=\sum^d_{b=1}\underset{U \sim \mu}{\mathbb{E}}\left[\Tr\!\left(\rho U^\dagger \ketbra{b}{b} U  \right)\Tr\!\left(\mathcal{M}^{-1}\left(O_i\right)U^\dagger \ketbra{b}{b}U\right)^2\right]\\
&=\Tr\!\left(\left(\rho \otimes \mathcal{M}^{-1}\left(O_i\right) \otimes \mathcal{M}^{-1}\left(O_i\right)\right) \left(\sum^d_{b=1}\underset{U \sim \mu}{\mathbb{E}}\left[U^{\dagger \otimes 3} \ketbra{b}{b}^{ \otimes 3} U^{ \otimes 3}\right]\right)\right).
\end{align}
\end{proof}

We will now consider the unitary probability distribution that corresponds to the uniform distribution over the Clifford group, that we defined in Eq.\eqref{eq:Cliffordgroup}. It is important to remember that the any unitary of the Clifford group can be implemented with $O(n^2/\log(n))$ elementary gates \cite{Aaronson_2004} and that there are efficient algorithms to sample uniformly from the Clifford group \cite{berg2021simple,Koenig_2014}. 

Since the uniform distribution over the Clifford group is an exact $3$-design, its first three moments coincide with those of the Haar measure.
Thus, we need to insert in Eq.\eqref{eq:MeasChMom} the formula for the second moment over the Haar measure to find the expression of the measurement channel $\mathcal{M}\left(\rho\right)$ and then invert it.

\begin{observation}
The measurement channel is:
\begin{align}
    \mathcal{M}\left(\rho\right)=\frac{1}{d+1}\left(\Tr\!\left(\rho\right)I+\rho\right).
\end{align}
Thus, its inverse is: \begin{align}
\label{eq:InverseChannelCl}
\mathcal{M}^{-1}\left(\rho\right)=(d+1)\rho - \Tr\!\left(\rho\right)I.
\end{align}
\end{observation}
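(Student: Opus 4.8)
The plan is to reduce the bracketed second-moment object in Eq.~\eqref{eq:MeasChMom} to a Haar integral and evaluate it with the $k=2$ twirl formula from Corollary~\ref{ex:SecondMoment}. First I would use that the uniform distribution over the Clifford group is a $3$-design, hence in particular a $2$-design, so the inner expectation appearing in the measurement channel may be computed against the Haar measure $\mu_H$ rather than $\mu$. Applying Proposition~\ref{prop:Haar} to exchange $U$ with $U^\dagger$, the summand becomes $\ExU\left[U^{\otimes 2}\ketbra{b}{b}^{\otimes 2}U^{\dagger\otimes 2}\right]$, to which I apply Eq.~\eqref{eq:2momHaar} with $O=\ketbra{b}{b}^{\otimes 2}$.

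Next I would compute the two coefficients. Since $\Tr\!\left(\ketbra{b}{b}^{\otimes 2}\right)=1$ and, by the swap-trick, $\Tr\!\left(\Flip\,\ketbra{b}{b}^{\otimes 2}\right)=\Tr\!\left(\ketbra{b}{b}\ketbra{b}{b}\right)=1$, both coefficients equal $\frac{1-d^{-1}}{d^2-1}=\frac{1}{d(d+1)}$, so that $\ExU\left[U^{\otimes 2}\ketbra{b}{b}^{\otimes 2}U^{\dagger\otimes 2}\right]=\frac{1}{d(d+1)}\left(\Idd+\Flip\right)$. This expression is independent of $b$, so summing over the $d$ computational basis states yields $\sum_{b=1}^d(\cdots)=\frac{1}{d+1}\left(\Idd+\Flip\right)$.

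It then remains to carry out the partial trace. Substituting into Eq.~\eqref{eq:MeasChMom} gives $\mathcal{M}(\rho)=\Tr_1\!\left((\rho\otimes I)\tfrac{1}{d+1}(\Idd+\Flip)\right)$; the identity term contributes $\Tr_1(\rho\otimes I)=\Tr(\rho)I$, while the partial-swap-trick $\Tr_1(A\otimes B\,\Flip)=BA$ from Eq.~\eqref{eq:partialswap}, applied with $A=\rho$ and $B=I$, contributes $\Tr_1\!\left((\rho\otimes I)\Flip\right)=\rho$. Collecting the two terms produces $\mathcal{M}(\rho)=\frac{1}{d+1}\left(\Tr(\rho)I+\rho\right)$, as claimed.

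Finally, for the inverse I would avoid solving a linear system and verify the candidate directly. Since $\mathcal{M}$ is trace-preserving, $\Tr(\mathcal{M}(\rho))=\Tr(\rho)$, and one checks that $(d+1)\mathcal{M}(\rho)-\Tr(\mathcal{M}(\rho))\,I=\Tr(\rho)I+\rho-\Tr(\rho)I=\rho$, which establishes $\mathcal{M}^{-1}(\rho)=(d+1)\rho-\Tr(\rho)I$. I do not expect a genuine obstacle here: the computation is routine, and the only points demanding care are invoking Proposition~\ref{prop:Haar} to flip $U\leftrightarrow U^\dagger$ \emph{before} applying the twirl formula, and tracking which tensor factor the partial trace acts on so that the partial-swap-trick is used on the correct side.
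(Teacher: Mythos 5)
Your proof is correct and follows essentially the same route as the paper: reduce to the Haar second moment via the $3$-design (hence $2$-design) property, evaluate $\sum_b \ExU[U^{\otimes 2}\ketbra{b}{b}^{\otimes 2}U^{\dagger\otimes 2}]=\frac{1}{d+1}(\Idd+\Flip)$, apply the partial-swap-trick, and verify the inverse directly. The only cosmetic difference is that you compute the two coefficients from the general formula of Corollary~\ref{ex:SecondMoment}, whereas the paper invokes Theorem~\ref{th:MomPermSym} to get $\frac{\Idd+\Flip}{d(d+1)}$ in one step; the two are equivalent for $O=\ketbra{b}{b}^{\otimes 2}$.
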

\begin{proof}
We have:
\begin{align}
\mathcal{M}\left(\rho\right)&=\sum^d_{b=1}\Tr_1\left(\left(\rho \otimes I\right) \underset{U \sim \mu}{\mathbb{E}}\left[U^{ \otimes 2} \ketbra{b}{b}^{ \otimes 2} U^{\dagger \otimes 2}\right]\right)\\
&=\sum^d_{b=1}\Tr_1\left(\left(\rho \otimes I\right) \frac{\Idd + \Flip}{d(d+1)}\right)\\
&=\frac{1}{d+1}\left(\Tr\!\left(\rho\right)I+\rho\right),
\end{align}
where in the second equality we utilized Eq.\eqref{eq:MomPermSym} and in the second equality the \emph{partial-swap-trick} \eqref{eq:partialswap}.

Therefore, we have that $\mathcal{M}^{-1}\left(\rho\right)=(d+1)\rho - \Tr\!\left(\rho\right)I$, since it can be easily verified that this expression satisfies  $\mathcal{M}^{-1}\left(\mathcal{M}\left(\rho\right)\right)=\rho$.
\end{proof}

To bound the variance we need to compute a third moment over the Haar measure of the unitary group, due to the $3$-design property of the Clifford group.
\begin{observation}
The variance is bounded by $\mathrm{Var}\left(\hat{o}_i\right)\le 3 \Tr\!\left(O^2_i\right)$.
\end{observation}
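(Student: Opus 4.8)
The plan is to begin from the variance formula in Observation~\ref{prop:MeasVarObs} and exploit that the uniform distribution over the Clifford group is an exact $3$-design, so that the third-moment average over $\mu$ may be replaced by the Haar average $\ExU$. First I would evaluate the operator $\sum^d_{b=1}\ExU\left[U^{\dagger \otimes 3} \ketbra{b}{b}^{ \otimes 3} U^{ \otimes 3}\right]$ appearing in $\mathbb{E}[\hat o_i^2]$. Using Proposition~\ref{prop:Haar} to exchange $U$ and $U^\dagger$ and then Theorem~\ref{th:MomPermSym}, each summand equals $P^{(d,3)}_{\mathrm{sym}}/\Tr(P^{(d,3)}_{\mathrm{sym}})$ since $\ketbra{b}{b}$ is a pure state; summing the $d$ basis states and recalling $\Tr(P^{(d,3)}_{\mathrm{sym}})=\binom{d+2}{3}$ gives
\begin{align}
\sum^d_{b=1}\ExU\left[U^{\dagger \otimes 3} \ketbra{b}{b}^{ \otimes 3} U^{ \otimes 3}\right]=\frac{d}{\binom{d+2}{3}}P^{(d,3)}_{\mathrm{sym}}=\frac{1}{(d+1)(d+2)}\sum_{\pi\in S_3}V_d(\pi).
\end{align}

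Next I would insert this into $\mathbb{E}[\hat o_i^2]$ and compute the six traces $\Tr\left((\rho\otimes X_i\otimes X_i)V_d(\pi)\right)$, $\pi\in S_3$, where $X_i\coloneqq \mathcal{M}^{-1}(O_i)$, by means of the cyclic-permutation-trick. Because two tensor slots coincide, the contributions organise by cycle type: the identity gives $\Tr(X_i)^2$ (using $\Tr(\rho)=1$), the two transpositions that move the $\rho$-slot give $2\Tr(\rho X_i)\Tr(X_i)$, the transposition of the two $X_i$-slots gives $\Tr(X_i^2)$, and the two $3$-cycles each give $\Tr(\rho X_i^2)$, so that
\begin{align}
\mathbb{E}[\hat o_i^2]=\frac{\Tr(X_i)^2+2\Tr(\rho X_i)\Tr(X_i)+\Tr(X_i^2)+2\Tr(\rho X_i^2)}{(d+1)(d+2)}.
\end{align}

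Then I would substitute the explicit inverse $X_i=(d+1)O_i-\Tr(O_i)I$ from Eq.~\eqref{eq:InverseChannelCl}. A short computation yields $\Tr(X_i)=\Tr(O_i)$, $\Tr(\rho X_i)=(d+1)\Tr(O_i\rho)-\Tr(O_i)$, and similar expressions for $\Tr(X_i^2)$ and $\Tr(\rho X_i^2)$ in terms of $\Tr(O_i)$, $\Tr(O_i\rho)$, $\Tr(O_i^2)$ and $\Tr(\rho O_i^2)$. I expect the overall factor $(d+1)$ to cancel one power in the denominator; after subtracting $\Tr(O_i\rho)^2$ this should leave
\begin{align}
\mathrm{Var}(\hat o_i)=\frac{(d+1)\Tr(O_i^2)+2(d+1)\Tr(\rho O_i^2)-\Tr(O_i)^2-2\Tr(O_i)\Tr(O_i\rho)-(d+2)\Tr(O_i\rho)^2}{d+2}.
\end{align}

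Finally I would bound the numerator using two facts. Since $O_i^2\ge 0$ and $\Tr(\rho)=1$, the only $\rho$-dependent term obeys $\Tr(\rho O_i^2)\le \norm{O_i}_\infty^2\le \Tr(O_i^2)$, so $2(d+1)\Tr(\rho O_i^2)\le 2(d+1)\Tr(O_i^2)$. The remaining negative piece is a completed square, $\Tr(O_i)^2+2\Tr(O_i)\Tr(O_i\rho)+(d+2)\Tr(O_i\rho)^2=\left(\Tr(O_i)+\Tr(O_i\rho)\right)^2+(d+1)\Tr(O_i\rho)^2\ge 0$, so discarding it only increases the bound. This gives numerator $\le 3(d+1)\Tr(O_i^2)$ and hence $\mathrm{Var}(\hat o_i)\le \frac{3(d+1)}{d+2}\Tr(O_i^2)\le 3\Tr(O_i^2)$. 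The main obstacle is not conceptual but the careful bookkeeping in collecting the $\Tr(O_i)^2$ and $\Tr(O_i)\Tr(O_i\rho)$ terms after the substitution, together with recognising the completed-square nonnegativity; once these are in hand the claimed bound follows at once.
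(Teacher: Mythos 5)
Your proposal is correct and follows essentially the same route as the paper: invoke the $3$-design property to replace the Clifford average by the Haar third moment, write $\sum_{b=1}^{d}\mathbb{E}\left[U^{\dagger\otimes 3}\ketbra{b}{b}^{\otimes 3}U^{\otimes 3}\right]=\frac{1}{(d+1)(d+2)}\sum_{\pi\in S_3}V_d(\pi)$, and evaluate the six cycle-type traces $\Tr\!\left((\rho\otimes X_i\otimes X_i)V_d(\pi)\right)$. The only difference is in the final bookkeeping: the paper first passes to the traceless part $O_i^{(0)}$, so that $\mathcal{M}^{-1}(O_i^{(0)})=(d+1)O_i^{(0)}$ is traceless and three of the six terms vanish, whereas you substitute $\mathcal{M}^{-1}(O_i)=(d+1)O_i-\Tr(O_i)I$ directly and discard the nonnegative completed square $\left(\Tr(O_i)+\Tr(O_i\rho)\right)^2+(d+1)\Tr(O_i\rho)^2$ — your intermediate variance formula and the final bound $\mathrm{Var}(\hat o_i)\le \frac{3(d+1)}{d+2}\Tr(O_i^2)\le 3\Tr(O_i^2)$ both check out.
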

\begin{proof}
By utilizing Observation~\ref{prop:MeasVarObs}, we have:
\begin{align}
\label{eq:varproof}
\mathrm{Var}\left(\hat{o}_i\right)&=\Tr\!\left(\left(\rho \otimes \mathcal{M}^{-1}\left(O_i\right) \otimes \mathcal{M}^{-1}\left(O_i\right)\right) \left(\sum^d_{b=1}\underset{U \sim \mu}{\mathbb{E}}\left[U^{\dagger \otimes 3} \ketbra{b}{b}^{ \otimes 3} U^{ \otimes 3}\right]\right)\right)-\Tr( O_i\rho)^2 .
\end{align}
Using Proposition~\ref{prop:Haar}, we know that $\underset{U \sim \mu}{\mathbb{E}}\left[U^{\dagger \otimes 3} \ketbra{b}{b}^{ \otimes 3} U^{ \otimes 2}\right]=\underset{U \sim \mu}{\mathbb{E}}\left[U^{ \otimes 3} \ketbra{b}{b}^{ \otimes 3} U^{ \dagger\otimes 3}\right]$. Therefore by applying Eq.\eqref{eq:MomPermSym}, we can express the third moment as:
\begin{align}
    &\underset{U \sim \mu}{\mathbb{E}}\left[U^{\dagger \otimes 3} \ketbra{b}{b}^{ \otimes 3} U^{ \otimes 3}\right]=\frac{1}{d(d+1)(d+2)}\left(\sum_{\pi \in S_3 }V_{d}(\pi)\right)\\
    &=\frac{1}{d(d+1)(d+2)}\left(\ipic{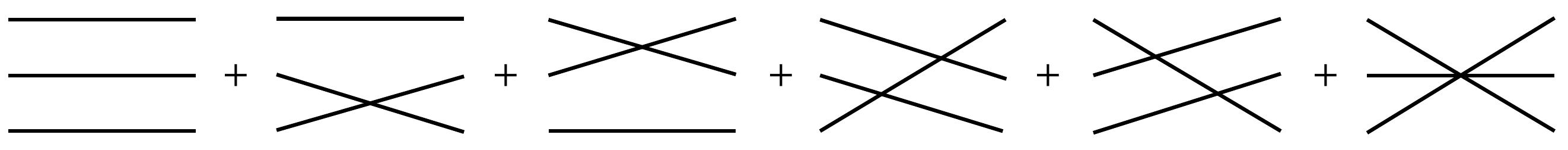}{0.21}\right),
\end{align}
where we used the Tensor Network representation of permutations defined in Section~\ref{sec:diagram}.
Substituting this expression into the first term of Eq.\eqref{eq:varproof}, up to a global factor $\left((d+1)(d+2)\right)^{-1}$, we have : 
\begin{align}
\label{eq:VarComplic}
& \Tr\!\left(\left(\rho \otimes \mathcal{M}^{-1}\left(O_i\right) \otimes \mathcal{M}^{-1}\left(O_i\right)\right) \left(\sum_{\pi \in S_3 }V_{d}(\pi)\right)\right) \nonumber\\
&=\Tr\!\left(\ipic{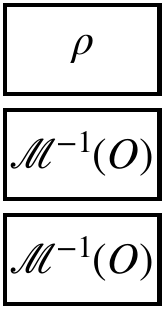}{0.33}\left(\ipic{images/s3.pdf}{0.26}\right)\right) \nonumber \\
&=\Tr\!\left(\rho\right)\Tr\!\left(\mathcal{M}^{-1}\left(O_i\right)\right)\Tr\!\left(\mathcal{M}^{-1}\left(O_i\right)\right)+\Tr\!\left(\rho\right)\Tr\!\left(\mathcal{M}^{-1}\left(O_i\right)^2\right)+\Tr\!\left(\rho\mathcal{M}^{-1}\left(O_i\right)\right)\Tr\!\left(\mathcal{M}^{-1}\left(O_i\right)\right)  \nonumber \\ 
&\quad+\Tr\!\left(\rho\mathcal{M}^{-1}\left(O_i\right)^2\right)+\Tr\!\left(\rho\mathcal{M}^{-1}\left(O_i\right)^2\right)+\Tr\!\left(\rho\mathcal{M}^{-1}\left(O_i\right)\right)\Tr\!\left(\mathcal{M}^{-1}\left(O_i\right)\right).
\end{align}
To simplify the calculations, we can utilize a trick exploited in \cite{Huang_2020}, namely that the variance $\mathrm{Var}\left(\hat{o}_i\right)$ only depends on the traceless part of the operator $O_i$, which is defined as $O^{(0)}_{i}\coloneqq O_i-\Tr(O_i)\frac{I}{d}$.

This follows by using that $\mathrm{Var}\left(\hat{o}_i\right)=\mathbb{E}\left[\left(\hat{o}_i-\Tr\!\left(O_i\rho\right) \right)^2\right]$ and observing that $\hat{o}_i-\Tr\!\left(O_i\rho\right) $ only depends on the traceless part of $O_i$. In fact, by denoting $\hat{\rho}=\mathcal{M}^{-1}\left(U^\dagger \ketbra{b}{b} U\right)$, we have:
\begin{align}
    \hat{o}_i-\Tr\!\left(O_i\rho\right)=\Tr(O_i \hat{\rho})-\Tr\!\left(O_i\rho\right)=\Tr(O^{(0)}_{i} \hat{\rho})-\Tr(O^{(0)}_{i}\rho),
\end{align}
where we used that $\Tr(\hat{\rho})=1$, since $\mathcal{M}^{-1}$ is trace-preserving (because $\mathcal{M}$ is trace-preserving). 

Therefore, by substituting $O_i$ with $O^{(0)}_{i}$ in Eq.\eqref{eq:VarComplic}, the right-hand side simplifies nicely:
\begin{align}
&\Tr\!\left(\left(\rho \otimes \mathcal{M}^{-1}\left(O^{(0)}_{i}\right) \otimes \mathcal{M}^{-1}\left(O^{(0)}_{i}\right)\right) \left(\sum_{\pi \in S_3 }V_{d}(\pi)\right)\right)\\
&=0+\Tr\!\left(\rho\right)\Tr\!\left(\mathcal{M}^{-1}\left(O^{(0)}_{i}\right)^2\right)+0+\Tr\!\left(\rho\mathcal{M}^{-1}\left(O^{(0)}_{i}\right)^2\right)+\Tr\!\left(\rho\mathcal{M}^{-1}\left(O^{(0)}_{i}\right)^2\right)+0
\\&=(d+1)^2\Tr\!\left(O^{(0) 2}_{i}\right)+2(d+1)^2\Tr\!\left(\rho O^{(0) 2}_{i}\right),
\end{align}
where we used the fact, according to Eq.\eqref{eq:InverseChannelCl}, that $\mathcal{M}^{-1}\left(O^{(0)}_{i}\right)=(d+1)O^{(0)}_{i}$ is traceless.
Substituting this expression back into Eq.\eqref{eq:varproof}, we have:
\begin{align}
    \mathrm{Var}\left(\hat{o}_i\right)&=\frac{(d+1)^2}{(d+1)(d+2)}\left(\Tr\!\left(O^{(0) 2}_{i}\right)+2\Tr\!\left(\rho O^{(0) 2}_{i}\right)\right)-\Tr(O^{(0)}_{i}\rho)^2\\
    &\le \Tr\!\left(O^{(0) 2}_{i}\right)+2\Tr\!\left(\rho O^{(0) 2}_{i}\right)\\
    &\le 3\Tr\!\left(O^{(0) 2}_{i}\right),
\end{align}
where in the second step we used that $I-\rho$ is positive semi-definite, and therefore $\Tr\!\left(\rho O^{(0) 2}_{i}\right)\le \Tr\!\left( O^{(0) 2}_{i}\right)$. 

Finally, we conclude the proof by noting that $\Tr\!\left(O^{(0) 2}_{i}\right)\le \Tr\!\left(O^{ 2}_{i}\right)$.
\end{proof}

Using the previous bound on the variance, we have that a number of copies 
\begin{align}
    N = O\!\left(\varepsilon^{-2}\log\left(2M/\delta\right)\underset{i\in [m]}{\mathrm{max}}\left[\Tr\!\left(O^{ 2}_{i}\right)\right]\right)
\end{align} 
suffices to estimate, for each $i\in [m]$, $\Tr( O_i\rho)$ up to precision $\varepsilon$ and with success probability at least $1-\delta$. 
Therefore, if we want to estimate observables with bounded $2$-norm, we have that the protocol is sample-efficient. 
However, it is important to emphasize that sampling efficiency does not imply that the protocol described is computationally efficient; in fact, to guarantee it, we must also be able to compute the unbiased estimator in Eq.\eqref{eq:estimator} in a time-efficient manner.
This is known for observables that have a particular structure such as stabilizer states (i.e., states constructed by the application of a Clifford circuit to a computational basis state); in this case, one can take advantage of the fact that computing the overlap between two stabilizer states can be done efficiently in classical computational time \cite{Aaronson_2004}.
See \cite{Huang_2020} for more details and also for checking the interesting case in which the uniform distribution over the unitary set formed by the tensor product of single-qubit Clifford gates is considered, which is better known as \textquote{random Pauli basis}. In this case, it turns out that the shadow protocol is both sample and time efficient for local observables like Paulis supported on a constant number of qubits. 
To explore additional examples of unitary distributions, one may refer for example to \cite{wan2022matchgate}, which discusses the utilization of the \emph{fermionic Gaussian unitaries} distribution, or to \cite{Hu2023,bertoni2023shallow}, where unitary distributions generated by local random quantum circuits are investigated. Additionally, \cite{bu2022classical} analyzes Pauli-invariant unitary ensembles. 

\section{Acknowledgments}
I thank Dax Enshan Koh, Hakop Pashayan, Philippe Faist, Johannes Jakob Meyer, Janek Denzler, Lorenzo Leone, Ellen Derbishire, Alexander Nietner, Francesco Anna Mele and Carlos Bravo-Prieto for insightful discussions and feedback. I am also grateful to the QML journal club of Jens Eisert's group for providing me the motivation to write this tutorial.

\DeclareFieldFormat{doi}{\mkbibacro{DOI}\addcolon\space\ifhyperref
	{\href{http://dx.doi.org/#1}{\nolinkurl{#1}}}
	{\nolinkurl{#1}}}

\printbibliography

\end{document}